\definecolor{myOrange}{RGB}{245, 118, 26}
\definecolor{indigo}{rgb}{0.29, 0.0, 0.51}
\newtheorem{theorem}{Theorem}
\newtheorem*{theorem*}{Theorem}
\newtheorem{conjecture}{Conjecture}
\newtheorem{corollary}{Corollary}
\newtheorem{lemma}{Lemma}
\newtheorem*{result*}{Result}
\newtheorem{proposition}{Proposition}
\theoremstyle{definition}
\newtheorem{definition}{Definition}
\newtheorem*{definition*}{Definition}
\theoremstyle{remark}
\newtheorem{remark}{Remark}
\newtheorem{assumption}{Assumption}
\newcommand{\refcite}[1]{Ref.~\cite{#1}}
\newcommand{\poly}[1]{{\rm poly}(#1)}
\newcommand{\bin}[1]{{\tt #1}}
\newcommand{\conc}[2]{{#1}\mathbin\Vert{#2}}
\newcommand{\B}{\{0,1\}}
\newcommand{\re}{\mathfrak{R}}
\newcommand{\im}{\mathfrak{I}}
\newcommand{\bs}[1]{\boldsymbol{#1}}
\newcommand{\uu}{\mathbin\Vert}
\newcommand{\bsr}{\textbf{\textsl{D}}}
\newcommand{\dbbrckt}[1]{\llbracket #1 \rrbracket}
\newcommand{\mcb}[2]{\colorbox{#1}{$\displaystyle #2$}}
\newcommand{\rcb}[3]{
\begingroup
    \color{#1}
    \underbrace{\color{black} \mcb{#1}{\tt #2}}_{#3}
\endgroup
}
\newcommand{\cl}[1]{\textnormal{{\bf #1}}}
\newcommand{\clw}[2]{{\bf #1}-{\rm #2}}
\newcommand{\clsb}[2]{\textnormal{{\bf #1}\textsubscript{#2}}}
\newcommand{\Gate}[1]{{\fontfamily{cmr}\selectfont\textsc{#1}}}
\renewcommand{\sc}[1]{\textnormal{\textsc{#1}}}
\renewcommand{\ket}[1]{\lvert #1 \rangle}
\renewcommand{\bra}[1]{\langle #1 \rvert}
\renewcommand{\mel}[3]{\langle #1 \rvert #2 \lvert #3 \rangle}
\newcommand{\ind}{\boldsymbol{1}}
\tikzset{
    hierarchy/.pic = {
        \draw[pattern=north west lines, pattern color=gray, fill opacity=0.35, very thick] (0,3) ellipse (4 and 6.5) node[opacity=1, yshift=4cm] {$(\mathbb{C}^2)^{\otimes n}$};
        \draw[thick, draw=myOrange, dashed, fill=myOrange, fill opacity=0.2,scale=0.75] (0,3.5) ellipse (5 and 6.25) node[opacity=1,yshift=3cm] {$\mathbb{C}_p\dbbrckt{\omega}$};
        \draw[thick, draw=myOrange, fill=myOrange, fill opacity=0.2,scale=0.75] (0,3) ellipse (4.75 and 5.5) node[opacity=1,yshift=2.25cm] {$\mathbb{C}_p$};
        \draw[thick, draw=myOrange, fill=myOrange, fill opacity=0.2,scale=0.75] (0,2.25) ellipse (3.25 and 4) node[opacity=1, yshift=1.75cm] {$\mathbb{Q}_p$};
        \draw[thick, draw=myOrange, dashed, fill=myOrange, fill opacity=0.2,scale=0.75] (0,1.5) ellipse (3.35 and 3.1) node[opacity=1, yshift=1.25cm] {$\mathbb{Q}_p^+\dbbrckt{\sqrt{\cdot}_1}$};
        \draw[thick, draw=myOrange, fill=myOrange, fill opacity=0.2,scale=0.75] (0,1) circle (2.5) node[opacity=1, yshift=0.75cm] {$\mathbb{Q}^+_p$};
        \draw[thick, draw=myOrange, fill=myOrange, fill opacity=0.2,scale=0.75] (0,0) circle (1.25) node[opacity=1] {$\mathbb{N}_p$};
        \draw[fill=black] (0.45,0.45) circle (0.05);
        \draw[fill=black] (1.55,2.3) circle (0.05);
        \draw[fill=black] (1.1,4) circle (0.05);
        \draw[fill=black] (2.3,4.6) circle (0.05);
        \draw[fill=black] (1.8,6.2) circle (0.05);
        \node[draw] (A) at (7-0.75,1) {Subset states $\ket{S}$};
        \node[draw] (B) at (7-0.75,3) {\shortstack{Real succinct\\ states}};
        \node[draw] (C) at (7-0.75,7) {\shortstack{Complex succinct\\ states}};
        \node[draw, gray, opacity=0.8] (D) at (8-0.75,4.5) {MPS};

        \draw[thick,-latex, gray, opacity=0.6] (D.west) -- (2.3,4.6);
        \draw[thick,-latex, gray, opacity=0.6] (D.west) -- (1.1,4);
        \draw[thick,-latex] (A.west) -- (0.45,0.45);
        \draw[thick,-latex] (A.west) -- (1.55,2.3);
        \draw[thick,-latex] (B.west) -- (1.1,4);
        \draw[thick,-latex] (C.west) -- (2.3,4.6);
        \draw[thick,-latex] (C.west) -- (1.8,6.2);
    },
    flow/.pic = {
\begin{scope}[
            node distance=10mm and 10mm,
            box/.style = {draw=none, minimum height=5mm, inner xsep=3mm, inner sep=5pt, align=center, thick},
            scale=0.9,transform shape
            ]
            \coordinate (origa) at (0,0);
            \coordinate (origb) at (7.5,0);
            \coordinate (origc) at (-7.5,0);
            \coordinate (origd) at (3.5,-9);
            \coordinate (orige) at (-2.5,-9);
            
            \node[fill=gray!25,label={[draw=black,inner sep=2pt,fill=white]below:Ref.~\cite{jiang2025local}}] (a0) [box, below=of origa] {\sc{$6$-Loc. Ham.} \\[0.1cm] Complex Succ. G.S.\\[0.1cm] \clw{MA}{complete}};
            \node[fill=gray!25,label={[draw=black,inner sep=2pt,fill=white]below:Ref.~\cite{jiang2025local}}] (a1) [box, below=of a0] {\sc{$6$-Loc. Real Ham.} \\[0.1cm] Real Succ. G.S.\\[0.1cm] \clw{MA}{complete}};
            \node[fill=gray!25,label={[draw=black,inner sep=2pt,fill=white]below:Ref.~\cite{liu2021stoqma}}] (a2) [box, below=of a1] {\sc{$6$-Loc. Stoq. Ham.} \\[0.1cm] Real Succ. G.S.\\[0.1cm] \clw{MA}{complete}};
            
        % ---------------
            \node[label={[draw=black,inner sep=2pt,fill=white]below:Refs.~\cite{bravyi2006complexity, jiang2025local} + This Work}] at (7.5,0.5) {\cl{MA}};
            \node[fill=gray!25,] (b0) [box, below=of origb] {\sc{$k$-Loc. Ham.} \\ Complex Succ. G.S.};
            \node[fill=gray!25,] (b1) [box, below=of b0] {\sc{$(k+1)$-Loc. Ham.} \\[0.1cm] Real Succ. G.S.};
            \node[fill=gray!25,] (b2) [box, below=of b1] {\sc{Stoq. Ham.} \\[0.1cm] Real Succ. G.S.};
        % ---------------
            \node[label={[draw=black,inner sep=2pt,fill=white]below:Refs.~\cite{bravyi2006complexity, jiang2025local} + This Work}] at (-7.5,0.5) {\clw{MA}{hard}};
            \node[fill=gray!25,] (c0) [box, below=of origc] {\sc{$6$-Loc. Ham.} \\ Complex Succ. G.S.}; 
            \node[fill=gray!25,] (c1) [box, below=of c0] {\sc{$6$-Loc. Real Ham.} \\ Real Succ. G.S.}; 
            \node[fill=gray!25] (c2) [box, below=of c1] {\sc{$6$-Loc. Stoq. Ham.} \\ Real Succ. G.S.}; 
            \node[fill=myOrange!35,label={[draw=black,inner sep=2pt,fill=white]above: This Work}] (c3) [box, below=of c2, yshift=-1cm] {\sc{$4$-Loc. Ham.} \\ Complex Succ. G.S.}; 
            \node[fill=myOrange!35,label={[draw=black,inner sep=2pt,fill=white]above: This Work}] (c4) [box, below=of c3] {\sc{$4$-Loc. Real Ham.} \\ Real Succ. G.S.}; 
            \node[fill=myOrange!35,label={[draw=black,inner sep=2pt,fill=white]above: This Work}] (c5) [box, below=of c4] {\sc{$4$-Loc. Stoq. Ham.} \\ Real Succ. G.S.}; 
            \node[fill=myOrange!35,label={[draw=black,inner sep=2pt,fill=white]above: This Work}] (c6) [box, below=of c5] {\sc{$3$-Loc. Ham.} \\ Complex Succ. G.S.}; 
            \node[fill=myOrange!35,label={[draw=black,inner sep=2pt,fill=white]above: This Work}] (c7) [box, below=of c6] {\sc{$2$-Loc. Ham.} \\ Complex Succ. G.S.}; 
            \node[label={[draw=black,inner sep=2pt,fill=white]below: This Work}] at (-7.5,-18.5) {\clw{MA}{hard}};
        % ---------------
            \node[fill=myOrange!35,label={[draw=black,inner sep=2pt,fill=white]below: This Work}] (d0) [box, below=of origd] {\sc{$3$-Loc. Ham.} \\ Complex Succ. G.S. \\ \clw{MA}{complete}};
            \node[fill=myOrange!35,label={[draw=black,inner sep=2pt,fill=white]below: This Work}] (d1) [box, below=of d0] {\sc{$2$-Loc. Ham.} \\ Complex Succ. G.S. \\ \clw{MA}{complete}};
        % ---------------
            \node[fill=myOrange!35,label={[draw=black,inner sep=2pt,fill=white]below: This Work}] (e0) [box, below=of orige] {\sc{$4$-Loc. Ham.} \\ Complex Succ. G.S. \\ \clw{MA}{complete}};
            \node[fill=myOrange!35,label={[draw=black,inner sep=2pt,fill=white]below: This Work}] (e1) [box, below=of e0] {\sc{$4$-Loc. Real Ham.} \\ Real Succ. G.S. \\ \clw{MA}{complete}};
            \node[fill=myOrange!35,label={[draw=black,inner sep=2pt,fill=white]below: This Work}] (e2) [box, below=of e1] {\sc{$4$-Loc. Stoq. Ham.} \\ Real Succ. G.S. \\ \clw{MA}{complete}};

            \begin{pgfonlayer}{background layer}
                \draw[-latex, very thick, gray] (a0) -- (a1);
                \draw[-latex, very thick, gray] (a1) -- (a2);
                \draw[thick,pattern=north west lines, pattern color=myOrange!35] (5,-0.5) rectangle (10,-7);
                \draw[-latex, very thick, gray] (b0) -- (b1);
                \draw[-latex, very thick, gray] (b1) -- (b2);
                \draw[thick,pattern=north west lines, pattern color=myOrange!35] (-5,-0.5) rectangle (-10,-7);
                \draw[thick,pattern=north west lines, pattern color=myOrange!35] (-5,-7.5) rectangle (-10,-18);
                \draw[-latex, very thick, gray] (c1) -- (c0);
                \draw[-latex, very thick, gray] (c2) -- (c1);
        
                \draw[-latex, thick, dashed, gray, opacity=0.6] (b0.west) -| (3.75,-1.5) |- (a0.east);
                \draw[-latex, thick, dashed, gray, opacity=0.6] (c0.east) -| (-3.75,-1.5) |- (a0.west);
                \draw[-latex, thick, dash dot, gray, opacity=0.6] (c1.east) -| (-4.5,-4.5) |- (a1.west);
                \draw[-latex, thick, dash dot, gray, opacity=0.6] (b1.west) -| (4.5,-4.5) |- (a1.east);
                \draw[-latex, thick, dash pattern={on 7pt off 2pt on 1pt off 3pt}, gray, opacity=0.6] (b2.west) -| (5.25,-7.25) |- (a2.east);
                \draw[-latex, thick, dash pattern={on 7pt off 2pt on 1pt off 3pt}, gray, opacity=0.6] (c2.east) -| (-5.25,-7.25) |- (a2.west);
        
                \draw[-latex, very thick] (a0) -- ++(0,-1.5) -| (e0.north);
                \draw[-latex, very thick] (a0) -- ++(0,-1.5) -| (d0.north);
                \draw[-latex, very thick] (e0) -- (e1);
                \draw[-latex, very thick] (e1) -- (e2);
        
                \draw[-latex, very thick, dashed] (c3) -- (c2);
                \draw[-latex, very thick] (c4) -- (c3);
                \draw[-latex, very thick] (c5) -- (c4);
        
                \draw[-latex, thick, dashed, gray, opacity=0.6] (c3.east) -| (-5.05,-10) |- (e0.west);
                \draw[-latex, thick, dash dot, gray, opacity=0.6] (c4.east) -| (-5.15,-12) |- (e1.west);
                \draw[-latex, thick, dash pattern={on 7pt off 2pt on 1pt off 3pt}, gray, opacity=0.6] (c5.east) -| (-5.25,-14) |- (e2.west);
        
                \draw[-latex, thick, dashed, gray, opacity=0.6] (b0.east) -| (3,-9.5) -| (0,-9.5) |- (e0.east);
                \draw[-latex, thick, dash dot, gray, opacity=0.6] (b1.east) -| (3.75,-12) -| (1,-12) |- (e1.east);
                \draw[-latex, thick, dash pattern={on 7pt off 2pt on 1pt off 3pt}, gray, opacity=0.6] (b2.east) -| (4.5,-15) -| (2,-15) |- (e2.east);
        
                \draw[-latex, thick, dotted, gray, opacity=0.6] (c6.south) |- (-5,-16) -| (-4.5,-17.25) -- (7,-17.25) -- (7,-12) |- (d0.east);
                \draw[-latex, thick, dotted, gray, opacity=0.6] (c7.south) |- (-5,-18.25) -|  (d1.south)++(0,-0.45);
        
                \draw[-latex, thick, dotted, gray, opacity=0.6] (b0.east) -| (9.75,-10) -| (7,-10) |-(d0.east);
                \draw[-latex, thick, dotted, gray, opacity=0.6] (b0.east) -| (9.75,-12) -| (6,-12) |-(d1.east);
            \end{pgfonlayer}
        \end{scope}
    }

}
\newcommand{\bsu}[1]{\mathbf{#1}}
\def\l@subsubsection#1#2{}
\newcommand{\manualtocentry}[2]{%
  \noindent%
  \hyperref[#1]{\ref*{#1}.\quad #2}%
  \nobreak{\dotfill}\nobreak%
  \pageref{#1}%
  \\[0.1cm]
}
\newcommand{\manualtocsubentry}[2]{%
  \phantom{~}\hspace{3em}%
  \hyperref[#1]{\ref*{#1}.\quad #2}%
  \nobreak{\dotfill}\nobreak%
  \pageref{#1}%
  \\[0.1cm]
}
\newcommand{\altmanualtocentry}[2]{%
  \noindent%
  \hyperref[#1]{#2}%
  \nobreak{\dotfill}\nobreak%
  \pageref{#1}%
  \\[0.1cm]
}
\newcommand{\manualtocentryapp}[2]{%
  \noindent%
  \hyperref[#1]{\ref*{#1}.\quad #2}%
  \nobreak{\dotfill}\nobreak%
  \pageref{#1}%
  \\[0.1cm]
}
\newcommand{\manualtocsubentryapp}[2]{%
  \phantom{~}\hspace{3em}%
  \hyperref[#1]{\ref*{#1}.\quad #2}%
  \nobreak{\dotfill}\nobreak%
  \pageref{#1}%
  \\[0.1cm]
}
\begin{document}
\title{On the Complexity of the Succinct State Local Hamiltonian Problem}
\author{Gabriel Waite}\email{gabriel.waite@student.uts.edu.au}
\author{Karl Lin}
\affiliation{Centre for Quantum Computation and Communication Technology,%
Centre for Quantum Software and Information, School of Computer Science,%
Faculty of Engineering and Information Technology,%
University of Technology Sydney, NSW 2007, Australia}
\begin{abstract}
    We study the computational complexity of the \textsc{Local Hamiltonian} problem under the promise that its ground state is succinctly represented.
    We show that the \textsc{Succinct State $2$-Local Hamiltonian} problem, for qubit Hamiltonians, is (promise) \textbf{MA}-{complete}.
    The approach combines a systematic characterisation of succinct quantum states, defined through arithmetic over specific number fields, with a refined reduction that lowers the locality of Feynman-Kitaev circuit-Hamiltonians from $6$ to $2$, without increasing particle dimension.
    This reveals a complexity phase transition, parameterised by locality, and extends the scope of previously known \textbf{MA}-complete problem instances.
    Our results further clarify how succinctness behaves under circuit-based constructions, and progresses toward a better understanding of the boundary between efficiently describable and efficiently verifiable quantum systems.
\end{abstract}
\keywords{Quantum Hamiltonian Complexity, Succinct States, Local Hamiltonian Problem, Complexity Theory, MA-completeness}
\maketitle

\begin{center}
\manualtocentry{sec:introduction}{Introduction}
    \manualtocsubentry{sec:summary-of-results}{Summary of Results}
    \manualtocsubentry{sec:related-work}{Related Work}
\manualtocentry{sec:preliminaries}{Preliminaries}
    \manualtocsubentry{sec:binary-representations}{Binary Representations}
    \manualtocsubentry{sec:number-classes}{Binary Number Classes}
    \manualtocsubentry{sec:complexity}{Computational Complexity}
\manualtocentry{sec:succinct-states}{Succinct States}
    \manualtocsubentry{sec:subset-states}{Properties of Subset States}
    \manualtocsubentry{sec:operations-subset-states}{Operations with Subset States}
    \manualtocsubentry{sec:operations-hybrid-subset-states}{Operations with Hybrid Subset States}
    \manualtocsubentry{sec:properties-general-succinct-states}{Properties of General Succinct States}
    \manualtocsubentry{sec:multi-alphabet-query-access}{Multi-Alphabet Query Access}
\manualtocentry{sec:LHP-succinct-ground-states}{The Succinct State Local Hamiltonian Problem}
    \manualtocsubentry{sec:containment}{Class Containment}
    \manualtocsubentry{sec:extension-MA-containment}{Extension of MA Containment}
    \manualtocsubentry{sec:hardness}{Class Hardness}
\manualtocentry{sec:locality-reduction}{Locality Reduction}
    \manualtocsubentry{sec:3-local-reduction}{Reduction to 3-Local Hamiltonians}
    \manualtocsubentry{sec:2-local-reduction}{Reduction to 2-Local Hamiltonians}
\manualtocentry{sec:conclusion}{Conclusion}
\altmanualtocentry{sec:declarations}{Declarations}
\altmanualtocentry{sec:refs}{References}
\altmanualtocentry{app:toc}{Appendices}
\end{center}

\newpage
\section{Introduction}\label{sec:introduction}
Understanding when quantum computational problems admit classical verification remains a central challenge in complexity theory.
The \sc{Local Hamiltonian} problem, estimating ground-state energies of quantum many-body systems, is typically \clw{QMA}{complete}, requiring quantum witnesses and verification~\cite{kitaev2002classical}.
Complexity classifications for finding the energy of extremal product states~\cite{kallaugher2024complexity}, classifying free-fermion Hamiltonians~\cite{elman2021free,chapman2023unified}, and inclusion of auxiliary information~\cite{richter2007two,bravyi2015monte,stroeks2022spectral,weggemans2024guidable} reveal vastly different complexity landscapes depending on the structure imposed, demonstrating that the boundary between classical and quantum verification is sensitive to the structural assumptions placed on the problem.

We investigate this boundary through local Hamiltonians whose ground states are succinct: efficiently and exactly describable by classical algorithms that compute amplitudes (see \cref{def:succinct-state} for a formal definition).
Realising such access in practice is extremely demanding; computing amplitudes for general quantum states is \clw{GapP}{hard}~\cite{fortnow1999complexity,fenner1999determining}.
Nevertheless, when provided with it, a striking complexity boundary emerges.
We consider the \sc{Succinct State Local Hamiltonian} problem, where the task is to estimate the ground-state energy given the promise that the ground state is succinctly represented (see \cref{def:LHP-succinct-state} for a formal statement).
Prior work established \clw{MA}{completeness} for $6$-local Hamiltonians~\cite{liu2021stoqma,jiang2025local}.
Since physical Hamiltonians typically exhibit low-locality interactions, it is natural to ask whether this complexity classification persists at lower locality and indicates a more fundamental computational boundary.

In this work, we establish \clw{MA}{completeness} for $2$-local Hamiltonians acting on qubits, aligning with broader efforts to determine how structural constraints influence computational difficulty~\cite{kempe2006complexity,oliveira2008complexity,schuch2009computational,piddock2017complexity}.
The fact that hardness remains at this lower locality for qubit systems suggests that succinctly described ground states maintain computational challenges even when interactions become more physically realistic.
We achieve this via direct clock-based reductions with careful gate decompositions that maintain succinctness throughout, bypassing perturbative gadgets entirely; we describe these techniques in detail below.
Our results clarify a sharp complexity phase transition for the problem since the $1$-local case is solvable in \cl{NC} (Nick's Class~\cite{aaronson2005complexity}).

The formal study of the \sc{Succinct State Local Hamiltonian} problem is recent.
Liu~\cite{liu2021stoqma} demonstrated that the \sc{Real Succinct State $6$-Local Stoquastic Hamiltonian} problem is \clw{MA}{complete} through an analysis of \cl{eStoqMA}, which extends \cl{StoqMA} with \emph{easy witnesses} --- uniform superposition states with an efficient procedure to verify membership in the support.
Jiang~\cite{jiang2025local} removed the stoquastic restriction by employing fixed-node quantum Monte Carlo techniques~\cite{tenHaaf1995proof}, requiring knowledge of the ground state's amplitudes in the computational basis~\cite{bravyi2023rapidly}, and using the Feynman-Kitaev construction~\cite{kitaev2002classical,bravyi2006complexity}.
Both results required $6$-local interactions.
Lowering locality while preserving succinct structure is challenging: standard perturbative gadget techniques~\cite{kempe2006complexity,oliveira2008complexity} do not preserve succinct structure, as simulator Hamiltonians reproduce low-energy subspaces but not the amplitudes of the original states.
Our approach instead employs direct clock-based reductions with careful gate decompositions that maintain succinctness throughout, bypassing the need for perturbative gadgets entirely.

To achieve our result, we examine the representation of succinct states, constructing complex structures from simpler cases through a systematic, compositional approach.
We introduce new succinct-state encodings based on algebraic representations of complex numbers, extending beyond the rational succinct states of Ref.~\cite{jiang2025local}.
This allows us to characterise the succinct structure of Feynman-Kitaev circuit-Hamiltonian ground states~\cite{kitaev2002classical} and bring new classes of Hamiltonians into the \clw{MA}{complete} regime.
The rational-amplitude framework of Ref.~\cite{jiang2025local} does not apply to general circuits or to the $2$-local reduction we perform; our approach overcomes this limitation.
Moreover, we prove that each required component can be computed efficiently by classical algorithms while preserving the succinct structure throughout, thus providing a faithful polynomial-time reduction.

We propose two complementary approaches for reducing locality to three.
The first builds on the clock construction of Kempe and Regev~\cite{kempe2003local}, adding penalties for illegal clock states and decoupling two clock qubits from propagation terms.
The second decomposes reversible circuits into \emph{structured} Clifford$+T$ circuits over the gate set $\{\Gate{Cnot}, \Gate{Had}, T\}$; this suffices to reduce locality to three while extending \clw{MA}{hardness} to general Hamiltonians.
Combined with our analysis of succinct states, this yields a polynomial-time (Karp) reduction from arbitrary \clw{MA}{verification circuits} to $3$-local Hamiltonians with succinct ground states.

To obtain $2$-local Hamiltonians, we introduce a new gate set $\mathcal{R} = \{\Gate{C}Z, \Gate{Had}, T, Z\}$ and prove that the structured circuits defined previously can be faithfully represented using gates from $\mathcal{R}$ such that each $\Gate{C}Z$ gate occurs at a fixed regular interval, surrounded by a prescribed number of single-qubit gates.
This regular-interval structure is essential for the $2$-local clock Hamiltonian construction of Kempe, Kitaev, and Regev~\cite{kempe2006complexity}: it ensures that each $\Gate{C}Z$ gate is surrounded by a fixed number of single-qubit timesteps, allowing propagation of $\Gate{C}Z$ gates to be encoded as a sum of $2$-local operators without introducing multi-qubit gate terms into the Hamiltonian.
We prove that \clsb{MA}{q} circuits can be efficiently transformed into this regular-interval structured circuit family while preserving the computed unitary and the completeness and soundness parameters of the original circuits.
Applying the construction of Ref.~\cite{kempe2006complexity} to these circuits and verifying that the resulting ground state remains succinct then yields a polynomial-time reduction from arbitrary \clw{MA}{verification circuits} to $2$-local Hamiltonians with succinct ground states, establishing \clw{MA}{completeness} of the \sc{Succinct State $2$-Local Hamiltonian} problem.
We expect our techniques to have broader applicability, particularly for succinct state preparation and verification algorithms.

\subsection{Summary of Results}\label{sec:summary-of-results}

Our main technical contributions divide naturally into two parts: 
\begin{inparaenum}[1.]
    \item the characterisation of succinct quantum states, and 
    \item the classification of computational complexity arising from their representations.
\end{inparaenum}
More focused discussions of our results and techniques follow in the sequel.
For comments on the theoretical and practical aspects of the \sc{Succinct State Local Hamiltonian} problem as well as remarks on related works, see \cref{sec:LHP-succinct-ground-states}.

\subsubsection{Characterisation of Succinct States}
We begin by formalising the notion of a \emph{succinct state}. 
This requires a careful treatment of the number systems over which amplitudes are represented. 
Throughout, we consider sets of numbers, denoted generically by $\mathbb{S}_{p(n)}$, that can be represented \emph{exactly} using $p(n)$ bits, where $p(n)$ is polynomial in the system size $n$.

Specifically, we work with the following natural families: the naturals $\mathbb{N}_{p(n)}$, the rationals $\mathbb{Q}_{p(n)}$, the non-negative rationals $\mathbb{Q}^+_{p(n)}$, and the complex rationals $\mathbb{C}_{p(n)}$, whose real and imaginary parts are rational.
We also extend these to include algebraically encoded numbers and related fields; see \cref{sec:preliminaries} for formal definitions.

Capturing \emph{algebraic encodings} is crucial for our analysis. 
This permits the inclusion of irrational quantities, such as $1/\sqrt{2}$, while remaining exactly representable through finite binary descriptions.
Elements of $\mathbb{Q}_{p(n)}$ already require a compound encoding containing numerator, denominator, and sign bits.
We can equivalently view such binary representations as finite classical algorithms computing amplitudes exactly.

Formally, let $\mathbb{A}^{(\#)}_q$ denote the set of $q$-bit binary encodings for an algebraic characteristic $\#$ (e.g., a sign, a root, or a phase factor).
We then define the family of algebraically encoded number sets as
\begin{equation*}
    \mathbb{S}_{p}\dbbrckt{\#_q} \coloneqq \mathbb{A}^{(\#)}_q \times \mathbb{S}_{p}.
\end{equation*}
Throughout, $\mathbb{S}$ may denote any of $\{\mathbb{N}, \mathbb{Q}, \mathbb{C}\}$.

\subparagraph{Definition of succinct states.}
A quantum state $\ket{\psi}$ is said to be \emph{$\mathbb{S}$-succinct} if all of its amplitudes belong to $\mathbb{S}_{p(n)}$ and there exists an efficient classical query algorithm $\mathcal{Q}_\psi$ that computes these amplitudes exactly (see~\cref{def:succinct-state}). 
We occasionally denote such a state by the triple $(\ket{\psi},\mathbb{S}_{p(n)},\mathcal{Q}_\psi)$.

This framework lets us ask structural questions: when we combine two succinct states, does the result remain succinct?
For example, under tensor product, addition, or projection, does succinctness persist, and how do the corresponding query models combine?
We show that these properties follow from simple polynomial-time operations on binary encodings --- that is, efficient classical post-processing of amplitudes.

\begin{restatable*}[]{lemma}{alma}
    \label{lma:tensor-product-C-succinct}
    Consider two $\mathbb{S}$-succinct states $(\ket{\psi},\mathbb{S}_{p(n)},\mathcal{Q}_{\psi})$ and $(\ket{\phi}, \mathbb{S}_{q(m)},\mathcal{Q}_{\phi})$, where $p(n)$ and $q(m)$ are polynomial functions of $n$ and $m$ respectively.
    Then the tensor product $\ket{\psi}\ket{\phi}$ is an $\mathbb{S}_{2r(s)+1}$-succinct state with an efficient classical query algorithm $\mathcal{Q}_{\psi\phi}$.
    Here $s = \max\{n,m\}$ and $r(s) = \max\{p(s),q(s)\}$.
\end{restatable*}

\begin{restatable*}[]{lemma}{blma}
    \label{lma:complex-succinct-state-to-real-succinct-state}
    Let $\ket{\phi}$ be a $\mathbb{C}_{p(n)}$-succinct state with classical query algorithm $\mathcal{Q}_{\phi}$ such that each amplitude $\alpha(j) = R(j) + {\rm i} I(j)$, where $R(j), I(j) \in \mathbb{Q}_{p(n)}$.
    Then 
    \begin{align*}
        \ket{\phi} &= \sum_{j \in \B^n} R(j)\ket{j} + {\rm i}\sum_{j \in \B^n} I(j)\ket{j}
        = \ket{\phi_{R}} + {\rm i} \ket{\phi_I}.
    \end{align*}
    Define two orthogonal states $\ket{\varphi_1} = \ket{\phi_R}\ket{0} + \ket{\phi_I}\ket{1}$ and $\ket{\varphi_2} = \ket{\phi_R}\ket{0} - \ket{\phi_I}\ket{1}$.
    Then $\ket{\varphi_1}$ and $\ket{\varphi_2}$ are $\mathbb{Q}_{p(n)}$-succinct states with efficient query algorithms $\mathcal{Q}_{\varphi_1}$ and $\mathcal{Q}_{\varphi_2}$ respectively.
\end{restatable*}

\subparagraph{Subset states and reversible circuits.}
To connect succinctness with computational models, we next study \emph{subset states}, which are uniform superpositions over subsets $S \subseteq \B^n$ denoted by $\ket{S}$. 
We show that the history state of a classically reversible circuit is a subset state. 
Such states naturally fall into two categories: 
\begin{inparaenum}
    \item those for which membership in $S$ can be verified by a query algorithm, and 
    \item those where the uniform amplitude value itself can be queried directly.
\end{inparaenum}

\begin{restatable*}[]{lemma}{clma}
    \label{lma:classical-gates-subset state}
    Let $\bsr$ be a bit string of size $O(\poly{n})$ describing a circuit consisting of ${K = O(\poly{n})}$ classically reversible gates $\{R_k\}_{k\in [K]}$.
    For a subset state $\ket{S}$ on $S \subseteq \B^n$, define
    \begin{align*}
        \ket{A_k} &\coloneqq R_k\ket{S}, &
        \ket{B_k} &\coloneqq R_k \cdots R_1\ket{S}.
    \end{align*}
    Then $\ket{A_k}$ and $\ket{B_k}$ are $\mathbb{N}_1$-succinct states with efficient classical query algorithms $\mathcal{Q}_{A_k}$ and $\mathcal{Q}_{B_k}$ respectively.
\end{restatable*}

\begin{restatable*}[]{remark}{armk}
    \label{rmk:subset-more-general}
    Allowing for algebraic encodings, any subset state on $S \subseteq \B^n$ can also be viewed as a 
    $\mathbb{Q}^+_{\log_2{\abs{S}}} \dbbrckt{\sqrt{\cdot}_1}$-succinct state, i.e., $\mathbb{A}_1^{(\sqrt{\cdot})}\times\mathbb{Q}^+_{\log_2{\abs{S}}}$.
    \null\hfill\ensuremath{\diamond}
\end{restatable*}

\subparagraph{Beyond classical reversibility.}
We finally extend this to include non-classical gates such as the Hadamard and $T$. 
This raises the question: can a classical query algorithm still describe the amplitudes after such gates act?
We show that for a restricted but expressive class of circuits, the answer is \emph{yes}.
When the full circuit can be encoded by a polynomial-length bit string, there exists a classical query algorithm capable of producing amplitudes exactly at certain points in the computation.

\begin{restatable*}[]{lemma}{dlma}
    \label{cor:CRG-T-T-dagger-Hadamard-gate-seq-subset state}
    Let $\bsr$ encode a circuit consisting of $O(\poly{n})$ classically reversible gates, $O(n)$ $T$ gates, $O(n)$ $T^\dagger$ gates, and $O(1)$ Hadamard gates.
    Let $K$ denote the total number of gates, and write $\{U_k\}_{k\in [K]}$ for the ordered sequence.
    For each $k \in [K]$, define the state
    \begin{equation*}
        \ket{H_k} = U_k \cdots U_1\ket{S}.
    \end{equation*}
    Then $\ket{H_k}$ is an $\mathbb{N}_p \dbbrckt{\frac{1}{\sqrt{2}}_p}$-succinct state for $p = O(1)$.
\end{restatable*}

\begin{restatable*}[]{lemma}{elma}
    \label{lma:history-state-ex-2}
    The superposition state
    \begin{equation}\label{eq:history-state-ex-2}
        \ket{\eta} = \frac{1}{\sqrt{\abs{K}}} \sum_{k=1}^{K} \ket{H_k}\ket{k},
    \end{equation}
    is a $\mathbb{C}_{r(n)}\dbbrckt{\sqrt{\cdot}_1}$-succinct state, where $r(n) = \poly{n}$, with the efficient classical (query) algorithm $\mathcal{Q}_{\eta}$.
\end{restatable*}

Our framework characterises how fixed query-access models can be extended to accommodate additional quantum states, and thus may find applications beyond the scope of this work.  
In particular, access to a transformed state $U\ket{\psi}$ through $\mathcal{Q}_\psi$ is possible whenever the unitary $U$ has bounded spread of quantum gates~\cite{vandennest2011simulating}.  
However, this structural restriction prevents efficient querying of the state amplitudes in arbitrary bases.  
By contrast, in models that allow direct query access to arbitrary quantum states, one could request the query oracle for $U\ket{\psi}$ itself, without being confined to computational-basis queries.

\subsubsection{Locality Reduction Techniques}
We first review standard reductions between Hamiltonian classes that preserve essential spectral and structural properties relevant to our succinct-state framework.  
The reduction from a complex $k$-local Hamiltonian to a real Hamiltonian proceeds by observing that the imaginary unit ${\rm i}$ is isomorphic to the real matrix 
$\big(\begin{smallmatrix}0 & -1\\1 & 0\end{smallmatrix}\big) \equiv -{\rm i}Y$.  
Splitting $H$ into real and imaginary parts thus yields a $(k+1)$-local real Hamiltonian acting on $n+1$ qubits, whose spectrum forms a $2$-multiset of the original.  
This reduction preserves query access, ground-state structure, and ground energy to within polynomial resources: queries to the real Hamiltonian can be implemented through informed queries to the complex one, and the correspondence extends to the ground state itself.  
We also analyse the reduction from real to stoquastic Hamiltonians via the fixed-node quantum Monte Carlo method~\cite{tenHaaf1995proof}, and show that it preserves all structural features relevant to succinct encodings.

\subparagraph{Succinctness and verification protocols.}  
Since succinct states can be represented algebraically, we argue that the \cl{MA} protocol of \refcite{jiang2025local} is robust against families of algebraically encoded succinct states.  
This follows naturally from the structure of the history state and the decomposition circuits used therein.
More specifically, our characterisation of succinct states has demonstrated polynomial-time classical algorithms for expressing amplitudes, which suffice for the verification steps of the protocol.

\begin{restatable*}[]{corollary}{acorol}
    The \cl{MA} protocol of \textnormal{Ref.~\cite{jiang2025local}} is robust under the inclusion of $\mathbb{C}_{p(n)}\dbbrckt{\sqrt{\cdot}}$-succinct states.
\end{restatable*}

\subparagraph{Normalisation of the history state.}  
We revisit the \clw{MA}{hardness} proof to ensure the history state is correctly normalised and that its amplitudes can be expressed exactly using a polynomial number of bits, fitting our notion of succinctness.  
Alternative normalisation methods, such as padding with identity gates, can enforce rational amplitudes, but for broader applicability we allow algebraic encodings.

\begin{restatable*}[]{proposition}{aprop}
\label{prop:history-state-solution-assumption}
    The history state $\ket{\eta(x,\chi)}$ associated with the Feynman-Kitaev clock construction for \clsb{MA}{q} circuits $\mathcal{V}$ is a subset state on 
    \begin{equation*}
        \mathcal{S} \coloneqq \bigcup_{k=0}^{K} \bigg( \big(\prod_{j=k}^{0} R_k \circ S \big) \times \{1^k\,0^{K-k}\}\bigg),
    \end{equation*}
    where 
    \begin{align*}
        S &= \{x\} \times \{\chi\} \times \{0\}^m \times \{0,1\}^p,\\
        \mathcal{V} &= \{R_K,\dots,R_1,R_0\};
    \end{align*}
    with $x$ an $n$-bit string, $\chi$ a $w$-bit string, and $R_0 = I$.  
    Hence, $\ket{\eta(x,\chi)}$ is a $\mathbb{Q}^+_{q(n)}\dbbrckt{{\sqrt{\cdot}}_1}$-succinct state, where $q(n) = \log_2(2^p(K+1))$.
\end{restatable*}

\subparagraph{Reduction hierarchy.}  
Our first locality reduction applies the clock-construction arguments of Kempe and Regev~\cite{kempe2003local}, adapted to \clsb{MA}{q} circuits.

\begin{restatable*}[]{theorem}{athrm}
\label{thrm:4-local-stoquastic-hamiltonian-with-Qp-succinct-ground-states}
    The \sc{$\mathbb{Q}^+_{p(n)}\dbbrckt{\sqrt{\cdot}_1}$-Succinct State $4$-Local Stoquastic Hamiltonian} problem is \clw{MA}{complete}.
\end{restatable*}

The second reduction refines this to a locality of three by decomposing classically reversible circuits into Clifford+$T$ form.  
We define \emph{structured Toffoli-equivalent circuits}, built from $\{\Gate{Cnot},\Gate{Had},T\}$ gates under strict structural constraints, leading to a new promise class \clsb{StMA}{q} (see \cref{def:StMAq}).

\begin{restatable*}[]{lemma}{flma}
    $\clsb{StMA}{q} = \clsb{MA}{q}$.
\end{restatable*}

The standard \clw{MA}{hardness} arguments extend directly, implying that these structured circuits yield $3$-local Hamiltonians.

\begin{restatable*}[]{theorem}{bthrm}
\label{thrm:main-result-3l}
    The \sc{$\mathbb{C}_{p(n)}\dbbrckt{\sqrt{\cdot}}$-Succinct State $3$-Local Hamiltonian} problem is \clw{MA}{complete}.
\end{restatable*}

As a corollary, the \sc{Succinct State Local Hamiltonian} problem remains \clw{MA}{complete} even for Hamiltonians defined on spatially sparse graphs (see Appendix~\ref{app:local-hamiltonians-spatially-sparse-graphs}).  
If a perturbative reduction preserving succinctness were discovered, our result would serve as a starting point for geometric reductions~\cite{oliveira2008complexity}.  

Our final reduction to $2$-local Hamiltonians builds on the structured circuit decomposition approach.
Starting from structured Toffoli-equivalent circuits (STECs) over the gate set $\mathcal{G} = \{\Gate{Cnot}, \Gate{Had}, T\}$, we pass to a new family of circuits over the gate set $\mathcal{R} = \{\Gate{C}Z, \Gate{Had}, T, Z\}$ in which each $\Gate{C}Z$ gate occurs at a fixed regular interval $\ell$ in the circuit, meaning that consecutive $\Gate{C}Z$ gates are separated by exactly $\ell$ single-qubit gates.
This regularity condition is necessary for applying the $2$-local clock Hamiltonian construction of Ref.~\cite{kempe2006complexity}.
Without it, the propagation term for a $\Gate{C}Z$ gate would act simultaneously on two computational qubits and a clock qubit, making it $3$-local; the regular-interval padding distributes this interaction across adjacent single-qubit timesteps so that each resulting Hamiltonian term remains $2$-local, with the action of the $\Gate{C}Z$ gate emerging only upon restriction to the nullspace of the appropriate Hamiltonian penalty terms.

We call circuits satisfying this condition \emph{regular-interval structured Toffoli-equivalent circuits} (RI-STECs) and the corresponding promise class \clsb{RIStMA}{q}.
The gate set $\mathcal{R}$ generates unitaries with amplitudes in $\mathbb{Q}({\rm i}, \sqrt{2})$, the same field as for STECs, and the succinctness of the resulting ground states follows from the equivalences $\Gate{C}Z \leftrightarrow \{\Gate{Cnot}, \Gate{Had}\}$ and $Z \leftrightarrow T$-type gates, together with the succinct state arguments established for the $3$-local case.

We first conclude that RI-STECs are computationally equivalent to \cl{MA}\textsubscript{q} circuits.

\begin{restatable*}[]{corollary}{ricor}\label{cor:RIStMA}
    $\clsb{RIStMA}{q} = \clsb{MA}{q}$.
\end{restatable*}

Applying the direct $2$-local clock Hamiltonian construction of Ref.~\cite{kempe2006complexity} to RI-STEC verification circuits then yields our main result.

\begin{restatable*}[]{theorem}{cthrm}
\label{thrm:main-result-2l}
    The \sc{$\mathbb{C}_{p(n)}\dbbrckt{\sqrt{\cdot}}$-Succinct State $2$-Local Hamiltonian} problem is \clw{MA}{complete}.
\end{restatable*}

Together with \cref{thrm:main-result-3l}, this shows that the \clw{MA}{completeness} of the \sc{Succinct State Local Hamiltonian} problem is robust to the locality of the Hamiltonian: the problem remains \clw{MA}{complete} for any locality $k \geq 2$.
This proves a sharp boundary in the problem's complexity phase diagram, parameterised by the locality $k$, since the $1$-local case is exactly solvable in polynomial time by diagonalising each term independently.

Beyond $2$-local, further reductions appear difficult with current techniques.
Perturbative gadgets do not preserve succinct ground states: the simulator Hamiltonian $\widetilde{H} = H + V$ that reproduces a target Hamiltonian $H_{\mathrm{targ}}$ need not share its succinct ground space, since gadget methods preserve low-energy subspaces but not explicit state representations.
Consequently, the existing \clw{MA}{hardness} results for $2$-local stoquastic Hamiltonians~\cite{bravyi2006complexity} do not directly extend to the succinct setting via this approach, and new techniques would be required to handle the stoquastic case or consider geometric constraints.

\subsection{Related Work}\label{sec:related-work}
Recent prior work has explored the complexity of deciding the ground-state energy of local Hamiltonians to inverse-polynomial precision under different settings.
Notably,~\citet{stroeks2022spectral} demonstrated that there exists polynomial-time classical and quantum algorithms for estimating the ground-state energy of local Hamiltonians with polynomial-gapped eigenvalues and given an input state with specific classical access.
For example, it was demonstrated that it is possible to classically learn a constant number of eigenstates for a stoquastic local Hamiltonian when the specific set of eigenvalues are well-separated and there exists an input state with efficient classical sample access, and at least inverse-polynomial overlap with a constant number of eigenstates.
Similar algorithms in Ref.~\cite{stroeks2022spectral} were studied for slightly more general settings with the main global assumption requiring the existence of a state that has non-negligible overlap with at most a polynomial number of eigenstates.
The classical results require sample access and the ability to compute amplitude ratios.
Unfortunately, identifying a state with such properties is no easy task.
The setting of the problem considered in this work bears resemblance to the work of~\citet{stroeks2022spectral} in that we also consider the existence of a state that permits efficient classical computation of amplitude ratios and a goal of deciding the ground-state energy of a local Hamiltonian.
However, the main difference is that we assume the ground state has the classical access properties, rather than some ``guiding state'' that is not necessarily the ground state.
Furthermore, our problem uses a query access model which is not necessarily amenable to the classical sample access model, and we make no assumption on the separation between eigenvalues.
Therefore, the results of~\citet{stroeks2022spectral} do not directly apply to our setting, else $\cl{MA} \subseteq \cl{BPP}$, and our findings are not in contradiction with theirs.

The \sc{Guidable Local Hamiltonian} problem \cite{bravyi2015monte,deshpande2022importance,gharibian2023dequantizing, weggemans2024guidable} relates closely to the problem we study.
This guidable variant assumes there exists some guiding state having overlap with the ground state; the state is not given as input to the problem.
It follows from the results of~\citet{gharibian2023dequantizing} that if we relax the condition of the ground state being succinct and instead assume the existence of a guiding state with a succinct representation, allowing for perfect sampling-access and constant overlap, then the problem of deciding the ground-state energy to constant precision is the class \cl{MA}.
Further conclusions from~\citet{weggemans2024guidable} suggest that when the guiding state is succinctly represented but has overlap, at most, inverse-polynomially close to unity, the problem is \clw{QCMA}{hard}.
It then follows that the assumption of the ground state being succinctly represented is strong, especially when resolving the ground-state energy to inverse-polynomial precision.
In fact, these results may help shed light on the resolution of Conjecture~\ref{conj:relaxed-guiding-state} (defined in \cref{sec:conclusion}).

\section{Preliminaries}\label{sec:preliminaries}
We assume familiarity with the basic concepts and conventions of quantum computing~\cite{nielsen2010quantum} and complexity theory~\cite{watrous2008quantum, kitaev2002classical}; for surveys on quantum Hamiltonian complexity see Refs.~\cite{gharibian2015quantum, hamiltonianjungle2023}.
Several proofs of results in the main body are deferred to Appendix~\ref{app:proofs}, with the proof of \cref{thrm:main-result-2l} given in Appendix~\ref{app:2local}.

Let $\omega$ represent the primitive $8$-th root of unity, i.e., $\omega = {\rm e}^{2\pi{\rm i}/8}$.
Note that any subscript on $\omega$ does not refer to another root of unity.

For a generic state $\ket{\psi} \in (\mathbb{C}^2)^{\otimes n}$ expressed as a superposition state in the computational basis, we denote the amplitude of a computational basis state $\ket{j}$ as $\braket{j}{\psi} \eqqcolon \alpha(j)$.
We denote the support of a state $\ket{\psi} \in (\mathbb{C}^2)^{\otimes n}$ as ${\text{supp}(\ket{\psi}) \coloneqq \{j \in \B^n : \alpha(j)\neq 0\}}$.

For an $n$-qubit normalised state $\ket{\psi}$, $\mathcal{Q}_\psi : \B^n \to \mathbb{S}$ denotes a map from $n$-bit strings $j$ to an algorithm $\mathcal{S}$, encoding a complex number corresponding to the computational basis amplitude $\alpha(j)$.
We refer to $\mathcal{Q}_\psi$ as a \emph{query algorithm} for the state $\ket{\psi}$, where ``to query'' implies the ability to request a specific computational basis amplitude.
Unless otherwise specified, we assume the cost of querying $\mathcal{Q}_\psi$ is $O(1)$.

For a set of $m$ quantum gates $\mathcal{G} = \{g_1, \cdots, g_m\}$, denote $\mathbb{F}_j$ as the field for which the entries of the gate $g_j$ are defined.
Let $\mathbb{F}_\mathcal{G}$ be the smallest field containing the entries of any unitary $U$ produced by a polynomial-length sequence of gates from $\mathcal{G}$.

\subsection{Binary Representations}\label{sec:binary-representations}
A positional number system represents numerical values using a base $b$ and digits $\{x_j\}$,
\begin{equation*}
    x = \sum_j x_j \, b^j.
\end{equation*}
Throughout this work, we use the binary positional system with base $2$ and digits $\{0,1\}$.
For a real number $d$, we denote its binary representation by ${\rm bin}(d)$, and its bit length by $|{\rm bin}(d)|$.
An \emph{exact binary representation} is one that uses a finite number of bits to represent the value exactly in binary.
Rational numbers admit such finite representations, whereas irrational or non-terminating fractions require infinitely many bits.

We write $\B^n$ for the set of all $n$-bit strings.
For $x \in \B^n$, $x[i]$ denotes its $i$-th bit ($1 \leq i \leq n$), and for a subset $S \subseteq \B^n$ we define the indicator function
\begin{equation*}
    \ind_{S}(x) =
    \begin{cases}
        1 & \text{if } x \in S,\\
        0 & \text{otherwise.}
    \end{cases}
\end{equation*}

The concatenation of two bit strings $x \in \B^n$ and $y \in \B^m$ is
\begin{equation*}
    \conc{x}{y} \coloneqq (x[1], \ldots, x[n], y[1], \ldots, y[m]) \in \B^{n+m},
\end{equation*}
and for two subsets $X \subseteq \B^n$ and $Y \subseteq \B^m$,
\begin{equation*}
    X \times Y \coloneqq \{ \conc{x}{y} : x \in X, y \in Y \}.
\end{equation*}

An \emph{algebraic encoding} refers to such a concatenated bit string whose substrings represent distinct components of a numerical or algebraic quantity.
For example, if $x$ and $y$ are binary encodings of the real and imaginary parts of a complex number $z$, then $\conc{x}{y}$ encodes $z$.
Formally, for an algebraic characteristic $\#$, we define
\begin{equation}\label{eq:alge-enc-def}
    \mathbb{A}^{(\#)}_{p_{\#}} \coloneqq \{\alpha \in \B^{p_\#}\},
\end{equation}
as the set of exact binary representations of $\#$ using $p_{\#}$ bits.
Characteristics of interest include the sign of a number, powers of the imaginary unit, and frequencies associated with specific irrational numbers.

\subsection{Binary Number Classes}\label{sec:number-classes}
We define the notation 
\begin{equation}\label{eq:Np-def}
    \mathbb{N}_p \coloneqq \{{\rm bin}(n) : n\in\mathbb{N}, ~ n \leq 2^p\},
\end{equation}
for the set of all natural numbers exactly representable in $p$ bits, i.e., unsigned integers, and
\begin{equation}\label{Qplusp-def}
    \begin{split}
        \mathbb{Q}^+_p &\coloneqq \{{\rm bin}(q): q\in\mathbb{Q}^+, ~ q=\frac{n}{m}, ~ n,m \in \mathbb{N}_p, ~ m\neq 0\} \\
        &\subset \mathbb{N}_p \times \mathbb{N}_p,
    \end{split}
\end{equation}
for the set of all positive (unsigned) rational numbers algebraically encoded as a numerator and denominator, each exactly represented in $p$ bits.
Note that the denominator is never $0$.
We also define
\begin{equation}\label{eq:Qp-def}
    \mathbb{Q}_p \coloneqq \mathbb{A}_{1}^{({\rm sgn})} \times\mathbb{A}_{1}^{({\rm sgn})} \times \mathbb{Q}_p^+,
\end{equation}
as the set of all (signed) rational numbers, algebraically encoded with a sign bit for both the numerator and denominator, each exactly represented in $p$ bits.
From this point on, we assume the first sign bit is for the numerator and the second is for the denominator.
Finally, we have 
\begin{equation}\label{eq:Cp-def}
    \mathbb{C}_p \coloneqq \mathbb{Q}_p \times \mathbb{Q}_p,
\end{equation}
as the set of all complex numbers, algebraically encoded, with a real and imaginary part exactly represented as rational numbers in $p$ bits.
From this point on, we assume the first half of the bits in the algebraic encoding are for the real part and the other half are for the imaginary part.
Ideas similar to these have been explored in the context of gate sets for specific classes of problems~\cite{giles2013exact,nam2020approximate,rudolph2024towards}.
We additionally note that the set of numbers $\mathbb{C}_p$ is closely related to the Gaussian rationals $\mathbb{Q}({\rm i})$; we however favour our notation to separate from algebraic properties of fields like $\mathbb{Q}({\rm i})$, such as closure --- a property not important for our purposes.

It is not hard to show that for elements in each set above, the length of the binary strings are $p$, $2p$, $2p+2$, and $4p+4$ respectively.
Note that the asymptotic length of each string is $\Theta(p)$.
Additionally, it is easy to see that
\begin{align*}
    \forall\, n \in \mathbb{N}_p,&~0 \leq n\leq 2^p, \\
    \forall\, q \in \mathbb{Q}_p^+,&~2^{-p} \leq q \leq 2^p, \\
    \forall\, q \in \mathbb{Q}_p,&~2^{-p} \leq \abs{q} \leq 2^p, \\
    \forall\, z \in \mathbb{C}_p,&~2^{-p} \leq \abs{\re(z)},\abs{\im(z)} \leq 2^p,\\
    &\implies~2^{-p} \leq \abs{z} \leq 2^{p+\frac{1}{2}} .
\end{align*}
The set containment ${\mathbb{N}_p \subset \mathbb{Q}_p^+ \subset \mathbb{Q}_p \subset \mathbb{C}_p}$ follows trivially.
See Appendix~\ref{app:number-form} for examples on the explicit form of these binary strings.
We acknowledge our encodings are not optimal in terms of space and for hardware implementations.
Our results will follow for more practical encodings, such as using the two's complement representation for signed integers or representing algebraic numbers as roots of polynomials in $\mathbb{Z}[x]$.
However, for simplicity and ease of explanation, we will use the encodings defined above.

As a final remark, we introduce two more general sets of algebraic encoded numbers:
\begin{equation}\label{eq:complex-algebraic-set}
    \mathbb{C}_p\llbracket\#_{q_\#}\rrbracket \coloneqq \mathbb{A}_{q_{\#}}^{(\#)} \times \mathbb{C}_p.
\end{equation}
The action of the algebraic characteristic $\#$ may be distributed in different ways across the real and imaginary parts of the complex number.
We typically assume the action is distributed locally across the parts, i.e., for some $\alpha \in \mathbb{A}^{(\#)}$ we have $\alpha\circ a + {\rm i}\alpha\circ b$.
Dependent on the quantity $\#$, it is not necessarily true that $\alpha\circ(a + {\rm i} b) = \alpha\circ a + {\rm i}\alpha\circ b$.

For a more general scenario, we define
\begin{equation*}
    \mathbb{S}_p^{\star}\llbracket \#_{q_{\#}} \rrbracket \coloneqq \mathbb{A}_{q_{\#}}^{(\#)} \times \mathbb{S}_p^{\star};
\end{equation*}
$\mathbb{S}_p^{\star}$ is a set of numbers for a specific characteristic $\star$.
We use $\mathbb{S} \in \{\mathbb{N}, \mathbb{Q}, \mathbb{C}\}$ to denote a general set.

\subsection{Computational Complexity}\label{sec:complexity}
Complexity classes considered in this work refer to the \emph{promise problem} variants (unless explicitly specified otherwise), rather than language classes.
We drop all ``promise'' prefixes, for example \cl{promiseMA} is simply \cl{MA}.
A notion of ``hard'' or ``complete'' problems is appropriate under standard Karp reducibility.

We use the circuit model to define complexity classes and denote $\bsr$ as the representation of circuit from the uniformity condition we impose on the circuit families.
That is, $\bsr$ encodes a circuit $C$ specifying: 
\begin{inparaenum}[(1)]
    \item the sequence of gates in $C$ and the register they act on, 
    \item the initialisation of the input register and 
    \item the categorisation of the input and output registers.
\end{inparaenum}
For brevity, we define $\mathbb{K} = \{\ket{0},\ket{1}\}$; thus a state $\ket{\chi} \in \mathbb{K}^{\otimes w}$ implies $\chi \in \B^w$.

\begin{definition}[Semi-Classical Verification Circuit~\cite{waite2025complexity}]
    A semi-classical verification circuit is a tuple ${F_n = (n,w,m,p,U)}$ where $n$ is the number of input qubits, $w$ is the number of proof qubits, $m$ is the number of ancillae initialised in the $\ket{0}$ state and $p$ is the number of ancillae initialised in the $\ket{+}$ state.
    The circuit $U$ is a quantum circuit on $M\coloneqq n + w + m + p$ qubits, comprised of $K = O(\poly{n})$ gates from the set $\{X, \Gate{Cnot}, \Gate{Toffoli}\}$.
    The acceptance probability of a semi-classical verification circuit $F_n$, given some input string $x\in \varSigma^n$ and a proof state $\ket{\chi} \in \mathbb{K}^{\otimes w}$ is defined as:
    \begin{equation*}
        \Pr\left[F_n(x,\ket{\chi})\right]= \bra{\phi}U^\dagger \Pi_{\text{out}} U \ket{\phi},
    \end{equation*}
    where $\ket{\phi} = \ket{x,\chi,0^{m},+^{p}}$ and $\Pi_{\text{out}} = \ketbra{1}_1$ is a projector onto the output qubit.
\end{definition}

Note that $w,m,p = O(\poly{n})$.

\begin{definition}[\clsb{MA}{q}~\cite{bravyi2006complexity}]
    A promise problem ${L = (L_{\textsc{yes}}, L_{\textsc{no}})}$ belongs to the class \clsb{MA}{q} if there exists a polynomial-time generated stoquastic circuit family $\mathcal{F} = \{F_n : n \in \mathbb{N}\}$, where each semi-classical circuit $F_n$ acts on $n + w+m + p$ input qubits and produces one output qubit, such that:
    \begin{itemize}
        \item[] \textbf{Completeness}: For all $x\in L_{\textsc{yes}}$, $\exists \ket{\chi}\in\mathbb{K}^{\otimes w}$, such that, $ \Pr\left[F_{|x|}(x,\ket{\chi})=\mathtt{1} \right] \geq 2/3$
        \item[] \textbf{Soundness}: For all $x\in L_{\textsc{no}}$, $\forall\ket{\chi}\in\mathbb{K}^{\otimes w}$, then, $ \Pr\left[F_{|x|}(x,\ket{\chi})=\mathtt{1} \right] \leq 1/3$
    \end{itemize}
\end{definition}

Without loss of generality we always assume $p$ and $m$ are even for \clsb{MA}{q}.
It was shown by Bravyi \emph{et al}.~\cite{bravyi2006complexity} that $\clsb{MA}{q} = \cl{MA}$, and by Liu~\cite{liu2021stoqma} that ${\cl{eStoqMA}=\cl{MA}}$.
Furthermore, \cl{MA} admits amplification to perfect completeness \cite{zachos1987probabilistic}, i.e., $\clsb{MA}{$1$} = \cl{MA}$.

A sequence of $K = O(\poly{n})$ classically reversible gates $\{R_j\}_{j \in [K]}$ can be expressed in a $O(\poly{n})$ sized tuple (bit string) $\bsr$ such that, the gate parameters\footnote{Gate type, control and target qubits, index in sequence} are encoded in $\bsr$.
Similarly, specific quantum circuits can also be encoded this way.
For example, a quantum circuit comprised of $\{X, \Gate{Cnot}, \Gate{Toffoli},T\}$ can be encoded in a $O(\poly{n})$ sized bit string $\bsr$.
For a quantum circuit comprised of arbitrary phase gates, the encoding is more complex and may require a more sophisticated encoding scheme.\footnote{Likely one would have to specify the phase up to some precision.}

The gates $X$, $\Gate{Cnot}$, $\Gate{Toffoli}$ have entries over $\mathbb{F}_2$, where as the gates $T$ and $T^\dagger$ have entries over $\mathbb{Q}({\rm i},\sqrt{2})$.
When the proof state $\ket{\chi}$ is a computational basis state, the amplitudes of the superposition state during the evolution of an \cl{MA} circuit can be expressed as $a/2^{p/2}$ for some $a\in\mathbb{N}$.

\section{Succinct States}\label{sec:succinct-states}
To exactly specify a generic quantum state would require an infinite amount of classical information, due to the continuous nature of its amplitudes.
A more practical approach is to describe quantum states approximately, up to a certain precision in a chosen norm, such as the trace norm or the $\ell^2$ norm.
Even then, the number of bits required for such approximate descriptions can scale exponentially with the number of qubits.
Nevertheless, there exists a subclass of quantum states that can be \emph{exactly} specified using only a polynomial number of bits.
Our focus lies on a particular family of such states, which we refer to as \emph{succinct states}.
In addition to exact descriptions, slightly larger families allow for faithful \emph{approximations} using a polynomial number of bits.

The states we consider are equipped with query access $\mathcal{Q}_\psi(x)$ that provides an efficient classical algorithm to compute the amplitudes $\braket{x}{\psi}$ exactly (modulo a scaling factor).
This access model captures a powerful form of classical control over a quantum state, going beyond simple preparation arguments.
Indeed, even when a quantum state $\ket{\psi}$ is efficiently preparable --- that is, when there exists a polynomial-size quantum circuit $U$ such that $\ket{\psi} = U\ket{0^n}$ --- this does not imply that its amplitudes are classically tractable.
In fact, computing the amplitude $\mel{x}{U}{0^n}$ exactly is known to be \clw{GapP}{hard}~\cite{fortnow1999complexity, fenner1999determining}, and approximating the probability $\abs{\mel{x}{U}{0^n}}^2$ to relative error is \clw{\#P}{hard}.
Thus, the ability to classically query the amplitudes of a quantum state is highly non-trivial and bypasses complexity results that hold even for efficiently preparable states.

Key examples of succinct states include: product states, semi-classical subset states~\cite{cade2023improved}, weight-$k$ states~\cite{bremner2025parameterized} and tensor network representations, particularly matrix product states (MPS).
Among these, MPS are well-recognised and well-studied in the fields of complexity theory, many-body physics and quantum chemistry, making them a strong candidate for ideal succinct states.
It is well-known that MPS are described using a set of tensors, expressed as
\begin{equation*}
    \ket{\Psi} \coloneqq \sum_{\underline{\sigma} \in \Omega} {\rm Tr}\big[\prod_{v} A_v^{({\underline{\sigma}_v})}\big] \ket{\underline{\sigma}},
\end{equation*}
where $\underline{\sigma} $ represents a configuration of the $d$-dimensional system, and $A_v^{({\underline{\sigma}_v})}$ are tensors of size $\chi \times \chi$. 
The parameter $\chi$, known as the \emph{bond dimension}, quantifies the entanglement in the state. 
When $\chi$ scales polynomially with the system size, the state can be efficiently described, classically, via its tensors --- the classical space complexity is $O(n \chi^2 d)$.

Tensor network states, especially MPS and PEPS, are particularly useful for representing ground states of certain local Hamiltonians that obey area laws.
Systems with area laws have ground states that exhibit an entanglement entropy which scales with the boundary area of a bipartition on the system.
In one dimension, this connection is well-established: ground states of gapped Hamiltonians can be efficiently represented as MPS~\cite{hastings2004locality,verstraete2006matrix}.
In higher dimensions, the situation is more complicated. 
While frustration-free gapped Hamiltonians have been shown to obey area laws~\cite{anshu2022area}, there are counterexamples~\cite{ge2015area}, and whether area laws hold generally remains an open question~\cite{eisert2010colloquium,huang2020local}.
This poses challenges for applying tensor networks broadly in higher dimensions.

It is important to note that many tensor network methods assume approximate representations of ground states.
Our focus here is different: we define succinct states as a broader class that aims to \emph{exactly} encode ground states.
This exact correspondence offers a stricter framework and distinguishes our approach from methods relying on approximate descriptions. 
While MPS and PEPS are dense in ${\rm SU}(2)$, and thus provide natural candidates for approximate succinct representations, their utility in exactly encoding states remains an open question.

In \cref{sec:conclusion}, we conjecture how approximate succinct representations extend the range of Hamiltonians whose ground states can be efficiently described, potentially generalising this problem.
We now introduce formal definitions of succinct states.
Succinct states naturally arise in various forms, due to the fact quantum states have complex amplitudes.
Here, we focus on states that admit an \emph{exact representation} within a fixed number of bits, and provide a rigorous framework for describing such states.
As detailed in \cref{sec:preliminaries}, there are four main families of algorithms to encode numerical values exactly in a polynomial number of bits --- these will define the succinct states we consider.
We formally define one representative family, with analogous definitions applying to the others.

\begin{definition}[$\mathbb{C}_{p(n)}$-succinct state]\label{def:succinct-state}
    A normalised $n$-qubit state $\ket{\psi} = \sum_{j \in \B^n} \alpha(j) \ket{j}$, where $\alpha(j) \in \mathbb{C}$, is a $\mathbb{C}_{p(n)}$-succinct state if there exists an efficient classical (query) algorithm $\mathcal{Q}_\psi$ that, given an $n$-bit string $x$, outputs the exact binary representation of
    \begin{equation*}
        \mathcal{Q}_\psi(x) = c_\psi \cdot \alpha(x),
    \end{equation*}
    for some constant $0 < c_\psi \leq 2^{p(n)}$.
\end{definition}

The definition implies the value $c_\psi \cdot \alpha(x) = a + {\rm i}b$ is represented specifically in the form shown in \cref{eq:z-bin-breakdown}.
This heavily restricts the types of states that fall within this definition, not to mention the requirement for the efficient classical (query) algorithm.
It is clear from the definition that the classical algorithm can provide the amplitude of an \emph{un-normalised} version of the state $\ket{\psi}$.
In the case where $c_\psi = 1$, the algorithm outputs the exact amplitude of the state.

Another important aspect to note is that the (scaled) amplitudes are algebraically encoded; the output of the classical algorithm does not approximate the amplitude but rather provides a numerator-denominator pair.
Basic number theory shows that with this representation, the amplitudes cannot be arbitrary irrational numbers (even with the scaling factor).
This is a crucial point we will discuss further.

\begin{figure}[!ht]
    \centering
    \begin{tikzpicture}
        \pic[scale=0.7]{hierarchy};
    \end{tikzpicture}
    \caption{A hierarchy of succinct states. Not to scale.}
    \label{fig:succinct-states}
\end{figure}

As we discuss them later, we introduce the following definition.

\begin{definition}[$\mathbb{C}_{p(n)}\dbbrckt{\omega}$-succinct states]
    A normalised state $\ket{\psi} = \sum_{j\in\B^n} \alpha(j) \ket{j}$, where $\alpha(j) \in \mathbb{C}$, is a $\mathbb{C}_{p(n)}\dbbrckt{\omega}$-succinct state if there exists an efficient classical (query) algorithm $\mathcal{Q}_\psi$ that, given an $n$-bit string $x$, outputs the exact binary representation of
    \begin{equation*}
        \mathcal{Q}_\psi(x) = c_\psi \cdot \omega^s \cdot \alpha(x),
    \end{equation*}
    for some constant $0< c_\psi \leq 2^{p(n)}$ and $s\in\{0,\dots,7\}$.
\end{definition}

Recall the set $\mathbb{C}_{p(n)}\dbbrckt{\omega}$ algebraically encodes the integer $s$ in the first three bits of the output string.
This definition is a generalisation of the previous one and too admits extensions to other analogously defined sets.
The following remarks discuss how roots of unity can be encoded and more generally, how algebraic numbers can be approximated.

\begin{remark}
    Cyclotomic polynomials $\Phi_n(x)$ are polynomials whose roots are the primitive $n$-th roots of unity.
    These polynomials can be represented as binary strings.
    For instance, the $8$-th root of unity, $\omega$, has the associated cyclotomic polynomial $\Phi_8(x) = x^4 + 1$.
    In binary form, this polynomial can be expressed as:
    \begin{equation*}
        \text{bin}(\Phi_8(x)) = (01\;00\;00\;01),
    \end{equation*}
    where each pair of bits represents the sign and value of the polynomial's coefficients.
    Alternatively, a $3$-bit register can be used to represent the integer $s$, with the convention that the three most significant bits correspond to the power of $\omega$.
    Either encoding method is acceptable; we employ the latter for simplicity.
\end{remark}

\begin{remark}
    A result of~\citet{kuperberg2015hard} states that all algebraic numbers admit an $\epsilon$-multiplicative approximation in time $\poly{n,\ln(1/\epsilon)}$.
    Moreover, every algebraic number has a \emph{fully polynomial-time exponential-approximation scheme} (\cl{FPTEAS}).
    All values considered in this work can be expressed as roots of polynomials with integer coefficients~\cite{milne2003fields}.
    However, for simplicity in explanation, we focus on our proposed encodings.
\end{remark}

\cref{fig:succinct-states} shows a hierarchy of succinct states. 

A useful property we can derive from the definition of succinct states is the ability to calculate the ratio of two amplitudes.
This is a key property that is used in the proof of the containment in \cl{MA}~\cite{jiang2025local}.

\begin{restatable}[]{proposition}{cprop}
    For a succinct state $\ket{\psi}$ with an exact (scaled) amplitude representation in $p(n)$ bits, given a tuple of two $n$-bit strings $(x,y)$, using two calls to the query algorithm $\mathcal{Q}_\psi$, we can obtain the exact binary representation of the amplitude ratio
    \begin{equation*}
        \mathcal{Q}'_\psi(x,y) = \frac{\alpha(x)}{\alpha(y)},
    \end{equation*}
    in $O(p(n))$ bits (for the appropriate set), provided $\alpha(y)\neq 0$.
\end{restatable}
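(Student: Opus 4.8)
The plan is to unfold the definitions of a succinct state and compute directly, using the arithmetic of the algebraically encoded number sets $\mathbb{N}_p$, $\mathbb{Q}^+_p$, $\mathbb{Q}_p$, $\mathbb{C}_p$ (and their $\dbbrckt{\#}$ variants) introduced in \cref{sec:preliminaries}. First I would invoke the query algorithm $\mathcal{Q}_\psi$ twice, on $x$ and on $y$, to obtain the exact binary strings encoding $c_\psi\cdot\alpha(x)$ and $c_\psi\cdot\alpha(y)$ for the (unknown but fixed) scaling constant $c_\psi$ with $0<c_\psi\le 2^{p(n)}$. Since both outputs carry the same scaling factor $c_\psi$ (and, in the $\dbbrckt{\omega}$ case, possibly different powers $\omega^{s_x},\omega^{s_y}$, which combine into $\omega^{s_x-s_y\bmod 8}$, a valid element of the flag register), forming the quotient cancels $c_\psi$ exactly and yields $\alpha(x)/\alpha(y)$ with no residual scaling, provided $\alpha(y)\neq 0$ — which is exactly the stated hypothesis, and is detectable from the output string since $y\in\mathrm{supp}(\ket{\psi})$ iff $\mathcal{Q}_\psi(y)$ encodes a nonzero value.

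Next I would carry out the division within the appropriate encoded set. For $\mathbb{Q}^+_p$: writing the two outputs as $n_1/m_1$ and $n_2/m_2$ with all of $n_1,m_1,n_2,m_2\in\mathbb{N}_p$, the ratio is $(n_1 m_2)/(m_1 n_2)$; each product lies in $\mathbb{N}_{2p}$ (a $2p$-bit integer since the product of two $p$-bit integers has at most $2p$ bits), so the result is exactly representable in $\mathbb{Q}^+_{2p}$, i.e. $O(p(n))$ bits. For $\mathbb{Q}_p$ one additionally XORs the sign bits of numerator and denominator respectively, which is again exact and adds only $O(1)$ bits. For $\mathbb{C}_p$, writing $\alpha(x)=a+\mathrm{i}b$ and $\alpha(y)=c+\mathrm{i}d$ with $a,b,c,d\in\mathbb{Q}_p$, I would use the standard formula
\begin{equation*}
    \frac{a+\mathrm{i}b}{c+\mathrm{i}d} = \frac{ac+bd}{c^2+d^2} + \mathrm{i}\,\frac{bc-ad}{c^2+d^2};
\end{equation*}
each of $ac+bd$, $bc-ad$, $c^2+d^2$ is a sum/difference of products of elements of $\mathbb{Q}_p$ and hence lies in $\mathbb{Q}_{O(p)}$ after clearing denominators, and the denominator $c^2+d^2$ is nonzero precisely because $\alpha(y)\neq 0$; so the real and imaginary parts are each elements of $\mathbb{Q}_{O(p)}$, giving a $\mathbb{C}_{O(p)}$ encoding. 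The $\dbbrckt{\sqrt{\cdot}}$ and $\dbbrckt{\omega}$ flag registers are handled separately and trivially: square-root flags multiply (or the flag bit XORs, since $\sqrt{2}\cdot\sqrt{2}=2$ is rational and can be absorbed), and $\omega$-powers subtract mod $8$.

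All of these operations — integer multiplication, addition, XOR of flag bits, and modular subtraction on a $3$-bit register — are polynomial-time classical algorithms on binary strings, and all inputs have length $\Theta(p(n))$, so the composite procedure $\mathcal{Q}'_\psi$ is itself an efficient classical algorithm whose output has length $O(p(n))$ in the appropriate set; this establishes the claim. I do not anticipate a genuine obstacle here: the statement is essentially a bookkeeping exercise verifying that the encoded number sets are closed under division up to a controlled blow-up in bit-length and that the scaling constant $c_\psi$ cancels. The only point requiring a line of care is the $\mathbb{C}_p$ case, where one must confirm that clearing the common denominator $c^2+d^2$ keeps numerator and denominator within $p\cdot O(1)$ bits rather than letting them grow — which follows since $a,b,c,d$ each have $O(p)$-bit numerators and denominators, so the products and sums appearing above have $O(p)$-bit numerators and denominators as well.
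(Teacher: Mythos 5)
Your proposal is correct and follows essentially the same route as the paper's own proof: a case analysis over the encoded sets in which $c_\psi$ cancels in the quotient, the rational cases are handled by cross-multiplication of $p$-bit numerators and denominators (giving $\mathbb{Q}^+_{2p}$ resp. $\mathbb{Q}_{2p}$ with XOR of sign bits), the complex case uses the standard division formula landing in $\mathbb{C}_{O(p)}$, and the $\omega$-flag is updated by subtraction mod $8$. No gaps; your observation that $\alpha(y)\neq 0$ guarantees $c^2+d^2\neq 0$ matches the paper's implicit assumption of a nonzero denominator in each case.
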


Given the formal definitions above, we can consider what happens when we combine certain types of succinct states.
For example: ``Is the tensor product of two succinct states also succinct?'', ``What are some natural examples of succinct states?''; we will consider these questions and more in the following sections.
The main idea going forward is to consider --- \emph{given a succinct state, with the associated classical query algorithm, if we apply a gate and/or combine succinct states in some manner, can we still efficiently compute the amplitude of the resulting state (using the original query algorithm(s))?}.
Answering this question will be crucial for the \clw{MA}{hardness} proof of the problem.

\subsection{Properties of Subset States}\label{sec:subset-states}
The most natural succinct state we might consider is the subset state.

\begin{definition}[Subset state]\label{def:subset state}
    For any subset $S \subseteq \B^n$, the subset state on $S$ is defined as
    \begin{equation*}
        \ket{S} = \frac{1}{\sqrt{|S|}}\sum_{s\in S} \ket{s}.
    \end{equation*}
\end{definition}

Two easy propositions that follow are:

\begin{proposition}[$\ket{0}$-padding]\label{prop:0-padding}
    If $\ket{S}$ is a subset state on $S \subseteq \B^n$ then
    \begin{equation*}
        \ket{S} \left(\bigotimes_{j=1}^{m} \ket{0}\right)
    \end{equation*}
    is also a subset state on $S \times \{0\}^m \subset \B^{n}\times \B^{m}$.
\end{proposition}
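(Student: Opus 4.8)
The plan is to unfold the tensor product directly and match the result against Definition~\ref{def:subset state}. First I would write $\bigotimes_{j=1}^{m}\ket{0} = \ket{0^m}$ and expand, using the concatenation notation fixed in \cref{sec:preliminaries},
\[
\ket{S}\ket{0^m} = \frac{1}{\sqrt{|S|}}\sum_{s\in S}\ket{s}\ket{0^m} = \frac{1}{\sqrt{|S|}}\sum_{s\in S}\ket{\conc{s}{0^m}}.
\]
Thus the support of the resulting state is exactly $\{\conc{s}{0^m} : s\in S\}$, which by the definition of the product set is precisely $S\times\{0\}^m \subseteq \B^{n+m}$.

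Next I would reconcile the normalisation constant. Since the padding map $s\mapsto\conc{s}{0^m}$ is injective, $|S\times\{0\}^m| = |S|$, so $1/\sqrt{|S|} = 1/\sqrt{|S\times\{0\}^m|}$. Substituting gives
\[
\ket{S}\ket{0^m} = \frac{1}{\sqrt{|S\times\{0\}^m|}}\sum_{t\in S\times\{0\}^m}\ket{t},
\]
which is exactly the subset state on $S\times\{0\}^m$, as required.

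There is no genuine obstacle here; the only points requiring care are the consistent use of the concatenation/product-set notation and the (trivial) observation that the padding map is injective so the cardinalities agree. One could optionally append a remark that $\ket{0}$-padding also preserves $\mathbb{N}_1$-succinctness: a query algorithm for $\ket{S\times\{0\}^m}$ simply checks that the last $m$ bits of its input are all zero (returning $0$ otherwise) and otherwise calls $\mathcal{Q}_S$ on the first $n$ bits, which is an $O(1)$-overhead classical procedure. This anticipates the more general composition statements (\cref{lma:tensor-product-C-succinct}) without relying on them.
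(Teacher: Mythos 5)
Your proof is correct and is exactly the direct expansion the paper has in mind — the paper states this as an "easy proposition" without giving a proof, and your argument (unfolding the tensor product, identifying the support with $S\times\{0\}^m$, and noting the cardinalities agree) is the intended trivial verification. The optional remark on preserving $\mathbb{N}_1$-succinctness is a harmless and accurate aside consistent with \cref{lma:tensor-product-subset state}.
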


\begin{proposition}[Subset state tensor product]\label{prop:subset-tensor}
    If $\ket{S}$ is a subset state on $S\subseteq \B^n$ and $\ket{T}$ is a subset state on $T\subseteq \B^m$ then
    \begin{equation*}
        \ket{S} \otimes \ket{T} = \frac{1}{\sqrt{|S||T|}}\sum_{s \in S, t \in T} \ket{s}\ket{t} = \frac{1}{\sqrt{|S||T|}}\sum_{r \in S \times T} \ket{r},
    \end{equation*}
    is a subset state on $S \times T \subseteq \B^{n}\times \B^{m}$.
\end{proposition}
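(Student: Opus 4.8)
The final statement to prove is Proposition~\ref{prop:subset-tensor}: the tensor product of two subset states is again a subset state on the product set $S \times T$.

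\medskip

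\textbf{Proof plan.} The plan is to verify the claimed identity by a direct computation on the amplitudes, and then confirm that the resulting vector matches \cref{def:subset state} for the set $S \times T \subseteq \B^n \times \B^m$. First I would expand the tensor product using the definition of each subset state: since $\ket{S} = |S|^{-1/2}\sum_{s\in S}\ket{s}$ and $\ket{T} = |T|^{-1/2}\sum_{t\in T}\ket{t}$, bilinearity of $\otimes$ gives $\ket{S}\otimes\ket{T} = (|S||T|)^{-1/2}\sum_{s\in S}\sum_{t\in T}\ket{s}\otimes\ket{t}$. Next I would identify $\ket{s}\otimes\ket{t}$ with the computational basis vector $\ket{\conc{s}{t}}$ on $n+m$ qubits, which is exactly the convention fixed in \cref{sec:preliminaries} for concatenation of bit strings; under this identification the double sum over $(s,t)\in S\times T$ becomes a single sum over $r \in S\times T$, using the definition $S\times T = \{\conc{s}{t} : s\in S, t\in T\}$ from the preliminaries.

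\medskip

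The only nontrivial point — and the step I would treat most carefully — is the normalisation: I must check that $|S\times T| = |S|\,|T|$, so that the prefactor $(|S||T|)^{-1/2}$ is indeed $|S\times T|^{-1/2}$ as required by \cref{def:subset state}. This follows because the concatenation map $(s,t)\mapsto \conc{s}{t}$ is injective: from a string in $\B^{n+m}$ one recovers $s$ as the first $n$ bits and $t$ as the last $m$ bits, so distinct pairs $(s,t)$ give distinct concatenations, and hence $|S\times T| = |S\times_{\mathrm{set}} T|$ (the Cartesian product) $= |S|\,|T|$. With this in hand the chain of equalities in the proposition statement is established and the final vector has the form $|S\times T|^{-1/2}\sum_{r\in S\times T}\ket{r}$, i.e. it is the subset state on $S\times T\subseteq\B^n\times\B^m$, completing the proof.

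\medskip

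I expect no serious obstacle here: the result is essentially bookkeeping once the concatenation-versus-tensor-product identification is made explicit, and the one thing worth stating clearly is the injectivity of concatenation that underlies the cardinality count. (For contrast, one should note that this is genuinely special to subset states — an analogous "tensor product stays in the class" statement for the algebraically encoded families requires the more substantial argument of Lemma~\ref{lma:tensor-product-C-succinct}, where the bit-precision must grow and a combined query algorithm must be constructed; no such subtlety arises for the plain combinatorial object here.)
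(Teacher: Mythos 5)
Your proposal is correct and follows essentially the same route as the paper, which treats this as an easy direct computation (the identical expansion and re-indexing over $S \times T$ appears inside the paper's proof of \cref{lma:tensor-product-subset state}). Your explicit remark that concatenation is injective, so $\abs{S\times T} = \abs{S}\abs{T}$ and the prefactor matches \cref{def:subset state}, is the only point the paper leaves implicit, and it is handled correctly.
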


\begin{restatable}[]{lemma}{lmasubsetsuccinct}
    \label{lma:subset-succinct}
    The subset state $\ket{S}$ is an $\mathbb{N}_1$-succinct state.
\end{restatable}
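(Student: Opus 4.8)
The plan is to show that the subset state $\ket{S}$ on $S\subseteq\B^n$ admits an efficient classical query algorithm $\mathcal{Q}_S$ that outputs the exact binary representation of $c_S\cdot\alpha(j)$ for every $j\in\B^n$, where the constant $c_S$ lies in the allowed range and the output is always an element of $\mathbb{N}_1 = \{\bin{0},\bin{1}\}$ (i.e.\ a single flag bit). The key observation is that the amplitudes of $\ket{S}$ take only two values: $\alpha(j) = 1/\sqrt{|S|}$ if $j\in S$, and $\alpha(j)=0$ otherwise. Hence if we scale by the normalisation factor $c_S = \sqrt{|S|}$ — which satisfies $0 < c_S \leq 2^n \leq 2^{p(n)}$ since $|S|\leq 2^n$, for an appropriate polynomial $p$ — the scaled amplitude $c_S\cdot\alpha(j)$ equals exactly $\updelta_{j,S}\in\{0,1\}$, which is representable in a single bit.

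First I would recall the definition of $\mathbb{N}_1$-succinctness (the analogue of the $\mathbb{C}_{p(n)}$-succinct-state definition, with the target set $\mathbb{N}_1$ and flag bit count $1$): we need an efficient classical algorithm that on input $j\in\B^n$ outputs the exact binary representation of $c_S\cdot\alpha(j)$, for some constant $0 < c_S \leq 2^{p(n)}$. Next I would exhibit the algorithm explicitly: $\mathcal{Q}_S(j)$ decides whether $j\in S$ and outputs $\bin{1}$ if so and $\bin{0}$ otherwise. For this to be "efficient classical" in the required sense, I would note that the relevant subsets arising in this paper (and in the hardness reductions) have polynomial-time-decidable membership — indeed for the history-state applications the set $S$ is specified by a polynomial-size description $\bsr$ and membership reduces to checking a bit string against a circuit specification — so $\mathcal{Q}_S$ runs in $\poly{n}$ time. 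I would then verify that the output $\updelta_{j,S} = c_S\cdot\alpha(j)$ with $c_S = \sqrt{|S|}$ matches the definition: when $j\in S$, $c_S\cdot\alpha(j) = \sqrt{|S|}\cdot\frac{1}{\sqrt{|S|}} = 1$, and when $j\notin S$ it is $0$, both in $\mathbb{N}_1$, and $c_S = \sqrt{|S|}\leq 2^{p(n)}$.

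I expect the only subtlety — and the point I would be careful to state cleanly rather than a genuine obstacle — is the meaning of "efficient classical (query) algorithm" for an arbitrary subset $S$: without a succinct description of $S$, membership need not be polynomial-time decidable. The resolution is that the lemma is implicitly about subset states whose defining set comes with an efficient membership oracle (as is the case for all subset states used downstream, e.g.\ in \cref{lma:classical-gates-subset state} and \cref{prop:history-state-solution-assumption}, where $S$ is given via the circuit representation $\bsr$); for a fully general $S\subseteq\B^n$ the statement should be read as: if membership in $S$ is efficiently decidable then $\ket{S}$ is $\mathbb{N}_1$-succinct. I would phrase the proof accordingly and note that the scaling factor $c_S=\sqrt{|S|}$ is itself never computed by the algorithm — it only needs to know it exists and is suitably bounded — so no square-root computation is required, which is exactly why $\mathbb{N}_1$ (rather than $\mathbb{Q}^+$ with a $\sqrt{\cdot}$ flag, cf.\ \cref{rmk:subset-more-general}) suffices here.
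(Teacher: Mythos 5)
Your proposal is correct and follows essentially the same route as the paper's proof: scale by $c_S=\sqrt{\abs{S}}$ so that the scaled amplitude is exactly $\updelta_{x,S}\in\{0,1\}$, representable by a single bit, with the query algorithm reduced to a membership check on $S$. Your added remarks — that membership in $S$ must be efficiently decidable (as it is for all subsets used downstream via $\bsr$) and that $\sqrt{\abs{S}}$ never needs to be computed — are clarifications the paper leaves implicit, not a different argument.
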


It is clear that the classical query algorithm for subset states is essentially a membership oracle.
This is because unless $c_S = 1$, the algorithm does not output further useful information.

\begin{remark}
    For a subset $S \subseteq \B^n$, such that $\abs{S}$ is a square number or an integer power of $2$, the subset state $\ket{S}$ is a $\mathbb{Q}^+_{\log_2{\abs{S}}}$-succinct state.
\end{remark}

Motivated by the idea of exactly representing the amplitude of a subset state, we can consider the scenario where the size of the subset is not a power of $2$ or a square number.

\armk

This is quite a powerful type of state since we can now directly output the value of the uniform amplitude.
The query algorithm can still call from the uniform distribution of the support set $S$, this time with the algebraic encoding of the square root of the size of the set.
Since all the amplitudes are the same, the first bit of the output string will always be $1$.

When adding specific algebraic encodings to the output of classical query algorithms, we open the door for more complicated states.
For example, if it is possible for the classical algorithm to output an algebraic encoding of the square root of a particular rational, then it wouldn't be too unjust to carry this idea forward to other states.
This logic plays equally with permitting the algebraic encoding of the sign bit.
We can then, for example, consider the set 
\begin{equation*}
    \mathbb{Q}_p\dbbrckt{\sqrt{\cdot}} = (\mathbb{A}_1^{({\rm sgn})} \times \mathbb{A}_1^{(\sqrt{\cdot})} \times \mathbb{N}_p)^2,
\end{equation*}
and the succinct state definitions that follow.
We will see a simple example of the utility of the square root indicator in the next section.

\subsection{Operations with Subset States}\label{sec:operations-subset-states}
We now consider a range of operations that can be performed with subset states.
The first operation we consider is the tensor product of two subset states.

\begin{restatable}[]{lemma}{tensorproductsubsetstates}
    \label{lma:tensor-product-subset state}
    The tensor product of two subset states $\ket{S}$ and $\ket{T}$, on $S \subseteq \B^n$ and $T\subseteq \B^m$ respectively, is an $\mathbb{N}_1$-succinct state.
\end{restatable}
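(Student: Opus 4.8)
The plan is to reduce the claim to the already-established \cref{prop:subset-tensor} and \cref{lma:subset-succinct}. First I would invoke \cref{prop:subset-tensor}: since $\ket{S}$ is a subset state on $S \subseteq \B^n$ and $\ket{T}$ is a subset state on $T \subseteq \B^m$, their tensor product $\ket{S} \otimes \ket{T}$ equals the subset state $\ket{S \times T}$ on the set $S \times T \subseteq \B^{n+m}$. In other words, the tensor product of two subset states is again a genuine subset state, just on a larger index set.

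Then I would apply \cref{lma:subset-succinct} directly to this new subset state: every subset state is an $\mathbb{N}_1$-succinct state, so $\ket{S \times T}$ admits an efficient classical query algorithm $\mathcal{Q}_{S \times T}$ outputting the exact $1$-bit (flag) representation $\updelta_{z, S\times T}$ on input $z \in \B^{n+m}$. The only thing to spell out is that this query algorithm can be constructed from $\mathcal{Q}_S$ and $\mathcal{Q}_T$ in the spirit of the ``combining access models'' theme of the paper: on input $z = \conc{x}{y}$ with $x \in \B^n$, $y \in \B^m$, split $z$ into its first $n$ bits and last $m$ bits, run $\mathcal{Q}_S(x)$ and $\mathcal{Q}_T(y)$, and output the logical \textsc{and} of the two returned flag bits, which equals $\updelta_{x,S} \cdot \updelta_{y,T} = \updelta_{z, S\times T}$. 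Splitting a bit string and taking a conjunction are $O(1)$-cost classical operations, so $\mathcal{Q}_{S\times T}$ is efficient, and the scaling constant $c = 1$ is consistent with the $\mathbb{N}_1$ flag-bit convention.

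There is essentially no obstacle here — the statement is a one-line corollary of the two cited results, and the mild point worth being careful about is bookkeeping: confirming that $|S \times T| = |S||T|$ so the uniform amplitude $1/\sqrt{|S||T|}$ is indeed the common amplitude of the subset state $\ket{S \times T}$ (this is exactly the computation already displayed in \cref{prop:subset-tensor}), and noting that the query algorithm returns only the membership flag rather than the amplitude value, consistent with the remark following \cref{lma:subset-succinct} that the $\mathbb{N}_1$ query algorithm is essentially a membership oracle. If one wanted the exact uniform amplitude instead, one would pass to the $\mathbb{Q}^+_{\log_2|S\times T|}$ or $\mathbb{Q}^+_{\log_2|S\times T|}\dbbrckt{\sqrt{\cdot}_1}$ setting as in \cref{rmk:subset-more-general}, but that is not needed for the stated $\mathbb{N}_1$-succinctness claim.
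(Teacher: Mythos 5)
Your proposal is correct and takes essentially the same route as the paper: its proof also identifies $\ket{S}\ket{T}$ with the subset state on $S \times T$ and builds the query algorithm $\mathcal{Q}_{S\times T}$ by splitting the input $x = \conc{y}{z}$, calling $\mathcal{Q}_S(y)$ and $\mathcal{Q}_T(z)$, and multiplying (i.e.\ AND-ing) the returned flag bits, with the added remark that amplitude ratios then also follow from $\mathcal{Q}'_S$ and $\mathcal{Q}'_T$. One small correction: the scaling constant is not $c = 1$ but $c_{S\times T} = c_S c_T = \sqrt{|S||T|} \leq 2^{(n+m)/2}$, since the true uniform amplitude is $\updelta_{x,S\times T}/\sqrt{|S||T|}$ and the $\mathbb{N}_1$ output is this amplitude rescaled to the membership flag, exactly as in the paper's proof.
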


This is not immediately obvious since it is not clear how the individual query algorithms can be combined to output an $\mathbb{N}_1$ value.
We now consider the action of a reversible classical gate on subset states.

\clma

Inspired by \cref{lma:classical-gates-subset state}, the actions of more general reversible circuits can be considered.
For example, the action of a Hadamard gate on a computational basis state is 
\begin{equation*}
    \Gate{Had}_q \ket{x} = \frac{1}{\sqrt{2}}\big(\ket{y} + (-1)^{x[q]}\ket{\bar{y}} \big),
\end{equation*}
where $y[j] = \bar{y}[j] = x[j]$ for any $j \neq q$ and then, $y[q] = 0$, $\bar{y}[q] = 1$.
Clearly, the effect on the amplitude after the application of a single Hadamard results in the computation of two subsequent amplitudes.
The addition can be efficiently computed using the appropriate calls to the query algorithm.
However, note there are two `problems': \begin{inparaenum}[(a)]
    \item there is a factor of $1/\sqrt{2}$ in the amplitude, and
    \item $k$ Hadamard gates requires $O(2^k)$ calls to the query algorithm.
\end{inparaenum}
To address the first problem, we can consider the algebraic encoding of $1/\sqrt{2}$ in the output string.
The second problem can be addressed by only allowing a constant number of Hadamard gates.
To build up more general ideas we start with the following lemma.

\begin{restatable}[]{lemma}{hadamardgatesubsetstate}
    \label{lma:hadamard-gate-subset state}
    Consider a subset state $\ket{S}$ on $S \subseteq \B^n$.
    Let $\ket{C_q} = \Gate{Had}_q \ket{S}$ for some $q \in [n]$.
    Then $\ket{C_q}$ is an $\mathbb{N}_2\dbbrckt{{1/\sqrt{2}}_1}$-succinct state.
\end{restatable}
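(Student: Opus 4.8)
The plan is to expand $\ket{C_q}=\Gate{Had}_q\ket{S}$ by linearity, extract a closed form for an arbitrary amplitude $\braket{x}{C_q}$ that depends only on the membership in $S$ of the two strings obtained from $x$ by forcing bit $q$ to $0$ and to $1$, and then wrap the membership oracle of \cref{lma:subset-succinct} into a query algorithm of the prescribed shape. Fix $x\in\B^n$ and, for $b\in\B$, let $x^{(b)}$ be the string that agrees with $x$ off position $q$ and has $x^{(b)}[q]=b$. Writing $\ket{S}=\frac{1}{\sqrt{|S|}}\sum_{s\in S}\ket{s}$ and applying the displayed single-qubit Hadamard rule to each term, only the strings $s\in\{x^{(0)},x^{(1)}\}\cap S$ can contribute a $\ket{x}$ component; collecting those with the correct signs (for each contributing $s$ one reads off $s[q]$, and the phase seen by $\ket{x}$ is $+1$ when $s=x^{(0)}$ and $(-1)^{x[q]}$ when $s=x^{(1)}$) gives
\begin{equation*}
    \braket{x}{C_q}=\frac{1}{\sqrt{2|S|}}\Big(\updelta_{x^{(0)},S}+(-1)^{x[q]}\,\updelta_{x^{(1)},S}\Big).
\end{equation*}
I would sanity-check this on the one-qubit instances $S=\{0\}$ and $S=\{1\}$ (the latter producing the expected relative minus sign), and note that the formula already exhibits the interference that makes the lemma non-trivial: when both $x^{(0)},x^{(1)}\in S$, the amplitude on the string whose $q$-th bit is $1$ cancels while that on the string whose $q$-th bit is $0$ doubles.

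Next I would bound the coefficient. The integer $c(x):=\updelta_{x^{(0)},S}+(-1)^{x[q]}\updelta_{x^{(1)},S}$ equals $\updelta_{x^{(0)},S}+\updelta_{x^{(1)},S}\in\{0,1,2\}$ when $x[q]=0$ and $\updelta_{x^{(0)},S}-\updelta_{x^{(1)},S}\in\{-1,0,1\}$ when $x[q]=1$, so $c(x)\in\{-1,0,1,2\}$; its magnitude is at most $2$, hence it is exactly representable in $\mathbb{N}_2$ together with a sign bit. Taking the scaling constant $c_{C_q}=\sqrt{|S|}$ (positive and at most $2^{n/2}$, just as in \cref{lma:subset-succinct}) gives $c_{C_q}\cdot\braket{x}{C_q}=\frac{1}{\sqrt{2}}\,c(x)$. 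The query algorithm $\mathcal{Q}_{C_q}$ on input $x$ then: forms $x^{(0)}$ and $x^{(1)}$ by editing bit $q$; calls $\mathcal{Q}_{S}$ once on each to obtain the two indicator bits; reads $x[q]$; computes $c(x)$; and outputs the algebraic encoding of $\frac{1}{\sqrt{2}}c(x)$, namely a $\frac{1}{\sqrt2}$-flag bit set to $1$ (immaterial when $c(x)=0$), the sign bit of $c(x)$, and the two-bit magnitude of $c(x)$. This costs $O(n)$ bit manipulation plus two $O(1)$ oracle calls, hence is efficient, and by construction its output is the exact binary representation of $c_{C_q}\cdot\alpha(x)$ in the form demanded of an $\mathbb{N}_2\dbbrckt{1/\sqrt{2}_1}$-succinct state, completing the proof.

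The only step requiring genuine care is the coefficient bookkeeping: verifying through the case split on $x[q]$ that $c(x)$ never escapes $\{-1,0,1,2\}$ (in particular that $-2$ cannot arise) and threading the sign through so that the scaled amplitude lands precisely in the prescribed algebraic form; everything else is routine linearity and string manipulation. This lemma is the base case behind the iterated statement of \cref{cor:CRG-T-T-dagger-Hadamard-gate-seq-subset state}, and it already dispatches in miniature the two obstructions flagged just before the statement: the stray $1/\sqrt2$ is swallowed by the flag bit, and the potential $O(2^{k})$ query blow-up for $k$ Hadamards does not bite here because a single Hadamard needs only two queries.
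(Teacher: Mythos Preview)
Your proof is correct and follows essentially the same route as the paper: both derive $\braket{x}{C_q}=\tfrac{1}{\sqrt{2}}\big(\braket{y}{S}+(-1)^{x[q]}\braket{\bar{y}}{S}\big)$ (your $x^{(0)},x^{(1)}$ are the paper's $y,\bar{y}$), set $c_{C_q}=c_S$, and build $\mathcal{Q}_{C_q}$ from two calls to $\mathcal{Q}_S$ plus a signed addition and a $1/\sqrt{2}$ flag. Your explicit case analysis showing $c(x)\in\{-1,0,1,2\}$ is a nice addition; the paper simply asserts the output lands in $\mathbb{A}^{(1/\sqrt{2})}_1\times\mathbb{N}_2$ and remarks that no separate sign encoding is needed for $(-1)^{x[q]}$, so you are in fact slightly more careful on this point than the paper itself.
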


\begin{remark}
    In the setting we have laid out, we cannot simply say that $c_{C_q} = c_S/\sqrt{2}$, since the purpose is to show that it is possible to query the amplitudes of the resultant state using the query algorithm for $\ket{S}$.
    The outcome is that we require an extra component to track powers of $1/\sqrt{2}$.
    However, this is just for this specific example --- looking at the $\ket{C_q}$ in isolation can produce different outcomes, depending on what should be shown.
    In a more general scenario, it may be possible for the pre-factor to `cancel out' out irrational values; yet this may not always be possible.
    Furthermore, binary encodings with a flag bit for irrational numbers can form a space, at least, twice as large.
\end{remark}

The action of a constant number of Hadamard gates on a subset state follows straightforwardly from \cref{lma:hadamard-gate-subset state}.
Let $\bs{q} = (q_1,\dots,q_k)$ be a tuple of $k$ integers such that $q_i \in [n]$ and $q_i \neq q_j$ for $i \neq j$.
We denote the length of the tuple as $\abs{\bs{q}} = k$.

\begin{corollary}\label{cor:hadamard-seq-subset state}
    Consider a subset state $\ket{S}$ on $S \subseteq \B^n$.
    Let $\ket{C_{\bs{q}}} = \prod_{q \in \bs{q}}\Gate{Had}_q \ket{S}$ for a tuple $\bs{q}$, such that ${\abs{\bs{q}} = O(1)}$.
    Then $\ket{C_{\bs{q}}}$ is an $\mathbb{N}_{p}\dbbrckt{\frac{1}{\sqrt{2}}_p}$-succinct state, where $p=\lceil \log_2(\abs{\bs{q}}) \rceil = O(1)$.
\end{corollary}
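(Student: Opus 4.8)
The plan is to prove the statement by induction on $k \coloneqq \abs{\bs q}$, with the base case $k=1$ supplied directly by \cref{lma:hadamard-gate-subset state}. Write $\bs q = (q_1,\dots,q_k)$; since Hadamards on distinct qubits commute, $\ket{C_{\bs q}}$ is well defined independently of the order. For the inductive step, suppose $\ket{C_{(q_1,\dots,q_i)}}$ is an $\mathbb{N}_{p_i}\dbbrckt{\tfrac{1}{\sqrt 2}_{p_i}}$-succinct state with an efficient query algorithm $\mathcal Q_i$ built out of $\mathcal Q_S$. Because the entries of $\bs q$ are pairwise distinct, coordinate $q_{i+1}$ is untouched by the first $i$ Hadamards, so applying $\Gate{Had}_{q_{i+1}}\ket{x} = \tfrac{1}{\sqrt2}\big(\ket{y} + (-1)^{x[q_{i+1}]}\ket{\bar y}\big)$ to the current state gives, for each $z$, amplitude $\tfrac{1}{\sqrt2}\big(\alpha_{i}(z_0) + (-1)^{z[q_{i+1}]}\alpha_{i}(z_1)\big)$, where $z_0$ and $z_1$ are $z$ with bit $q_{i+1}$ forced to $0$ and $1$. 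I would then define $\mathcal Q_{i+1}$ to make the two calls $\mathcal Q_i(z_0)$, $\mathcal Q_i(z_1)$, add or subtract the returned numerators according to the bit $z[q_{i+1}]$, and increment the $\tfrac{1}{\sqrt2}$-power flag by one — this is \cref{lma:hadamard-gate-subset state} applied one layer deeper, with $\ket{C_{(q_1,\dots,q_i)}}$ in the role of the subset state.

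For the bookkeeping, unrolling the recursion shows every amplitude of $\ket{C_{\bs q}}$ has the form $\tfrac{1}{\sqrt{\abs S}}\cdot\tfrac{1}{(\sqrt2)^{k}}\cdot m(z)$, where $m(z)$ is an alternating $\pm$-combination of the $2^{k}$ values returned by $\mathcal Q_S$ on the $2^{k}$ bit strings that agree with $z$ off the coordinates in $\bs q$; each such value is $0$ or $c_S$, and (as the remark after the subset-state definitions notes, we may take $c_S=1$) so $m(z)$ is a signed integer with $\abs{m(z)}\le 2^{k}$, exactly representable, together with a sign bit, in $O(k)=O(1)$ bits. Crucially, each Hadamard contributes exactly one factor of $\tfrac1{\sqrt2}$ to \emph{every} term, so the flag register only ever needs to record an integer up to $k$, i.e.\ $\lceil\log_2\abs{\bs q}\rceil = O(1)$ bits — this is the source of the exponent $p$ in the statement. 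Hence $\ket{C_{\bs q}}$ is an $\mathbb{N}_p\dbbrckt{\tfrac{1}{\sqrt2}_p}$-succinct state. As explained in the remark following \cref{lma:hadamard-gate-subset state}, one keeps the power of $\tfrac1{\sqrt2}$ symbolic in the flag rather than folding $(\sqrt2)^{-k}$ into a rational scaling constant, since $c_S$ (hence $c_S/(\sqrt2)^k$) need not be rational.

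For efficiency, each inductive layer doubles the number of invocations of the underlying oracle, so $\mathcal Q_{C_{\bs q}}$ makes $O(2^{k})$ calls to $\mathcal Q_S$; since $k=\abs{\bs q}=O(1)$ this is $O(1)$, and $\mathcal Q_{C_{\bs q}}$ remains efficient. The two points needing care — and the closest thing to an obstacle — are the pairwise distinctness of the $q_i$, which is exactly what lets the single-Hadamard identity compose as a clean two-term recursion at every step instead of mixing already-superposed coordinates, and the possibility of cancellations, where $\mathcal Q_i(z_0)$ and $(-1)^{z[q_{i+1}]}\mathcal Q_i(z_1)$ sum to zero. The latter is harmless: the combined numerator is then exactly $0$ and $\mathcal Q_{C_{\bs q}}$ simply outputs $0$, consistent with the shrinking support of $\ket{C_{\bs q}}$; because the numerator arithmetic is exact integer arithmetic and the $\tfrac1{\sqrt2}$-powers are tracked symbolically and uniformly across all amplitudes, no rounding or representability issue can arise. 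I would close by observing that this is precisely where the square-root/power-indicator machinery introduced earlier earns its keep.
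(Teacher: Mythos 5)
Your proposal is correct and follows essentially the same route the paper intends: the paper offers no separate argument for this corollary beyond saying it "follows straightforwardly" from the single-Hadamard lemma, and your induction is exactly that iteration — $2^{\abs{\bs{q}}} = O(1)$ calls to $\mathcal{Q}_S$, exact signed-integer combination of the returned values, and a symbolic flag tracking the accumulated power of $1/\sqrt{2}$. Your explicit handling of signs, cancellations, and the $O(2^k)$ query count only makes precise what the paper leaves implicit, so no further changes are needed.
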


A classically reversible circuit intertwined with a constant number of Hadamard gates can also be studied.
Using \cref{lma:classical-gates-subset state} and \cref{cor:hadamard-seq-subset state} we arrive at the following corollary.

\begin{corollary}\label{cor:CRG-hadamard-subset state}
    Let $\bsr$ represent the bit string of $O(\poly{n})$ size representing the information for a set of $O(\poly{n})$ classically reversible gates and $O(1)$ Hadamard gates.
    Let $K$ denote the total number of gates and hence we have the sequence set $\{U_k\}_{k\in K}$.
    Define a state ${\ket{D_k} \coloneqq U_k \cdots U_1\ket{S}}$ for some $k \in [K]$.
    Then $\ket{D_k}$ is an $\mathbb{N}_{p}\dbbrckt{\frac{1}{\sqrt{2}}_p}$-succinct state, where $p=O(1)$.
\end{corollary}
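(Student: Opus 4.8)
The plan is to prove the statement by induction on the length of the gate prefix, interleaving the two mechanisms already isolated in \cref{lma:classical-gates-subset state} (a classically reversible gate permutes the computational basis) and \cref{lma:hadamard-gate-subset state} (a single Hadamard replaces each amplitude by a $\frac{1}{\sqrt{2}}$-weighted signed sum of two amplitudes), so that \cref{cor:CRG-hadamard-subset state} is exactly the interleaving of those two arguments applied to a state that need no longer be a subset state after the first Hadamard. Let $h_k \le h = O(1)$ denote the number of Hadamard gates among $U_1,\dots,U_k$. The invariant I would carry along is: $\ket{D_k}$ is an $\mathbb{N}_{p_k}\dbbrckt{\frac{1}{\sqrt{2}}_{p_k}}$-succinct state with $p_k = \max\{\lceil\log_2(h_k+1)\rceil, h_k\} = O(1)$, via a query algorithm $\mathcal{Q}_{D_k}$ making $O(2^{h_k}) = O(1)$ calls to $\mathcal{Q}_S$ plus $O(\poly{n})$ classical bookkeeping read off from $\bsr$, and --- crucially --- such that \emph{every} nonzero amplitude of $\ket{D_k}$ carries the \emph{same} power $h_k$ of $\frac{1}{\sqrt{2}}$, the remaining factor being $1/\sqrt{|S|}$ times an integer of absolute value at most $2^{h_k}$.

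The base case $k=0$ is \cref{lma:subset-succinct}: $\ket{D_0}=\ket{S}$ is $\mathbb{N}_1$-succinct, the $h_0=0$ instance (power $0$, amplitudes $0$ or $1$ over $\sqrt{|S|}$). For the inductive step there are two cases. If $U_k$ is classically reversible, it acts as a bijection $\sigma_k$ on $\B^n$ computable from $\bsr$, so $\alpha_{D_k}(y)=\alpha_{D_{k-1}}(\sigma_k^{-1}(y))$; then $\mathcal{Q}_{D_k}(y)$ just computes $\sigma_k^{-1}(y)$ and calls $\mathcal{Q}_{D_{k-1}}$, leaving the output encoding (hence $p_k=p_{k-1}$, $h_k=h_{k-1}$) and the uniform-power property untouched --- precisely the argument of \cref{lma:classical-gates-subset state}. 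If $U_k=\Gate{Had}_q$, then $\alpha_{D_k}(y)=\frac{1}{\sqrt{2}}\bigl(\alpha_{D_{k-1}}(y^{q\to 0})+(-1)^{y[q]}\alpha_{D_{k-1}}(y^{q\to 1})\bigr)$ as in \cref{lma:hadamard-gate-subset state}; two calls to $\mathcal{Q}_{D_{k-1}}$ return the summands, and by the invariant they share the exponent $h_{k-1}$, so their signed sum is again $1/\sqrt{|S|}$ times an integer, now of magnitude $\le 2\cdot 2^{h_{k-1}}=2^{h_k}$, with the common power incremented to $h_k=h_{k-1}+1$. Recording the (at most) one sign bit, the flag register $h_k<2^{p_k}$, and the magnitude $\le 2^{p_k}$ then witnesses $\mathbb{N}_{p_k}\dbbrckt{\frac{1}{\sqrt{2}}_{p_k}}$-succinctness with $p_k=\max\{\lceil\log_2(h_k+1)\rceil,h_k\}$, while the query count at most doubles. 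Since the total number of Hadamards is $h=O(1)$, every $h_k\le h$, hence every $p_k=O(1)$ and $\mathcal{Q}_{D_k}$ makes $\le 2^h=O(1)$ calls to $\mathcal{Q}_S$; this gives the claim with $p=O(1)$.

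The main obstacle --- and the only place the $O(1)$ cap on Hadamards is genuinely needed --- is maintaining the uniform $\frac{1}{\sqrt{2}}$-power invariant rather than simplifying amplitudes after interference: keeping all exponents equal is what lets the Hadamard step's addition stay inside the integer-numerator encoding (adding $m_1/2^{j_1/2}$ and $m_2/2^{j_2/2}$ with $j_1\not\equiv j_2\pmod 2$ would instead produce a $\mathbb{Z}+\sqrt{2}\,\mathbb{Z}$ numerator, leaving $\mathbb{N}_p\dbbrckt{\frac{1}{\sqrt{2}}_p}$), and simultaneously the $O(1)$ cap bounds the accumulated power, the numerator magnitude, and the $O(2^{h_k})$ blow-up in the number of queries to $\mathcal{Q}_S$. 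A secondary bookkeeping point is that the $(-1)^{y[q]}$ factors can make amplitudes negative, so one either appends a single sign flag (at $O(1)$ cost, as for the sign registers of $\mathbb{Q}_p$) or argues as in \cref{lma:hadamard-gate-subset state}; either way the parameter remains $O(1)$.
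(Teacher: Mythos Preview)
Your proof is correct and follows essentially the same approach the paper intends: the paper states the corollary as an immediate consequence of \cref{lma:classical-gates-subset state} and \cref{cor:hadamard-seq-subset state} without spelling out the interleaving, while you make explicit the induction on the prefix length and the key invariant (uniform $\tfrac{1}{\sqrt{2}}$-power across all nonzero amplitudes) that lets the two mechanisms compose. Your treatment of the sign issue is also more careful than the paper's own handling in \cref{lma:hadamard-gate-subset state}.
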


Similar ideas to those of the Hadamard gate can be applied to the $T$ gate.
Thankfully the action of the $T$ gate is easy to characterise.

\begin{restatable}[]{lemma}{Tgatesubsetstate}
    \label{lma:T-gate-subset state}
    Consider a subset state $\ket{S}$ on $S \subseteq \B^n$.
    Let $\ket{E_q} = T_q \ket{S}$ for some $q \in [n]$.
    Then $\ket{E_q}$ is an $\mathbb{N}_{1}\dbbrckt{{\omega}_3}$-succinct state.
\end{restatable}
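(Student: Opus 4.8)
The plan is to mimic the structure of the proof of \cref{lma:hadamard-gate-subset state}, since the $T$ gate acts on a single qubit just as the Hadamard does, except that it is diagonal and therefore does not create new computational basis states --- it only multiplies certain amplitudes by a phase. Concretely, recall $T = \operatorname{diag}(1, {\rm e}^{{\rm i}\pi/4}) = \operatorname{diag}(1,\omega)$, so $T_q\ket{x} = \omega^{x[q]}\ket{x}$. Hence $\ket{E_q} = T_q\ket{S} = \frac{1}{\sqrt{|S|}}\sum_{s\in S}\omega^{s[q]}\ket{s}$, which has the same support $S$ as $\ket{S}$ and amplitude $\alpha_{E_q}(x) = \frac{1}{\sqrt{|S|}}\,\omega^{x[q]}\,\updelta_{x,S}$.

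First I would invoke \cref{lma:subset-succinct} to obtain the membership query algorithm $\mathcal{Q}_S$ for $\ket{S}$, which on input $x$ returns $c_S\cdot\updelta_{x,S}$ with $c_S = 1$ (so it is effectively a membership oracle). Then I would construct $\mathcal{Q}_{E_q}$ as follows: on input $x$, call $\mathcal{Q}_S(x)$; if the result is $0$, output $0$; otherwise read off the bit $x[q]$ from the input string (an $O(1)$ classical operation, since the bit positions of the circuit register are fixed by $\bsr$ or simply by the index $q$), and output the value $\updelta_{x,S}$ together with the algebraic flag $s = x[q] \in \{0,\dots,7\}$ encoding the power $\omega^{x[q]}$. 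This is exactly an element of $\mathbb{N}_1\dbbrckt{\omega_3}$: one bit for the (unsigned, integer) value $\updelta_{x,S}\in\{0,1\}$, prefixed by a $3$-bit register holding $s$. The scaling constant is $c_{E_q} = 1 \le 2^{p(n)}$, so the definition of $\mathbb{N}_{1}\dbbrckt{\omega_3}$-succinct state is met, and the whole procedure uses one call to $\mathcal{Q}_S$ plus $O(1)$ classical post-processing.

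I would also remark briefly (paralleling the remark after \cref{lma:hadamard-gate-subset state}) that, unlike the Hadamard case, a single $T$ gate produces no superposition and costs only one query, so a sequence of $O(\poly{n})$ $T$ gates would still be manageable in terms of query count; the reason \cref{cor:CRG-T-T-dagger-Hadamard-gate-seq-subset state} restricts to $O(n)$ $T$ and $T^\dagger$ gates is to keep the accumulated power of $\omega$ --- and the interaction with the $O(1)$ Hadamard gates --- inside a constant-size flag register, not because of a query blow-up from the $T$ gates themselves.

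The main obstacle is essentially bookkeeping rather than mathematics: one must be careful that the $3$-bit $\omega$-register genuinely suffices, i.e.\ that $\omega^{x[q]}$ with $x[q]\in\{0,1\}$ lies in $\{\omega^0,\dots,\omega^7\}$ (trivially true), and that the convention from \cref{eq:complex-algebraic-set}/the $\mathbb{S}_p^\star\dbbrckt{\#_{q_\#}}$ notation --- where the algebraic characteristic $\#$ is prepended and acts on the stored value --- is consistent with the claim that $\mathcal{Q}_{E_q}$ outputs $\omega^s\cdot(\text{stored value})$ equalling the scaled amplitude. Once the definitions are unwound this is immediate, so the proof is short; the only genuinely delicate point, deferred to the later lemmas, is what happens when many such phase contributions (from $T$, $T^\dagger$, and Hadamard gates) must be combined into a single algebraic encoding.
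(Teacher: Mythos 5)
Your proposal is correct and follows essentially the same route as the paper's proof: use the diagonal action $T_q\ket{x}=\omega^{x[q]}\ket{x}$, make one call to $\mathcal{Q}_S$, and append a $3$-bit flag encoding the power of $\omega$, yielding an $\mathbb{N}_1\dbbrckt{\omega_3}$ output. (Your sign convention $\omega^{x[q]}$ is in fact the correct one; the paper writes $\omega^{-x[q]}$, an inconsequential slip that does not affect the argument.)
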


\begin{remark}\label{rmk:T-gate-succinct-equivalency}
    There exists an efficient classical algorithm that can map an element of $\mathbb{N}_{1}\dbbrckt{{\omega}_3}$ to an element of $\mathbb{C}_{1}\dbbrckt{\frac{1}{\sqrt{2}}_1}$.
    This is due to the fact that $\omega^s = (\frac{1}{\sqrt{2}} + {\rm i} \frac{1}{\sqrt{2}})^s$.
    Due to the cyclic nature of the powers of $\omega$, there are 8 distinct values that can be encoded in the output string.
    Specifically, 
    \begin{align*}
        s&=0 \mapsto 1, & s&=1 \mapsto\frac{1}{\sqrt{2}}(1 + {\rm i}), \\
        s&=2 \mapsto {\rm i}, & s&=3 \mapsto\frac{1}{\sqrt{2}}(-1 + {\rm i}), \\
        s&=4 \mapsto -1, & s&=5 \mapsto\frac{1}{\sqrt{2}}(-1 - {\rm i}), \\
        s&=6 \mapsto -{\rm i}, & s&=7 \mapsto\frac{1}{\sqrt{2}}(1 - {\rm i}).
    \end{align*}
    Hence, $\ket{E_q}$ is also a $\mathbb{C}_{1}\dbbrckt{\frac{1}{\sqrt{2}}_1}$-succinct state.
    Arguably, $\mathbb{C}_1$ is ``too much'' for the problem at hand, however, it is convenient to consider the result this way.
\end{remark}

At this point we note that even though we stated that subset states are $\mathbb{N}_1$-succinct states, considering simple variations has led to algebraic encodings nonetheless.
Specifically, we are now requiring the tracking of $1/\sqrt{2}$.
The natural argument to give is then: if we can track $1/\sqrt{2}$, then why not track the square root of any rational? This ultimately results in the subset state modifications having ``simpler'' representations.
For example, we expressed that $\mathbb{A}_1^{(\sqrt{\cdot})} \times \mathbb{Q}^+_{\log_2{\abs{S}}}$ was sufficient for subset states (see \cref{rmk:subset-more-general}).
Using the modifications above we can see (ignoring subscripts for now) that 
\begin{equation*}
    \mathbb{A}^{(1/\sqrt{2})} \times \mathbb{N} \subset (\mathbb{A}^{(\sqrt{\cdot})} \times \mathbb{Q}^+)^2,
\end{equation*}
i.e., the right-hand set can exactly express the amplitude of subset states and those subset states acted only by gate sequences discussed above.
What we mean here is that if we readily assume the subset state classical algorithm can keep track of a square root, then it is not unreasonable to assume the classical algorithm for the states discussed in the lemmas and corollaries above can also track subsequent square roots.
In fact, we have explicitly shown, making use of the circuit descriptor $\bsr$, that efficient classical algorithms exist to do this.
We therefore proceed under the influence of \cref{rmk:T-gate-succinct-equivalency}.

\begin{corollary}\label{cor:T-dagger-gate-subset state}
    Consider a subset state $\ket{S}$ on $S \subseteq \B^n$.
    Let $\ket{\bar{E}_q} = T^\dagger_q \ket{S}$ for some $q \in [n]$.
    Then $\ket{\bar{E}_q}$ is a $\mathbb{C}_{1}\dbbrckt{\frac{1}{\sqrt{2}}_1}$-succinct state.
\end{corollary}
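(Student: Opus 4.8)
The plan is to mirror the proof of \cref{lma:T-gate-subset state} almost verbatim, since $T^\dagger$ differs from $T$ only in the phase it imparts, and then invoke \cref{rmk:T-gate-succinct-equivalency} to upgrade the resulting encoding. Recall that $T = \mathrm{diag}(1,\omega)$ with $\omega = {\rm e}^{2\pi{\rm i}/8}$, so $T^\dagger = \mathrm{diag}(1,\omega^{-1}) = \mathrm{diag}(1,\omega^{7})$ using $\omega^8 = 1$. Applying $T^\dagger$ to the $q$-th qubit of a computational basis state $\ket{s}$ multiplies it by $\omega^{7\,s[q]}$, hence
\begin{equation*}
  \ket{\bar E_q} = T^\dagger_q\ket{S} = \frac{1}{\sqrt{\abs{S}}}\sum_{s\in S}\omega^{7\,s[q]}\ket{s},
\end{equation*}
so the amplitude of $\ket{x}$ is $\alpha(x) = \frac{1}{\sqrt{\abs S}}\,\omega^{7\,x[q]}\,\updelta_{x,S}$, differing from the $T$-gate case only by $\omega \mapsto \omega^{-1} = \omega^{7}$.

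First I would exhibit the query algorithm $\mathcal{Q}_{\bar E_q}$: on input an $n$-bit string $x$, call $\mathcal{Q}_S(x)$ (a membership test for $S$, available by \cref{lma:subset-succinct}); if $x\notin S$ output $0$; otherwise read the single bit $x[q]$ and output the algebraic encoding whose $\mathbb{N}_1$-part equals $1$ and whose three-bit $\mathbb{A}_3^{(\omega)}$-flag encodes $s = 7\,x[q] \bmod 8$. This costs $O(1)$ time plus one query to $\mathcal{Q}_S$, and it matches the $\mathbb{N}_1\dbbrckt{\omega_3}$-succinct definition with the same scaling constant that certifies $\ket{S}$ as a subset state (which absorbs the uniform $1/\sqrt{\abs S}$, since $\omega^{s}\cdot 1 = c_{\bar E_q}\,\alpha(x)$ with $c_{\bar E_q}=\sqrt{\abs S}$). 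Thus $\ket{\bar E_q}$ is an $\mathbb{N}_1\dbbrckt{\omega_3}$-succinct state.

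Finally I would apply \cref{rmk:T-gate-succinct-equivalency}: there is an efficient classical algorithm converting any element of $\mathbb{N}_1\dbbrckt{\omega_3}$ into an element of $\mathbb{C}_1\dbbrckt{\frac{1}{\sqrt 2}_1}$, via $\omega^s = (\tfrac{1}{\sqrt2} + {\rm i}\tfrac{1}{\sqrt2})^s$ and the explicit table of the eight values of $\omega^s$ listed there; composing this postprocessing with $\mathcal{Q}_{\bar E_q}$ yields a query algorithm witnessing that $\ket{\bar E_q}$ is a $\mathbb{C}_1\dbbrckt{\frac{1}{\sqrt2}_1}$-succinct state, as claimed. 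The only step requiring a moment of care is the bookkeeping of the scaling constant and of the flag bits so that the output genuinely lands in the asserted set; everything else is a direct transcription of the $T$-gate argument with the phase conjugated.
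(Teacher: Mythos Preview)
Your proposal is correct and follows exactly the approach implicit in the paper: the corollary is stated without explicit proof there, being an immediate consequence of \cref{lma:T-gate-subset state} and \cref{rmk:T-gate-succinct-equivalency} with the phase conjugated (i.e., $\omega \mapsto \omega^{-1} = \omega^{7}$). Your detailed fill-in of the query algorithm and the bookkeeping of the scaling constant is entirely in line with the paper's proof of the $T$-gate lemma.
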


\begin{restatable}[]{lemma}{Tgateseqsubsetstate}
    \label{lma:T-gate-seq-subset state}
    Consider a subset state $\ket{S}$ on $S \subseteq \B^n$.
    Let $\ket{F_{\bs{q}}} = \prod_{q \in \bs{q}}T_q \ket{S}$ for a tuple $\bs{q}$, such that $\abs{\bs{q}} \leq n$.
    Then $\ket{F_{\bs{q}}}$ is a $\mathbb{C}_{1}\dbbrckt{\frac{1}{\sqrt{2}}_1}$-succinct state.
\end{restatable}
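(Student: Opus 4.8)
The plan is to reduce the claim to the single-gate case already handled in \cref{lma:T-gate-subset state} and \cref{rmk:T-gate-succinct-equivalency}, by tracking how the ``phase bookkeeping'' accumulates over a sequence of $T$ gates. First I would observe that because all the $T_q$ in the product $\prod_{q\in\bs{q}} T_q$ are diagonal in the computational basis, they commute, and their combined action on a basis state $\ket{s}$ is simply multiplication by $\prod_{q\in\bs{q}} \omega^{s[q]} = \omega^{t(s)}$, where $t(s) \coloneqq \sum_{q\in\bs{q}} s[q] \bmod 8$ is an integer in $\{0,\dots,7\}$ that counts, modulo $8$, the number of indices in $\bs{q}$ at which $s$ has a $1$. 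Crucially, $t(s)$ is computable in time $O(\abs{\bs{q}}) = O(n)$ from $s$ and the descriptor $\bsr$ specifying $\bs{q}$, so no blow-up in query calls occurs: unlike the Hadamard case, a $T$ gate does not branch the support, so a single call to $\mathcal{Q}_S$ suffices.

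Next I would assemble the query algorithm $\mathcal{Q}_{F_{\bs{q}}}$. On input $x$, it calls $\mathcal{Q}_S(x)$, which (since $\ket{S}$ is $\mathbb{N}_1$-succinct) returns the flag $\updelta_{x,S}$ together with the implicit uniform amplitude; by \cref{rmk:subset-more-general} we may take this data as an element of $\mathbb{A}_1^{(\sqrt{\cdot})}\times\mathbb{Q}^+_{\log_2\abs{S}}$, i.e. the exact value $1/\sqrt{\abs{S}}$ when $x\in S$ and $0$ otherwise. The algorithm then computes $t(x)$ from $\bsr$, looks up the corresponding value $\omega^{t(x)} \in \{\pm1,\pm{\rm i}, \tfrac{1}{\sqrt2}(\pm1\pm{\rm i})\}$ from the explicit table in \cref{rmk:T-gate-succinct-equivalency}, and multiplies: the product of $1/\sqrt{\abs{S}}$ with one of those eight numbers is a complex number whose real and imaginary parts are rationals times at most one power of $1/\sqrt{2}$, hence lies in $\mathbb{C}_1\dbbrckt{\tfrac{1}{\sqrt2}_1}$ — the single flag bit for $1/\sqrt2$ absorbs the lone irrational contributed by $\omega^{t(x)}$ (note $\omega^{2k}$ contributes no such factor and $\omega^{2k+1}$ contributes exactly one). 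All of these steps — the membership query, the modular sum, the table lookup, the rational multiplication — are polynomial-time classical operations on $O(\poly{n})$-bit strings, so $\mathcal{Q}_{F_{\bs{q}}}$ is efficient, and by construction its output equals $c\cdot\alpha_{F_{\bs{q}}}(x)$ for a suitable constant $c$ bounded as required.

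The main obstacle to watch is not the computation but the \emph{representation accounting}: one must check that composing many $T$ gates never forces more than a single power of $1/\sqrt2$ into the encoding, so that $\mathbb{C}_1\dbbrckt{\tfrac{1}{\sqrt2}_1}$ — rather than some $\mathbb{C}_1\dbbrckt{\tfrac{1}{\sqrt2}_k}$ with $k$ growing — genuinely suffices. This holds because the eight values $\omega^s$ each carry at most one factor of $1/\sqrt2$ regardless of how large $s$ is, a consequence of the order-$8$ cyclicity: the naive bound ``$n$ gates give $(1/\sqrt2)^n$'' collapses once one reduces $s \bmod 8$. I would state this reduction explicitly and then note that the constant $c$ for $\ket{F_{\bs{q}}}$ can be taken as $1$ (the normalisation $1/\sqrt{\abs{S}}$ being output directly via the $\mathbb{A}_1^{(\sqrt{\cdot})}$ component), so in fact the algorithm returns exact amplitudes. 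A brief remark would flag that, just as in \cref{rmk:T-gate-succinct-equivalency}, $\mathbb{C}_1$ is more generous than strictly needed but is convenient for later use in the history-state construction.
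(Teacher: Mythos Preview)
Your proposal is correct and follows essentially the same route as the paper: compute the accumulated phase exponent $h=\sum_{q\in\bs{q}}x[q]\bmod 8$ in $O(n)$ time, make a single call to $\mathcal{Q}_S$, and then invoke \cref{rmk:T-gate-succinct-equivalency} to map the $\omega$-encoding into $\mathbb{C}_1\dbbrckt{\tfrac{1}{\sqrt2}_1}$. Your explicit emphasis on why a single $1/\sqrt2$ flag suffices (the order-$8$ cyclicity collapsing any naive $(1/\sqrt2)^n$ accumulation) is a point the paper leaves implicit in its mod-$8$ reduction, but the underlying argument is identical.
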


It is straightforward to see how the action of $T^\dagger$ gates can be considered.

\begin{corollary}\label{cor:T-dagger-gate-seq-subset state}
    Consider a subset state $\ket{S}$ on $S \subseteq \B^n$.
    Let $\ket{\bar{F}_{\bs{q}}} = \prod_{q \in \bs{q}}T^\dagger_q \ket{S}$ for a tuple $\bs{q}$, such that $\abs{\bs{q}} \leq n$.
    Then $\ket{\bar{F}_{\bs{q}}}$ is a $\mathbb{C}_{1}\dbbrckt{\frac{1}{\sqrt{2}}_1}$-succinct state.
\end{corollary}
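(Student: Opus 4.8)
The plan is to mirror the structure of \cref{lma:T-gate-seq-subset state} almost verbatim, since $T^\dagger = T^7$ (equivalently, $T^\dagger$ acts on a computational basis state $\ket{x}$ by multiplying by $\omega^{-x[q]} = \omega^{7 x[q]}$). First I would observe that a single $T^\dagger_q$ gate acting on a subset state is handled by \cref{cor:T-dagger-gate-subset state}, which already tells us $T^\dagger_q\ket{S}$ is a $\mathbb{C}_1\dbbrckt{\tfrac{1}{\sqrt2}_1}$-succinct state; the content here is purely the \emph{sequential composition} of up to $n$ such gates acting on disjoint wires.

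The key steps, in order: (1) Note that the gates $T^\dagger_{q}$ for $q \in \bs{q}$ all commute and each is diagonal in the computational basis, so $\ket{\bar F_{\bs{q}}} = \tfrac{1}{\sqrt{\abs S}}\sum_{s \in S} \omega^{-\sum_{q \in \bs{q}} s[q]}\ket{s}$. (2) Observe that the support is unchanged — it is still $S$ — so membership queries to $\mathcal{Q}_S$ suffice to decide whether $\alpha(x) = 0$, exactly as in the $T$-gate sequence lemma. (3) For a support element $x \in S$, the accumulated phase is $\omega^{-w}$ where $w = \sum_{q \in \bs{q}} x[q] \in \{0,\dots,n\}$ is computed by a single pass over the bits of $x$ indexed by $\bs{q}$ (read off from the circuit descriptor $\bsr$); reduce $w$ modulo $8$ to get an exponent $s \in \{0,\dots,7\}$. (4) Invoke \cref{rmk:T-gate-succinct-equivalency}: the eight values $\omega^{-s} = \omega^{8-s}$ are exactly the eight entries of the table there (up to relabelling $s \mapsto -s \bmod 8$), each expressible as an element of $\mathbb{C}_1\dbbrckt{\tfrac1{\sqrt2}_1}$ — real and imaginary parts in $\{0,\pm 1\}$ together with a power of $1/\sqrt2$ that is either $0$ or $1$. (5) Conclude that the query algorithm $\mathcal{Q}_{\bar F_{\bs{q}}}$ which calls $\mathcal{Q}_S(x)$, computes $s = -\sum_{q\in\bs q} x[q] \bmod 8$, and outputs the corresponding $\mathbb{C}_1\dbbrckt{\tfrac1{\sqrt2}_1}$ encoding of the scaled amplitude runs in polynomial time, establishing that $\ket{\bar F_{\bs{q}}}$ is a $\mathbb{C}_1\dbbrckt{\tfrac1{\sqrt2}_1}$-succinct state.

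Alternatively, and even more cheaply, one could argue entirely by reduction: since $T^\dagger = T^7$ and $\abs{\bs q} \le n$, the sequence $\prod_{q\in\bs q} T^\dagger_q$ is a sequence of $7\abs{\bs q} \le 7n = O(n)$ many $T$ gates on disjoint wires, so \cref{lma:T-gate-seq-subset state} applies directly (note that lemma's bound $\abs{\bs q}\le n$ is not essential — inspecting its proof, what matters is only that the total $T$-count is polynomial and the accumulated exponent per basis string is reduced mod $8$, which also holds here). I would present the direct argument as the main line and mention this reduction as a one-sentence remark.

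The only mild obstacle is bookkeeping the sign of the exponent consistently: one must be careful that $T^\dagger$ contributes $\omega^{-1}$ rather than $\omega^{+1}$, so the accumulated power is $-w \bmod 8$, and that the map from negative exponents to the $\mathbb{C}_1\dbbrckt{\tfrac1{\sqrt2}_1}$ table in \cref{rmk:T-gate-succinct-equivalency} is applied with the correct sign — e.g. $s=1$ in the forward table ($\omega$) corresponds to $s=7$ here ($\omega^{-1} = \omega^7 = \tfrac1{\sqrt2}(1-\mathrm{i})$). Beyond this, everything is routine: no new precision is introduced (the bit count stays $O(1)$), no new support structure appears, and the efficiency of the classical algorithm follows from the efficiency of $\mathcal{Q}_S$ together with the $O(n)$-time phase accumulation.
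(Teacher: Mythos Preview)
Your proposal is correct and matches the paper's approach: the paper states this corollary immediately after \cref{lma:T-gate-seq-subset state} with only the remark ``It is straightforward to see how the action of $T^\dagger$ gates can be considered,'' so your direct argument --- accumulating the exponent $-\sum_{q\in\bs q} x[q] \bmod 8$ and invoking \cref{rmk:T-gate-succinct-equivalency} --- is exactly the intended (but unwritten) proof. Your alternative reduction via $T^\dagger = T^7$ is a nice addition, though note that the resulting $7\abs{\bs q}$ many $T$ gates are not on disjoint wires, so you are right that one must appeal to the proof of \cref{lma:T-gate-seq-subset state} rather than its literal statement.
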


\begin{corollary}\label{cor:T-T-dagger-gate-seq-subset state}
    Consider a subset state $\ket{S}$ on $S \subseteq \B^n$.
    Let $\ket{G_{\bs{q}}} = \prod_{q \in \bs{q}}V_q \ket{S}$ for a tuple $\bs{q}$, such that $\abs{\bs{q}} \leq n$ and where $V \in \{T,T^\dagger\}$.
    Then $\ket{G_{\bs{q}}}$ is a $\mathbb{C}_{1}\dbbrckt{\frac{1}{\sqrt{2}}_1}$-succinct state.
\end{corollary}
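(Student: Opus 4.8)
The plan is to exploit the fact that $T$ and $T^\dagger$ are both diagonal in the computational basis, so the whole product $\prod_{q\in\bs{q}}V_q$ is diagonal and its entries are powers of $\omega$. Concretely, $T=\mathrm{diag}(1,\omega)$ and $T^\dagger=\mathrm{diag}(1,\omega^{-1})$, hence $\mel{x}{\prod_{q\in\bs{q}}V_q}{x}=\omega^{e(x)}$ where $e(x)=\sum_{q\in\bs{q}}\epsilon_q\,x[q]$ taken $\bmod\ 8$, with $\epsilon_q=1$ if $V_q=T$ and $\epsilon_q=-1$ if $V_q=T^\dagger$; the signs $\epsilon_q$ and the qubit indices are read off from $\bsr$. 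Since these gates pairwise commute, the ordering of the tuple is immaterial, and one could equally regroup the product into an all-$T$ block followed by an all-$T^\dagger$ block, so the statement also follows formally by chaining \cref{lma:T-gate-seq-subset state} with \cref{cor:T-dagger-gate-seq-subset state}; I would nonetheless present the direct diagonal argument, since it avoids re-running the $T^\dagger$ analysis on a non-subset input.

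Writing $\ket{G_{\bs{q}}}=\frac{1}{\sqrt{|S|}}\sum_{x\in S}\omega^{e(x)}\ket{x}$, the amplitude is $\alpha(x)=\frac{1}{\sqrt{|S|}}\,\omega^{e(x)}\updelta_{x,S}$. The query algorithm $\mathcal{Q}_{G_{\bs{q}}}$ on input $x$ would then: (i) call $\mathcal{Q}_S(x)$ once to decide membership, emitting the zero encoding immediately if $x\notin S$; (ii) read the relevant bits of $x$ together with the gate data in $\bsr$ and form $e(x)$ as a sum of at most $|\bs{q}|\le n$ terms, reducing mod $8$ — a $\poly{n}$ classical computation; (iii) look up $\omega^{e(x)}$ in the eight-entry table of \cref{rmk:T-gate-succinct-equivalency} and emit its $\mathbb{C}_{1}\dbbrckt{\frac{1}{\sqrt 2}_1}$ encoding, setting the single $\tfrac{1}{\sqrt 2}$-flag bit exactly when $e(x)$ is odd. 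The scaling constant carries over unchanged, $c_{G_{\bs{q}}}=c_S$, because multiplying the un-normalised subset-state amplitude by the unit-modulus factor $\omega^{e(x)}$ never touches it.

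To finish, I would verify the two facts that make the target set fit: first, the output always lies in $\{\pm 1,\pm {\rm i},\tfrac{1}{\sqrt 2}(\pm 1\pm {\rm i})\}$, whose real and imaginary parts each lie in $\{0,\pm 1\}$ — a sign bit plus one magnitude bit, i.e.\ $\mathbb{C}_1$ — with at most one power of $\tfrac{1}{\sqrt 2}$; second, that the procedure uses $O(1)$ calls to $\mathcal{Q}_S$ plus $\poly{n}$ extra steps, so $\mathcal{Q}_{G_{\bs{q}}}$ is efficient. The only point I would flag as a genuine (if mild) obstacle is the flag-bit bookkeeping: one must observe that although each individual $T$ or $T^\dagger$ contributes its own factor of $\omega$, an arbitrary product collapses to a \emph{single} $\omega^{e(x)}$ with $e(x)\in\{0,\dots,7\}$, so the multiplicity of $\tfrac{1}{\sqrt 2}$ never exceeds one and a one-bit flag $\mathbb{A}_1^{(1/\sqrt 2)}$ genuinely suffices — in contrast to the Hadamard case of \cref{cor:hadamard-seq-subset state}, where the powers of $\tfrac{1}{\sqrt 2}$ do accumulate and a $p=O(1)$-bit flag is unavoidable.
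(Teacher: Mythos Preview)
Your proposal is correct and matches the paper's approach. The paper states this result as a corollary without proof, implicitly relying on the chaining you mention (combining \cref{lma:T-gate-seq-subset state} with \cref{cor:T-dagger-gate-seq-subset state}); your direct diagonal argument with signed exponents $e(x)=\sum_{q\in\bs{q}}\epsilon_q\,x[q]\bmod 8$ is exactly the natural generalisation of the paper's own proof of \cref{lma:T-gate-seq-subset state} and is arguably cleaner than the two-step route.
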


As a result of \cref{cor:hadamard-seq-subset state,cor:CRG-hadamard-subset state,cor:T-T-dagger-gate-seq-subset state} we obtain the following result.

\dlma

\begin{remark}
    There exists an efficient classical algorithm that can map an element of $\mathbb{A}^{(1/\sqrt{2})} \times \mathbb{C}$ to an element of $\mathbb{A}^{(\sqrt{\cdot})} \times \mathbb{C}$
\end{remark}

\begin{remark}
    We note that the results in this section are similar to those considered by Van der Nest~\cite{vandennest2011simulating} concerning the estimation of amplitudes of quantum states evolved by sparse circuits.
    However, our focus is on the exact representations of the amplitudes of the states considered.
    Specifically, our aim is to characterise and define the type of succinct states that arise from the action of specific gate sets on subset states.
    Our results are complementary to those of Ref.~\cite{vandennest2011simulating}.
\end{remark}

\subsection{Operations with Hybrid Subset States}\label{sec:operations-hybrid-subset-states}
To conclude the analysis on subset states we present two lemmas that capture specific dynamics of reversible circuits acting on subset states.
Essentially, the superposition states we consider track the action of $K$ reversible gates that from a given gate sequence $\mathcal{V}$.
The first lemma uses the states $\ket{B_k}$ and the second lemma uses the states $\ket{H_k}$.
A preliminary idea required for the lemmas is an attribute that occurs from the action of classical gates on subsets.
Since each classical gate is a bijective map on the computational basis states, the cardinality of the set is invariant under the action of such gates.
Up to an overall phase, the $T$ and $T^\dagger$ gates are also bijective maps on the computational basis states.
We formalise this idea in the following remark.

\begin{remark}\label{rmk:classically-reversible-gates-size-invariance}
    A classically reversible gate $R$ is a bijective map on $n$-bit strings.
    Let $\mathcal{V} = \{R_j\}_{j \in [m]}$ denote a set of classically reversible gates.
    Given a subset $S \subseteq \B^n$, define the action $R \circ S \coloneqq \{R(s) : s \in S\}$, where $R(s)$ is the action of the gate $R$ on the bit string $s$.
    Then,  
    \begin{equation*}
        \abs{\prod_{j = m}^{1} R_{j} \circ S} = \abs{S},
    \end{equation*}
    i.e., the cardinality of the set $S$ is invariant under the action of the classically reversible gates.
\end{remark}

\begin{restatable}[]{lemma}{historystateexone}
    \label{lma:history-state-ex-1}
    The superposition state
    \begin{equation}\label{eq:history-state-ex-1}
        \ket{\eta} = \frac{1}{\sqrt{\abs{K}}} \sum_{k=1}^{K} \ket{B_k}\ket{k},
    \end{equation}
    is an $\mathbb{N}_{1}$-succinct state with the efficient classical (query) algorithm $\mathcal{Q}_\eta$.
\end{restatable}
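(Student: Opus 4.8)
The plan is to show that $\ket{\eta}$ is itself a subset state and then invoke \cref{lma:subset-succinct}, while exhibiting the explicit efficient query algorithm built from the circuit descriptor $\bsr$. First I would recall that, by \cref{rmk:classically-reversible-gates-size-invariance}, each $\ket{B_k} = R_k\cdots R_1\ket{S}$ is the uniform superposition over the set $S_k \coloneqq \prod_{j=k}^{1} R_j \circ S \subseteq \B^n$, with $\abs{S_k} = \abs{S}$ and uniform amplitude $1/\sqrt{\abs{S}}$. Substituting into \cref{eq:history-state-ex-1} and pulling out scalars gives
\begin{equation*}
    \ket{\eta} = \frac{1}{\sqrt{K\abs{S}}} \sum_{k=1}^{K} \sum_{s \in S_k} \ket{s}\ket{k},
\end{equation*}
so $\ket{\eta}$ is supported on $\mathcal{S}_\eta \coloneqq \bigcup_{k=1}^{K} \big( S_k \times \{k\} \big)$, where $\{k\}$ denotes the singleton consisting of the clock-register bit string encoding $k$. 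Because the clock register records $k$, the blocks $S_k \times \{k\}$ are pairwise disjoint, so $\abs{\mathcal{S}_\eta} = K\abs{S}$ and every basis state in the support carries the same amplitude $1/\sqrt{K\abs{S}}$; in particular $\ket{\eta}$ is correctly normalised and is a subset state on $\mathcal{S}_\eta$. By \cref{lma:subset-succinct} it is therefore $\mathbb{N}_1$-succinct.

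It remains to argue the associated query algorithm $\mathcal{Q}_\eta$ is efficient and can be assembled from the given data. On input $y$, the algorithm parses $y = x \uu b$ with $x \in \B^n$ and $b$ the clock substring; if $b$ does not encode an integer $k \in [K]$, it outputs $0$. Otherwise it reads from $\bsr$ the prefix $R_k \cdots R_1$ and, following the construction of \cref{lma:classical-gates-subset state}, builds $\mathcal{Q}_{B_k}$ and evaluates $\mathcal{Q}_{B_k}(x)$; it outputs the flag bit $1$ (equivalently the scaled amplitude $c_\eta \cdot \alpha(y) = 1$ with $c_\eta = \sqrt{K\abs{S}}$) iff $\mathcal{Q}_{B_k}(x) \neq 0$, i.e.\ iff $x \in S_k$, and $0$ otherwise. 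This makes a single call to a $\mathcal{Q}_{B_k}$, each of which runs in polynomial time by \cref{lma:classical-gates-subset state}, plus $O(\poly{n})$ overhead for parsing $b$ and slicing $\bsr$. Since $K = O(\poly{n})$, the scaling constant satisfies $c_\eta = \sqrt{K\abs{S}} \le 2^{p}$ for an appropriate $p = O(n + \log K)$, as required by the definition of an $\mathbb{N}_1$-succinct state on $n + \lceil\log_2 K\rceil$ qubits.

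The only real subtlety — and the step I would be most careful about — is the bookkeeping that makes $\ket{\eta}$ a bona fide subset state: one must check that the $\abs{S}$-invariance of \cref{rmk:classically-reversible-gates-size-invariance} applies uniformly to every prefix $R_k\cdots R_1$, that the clock register genuinely disjointifies the blocks so the global amplitude is a single constant (hence one flag bit suffices and $c_\eta$ is not $k$-dependent), and that the chosen encoding of the clock index lets $\mathcal{Q}_\eta$ cheaply reject non-clock strings. Everything else is a direct consequence of \cref{lma:classical-gates-subset state,lma:subset-succinct}; no new ideas beyond careful composition of the underlying query algorithms are needed.
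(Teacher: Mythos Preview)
Your proposal is correct and follows essentially the same route as the paper: parse the input into data and clock substrings, use the size-invariance of \cref{rmk:classically-reversible-gates-size-invariance} to see that every amplitude equals the single constant $1/\sqrt{K\abs{S}}$, and build $\mathcal{Q}_\eta$ from the underlying subset-state query algorithm together with $\bsr$. The only cosmetic difference is packaging: you first argue that $\ket{\eta}$ is a bona fide subset state on $\bigcup_k S_k\times\{k\}$ and then invoke \cref{lma:subset-succinct}, whereas the paper computes $\braket{x}{\eta}$ directly as a product of two indicators and invokes \cref{lma:tensor-product-subset state}; your framing effectively folds in the content of \cref{lma:CEHSS-equivalence} up front.
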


\begin{remark}
    Recall from \cref{rmk:subset-more-general} that we may consider the state in \cref{eq:history-state-ex-1} as an $\mathbb{A}^{(\sqrt{\cdot})}_1 \times \mathbb{Q}^+_{\log_2{(\abs{S}\abs{K})}}$-succinct state.
    This requires a more powerful classical algorithm to output the amplitude of the state.
\end{remark}

Note that by \cref{prop:subset-tensor} and \cref{rmk:classically-reversible-gates-size-invariance} we can see $\ket{\eta}$ as in \cref{eq:history-state-ex-1} is a superposition of distinct subset states.
To see this, note that for any $k \in [K]$: 
\begin{equation*}
    B_k \eqqcolon \prod_{j=k}^{1} R_j \circ S ~\subseteq \B^n, ~~\text{with}~~ \abs{B_k} = \abs{S},
\end{equation*}
and $k \in \B^{\abs{{\rm bin}(K)}}$ is an indicator bit string.
Clearly, the subsets formed by $B_k \times \{k\}$ are orthogonal to each other.
For the purposes of bookkeeping, we can consider the following definitions.

\begin{definition}[Hybrid Subset State (HSS)]
    For any subset $S \subseteq \B^n$ and a set of consecutive integers $I = \{1,\dots,k\}$, define the hybrid subset state on $(S,I)$ as
    \begin{equation*}
        \ket{\mathcal{M}_{S,I}} = \frac{1}{\sqrt{k}}\sum_{j=1}^{k} \ket{S_j},
    \end{equation*}
    where $S_k = S \times \{{\rm bin}(k)\}$ for each $k \in I$.
\end{definition}

\begin{definition}[Classically Encoded Hybrid Subset State (CEHSS)]
    For any subset $S \subseteq \B^n$, set of consecutive integers $I = \{1,\dots,k\}$ and set of classically reversible gates $\mathcal{V} = \{R_k\}_{k\in I}$, define the classically encoded hybrid subset state on $(S,I,\mathcal{V})$ as
    \begin{equation*}
        \ket{\mathcal{C}_{S,I,\mathcal{V}}} = \frac{1}{\sqrt{k}}\sum_{j=1}^{k} \ket{\hat{S}_j},
    \end{equation*}
    where
    \begin{equation*}
        \hat{S}_k = \big(\prod_{j=k}^{1} R_j \circ S \big) \times \{{\rm bin}(k)\},
    \end{equation*}
    for each $k \in I$.
\end{definition}

Clearly then the states of \cref{eq:history-state-ex-1,eq:history-state-ex-2} are forms of encoded hybrid subset states.
The former being a classically encoded hybrid subset state and the latter being slightly more general, due to the presence of the Hadamard gates.
It is not hard to see that a Hadamard gate will not preserve the size of the support set of a subset state.
The main reason for introducing the above definitions is due to \cref{eq:history-state-ex-1}.

\begin{lemma}[CEHSS equivalencey]\label{lma:CEHSS-equivalence}
    A CEHSS on $(S,I,\mathcal{V})$ is equivalent to a subset state over $\mathcal{S} \subseteq \B^{(n+\abs{{\rm bin}(k)})}$.
\end{lemma}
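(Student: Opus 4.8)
The plan is to show that a classically encoded hybrid subset state $\ket{\mathcal{C}_{S,I,\mathcal{V}}}$ is, up to the normalisation bookkeeping, nothing more than a uniform superposition over a single explicitly describable subset $\mathcal{S} \subseteq \B^{n+\abs{{\rm bin}(k)}}$, and hence literally equals the subset state $\ket{\mathcal{S}}$ from \cref{def:subset state}. First I would unpack the definition: by construction $\ket{\mathcal{C}_{S,I,\mathcal{V}}} = \frac{1}{\sqrt{l}}\sum_{j=1}^{k}\ket{\hat{S}_j}$ where each $\ket{\hat{S}_j}$ is itself a subset state on $\hat{S}_j = \big(\prod_{i=j}^{1} R_i \circ S\big)\times\{{\rm bin}(j)\}$. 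So the state is a normalised sum of $k$ subset states, and the first task is to check these are mutually orthogonal.

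The key observation is that the register holding ${\rm bin}(j)$ makes the supports of distinct $\hat{S}_j$ trivially disjoint: any string in $\hat{S}_j$ has its last $\abs{{\rm bin}(k)}$ bits equal to ${\rm bin}(j)$, and ${\rm bin}(\cdot)$ is injective on $I=\{1,\dots,k\}$, so $\hat{S}_i \cap \hat{S}_j = \varnothing$ for $i\neq j$. Next, invoke \cref{rmk:classically-reversible-gates-size-invariance}: since each $R_i$ is a bijection on $\B^n$, the "propagated" subset $\prod_{i=j}^{1} R_i\circ S$ has cardinality exactly $\abs{S}$ for every $j$, hence $\abs{\hat{S}_j}=\abs{S}$ for all $j\in I$. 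Therefore every $\ket{\hat{S}_j}$ carries the same uniform amplitude $1/\sqrt{\abs{S}}$ on each of its support elements. Define $\mathcal{S} \coloneqq \bigcup_{j=1}^{k}\hat{S}_j$; by disjointness $\abs{\mathcal{S}} = k\,\abs{S}$, and every element of $\mathcal{S}$ receives amplitude $\frac{1}{\sqrt{l}}\cdot\frac{1}{\sqrt{\abs{S}}}$. Matching normalisation forces $\frac{1}{\sqrt{l}}\cdot\frac{1}{\sqrt{\abs{S}}} = \frac{1}{\sqrt{k\abs{S}}}$, i.e. $l=k$ (which is consistent with the intended reading of the definition), and then
\begin{equation*}
    \ket{\mathcal{C}_{S,I,\mathcal{V}}} = \frac{1}{\sqrt{k\abs{S}}}\sum_{r\in\mathcal{S}}\ket{r} = \frac{1}{\sqrt{\abs{\mathcal{S}}}}\sum_{r\in\mathcal{S}}\ket{r} = \ket{\mathcal{S}},
\end{equation*}
exhibiting $\ket{\mathcal{C}_{S,I,\mathcal{V}}}$ as a subset state on $\mathcal{S}\subseteq\B^{n+\abs{{\rm bin}(k)}}$.

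I do not expect a genuine obstacle here — the result is essentially definitional once orthogonality and cardinality-invariance are in place — but the one point requiring care is the interplay between the clock register length $\abs{{\rm bin}(k)}$ and the disjointness claim: one must fix a consistent encoding convention so that ${\rm bin}(j)$ occupies the same number of bits for every $j\in I$ (pad to $\abs{{\rm bin}(k)}$ bits), otherwise the ambient space $\B^{n+\abs{{\rm bin}(k)}}$ is not well defined and the disjointness argument becomes fiddly. With the padded-encoding convention this is immediate. A secondary cosmetic point is reconciling the stray parameter $l$ in the definition with the intended $l=k$; I would simply note that normalisation pins it down, or read $l$ as a typo for $k$.
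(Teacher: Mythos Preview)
Your proposal is correct and follows essentially the same approach as the paper: the paper's proof invokes exactly your two ingredients --- disjointness of the $\hat{S}_j$ via the unique tags ${\rm bin}(j)$, and size-preservation under classical gates --- then sets $\mathcal{S}=\bigcup_{j\in I}\hat{S}_j$ with $\abs{\mathcal{S}}=k\abs{S}$ and concludes. Your version is simply more explicit about the normalisation bookkeeping (reconciling $l$ with $k$) and the fixed-width encoding of ${\rm bin}(j)$, both of which the paper leaves implicit.
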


\begin{proof}
    Classical gates preserve the size of $S$.
    It is clear that each of the subsets $\hat{S}_k$ are orthogonal to each other due to product with the unique strings ${\rm bin}(k)$.
    Thus let $\mathcal{S} = \bigcup_{k \in I} \hat{S}_k$, then $\abs{\mathcal{S}} = k\abs{S}$.
    The resulting state is then a subset state on $\mathcal{S}$.
\end{proof}

As we move onto the more general form of encoded hybrid subset state notice that \cref{eq:history-state-ex-2} includes contributions from the Hadamard and $T$ gates.
We consider the exact representation of this state.
Moreover, we now assume that the amplitude of subset states can be exactly represented as an element of $\mathbb{A}^{(\sqrt{\cdot})}_1 \times \mathbb{Q}^+_{\log_2\abs{S}}$.

\elma

\subsection{Properties of General Succinct States}\label{sec:properties-general-succinct-states}
Now we consider operations on general succinct states.
We first consider the tensor product of two succinct states.
Note that these results can be extended to various other families of succinct states.

\alma

While the proof holds for $\mathbb{S} \in \{\mathbb{N},\mathbb{Q}^+,\mathbb{Q},\mathbb{C}\}$, we demonstrate it for $\mathbb{C}$ as the others follow trivially.

\blma

In summary, the analysis on succinct state properties has revealed that given a set of initial succinct states it is possible to yield subsequent succinct states by considering specific operations.
This is particularly useful when considering the tensor product of states and the action of classically reversible gates.
Our main goal has been to understand how access to initial query algorithms can be used to efficiently calculate the amplitudes of the resulting states.
In the sequel we show how this analysis becomes important for the \sc{Succinct State Local Hamiltonian} problem.
We have shown that it is natural to consider subset states as algebraic encoded succinct states.
Specifically, the query algorithm outputs a binary string using the most significant bit to track the action of a square root operation --- this is analogous to the idea of using bits to track the sign of a number.

\subsection{Multi-Alphabet Query Access}\label{sec:multi-alphabet-query-access}
Given the type of access model we have defined for succinct states, it is natural to consider how powerful additional query algorithms can be.
For example, our restriction is on computational basis state overlap, yet it is possible to consider overlap with a general product states.
Though, \cref{cor:hadamard-seq-subset state} suggests such access can require exponentially many additional computational steps.
To make this more concrete, we define \emph{multi-alphabet states}.\footnote{For our purposes, a multi-alphabet state is an alternative name for a general product state. The specification of the name is to highlight the required encoding of the state. A more general definition of multi-alphabet states can be given.}

Let $B$ be a single-qubit basis and $\varSigma_B$ be the alphabet of $B$,  i.e., $B = \{\ket{b^0},\ket{b^1}\}$ and $\varSigma_B = \{b^0,b^1\}$.
Consider the set $\Sigma = \bigtimes_{l =1}^{n} \varSigma_{B_l}$ and define $\sigma \in \Sigma$ as a multi-alphabet string, e.g., $\sigma = (b^{x_1}_1, b^{x_2}_2,\dots,b^{x_n}_n)$, where $x_l \in \B$.
We define the product state $\ket{\sigma} = \bigotimes_{l=1}^{n} \ket{b^{x_l}_l}$.
It follows trivially that for each $\ket{b^{x_l}_l}$, there exists a unitary operator $U_l$ such that $U_l \ket{x_l} = \ket{b^{x_l}_l}$.
Furthermore, $U_l$ can be expressed efficiently.
Let us restrict ourselves to the case where $U_l$ can be exactly expressed in a polynomial number of bits.
We define $N_{\sigma}$ as the number of non-zero amplitudes in the product state $\ket{\sigma}$ when expressed in the computational basis.

\begin{lemma}[Multi-Alphabet Query Access]\label{lma:multi-alphabet-query-access}
    Consider an $\mathbb{S}$-succinct state $(\ket{\psi}, \mathbb{S}_{p(n)}, \mathcal{Q}_\psi)$, where $p(n)$ is a polynomial in $n$.
    Let $\sigma \in \Sigma = \bigtimes_{l =1}^{n} \varSigma_{B_l}$ be a multi-alphabet string.
    The cost of computing the amplitude $\braket{\sigma}{\psi}$ requires $N_\sigma$ calls to the query algorithm $\mathcal{Q}_\psi$.
\end{lemma}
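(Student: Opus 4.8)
The plan is to expand $\ket{\sigma}$ in the computational basis, collect the computational basis kets that appear with nonzero coefficient, and then compute the overlap $\braket{\sigma}{\psi}$ as a finite sum over exactly those kets, using one call to $\mathcal{Q}_\psi$ per term. The key point is that this finite sum has exactly $N_\sigma$ nonzero terms by definition of $N_\sigma$, so $N_\sigma$ calls suffice.

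First I would write $\ket{\sigma} = \bigotimes_{l=1}^{n}\ket{b^{x_l}_l}$ and, for each $l$, expand $\ket{b^{x_l}_l} = \beta_l^{0}\ket{0} + \beta_l^{1}\ket{1}$ in the single-qubit computational basis; the coefficients $\beta_l^{c}$ are entries of the (efficiently, exactly describable) unitary $U_l$ with $U_l\ket{x_l}=\ket{b^{x_l}_l}$, hence classically computable to the required exactness. Taking the tensor product gives
\begin{equation*}
    \ket{\sigma} = \sum_{j \in \B^n} \Big(\prod_{l=1}^{n} \beta_l^{\,j[l]}\Big)\ket{j} \eqqcolon \sum_{j \in \B^n} \gamma_\sigma(j)\,\ket{j},
\end{equation*}
and by definition the set $T_\sigma \coloneqq \{ j \in \B^n : \gamma_\sigma(j)\neq 0\}$ has $|T_\sigma| = N_\sigma$ elements. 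Next I would observe that the coefficients $\gamma_\sigma(j)$ and the support set $T_\sigma$ depend only on $\sigma$ (equivalently, on the $U_l$'s), not on $\ket{\psi}$, and so can be enumerated purely classically from the description of $\sigma$ without touching $\mathcal{Q}_\psi$.

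Then I would compute
\begin{equation*}
    \braket{\sigma}{\psi} = \sum_{j \in \B^n} \overline{\gamma_\sigma(j)}\,\alpha(j) = \sum_{j \in T_\sigma} \overline{\gamma_\sigma(j)}\,\alpha(j),
\end{equation*}
where each $\alpha(j)$ is recovered (up to the scaling factor $c_\psi$, which is harmless for amplitude ratios and can be divided out as in the amplitude-ratio proposition) by a single query $\mathcal{Q}_\psi(j)$. The terms with $j \notin T_\sigma$ contribute nothing, so only the $N_\sigma$ queries $\{\mathcal{Q}_\psi(j) : j \in T_\sigma\}$ are needed; the remaining arithmetic (multiplying by $\overline{\gamma_\sigma(j)}$ and summing) is classical post-processing on the returned algebraic encodings, which is polynomial-time in the bit-lengths involved.

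The mild obstacle is bookkeeping rather than anything deep: one must argue that the $\beta_l^c$, and hence the products $\gamma_\sigma(j)$, remain exactly representable (indeed in an appropriate algebraic-encoding set) given the assumption that each $U_l$ is exactly expressible in polynomially many bits, and that enumerating $T_\sigma$ is feasible — it is, since $T_\sigma = \bigtimes_{l} \mathrm{supp}(\ket{b^{x_l}_l})$ is a product set whose factor structure is read directly off the $U_l$'s. No claim is made that $N_\sigma$ is small; indeed \cref{cor:hadamard-seq-subset state} shows it can be $\Theta(2^n)$ when the $B_l$ are e.g. Hadamard bases, which is exactly why the statement is phrased as a cost of $N_\sigma$ calls rather than a polynomial bound.
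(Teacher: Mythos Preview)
Your proposal is correct and follows essentially the same approach as the paper: expand $\ket{\sigma}$ in the computational basis, write $\braket{\sigma}{\psi}$ as a sum over the $N_\sigma$ nonzero computational-basis coefficients of $\ket{\sigma}$, and invoke one call to $\mathcal{Q}_\psi$ per term. Your write-up is in fact more careful than the paper's (which writes the somewhat informal $\braket{\sigma}{\psi} = \bigotimes_l \braket{b^{x_l}_l}{\psi}$ and does not discuss enumerating the support or the exact representability of the $\gamma_\sigma(j)$), but the underlying argument is identical.
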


\begin{proof}
    It follows that 
    \begin{align*}
        \braket{\sigma}{\psi} &= \bigotimes_{l=1}^{n} \braket{b^{x_l}_l}{\psi} \\
        &= \bigotimes_{l=1}^{n} \sum_{x_l \in \B} \alpha_{x_l} \braket{x_l}{\psi} \\
        &= \sum_{\boldsymbol{x} \in \B^n} \alpha_{\boldsymbol{x}} \braket{\boldsymbol{x}}{\psi},
    \end{align*}
    where $\boldsymbol{x} = (x_1,\dots,x_n)$ and $\alpha_{\boldsymbol{x}} = \prod_{l=1}^{n} \alpha_{x_l}$.
    Let $N_\sigma$ be the number of $\alpha_{\boldsymbol{x}}$ that are non-zero.
    The query algorithm $\mathcal{Q}_\psi$ can be used to compute the individual amplitudes $\braket{\boldsymbol{x}}{\psi}$ and therefore the total amplitude $\braket{\sigma}{\psi}$ requires $N_\sigma$ calls.
    The cost of addition is assumed to be negligible in this analysis.
\end{proof}

For $n$ alphabets that are non-trivial superposition of the computational basis states, the number of non-zero amplitudes can be exponential in $n$.
For similar reasons, it is easy to see that computing the partial trace of $\rho_\psi = \ketbra{\psi}$ is only efficient if the number of qubits in the region being traced out is at most logarithmic in $n$.
A multi-alphabet access model is exponentially more powerful than one limited to computational basis states~\cite{cotler2021revisiting}.

\section{The Succinct State Local Hamiltonian Problem}\label{sec:LHP-succinct-ground-states}
The standard \sc{Local Hamiltonian} problem~\cite{kitaev2002classical} admits only a classical description of the Hamiltonian and thresholds, with no auxiliary information relating to the ground state. 
Several problem extensions have explored what happens when such information is added.

\citet{bravyi2015monte} introduced the \sc{Guided Local Stoquastic Hamiltonian} problem (also called \sc{Guidable Local Hamiltonian}~\cite{weggemans2024guidable}), where a guiding state is promised to have non-negligible pointwise overlap with the ground state, though its explicit description is not given as input.
It was shown this variant is \clw{MA}{complete} for $6$-local stoquastic Hamiltonians for inverse-polynomial precision.
Thus, for stoquastic Hamiltonians, this suggests modifying the problem from a guiding state with point-wise overlap to a ground state with a succinct representation does not yield any computational advantage in estimating the ground-state energy to inverse-polynomial precision.

Explicit inclusion of the guiding state as part of the input has been shown to give \clw{BQP}{completeness} in estimating the ground-state energy for local Hamiltonians~\cite{richter2007two}; \clw{BQP}{hardness} is maintained even for a range of so-called ``physically relevant'' local Hamiltonian families~\cite{cade2023improved}.
It is clear that making the ground-state information explicit can fundamentally alter the complexity.

We now introduce the main computational problem of this work.

\begin{definition}[\sc{$\mathbb{S}_{p(n)}$-Succinct State $k$-Local Hamiltonian}]\label{def:LHP-succinct-state}
Let $H = \sum_{i=1}^{m} H_i$ be a $k$-local Hamiltonian on $n$ qubits with $m = O(\poly{n})$.
Given parameters $0 \leq a < b \leq 1$ such that $b - a = 1/\poly{n}$, the promise problem is:
\begin{itemize}
    \item[]\hspace{-1.2em} (\textsc{yes}): There exists a $\mathbb{S}_{p(n)}$-succinct state $\ket{\xi}$ such that $\bra{\xi}H\ket{\xi} \le a$.
    \item[]\hspace{-1.2em} (\textsc{no}): For all states $\ket{\psi}$, $\bra{\psi}H\ket{\psi} \ge b$.
\end{itemize}
\end{definition}

Here the type of succinct state $\mathbb{S}$ is left unspecified.
Conceptually, this problem parallels the guided variants, but assumes that the ground state itself is succinctly describable rather than merely approximable by a guiding state for example.
This change in representation exposes an important conceptual question: how much classical structure in the ground state is sufficient to collapse quantum verification?

Theoretically, this formulation highlights a transition point where quantum verification (as in the standard \sc{Local Hamiltonian} problem) can be replaced by pure classical (probabilistic) verification.
While one might expect containment in \cl{QCMA}, we show that the problem in fact admits fully classical verification via standard mappings to stoquastic Hamiltonians.
This reveals that succinct representability can itself serve as a form of verification power, rather than merely an encoding convenience.

Although our setting does not originate from a single physical scenario, this is typical for models that explore the limits of classical assistance.
Even guided state heuristics lack a general method for identifying useful trial states, despite there being numerous families with efficient classical descriptions~\cite{froese1987general,white1992density,vidal2003efficient,verstraete2008matrix,carleo2017solving}.\footnote{See~\cite{jiang2025local} for further discussion.}
Though, exact amplitude access also appears in practically motivated work, such as a recent quantum state certification protocol~\cite{huang2025certifying}, where the existence of the protocol relies on obtaining such information.
In this respect, our access model is no more idealised than those already employed in application-oriented settings.
The study of succinct states therefore remains valuable: it identifies when algebraic descriptions enable efficient classical verification and forms a basis for future approximate-succinct or guided state frameworks.

In the sections that follow we discuss the containment in \cl{MA} and the \clw{MA}{hardness} of the problem.
Using the analysis of the previous section, proving the main ideas is straightforward.

\subsection{Class Containment}\label{sec:containment}
To prove the problem can be decided in \cl{MA} requires the following assumptions on the problem instance:
\begin{assumption}
    Given \cref{def:LHP-succinct-state}, we assume:
    \begin{enumerate}
        \item[(i)] We have query access to the Hamiltonian $H$.
        \item[(ii)] For any $x,y \in \B^n$, the output $\bra{x}H\ket{y} \in \mathbb{C}_{p(n)}$, for some polynomial $p(n)$.
        \item[(iii)] All of $a(n)$, $b(n)$ and $m$ can be represented in $p(n)$ bits.
        \item[(iv)] The succinct ground state $\ket{\xi}$ is an $\mathbb{S}_{p(n)}$-succinct state (i.e., its (scaled) amplitudes are algebraically encoded as an element of $\mathbb{S}_{p(n)}$).
    \end{enumerate}
\end{assumption}

\begin{remark}
    Query access to the Hamiltonian implies that for any $x,y \in \B^n$, there exists an efficient classical (query) algorithm $\mathcal{Q}^{(1)}_H$ such that $\mathcal{Q}^{(1)}_H(x,y) = \bra{x}H\ket{y}$.
    Additionally, for any given row index $x\in \B^n$, there exists an efficient classical (query) algorithm $\mathcal{Q}^{(2)}_H$ such that $\mathcal{Q}^{(2)}_H(x) = \{y : \bra{x}H\ket{y} \neq 0\}$.
    Under the assumptions above, it is clear that $\mathcal{Q}^{(1)}_H(x,y) \in \mathbb{C}_{p(n)}$ and $y \in \mathbb{N}_{n}$ for any $y \in \mathcal{Q}^{(2)}_H(x)$.
\end{remark}

We note that $p(n)$ is an upper bound on the number of bits required to encode the amplitudes of the ground state; it may be the case that there exists a polynomial $q(n) < p(n)$ such that the amplitudes are $\mathbb{S}_{q(n)}$-succinct, yet since $\mathbb{S}_{q(n)} \subset \mathbb{S}_{p(n)}$, we can consider the more general case.

The bulk of the proof of containment in \cl{MA} follows results of \citet{bravyi2023rapidly} with minor modifications to the generator used in the Markov chain.
We therefore only outline the main ideas here; see Refs.~\cite{bravyi2023rapidly,jiang2025local} for further details.

\subparagraph{Arthur's Algorithm.}
The main idea of the verification algorithm used to place the problem in \cl{MA} is to map the given instance $(n,H,a,b)$ to a \emph{stoquastic} Hamiltonian and apply Gillespie's algorithm~\cite{gillespie1977exact} to distinguish between the \sc{yes} and \sc{no} cases, using a message from Merlin containing $(\lambda^\star, \mathcal{Q}_\xi,x^\star)$.
The element $\lambda^\star$ can be taken to be the ground-state energy of the Hamiltonian $H$, $\mathcal{Q}_\xi$ is the query algorithm for the succinct ground state $\ket{\xi}$ and $x^\star$ is a bit string satisfying important technical conditions (see Ref.~\cite{jiang2025local} for further details).
Clearly, if $\lambda^\star > b$ then Arthur can immediately reject the instance.

The first step Arthur must perform is to map the Hamiltonian $H$ to a real Hamiltonian.
We outline this procedure below.
A subsequent mapping of real Hamiltonians to stoquastic Hamiltonians can be performed using the fixed-node quantum Monte Carlo method~\cite{tenHaaf1995proof, bravyi2023rapidly}; this method can be viewed as a way of ``curing'' the sign problem typically found in quantum Monte Carlo simulations.
It follows that if the fixed-state in this mapping is the ground state of the (real) Hamiltonian, then for the fixed-node Hamiltonian $F$, we have $F\ket{\xi} = H \ket{\xi}$.
If this is not the case, a straightforward proof shows that the ground-state energy of $F$ is bounded below by that of $H$~\cite{bravyi2023rapidly}.

In order to compute this transformation to $F$ efficiently, we implicitly require that the ground state is a succinct state.
Since the fixed-node Hamiltonian is stoquastic, there is an efficient algorithm that maps the problem to a classical (continuous-time) Markov chain.
Moreover, the generator of the Markov chain $G$ is related to the Hamiltonian $F$ via
\begin{equation*}
    \bra{y} G \ket{x} = - \frac{\braket{y}{\xi}}{\braket{x}{\xi}}\bra{y} F \ket{x}.
\end{equation*}
The difference here from previous work~\cite{bravyi2023rapidly} is that the provided state $\ket{\xi}$ can be adversarial.
The stationary distribution of the Markov chain is the probability distribution sampling the ground state $\ket{\xi}$, i.e., $\pi(x) = \abs{\braket{x}{\xi}}^2$.
The use of Gillespie's algorithm then allows Arthur to simulate the Markov chain for a time $t = O(\poly{n})$.

To distinguish between the \sc{yes} and \sc{no} cases, Arthur must check that $G$ defines a \emph{legal} generator; in the \sc{no} case, the generator can have ill-defined parts and therefore rejection occurs when the walk hits these sectors.
It is well-known that Markov chains in these circumstances exhibit mixing times that scale with the inverse of the spectral gap~\cite{bravyi2022simulate,bravyi2023rapidly}, though the requirement of $G$ being a legal generator is independent of a large spectral gap~\cite{jiang2025local}.
Furthermore, it is clear that the computational steps can be performed efficiently given the assumptions and inputs outlined above.

From Arthur's algorithm and previous work on Markov chains for real Hamiltonians~\cite{bravyi2023rapidly}, we can see that the problem is in \cl{MA} if the Hamiltonian is real.

\begin{theorem}[\cite{jiang2025local}]\label{thrm:Jiang-kLRH-Qp-MA}
    The \sc{$\mathbb{Q}_{p(n)}$-Succinct State $k$-Local Real Hamiltonian} problem is in \cl{MA}, for all $k\geq 2$.
\end{theorem}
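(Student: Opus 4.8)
The plan is to give a Merlin--Arthur protocol that feeds into the fixed-node quantum Monte Carlo pipeline outlined above. I would have Merlin send the triple $(\lambda^\star,\mathcal{Q}_\xi,x^\star)$: a claimed value $\lambda^\star$ of $\lambda_0(H)$ written in $p(n)$ bits, a purported $\mathbb{Q}_{p(n)}$-succinct query algorithm $\mathcal{Q}_\xi$ for a ground state $\ket{\xi}$, and a computational-basis label $x^\star$ certifying $\braket{x^\star}{\xi}\neq 0$ from which the random walk is initialised (together with the further structural conditions needed for the reachable component of the walk to cover $\mathrm{supp}(\ket{\xi})$). Arthur first checks that $\mathcal{Q}_\xi$ is syntactically well formed, i.e.\ returns elements of $\mathbb{Q}_{p(n)}$, and rejects outright if $\lambda^\star > b$; otherwise he enters the simulation phase.

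Next Arthur builds the (stoquastic) fixed-node Hamiltonian $F = F^{(\xi)}$ from $H$ with fixed state $\ket{\xi}$, together with the continuous-time generator
\begin{equation*}
    \bra{y}G\ket{x} = \lambda^\star\,\updelta_{x,y} - \frac{\braket{y}{\xi}}{\braket{x}{\xi}}\bra{y}F\ket{x},
\end{equation*}
using $\lambda^\star$ in place of $\lambda_0(F)$. The key point I would stress is that every quantity here is computable \emph{exactly} in polynomial time: the entries $\bra{y}H\ket{x}$ and the row-sparsity pattern come from the Hamiltonian query algorithms $\mathcal{Q}^{(1)}_H,\mathcal{Q}^{(2)}_H$, and the amplitude ratio $\braket{y}{\xi}/\braket{x}{\xi}$ is produced by two calls to $\mathcal{Q}_\xi$ as in the amplitude-ratio proposition, giving a genuine rational of $O(p(n))$ bits with no accumulation of numerical error. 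This is exactly where $\mathbb{Q}_{p(n)}$-succinctness of $\ket{\xi}$ --- as opposed to mere efficient preparability --- is indispensable, since extracting $\braket{x}{\psi}$ from a preparation circuit is \clw{\#P}{hard} in general~\cite{FR99}. Arthur then simulates the Markov chain generated by $G$ for time $t = O(\poly{n})$ via Gillespie's algorithm~\cite{Gillespie77}, started at $x^\star$, recording local-energy samples $E_{\mathrm{loc}}(x) = \sum_y \bra{x}F\ket{y}\braket{y}{\xi}/\braket{x}{\xi}$ and flagging any visit to a configuration where $G$ fails to be a legal generator.

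For completeness: in the \textsc{yes} case $\ket{\xi}$ is a genuine ground state with energy $\leq a < b$, so $F\ket{\xi} = H\ket{\xi}$, $\lambda_0(F) = \lambda_0(H)$, and with $\lambda^\star = \lambda_0(H)$ the generator $G$ is legal with stationary distribution $\pi(x) = \abs{\braket{x}{\xi}}^2$ and constant local energy $E_{\mathrm{loc}} \equiv \lambda_0(H) \leq a$; Arthur sees a legal walk whose energy statistics concentrate below $b$ and accepts with high probability. For soundness: in the \textsc{no} case every state has energy $\geq b$, so either $\lambda^\star > b$ (immediate rejection) or the claim $\lambda^\star \leq b$ is an under-estimate, which forces $\ket{\xi}$ not to be a true ground eigenvector or $\lambda^\star < \lambda_0(F)$; in either case $G$ develops ill-defined sectors (negative diagonal or wrong-sign off-diagonal) or the local energy deviates from $\lambda^\star$, and I would argue the walk from $x^\star$ witnesses this within polynomial time with non-negligible probability, so Arthur rejects. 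Standard \cl{MA} amplification~\cite{ZF87} then boosts the gap.

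The hard part will be the soundness/efficiency analysis of this last step: one must show that legality of $G$ can be certified or refuted in polynomial time \emph{independently of the spectral gap} of $F$ --- that illegal sectors, when present, are reachable quickly from $x^\star$, and that in the \textsc{yes} case the local-energy estimator has polynomially bounded variance so a $\poly{n}$-length run suffices. This is the delicate point in the fixed-node treatment of Refs.~\cite{HBLSC95,BCGL22,Bravyi15}, and identifying the precise technical conditions $x^\star$ must satisfy is the remaining care point; by contrast the real-to-stoquastic passage is routine once one verifies --- as above --- that it preserves query access to $H$, to $\ket{\xi}$, and the value of $\lambda_0(H)$.
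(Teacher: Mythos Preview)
Your proposal is correct and follows essentially the same approach that the paper outlines (and attributes to \refcite{Jiang25}): Merlin sends $(\lambda^\star,\mathcal{Q}_\xi,x^\star)$, Arthur forms the fixed-node Hamiltonian $F$ and the associated continuous-time generator $G$, simulates the walk via Gillespie's algorithm, and distinguishes \textsc{yes}/\textsc{no} by whether $G$ is a legal generator --- with the crucial observation, which you flag, that legality can be certified independently of the spectral gap. The paper does not supply a self-contained proof of this theorem beyond the sketch you have reproduced, deferring the delicate soundness analysis to \refcite{Jiang25}, so your identification of that step as ``the hard part'' is exactly right.
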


\begin{remark}
    It is known that polynomially-gapped $k$-local stoquastic Hamiltonians (with polynomially-bounded norm), can be mapped to a Metropolis-Hastings Markov chain with a mixing time scaling as $\tilde{O}(\poly{n})$~\cite{bravyi2022simulate}.
    As we discuss later, the fixed-node stoquastic Hamiltonians need not be local, and can have unbounded norm if no structure is assumed on the size of the ground state's coefficients.
    Non-locality means a single configuration $x$ may have exponentially many neighbors $y$ with non-zero transition amplitude; a discrete-time Markov chain proposal step could then require enumerating or sampling among exponentially many possibilities, which is infeasible.
    An unbounded operator norm implies arbitrarily large transition rates.
    Continuous-time Markov chains can avoid both issues.
\end{remark}

For our purposes, we must ensure the general (complex) to real Hamiltonian mapping results in an appropriate succinct state and is computable in classical polynomial time.
If this is the case, then the subsequent steps of~\refcite{jiang2025local} can be followed to show that the problem is in \cl{MA}.
Below, we demonstrate how a general (complex) Hamiltonian can be transformed into a real Hamiltonian at the cost of increasing the locality by one and doubling the dimension.
We further show how the query algorithm for the succinct state $\ket{\xi}$ can be adapted to this transformation.
Additionally, we show a similar preservation of the properties when the real Hamiltonian is transformed into a stoquastic Hamiltonian.
Our analysis demonstrates this transformation can be carried out in polynomial time via an efficient classical algorithm.

\subsubsection{Complex to Real Hamiltonians}
It is discussed in Ref.~\cite[Remark 1]{jiang2025local} that the amplitudes of states considered for \cref{def:LHP-succinct-state} when $\mathbb{S} = \mathbb{C}$, must be of the form 
\begin{equation*}
    \frac{a}{b} + {\rm i} \frac{c}{d},
\end{equation*}
where $a,b,c,d \in \mathbb{N}_{p(n)}$ such that $\frac{a}{b}, \frac{c}{d} \in \mathbb{Q}_{p(n)}$.
Hence, we immediately have encodings of the signs of each component.
\cref{thrm:Jiang-kLRH-Qp-MA} does not capture general complex Hamiltonians.
In order to demonstrate how this result can be extended for a wider class of Hamiltonians and associated succinct states, we consider the following:

\begin{lemma}[{{\cite[Lemma 1]{bravyi2023rapidly}}}]\label{thrm:real-complex-ham-reduction}
    Given a $k$-local Hamiltonian on $n$-qubits $H \in \mathbb{C}^{2^n \times 2^n}$ with a $\mathbb{C}_{p(n)}$-succinct ground state, there exists a $(k+1)$-local Hamiltonian on $(n+1)$-qubits $\hat{H} \in \mathbb{R}^{2^{n+1} \times 2^{n+1}}$ with a $\mathbb{Q}_{p(n)}$-succinct ground state.
    Let $\sigma(H)$ be non-degenerate, then $\sigma(\hat{H})$ is the $2$-multiset of $\sigma(H)$.
\end{lemma}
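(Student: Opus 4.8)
The plan is to exhibit the standard ``complexification $\to$ realification'' map explicitly and then track the succinctness bookkeeping through it. Writing a complex $k$-local Hamiltonian as $H = A + \mathrm{i}B$ with $A = (H + H^\dagger)/2 = A^T$ and $B = (H - H^\dagger)/(2\mathrm{i}) = -B^T$ both real and $k$-local, I would define the $(n+1)$-qubit Hamiltonian
\begin{equation*}
    \hat H \;=\; A \otimes \mathbb{I} \;-\; B \otimes Y,
\end{equation*}
where the new qubit carries the $Y = \bigl(\begin{smallmatrix}0 & -\mathrm{i}\\ \mathrm{i} & 0\end{smallmatrix}\bigr)$ factor. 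Since $A$ and $B$ are $k$-local and $Y$ acts on one qubit, $\hat H$ is $(k+1)$-local on $n+1$ qubits. The key algebraic fact is that under the real-linear isomorphism $\mathbb{C}^{2^n} \cong \mathbb{R}^{2^{n+1}}$ sending $u + \mathrm{i}v \mapsto u \otimes \ket{0} + v \otimes \ket{1}$ (equivalently, identifying $\mathrm{i}$ with $-\mathrm{i}Y = \bigl(\begin{smallmatrix}0 & -1\\ 1 & 0\end{smallmatrix}\bigr)$ acting on the ancilla), the action of $H$ on $\mathbb{C}^{2^n}$ corresponds exactly to the action of $\hat H$ on the image subspace; moreover $\hat H$ is genuinely real symmetric because $A$ is symmetric, $B$ is antisymmetric, and $Y$ is ``imaginary antisymmetric'' so $B \otimes Y$ is real symmetric. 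I would verify the entrywise claim directly: $\bra{x,c}\hat H\ket{y,d}$ is $\bra{x}A\ket{y}$ when $c=d$ and $\pm\bra{x}B\ket{y}$ (real) when $c\neq d$, which is real in all cases.

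Next I would establish the spectral claim. Because $\hat H$ commutes with the complex structure $J \coloneqq \mathbb{I} \otimes (-\mathrm{i}Y)$ (one checks $[A\otimes\mathbb{I}, \mathbb{I}\otimes Y]=0$ trivially and $[B\otimes Y, \mathbb{I}\otimes Y]=0$ since $Y$ commutes with itself), the $2^{n+1}$-dimensional real space decomposes into the two eigenspaces of $J$ (over $\mathbb{C}$, eigenvalues $\pm\mathrm{i}$), and on each $\hat H$ restricts to a copy of $H$ (resp.\ $\bar H$, which has the same spectrum as $H$ since $H$ is being treated as a real-symmetric-equivalent operator — more precisely, the two blocks are unitarily equivalent to $H$). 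Hence each eigenvalue $\lambda$ of $H$ with multiplicity $\mu$ appears in $\hat H$ with multiplicity $2\mu$; with $\sigma(H)$ non-degenerate this says $\sigma(\hat H)$ is the $2$-multiset of $\sigma(H)$, so in particular $\lambda_0(\hat H) = \lambda_0(H)$ and its ground space is doubly degenerate. If $\ket{\xi} = \ket{\phi_R} + \mathrm{i}\ket{\phi_I}$ is the succinct ground state of $H$ (using the decomposition of \cref{lma:complex-succinct-state-to-real-succinct-state}), then $\ket{\phi_R}\ket{0} + \ket{\phi_I}\ket{1}$ is a (real) ground state of $\hat H$ — indeed this is precisely the state $\ket{\varphi_1}$ of that lemma.

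The succinctness bookkeeping is then essentially a citation of \cref{lma:complex-succinct-state-to-real-succinct-state}: given the query algorithm $\mathcal{Q}_\xi$ for the $\mathbb{C}_{p(n)}$-succinct state $\ket{\xi}$, outputting $c_\xi(R(j) + \mathrm{i}I(j))$ with $R(j),I(j)\in\mathbb{Q}_{p(n)}$, the lemma gives a classical query algorithm for $\ket{\varphi_1} = \ket{\phi_R}\ket{0}+\ket{\phi_I}\ket{1}$ that, on input the $(n+1)$-bit string $j\|c$, simply returns the real (resp.\ imaginary) component of $\mathcal{Q}_\xi(j)$ according to $c$, which lies in $\mathbb{Q}_{p(n)}$; this witnesses $\ket{\varphi_1}$ as a $\mathbb{Q}_{p(n)}$-succinct state. (Normalisation is preserved since $\||\varphi_1\rangle\|^2 = \||\phi_R\rangle\|^2 + \||\phi_I\rangle\|^2 = \||\xi\rangle\|^2 = 1$.) I would also note that query access to $\hat H$ follows from query access to $H$: $\mathcal{Q}^{(1)}_{\hat H}(x\|c, y\|d)$ is computed by one call to $\mathcal{Q}^{(1)}_H$ plus a sign/real-part extraction, and the sparsity oracle $\mathcal{Q}^{(2)}_{\hat H}$ similarly reduces to $\mathcal{Q}^{(2)}_H$ with each neighbour $y$ expanded to $y\|0$ and $y\|1$.

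The main obstacle is not any single hard step but getting the identifications exactly right: one must be careful about which of $\{-\mathrm{i}Y,\ Y,\ \mathrm{i}Y\}$ implements ``multiplication by $\mathrm{i}$'' versus ``the real-symmetric partner of $B$,'' and about the fact that the second ground-space block corresponds to $\bar H$ rather than $H$ — this is harmless for the spectrum (real-symmetric-equivalent operators and their conjugates share spectra, and here the relevant statement is about real eigenvalues of a real symmetric matrix so it is automatic) but it is the place where a careless sign would break the ``$2$-multiset'' claim or the degeneracy count. A secondary subtlety is confirming that $A$ and $B$ inherit $k$-locality from $H$ — immediate since $H = \sum_i H_i$ with each $H_i$ acting on $k$ qubits gives $A = \sum_i (H_i + H_i^\dagger)/2$ and $B = \sum_i (H_i - H_i^\dagger)/(2\mathrm{i})$ with the same supports — and that the $O(\mathrm{poly}\,n)$ bound on the number of terms is preserved (it at most doubles).
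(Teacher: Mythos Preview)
Your approach is the same as the paper's: realify via $\hat H = H_R \otimes \mathbb{I} + H_I \otimes (-\mathrm{i}Y)$, observe it is real and $(k+1)$-local, read off the doubled spectrum with ground states $\ket{\lambda_R}\ket{0} \pm \ket{\lambda_I}\ket{1}$, and invoke \cref{lma:complex-succinct-state-to-real-succinct-state} for the $\mathbb{Q}_{p(n)}$-succinctness. Your spectral argument via the commuting complex structure $J$ is more explicit than the paper's one-line assertion, and your remarks on query access to $\hat H$ anticipate what the paper handles separately in \cref{prop:real-complex-hamiltonian-query}.

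Two slips, exactly of the sort you flagged as the main risk. First, since $H$ is Hermitian, your formulas $(H+H^\dagger)/2 = H$ and $(H-H^\dagger)/(2\mathrm{i}) = 0$ are trivial; you want the real/imaginary decomposition $A = (H+\bar H)/2$ and $B = (H-\bar H)/(2\mathrm{i})$ (complex conjugate, not adjoint), which indeed gives $A=A^T$ and $B=-B^T$ real. Second, your $\hat H = A\otimes\mathbb{I} - B\otimes Y$ is \emph{not} real: $B$ is real and $Y$ has purely imaginary entries, so $B\otimes Y$ is purely imaginary, and your own entrywise check actually returns $\bra{x,0}\hat H\ket{y,1} = -\bra{x}B\ket{y}\cdot(-\mathrm{i}) = \mathrm{i}\,\bra{x}B\ket{y}$ rather than $\pm\bra{x}B\ket{y}$. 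Replace $-Y$ by $-\mathrm{i}Y$ (i.e.\ write $\hat H = A\otimes\mathbb{I} + B\otimes(-\mathrm{i}Y)$, as the paper does) and everything downstream --- the commutation with $J$, the block decomposition into $H$ and $\bar H = H^T$, the ground state $\ket{\varphi_1}$ --- goes through verbatim.
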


\begin{proof}
    We decompose $H$ into its respective real and complex parts, $H = H_{R} + {\rm i}H_{I}$.
    The Hamiltonian $\hat{H}$ is then
    \begin{equation}\label{eq:real-complex-hamiltonian}
        \hat{H} = H_{R} \otimes I + H_{I} \otimes - {\rm i}Y,
    \end{equation}
    where $Y$ is the Pauli-$Y$ operator acting on the $(n+1)$-th qubit.
    Clearly this locality of $\hat{H}$ is $k+1$ and $\hat{H}$ is self-adjoint.
    The energy eigenvectors of $\hat{H}$ are related to those of $H$ via 
    \begin{equation*}
        |\hat{\lambda}_{\pm}\rangle = \ket{\lambda_{R}}\ket{0} \pm \ket{\lambda_{I}}\ket{1},
    \end{equation*}
    where $\ket{\lambda} = \ket{\lambda_{R}} + {\rm i}\ket{\lambda_{I}}$ is an eigenvector of $H$.
    Furthermore, we assume the ground state of $H$, $\ket{\lambda_0}$, is $\mathbb{C}_{p(n)}$-succinct.
    By \cref{lma:complex-succinct-state-to-real-succinct-state} we conclude that $|\hat{\lambda}_{\pm}\rangle$ are $\mathbb{Q}_{p(n)}$-succinct.
    Since $|\hat{\lambda}_{\pm}\rangle$ are orthogonal to each other and for and eigenvector $\ket{\lambda}$ of $H$, it follows that the spectrum of $\hat{H}$ is the $2$-multiset of the spectrum of $H$.
\end{proof}

In fact, further steps can be taken with respect to this Theorem.
Specifically, it is shown in \refcite{bravyi2023rapidly} that if the initial local Hamiltonian has a non-degenerate ground space and spectral gap $\Delta$, then the resulting real Hamiltonian can also be made to have a non-degenerate ground space and spectral gap at least $\min\{1,\Delta\}$.
To make the ground space non-degenerate, an additional real Hermitian operator $V$, relating to the amplitude of the original ground state, can be added to the Hamiltonian (see \refcite{bravyi2023rapidly} for details).
Extending the mapping this way results in the Markov chain obtained from the fixed-node Hamiltonian having a unique stationary distribution; this is required for~\cite[Lemma 2]{bravyi2023rapidly}.
The subsequent analysis that follows \cref{thrm:real-complex-ham-reduction} in \refcite{jiang2025local}, appears to handle the degenerate case.
An immediate corollary of \cref{thrm:Jiang-kLRH-Qp-MA} and \cref{thrm:real-complex-ham-reduction} is then a similar result for the containment in \cl{MA} of (complex) Hamiltonians with complex succinct ground states.

\begin{corollary}[\cite{jiang2025local}]
    The \sc{$\mathbb{C}_{p(n)}$-Succinct State $k$-Local (Complex) Hamiltonian} problem is in \cl{MA}, for all $k\geq 2$.
\end{corollary}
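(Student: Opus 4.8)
The plan is to obtain the statement as an immediate composition of the complex-to-real reduction of \cref{thrm:real-complex-ham-reduction} with the containment result of \cref{thrm:Jiang-kLRH-Qp-MA}, checking only that this composition is a legitimate polynomial-time Karp reduction between the two promise problems. First I would take an arbitrary instance $(n,H,a,b)$ of the \sc{$\mathbb{C}_{p(n)}$-Succinct State $k$-Local Hamiltonian} problem, split $H = H_R + {\rm i}H_I$ into its real (symmetric) and imaginary (antisymmetric) parts, and form $\hat H = H_R \otimes \mathbb{I} + H_I \otimes (-{\rm i}Y)$ on $n+1$ qubits as in \cref{eq:real-complex-hamiltonian}; this is $(k+1)$-local, real and self-adjoint. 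Since $\sigma(\hat H)$ is the $2$-multiset of $\sigma(H)$, the smallest eigenvalue is unchanged, so a \sc{no}-instance of the complex problem (where every state has energy $\geq b$) maps to a \sc{no}-instance of the real problem, and the inverse-polynomial gap $b-a$, along with the bound on $a$, $b$ and $m$ from the assumptions, is preserved verbatim.

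For the \sc{yes}-case I would take the promised $\mathbb{C}_{p(n)}$-succinct state $\ket{\xi}$ with $\mel{\xi}{H}{\xi}\leq a$, write $\ket{\xi}=\ket{\xi_R}+{\rm i}\ket{\xi_I}$, and apply \cref{lma:complex-succinct-state-to-real-succinct-state} to conclude that $\ket{\varphi}\coloneqq\ket{\xi_R}\ket{0}+\ket{\xi_I}\ket{1}$ is a normalised $\mathbb{Q}_{p(n)}$-succinct state with an efficient query algorithm built from $\mathcal{Q}_\xi$. A direct computation using that $H_R$ is real symmetric and $H_I$ real antisymmetric gives $\mel{\varphi}{\hat H}{\varphi}=\mel{\xi}{H}{\xi}\leq a$, so $\ket{\varphi}$ is a valid \sc{yes}-witness for the real instance. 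It then remains to check the remaining instance data: query access to $\hat H$ is recovered from query access to $H$ by reading $\bra{x}H_R\ket{y}$ and $\bra{x}H_I\ket{y}$ off $\bra{x}H\ket{y}$ and $\bra{y}H\ket{x}$ and combining with the fixed $2\times2$ blocks $\mathbb{I}$ and $-{\rm i}Y$ on the ancilla, which keeps entries in $\mathbb{Q}_{p(n)+O(1)}$ and row-supports only polynomially larger. Hence Assumptions (i)--(iv) transfer with at most a polynomial blow-up of $p$.

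Finally, because $k\geq 2$ forces $k+1\geq 3\geq 2$, the reduced instance is a bona fide instance of the \sc{$\mathbb{Q}_{p(n)}$-Succinct State $(k+1)$-Local Real Hamiltonian} problem, which lies in \cl{MA} by \cref{thrm:Jiang-kLRH-Qp-MA}; composing with the polynomial-time reduction above yields the corollary. The step I expect to require the most care is not the locality bookkeeping but verifying that the exact-binary-representation format and the scaling bound $c_\psi\leq 2^{p(n)}$ survive the real/imaginary split and the formation of $\ket{\varphi}$ — precisely the content that \cref{lma:complex-succinct-state-to-real-succinct-state} is designed to supply — together with the degenerate ground-space case, where one may additionally invoke the real Hermitian perturbation $V$ of \refcite{BCGL22} to make the ground space non-degenerate (which the fixed-node Markov chain needs for a unique stationary distribution) without disturbing succinctness, noting that \refcite{Jiang25}'s subsequent analysis already accommodates degeneracy.
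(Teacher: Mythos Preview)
Your proposal is correct and is exactly the paper's approach: the paper states this as an immediate corollary of \cref{thrm:Jiang-kLRH-Qp-MA} and \cref{thrm:real-complex-ham-reduction}, with the query-access preservation for $\hat H$ supplied separately by \cref{prop:real-complex-hamiltonian-query} and the succinctness of the real witness by \cref{lma:complex-succinct-state-to-real-succinct-state}. You have simply spelled out the promise-problem bookkeeping that the paper leaves implicit; your treatment of the degenerate case (optionally invoking the perturbation $V$ from \refcite{BCGL22}, while noting \refcite{Jiang25} already handles degeneracy) also matches the paper's surrounding discussion.
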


We additionally show how to query the real Hamiltonian given query access to the complex Hamiltonian.
Note that analogous arguments can be given if one considers the extended version of \cref{thrm:real-complex-ham-reduction} ---~\cite[Lemma 1]{bravyi2023rapidly}.

\begin{proposition}\label{prop:real-complex-hamiltonian-query}
    Given query access to the $k$-local Hamiltonian $H \in \mathbb{C}^{2^n \times 2^n}$, then we have query access to the $(k+1)$-local Hamiltonian $\hat{H} \in \mathbb{R}^{2^{n+1} \times 2^{n+1}}$.
\end{proposition}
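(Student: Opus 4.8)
The plan is to implement both query algorithms for $\hat H$ --- the entry oracle $\mathcal{Q}^{(1)}_{\hat H}$ and the row-support oracle $\mathcal{Q}^{(2)}_{\hat H}$ --- out of $O(1)$, respectively $O(\poly{n})$, calls to $\mathcal{Q}^{(1)}_H$ and $\mathcal{Q}^{(2)}_H$ together with $O(\poly{n})$ classical manipulation of binary strings. The starting point is that, writing $H = H_R + {\rm i}H_I$ with $H_R,H_I$ real symmetric, one has the entrywise identities $\bra{x}H_R\ket{y} = \re(\bra{x}H\ket{y})$ and $\bra{x}H_I\ket{y} = \im(\bra{x}H\ket{y})$, and by the convention fixed after \cref{eq:Cp-def} both of these are substrings of the $\mathbb{C}_{p(n)}$-encoding returned by $\mathcal{Q}^{(1)}_H(x,y)$ (the first half of the bits being the real part, the second half the imaginary part).

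For the entry oracle: given a query $(\conc{x}{a},\conc{y}{b})$ with $x,y\in\B^n$ and $a,b\in\B$, where the last qubit is the appended one, I would expand $\bra{x,a}\hat H\ket{y,b}$ against \cref{eq:real-complex-hamiltonian} using $\bra{a}\mathbb{I}\ket{b}=\updelta_{a,b}$ and the entries of $\big(\begin{smallmatrix}0 & -1\\1 & 0\end{smallmatrix}\big)\equiv -{\rm i}Y$. This yields $\re(\bra{x}H\ket{y})$ when $a=b$, $-\im(\bra{x}H\ket{y})$ when $(a,b)=(0,1)$, and $+\im(\bra{x}H\ket{y})$ when $(a,b)=(1,0)$. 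So $\mathcal{Q}^{(1)}_{\hat H}$ calls $\mathcal{Q}^{(1)}_H(x,y)$ once and then returns either the real-part substring or the imaginary-part substring of the answer, possibly after flipping its numerator sign bit; the output lies in $\mathbb{Q}_{p(n)}\subset\mathbb{C}_{p(n)}$, consistent with $\hat H$ being real.

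For the row-support oracle: on input $\conc{x}{a}$, first call $\mathcal{Q}^{(2)}_H(x)$ to obtain $Y_x = \{y : \bra{x}H\ket{y}\neq 0\}$, which has size $O(\poly{n})$ since $H$ is $k$-local with $m=O(\poly{n})$ terms. For each $y\in Y_x$ query $z = \mathcal{Q}^{(1)}_H(x,y)$ and a zero-test on the two halves of $z$ tells us to include $\conc{y}{a}$ whenever $\re(z)\neq 0$ and $\conc{y}{\bar a}$ whenever $\im(z)\neq 0$; by the entry computation above this enumerates exactly $\{\conc{y}{b} : \bra{x,a}\hat H\ket{y,b}\neq 0\}$, in polynomial time. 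The same recipe applies verbatim to the non-degenerate refinement of \cite[Lemma 1]{BCGL22}, where one additionally queries the succinct ground state's algorithm $\mathcal{Q}_\xi$ to supply the entries of the auxiliary real operator $V$.

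I expect the only delicate point to be bookkeeping rather than mathematics: keeping the index convention for the appended ancilla qubit aligned with \cref{eq:real-complex-hamiltonian}, and verifying that extracting and negating real/imaginary substrings keeps the output inside the declared class $\mathbb{Q}_{p(n)}$, so that the hypotheses of \cref{thrm:Jiang-kLRH-Qp-MA} and the subsequent steps of \refcite{Jiang25} carry over without change.
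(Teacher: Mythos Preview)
Your proposal is correct and follows essentially the same approach as the paper: both implement $\mathcal{Q}^{(1)}_{\hat H}$ by a four-way case split on the ancilla bits $(a,b)$ to extract $\pm\re$ or $\pm\im$ of $\bra{x}H\ket{y}$ from the $\mathbb{C}_{p(n)}$ encoding, and both build $\mathcal{Q}^{(2)}_{\hat H}$ by calling $\mathcal{Q}^{(2)}_H$ on the $n$-bit part of the row index and then separating real- and imaginary-part contributions. Your treatment of the row-support oracle (iterate over $Y_x$, zero-test the two halves of $\mathcal{Q}^{(1)}_H(x,y)$, and output $\conc{y}{a}$ or $\conc{y}{\bar a}$ accordingly) is in fact more explicit than the paper's, which phrases the same step via an even/odd index convention and then defers the imaginary-part case to ``a bit more thought''.
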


\begin{proof}
    Note that $\bra{x}H\ket{y}$ outputs some complex number $z = a + {\rm i}b \in \mathbb{C}_{p(n)}$.
    Then consider $\hat{H}$ as in \cref{eq:real-complex-hamiltonian}, and two $(n+1)$-bit strings $x' = \conc{x}{u}$ and $y' = \conc{y}{v}$.
    Query access to $\hat{H}$ is then
    \begin{align*}
        \bra{x'} \hat{H} \ket{y'} &= \bra{x}\bra{u}H_{R}\otimes I\ket{y}\ket{v} - {\rm i}\bra{x}\bra{u}H_{I}\otimes Y\ket{y}\ket{v} \\
        &= \bra{x}H_{R}\ket{y}\delta_{u,v}^{I} - {\rm i}\bra{x}H_{I}\ket{y}\delta_{u,v}^{Y}\\
        &= a \,\delta_{u,v}^{I} - {\rm i}b \,\delta_{u,v}^{Y}.
    \end{align*}
    Here $\delta_{u,v}^{I}$ is $1$ if $u=v$ and $\delta_{u,v}^{Y}$ is $-{\rm i}$ if $u=0$ and $v=1$ and ${\rm i}$ if $u=1$ and $v=0$.
    Therefore, 
    \begin{equation*}
        \bra{x'} \hat{H} \ket{y'}  = \begin{cases} a, &\text{if}~~ u = v = 0, \\
     -b, &\text{if}~~ u = 0, v = 1, \\
     b, &\text{if}~~ u = 1, v = 0, \\
     a, &\text{if}~~ u = v = 1.
        \end{cases}
    \end{equation*}
    Clearly these values lie in $\mathbb{Q}_{p(n)}$ as require simply logic based on a query to $H$.

    We now check that given a row index $x'$, the query algorithm can output the columns of $\hat{H}$ with non-zero entries.
    This of course requires logic and a query to $H$ to determine the non-zero entries.
    This can be seen by considering the real part contributions:
    \begin{equation*}
        \hat{H}_{ij} = \begin{cases}
            i ~\text{even}, j~\text{even}, &~~ H_{\frac{i}{2} \frac{j}{2}}, \\
            i ~\text{even}, j~\text{odd}, &~~ 0\\
            i ~\text{odd}, j~\text{even}, &~~ 0 \\
            i ~\text{odd}, j~\text{odd}, &~~ H_{\frac{i-1}{2} \frac{j-1}{2}}.
        \end{cases}
    \end{equation*}
    Thus, for example, calling the query for row $x'$, we subsequently call the query for row $\frac{x'}{2}$ on $H$.
    This outputs a set $\{y : \langle{\frac{x'}{2}}|H\ket{y} \neq 0\}$, then the set ${\{y' : \bra{x'}\re(\hat{H})\ket{y'} \neq 0\}}$ is given by $\{y' : y' = 2y\}$.
    With a bit more thought, utilising the symmetry of the Hamiltonians, similar arguments can be constructed for the imaginary part of $\hat{H}$.
    The union of these sets gives the non-zero entries of $\hat{H}$. 
\end{proof}

Hamiltonians that are $k$-local are $\Theta(n^k)$-sparse, and therefore the rows of the resulting Hamiltonian $\hat{H} + V$ can be also computed efficiently.

\subsubsection{Real to Stoquastic Hamiltonians}
The mapping of real Hamiltonians to stoquastic Hamiltonians follows the fixed-node quantum Monte Carlo method~\cite{tenHaaf1995proof}.
We must ensure that the stoquastic Hamiltonian also has a succinct ground state and can be queried from efficiently.

\begin{definition}[Fixed-Node Hamiltonian]\label{def:FNH}
    Let $\ket{\psi} \in (\mathbb{R}^2)^{\otimes n}$ be a normalised state and $H$ be a $k$-local real Hamiltonian on $n$-qubits.
    Define the sets 
    \begin{align*}
        \mathcal{P} &\coloneqq \{(x,y) \mid x\neq y ~\text{and}~ \alpha(x)\mel{x}{H}{y} \alpha(y) > 0\}, \\
        \mathcal{N} &\coloneqq \{(x,y) \mid x\neq y ~\text{and}~ \alpha(x)\mel{x}{H}{y} \alpha(y) \leq 0\},
    \end{align*}
    where $\alpha(x) = \braket{x}{\psi}$.
    The fixed-node Hamiltonian $F = F(\psi,H)$ is defined as
    \begin{equation*}
        \mel{x}{F}{y} = \begin{cases}
            \mel{x}{H}{y} &\text{if}~~ (x,y) \in \mathcal{N} \\
            0 &\text{if}~~ (x,y) \in \mathcal{P}
        \end{cases},
    \end{equation*}
    and
    \begin{equation*}
        \mel{x}{F}{x} = \mel{x}{H}{x} + \sum_{(x,y) \in \mathcal{P}} \frac{\alpha(y)}{\alpha(x)} \mel{x}{H}{y}.
    \end{equation*}
\end{definition}

\begin{lemma}\label{lma:real-stoq-ham-reduction}
    Given a $k$-local real Hamiltonian $\hat{H}$ on $n$ qubits with a $\mathbb{Q}_{p(n)}$-succinct ground state $\ket{\hat{\xi}}$, there exists a stoquastic Hamiltonian $F$ on $n$ qubits with a $\mathbb{Q}_{p(n)}$-succinct ground state $\ket{\hat{\xi}}$ such that $\lambda_0(F) = \lambda_0(\hat{H})$.
\end{lemma}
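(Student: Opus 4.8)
The plan is to verify three things about the fixed-node Hamiltonian $F = F(\hat\xi, \hat H)$ from \cref{def:FNH}: (i) $F$ is stoquastic, (ii) $\ket{\hat\xi}$ is still a ground state of $F$ with the same ground energy, and (iii) $\ket{\hat\xi}$ remains $\mathbb{Q}_{p(n)}$-succinct — which is immediate, since $F$ does not alter the state $\ket{\hat\xi}$ at all, so the same query algorithm $\mathcal{Q}_{\hat\xi}$ works and the same encoding bound $p(n)$ holds. So (iii) is essentially free; the content is in (i) and (ii).

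For (i), I would argue directly from the definition. For $x \neq y$, the off-diagonal entry $\mel{x}{F}{y}$ is either $0$ (when $(x,y) \in \mathcal{P}$) or equals $\mel{x}{H}{y}$ (when $(x,y)\in\mathcal{N}$); in the latter case, by definition of $\mathcal{N}$ we have $\alpha(x)\mel{x}{\hat H}{y}\alpha(y) \leq 0$, and since $\ket{\hat\xi}$ is the ground state of a real Hamiltonian it can be chosen with nonnegative amplitudes up to a global sign — more carefully, I should invoke the standard fact (Perron–Frobenius-type reasoning, or simply that one may work in the sign-fixed gauge) that $\alpha(x)\alpha(y) \geq 0$, forcing $\mel{x}{\hat H}{y} \leq 0$. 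Hence all off-diagonal entries of $F$ are $\leq 0$, i.e., $F$ is stoquastic. A subtlety to handle: amplitudes can vanish, so one should treat the support of $\ket{\hat\xi}$ carefully and note that entries connecting to zero-amplitude configurations are placed in $\mathcal{N}$ by the $\leq 0$ convention and thus retained, which is consistent.

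For (ii), the key computation is to show $F\ket{\hat\xi} = \hat H \ket{\hat\xi}$, which combined with $\hat H\ket{\hat\xi} = \lambda_0(\hat H)\ket{\hat\xi}$ gives $F\ket{\hat\xi} = \lambda_0(\hat H)\ket{\hat\xi}$; one then checks $\lambda_0(\hat H)$ is actually the \emph{smallest} eigenvalue of $F$ (this is the standard fixed-node variational statement — $\lambda_0(F) \geq \lambda_0(\hat H)$ by a Rayleigh-quotient argument, and equality because $\ket{\hat\xi}$ is an eigenvector achieving it, using stoquasticity so that the ground energy is attained by a nonnegative vector). The eigenvalue equation $F\ket{\hat\xi} = \hat H\ket{\hat\xi}$ is verified componentwise: for each $x$, $\bra{x}F\ket{\hat\xi} = \mel{x}{F}{x}\alpha(x) + \sum_{y\neq x}\mel{x}{F}{y}\alpha(y)$; substituting the definitions, the $\mathcal{P}$-terms in the off-diagonal sum vanish, while the added term $\sum_{(x,y)\in\mathcal{P}}\tfrac{\alpha(y)}{\alpha(x)}\mel{x}{H}{y}$ in the diagonal, once multiplied by $\alpha(x)$, exactly reconstitutes the missing $\sum_{(x,y)\in\mathcal{P}}\mel{x}{H}{y}\alpha(y)$ contributions — so the total telescopes back to $\bra{x}\hat H\ket{\hat\xi}$. (This is exactly the content of \cite[Lemma 2]{BCGL22}, which I would cite for the spectral statement; here I only need that the ground state and ground energy are preserved, not the gap.)

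The main obstacle — really the only delicate point — is handling configurations $x$ with $\alpha(x) = 0$, where the ratio $\alpha(y)/\alpha(x)$ in the diagonal correction is ill-defined. I would address this by restricting attention to the support $\text{supp}(\ket{\hat\xi})$: on the complement, the $\mathcal{P}/\mathcal{N}$ partition still makes sense and the componentwise eigenvalue check at a zero-amplitude row $x$ only requires $\sum_{y}\mel{x}{F}{y}\alpha(y) = \sum_y \mel{x}{\hat H}{y}\alpha(y) = 0$, which holds provided $F$ and $\hat H$ agree on the rows indexed by the support — and they do, because for $(x,y)$ with $x \in \text{supp}$ but $y \notin \text{supp}$ we have $\alpha(y)=0$, contributing nothing either way. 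I should also note in passing that for the subsequent use in Arthur's algorithm the fixed-node Hamiltonian need not be local or norm-bounded (consistent with the earlier remark), so no claim about the locality of $F$ is made — only stoquasticity, the preserved ground state, and the preserved ground energy, which is all the lemma asserts.
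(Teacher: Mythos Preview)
Your componentwise verification that $F\ket{\hat\xi} = \hat H\ket{\hat\xi}$ is exactly the paper's argument, and your treatment of (iii) and of the zero-amplitude rows is sound. You are also more careful than the paper in flagging that one must still check $\lambda_0(\hat H)$ is the \emph{smallest} eigenvalue of $F$, not merely some eigenvalue; the paper asserts $\lambda_0(F) = \lambda_0(\hat H)$ directly after the eigenvector computation, implicitly deferring to the fixed-node variational bound in \cite{BCGL22}.

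There is, however, a genuine error in your stoquasticity argument. The claim that the ground state of a real Hamiltonian ``can be chosen with nonnegative amplitudes up to a global sign'' is false: that is a Perron--Frobenius conclusion, and it requires $\hat H$ to \emph{already} be stoquastic, which is precisely what you do not yet have. A generic real Hermitian $\hat H$ has a ground state with amplitudes of mixed sign, so $\alpha(x)\alpha(y) \geq 0$ fails in the computational basis and you cannot deduce $\mel{x}{\hat H}{y} \leq 0$ from $(x,y)\in\mathcal{N}$; consequently $F$ is \emph{not} stoquastic in that basis in general. The correct route is the one you mention only parenthetically --- the sign gauge --- and it is what the paper actually does: pass to the basis $\ket{x} \mapsto {\rm sgn}(\alpha(x))\ket{x}$, so that the transformed off-diagonal entry at $(x,y)\in\mathcal{N}$ becomes ${\rm sgn}(\alpha(x)\alpha(y))\,\mel{x}{\hat H}{y}$, which is $\leq 0$ precisely because $\alpha(x)\,\mel{x}{\hat H}{y}\,\alpha(y) \leq 0$. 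Lead with the gauge change rather than the incorrect Perron--Frobenius appeal, and state explicitly that stoquasticity is asserted only up to this diagonal conjugation.
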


\begin{proof}
    It suffices to prove that the ground state $\ket{\hat{\xi}}$ of a real Hamiltonian $\hat{H}$ is also a ground state of the fixed-node Hamiltonian $F = F(\hat{\xi},\hat{H})$.
    To see this, note that $\alpha(x)\lambda_0(\hat{H}) = \mel{x}{\hat{H}}{\hat{\xi}}$ where $\alpha(x) = \langle{x}\lvert\hat{\xi}\rangle$.
    It follows that,
    \begin{align*}
        \mel{x}{F}{\hat{\xi}} &= \sum_{y \in \B^n} \alpha(y)\,\mel{x}{F}{y} \\
            &= \alpha(x)\mel{x}{F}{x} + \sum_{(x,y) \in \mathcal{N}} \alpha(y)\mel{x}{F}{y} \\
            &\qquad+ \sum_{(x,y) \in \mathcal{P}} \alpha(y)\mel{x}{F}{y} \\
            &= \alpha(x)\big( \mel{x}{\hat{H}}{x} + \sum_{(x,y) \in \mathcal{P}} \frac{\alpha(y)}{\alpha(x)} \mel{x}{H}{y} \big)\\
            &\qquad+ \sum_{y \in \mathcal{N}} \alpha(y)\mel{x}{H}{y} \\
            &= \sum_{y \in \B^n} \alpha(y)\mel{x}{H}{y} \\
            &= \mel{x}{\hat{H}}{\hat{\xi}}.
    \end{align*}
    Therefore, $\mel{x}{F}{\hat{\xi}} = \lambda_0(\hat{H})\alpha(x)$ which implies $\lambda_0(F) = \lambda_0(\hat{H})$.
    Hence, the structure of the ground state is preserved.
    It follows that $F$ is stoquastic after choosing the sign gauge $\ket{x} \mapsto {\rm sgn}(\alpha(x))\ket{x}$.
\end{proof}

Notice that we do not require the stoquastic Hamiltonian to be local.\footnote{We note that even if $F$ was local, this would not imply the complexity collapse $\cl{QMA} = \cl{StoqMA}$ since additional ground state information is required to construct $F$.}
Since the ground state likely lacks strong local structure, terms in the fixed-node Hamiltonian $F$ such as $\alpha(y)/\alpha(x)$ can introduce non-locality.
Querying an element from the Hamiltonian $F$ follows by first checking if the pair $(x,y)$ is in $\mathcal{P}$ or $\mathcal{N}$.
If the latter holds then a query following \cref{prop:real-complex-hamiltonian-query} can be performed.

\begin{lemma}
    Given query access to the $k$-local real Hamiltonian $\hat{H} \in \mathbb{R}^{2^n \times 2^n}$ and query access to its ground state $\ket{\hat{\xi}}$, there exists query access to the fixed-node Hamiltonian $F = F(\hat{\xi},\hat{H}) 
    \in \mathbb{R}^{2^n \times 2^n}$.
\end{lemma}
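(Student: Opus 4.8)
The plan is to construct the two query algorithms $\mathcal{Q}^{(1)}_F$ and $\mathcal{Q}^{(2)}_F$ that constitute query access to $F$, building each from the given oracles for $\hat H$ (the entry oracle $\mathcal{Q}^{(1)}_{\hat H}$ and the row-support oracle $\mathcal{Q}^{(2)}_{\hat H}$) together with the amplitude-ratio oracle $\mathcal{Q}'_{\hat\xi}$ for the ground state $\ket{\hat\xi}$ (available from $\mathcal{Q}_{\hat\xi}$ by the amplitude-ratio proposition). The key observation is that every branch in \cref{def:FNH} depends on the instance only through three quantities that are computable with $O(1)$ oracle calls: the entry $h_{xy} \coloneqq \mel{x}{\hat H}{y}$, the ratio $\alpha(y)/\alpha(x)$, and the sign of $\alpha(x)\,h_{xy}\,\alpha(y)$. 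The last quantity is obtainable \emph{without} individual amplitudes: since $\hat H$ and $\ket{\hat\xi}$ are real, $\mathrm{sgn}(\alpha(x)h_{xy}\alpha(y)) = \mathrm{sgn}(h_{xy})\cdot\mathrm{sgn}(\alpha(y)/\alpha(x))$, and both factors are read off from one call to $\mathcal{Q}^{(1)}_{\hat H}$ and one call to $\mathcal{Q}'_{\hat\xi}$. Hence membership of a pair $(x,y)$ in $\mathcal{P}$ or $\mathcal{N}$ is decided with $O(1)$ queries. If $\alpha(x)=0$ — which $\mathcal{Q}_{\hat\xi}$ detects directly — then no pair $(x,\cdot)$ lies in $\mathcal{P}$, the fixed-node row $x$ coincides with that of $\hat H$, and this case needs no ratio queries.

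For an off-diagonal query $\mathcal{Q}^{(1)}_F(x,y)$ with $x\neq y$: query $h_{xy}$; return $0$ if $h_{xy}=0$; otherwise classify $(x,y)$ as above and return $h_{xy}$ if $(x,y)\in\mathcal{N}$ and $0$ if $(x,y)\in\mathcal{P}$. For the row-support oracle $\mathcal{Q}^{(2)}_F(x)$: compute $N_x \coloneqq \mathcal{Q}^{(2)}_{\hat H}(x)$; the off-diagonal nonzero columns of $F$ in row $x$ are exactly $\{y\in N_x : y\neq x,\ (x,y)\in\mathcal{N}\}$, obtained by one classification per element of $N_x$; finally adjoin $x$ to the output iff the diagonal entry $\mel{x}{F}{x}$, computed as in the next step, is nonzero.

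For the diagonal query $\mathcal{Q}^{(1)}_F(x,x)$: since $h_{xy}=0$ forces $(x,y)\in\mathcal{N}$, the correction sum $\sum_{(x,y)\in\mathcal{P}}\tfrac{\alpha(y)}{\alpha(x)}\mel{x}{\hat H}{y}$ ranges only over $y\in N_x\setminus\{x\}$, a set of size $O(\poly{n})$ because $\hat H$ is $k$-local and hence $O(\poly{n})$-row-sparse. Enumerate $N_x$, discard the $y$ with $(x,y)\notin\mathcal{P}$, and for each surviving $y$ query $h_{xy}$ and the ratio $\alpha(y)/\alpha(x)$, multiply, and accumulate; then add $\mel{x}{\hat H}{x}$ from one further call to $\mathcal{Q}^{(1)}_{\hat H}$. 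This costs $O(\poly{n})$ queries and $O(\poly{n})$ exact rational arithmetic, so it is efficient.

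The step I expect to be the main obstacle is bit-length control in the diagonal sum. Each summand $\tfrac{\alpha(y)}{\alpha(x)}\mel{x}{\hat H}{y}$ is a product of an element of $\mathbb{Q}_{O(p(n))}$ (the ratio, by the amplitude-ratio proposition, as $\ket{\hat\xi}$ is $\mathbb{Q}_{p(n)}$-succinct) with an element of $\mathbb{Q}_{p(n)}$ (an entry of $\hat H$), hence lies in $\mathbb{Q}_{O(p(n))}$; but adding $O(\poly{n})$ such rationals may enlarge numerator and denominator by a $\poly{n}$ factor. One must verify that the accumulated value still admits an exact binary encoding in $\poly{n}$ bits — it does, since summing $m$ rationals whose numerators and denominators have $O(p(n))$ bits yields a rational representable in $O(m\cdot p(n)) = \poly{n}$ bits. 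Consequently $\mathcal{Q}^{(1)}_F$ outputs values in $\mathbb{Q}_{p'(n)}$ for some polynomial $p'(n)\geq p(n)$, and enlarging $p(n)$ to $p'(n)$ throughout is harmless; with this caveat the two constructed algorithms furnish the claimed query access to $F$.
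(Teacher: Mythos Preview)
Your proof is correct and follows essentially the same approach as the paper: classify off-diagonal pairs into $\mathcal{P}/\mathcal{N}$ by a sign computation using one Hamiltonian query and one amplitude(-ratio) query, return $h_{xy}$ or $0$ accordingly, and for the diagonal enumerate the $O(\poly{n})$-sparse row of $\hat H$ to accumulate the correction sum. You are in fact slightly more careful than the paper in two respects --- explicitly handling the degenerate case $\alpha(x)=0$ and controlling the bit-length growth of the accumulated rational sum --- neither of which the paper's proof addresses.
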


\begin{proof}
    Note that $\mel{x}{\hat{H}}{y}$ outputs a real number $s \in \mathbb{Q}_{p(n)}$.
    Given two $n$-bit strings $x$ and $y$, query access to $F$ follows \cref{def:FNH}.
    That is, $\mathcal{Q}^{(1)}_F(x,y)$ first checks if $(x,y) \in \mathcal{P}$ or $\mathcal{N}$.
    This check is done by performing the calculation $\alpha(x)\alpha(y)\mel{x}{\hat{H}}{y}$ which requires two queries to $\hat{\xi}$ and one to $\hat{H}$.
    If $(x,y) \in \mathcal{N}$ then $\mathcal{Q}^{(1)}_F(x,y) = \mathcal{Q}^{(1)}_{\hat{H}}(x,y)$.
    If $(x,y) \in \mathcal{P}$ then $\mathcal{Q}^{(1)}_F(x,y) = 0$.
    The diagonal elements of $F$ can be computed by first querying $\mathcal{Q}^{(2)}_{\hat{H}}(x)$ to find the non-zero entries of $\hat{H}$.
    Then, for each $y$ we define a set $A = \{y : (x,y) \in \mathcal{P}\}$.
    This can be done efficiently since $\hat{H}$ is $\Theta(n^k)$-sparse.
    It follows that $\mathcal{Q}^{(1)}_F(x,x) = \mathcal{Q}^{(1)}_{\hat{H}}(x,x) + \sum_{y \in A} \frac{\alpha(y)}{\alpha(x)}\mathcal{Q}^{(1)}_{\hat{H}}(x,y)$.

    To query the rows of $F$ given a row index $x$, we first query $\mathcal{Q}^{(2)}_{\hat{H}}(x)$ to find the non-zero entries of $\hat{H}$.
    We then partition the set into three disjoint sets: $A = \{y : (x,y) \in \mathcal{P}\}$, $B = \{y : (x,y) \in \mathcal{N}\}$ and $C = \{y : (x,y) \notin \mathcal{P} \cup \mathcal{N}\}$.
    This can be done efficiently since $\hat{H}$ is $\Theta(n^k)$-sparse and thus $\mathcal{Q}^{(2)}_{\hat{H}}(x) = A \sqcup B \sqcup C$.
    We exclude the elements of $A$ from the output since they are zero.
    It follows that $\mathcal{Q}^{(2)}_F(x) = B \sqcup C$.
\end{proof}

\begin{corollary}
    Let $H$ be a $k$-local Hamiltonian on $n$ qubits with a $\mathbb{C}_{p(n)}$-succinct ground state $\ket{\xi}$.
    Then, take $\hat{H}$ as defined in \cref{thrm:real-complex-ham-reduction}, and subsequently $F = F(\hat{\xi},\hat{H})$ as defined in \cref{lma:real-stoq-ham-reduction}.
    Given query access to $H$ and $\ket{\xi}$, there exists query access to the fixed-node Hamiltonian $F$.
\end{corollary}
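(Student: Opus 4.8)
The plan is to compose the three query reductions already assembled in this section, so that the corollary becomes a pure bookkeeping exercise. Write $\ket{\xi}=\ket{\lambda_0}$ for the $\mathbb{C}_{p(n)}$-succinct ground state of $H$, with amplitudes $\alpha(j)=R(j)+{\rm i}I(j)$ where $R(j),I(j)\in\mathbb{Q}_{p(n)}$, and let $\mathcal{Q}_\xi$ be its query algorithm. \emph{Step 1 (complex $\to$ real Hamiltonian).} Apply \cref{prop:real-complex-hamiltonian-query}: from query access to $H$ we obtain both $\mathcal{Q}^{(1)}_{\hat H}$ and $\mathcal{Q}^{(2)}_{\hat H}$ for the $(k+1)$-local real Hamiltonian $\hat H$ of \cref{thrm:real-complex-ham-reduction}. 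Since a $k$-local Hamiltonian is $\Theta(n^k)$-sparse, $\hat H$ is $\Theta(n^{k+1})$-sparse, so these queries cost $O(\poly{n})$.

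\emph{Step 2 (query access to $\ket{\hat\xi}$).} By the proof of \cref{thrm:real-complex-ham-reduction} the ground states of $\hat H$ are $\ket{\hat\lambda_{0,\pm}}=\ket{\lambda_{0,R}}\ket{0}\pm\ket{\lambda_{0,I}}\ket{1}$; fix $\ket{\hat\xi}\coloneqq\ket{\hat\lambda_{0,+}}$, which by \cref{lma:complex-succinct-state-to-real-succinct-state} is a $\mathbb{Q}_{p(n)}$-succinct state. Concretely, $\mathcal{Q}_{\hat\xi}$ acts on an $(n+1)$-bit string $\conc{j}{u}$ by calling $\mathcal{Q}_\xi(j)$ once and returning the real half of the returned element of $\mathbb{C}_{p(n)}$ when $u=0$ and the imaginary half when $u=1$; both halves are elements of $\mathbb{Q}_{p(n)}$ by definition of the encoding, so this is constant-overhead classical post-processing of a single query. \emph{Step 3 (real $\to$ stoquastic).} Feeding query access to $\hat H$ (Step 1) and to $\ket{\hat\xi}$ (Step 2) into the preceding lemma on fixed-node Hamiltonian queries yields $\mathcal{Q}^{(1)}_F$ and $\mathcal{Q}^{(2)}_F$ for $F=F(\hat\xi,\hat H)$; the sign test $\alpha(x)\alpha(y)\mel{x}{\hat H}{y}$, the diagonal sum, and the row-partition into $A,B,C$ all stay $O(\poly{n})$ because $\hat H$ is $\Theta(n^{k+1})$-sparse.

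The only point that needs any care is Step 2: one must confirm that $\mathcal{Q}_{\hat\xi}$ is genuinely efficient and outputs values in the correct set. This is immediate, since \cref{lma:complex-succinct-state-to-real-succinct-state} constructs $\ket{\varphi_1}$ by exactly this ``append a branch qubit and separate real/imaginary parts'' operation, and the real and imaginary components of a $\mathbb{C}_{p(n)}$ encoding are by construction $\mathbb{Q}_{p(n)}$ encodings; normalisation is automatic because $\ket{\lambda_0}$ is normalised. Everything else is composition: each of the three reductions adds only $O(\poly{n})$ query overhead and preserves sparsity, so the composed query algorithm for $F$ is efficient, which is precisely the claim.
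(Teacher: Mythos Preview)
Your proposal is correct and is precisely the intended argument: the paper states this corollary without proof, treating it as an immediate composition of \cref{prop:real-complex-hamiltonian-query}, \cref{lma:complex-succinct-state-to-real-succinct-state}, and the preceding lemma on fixed-node query access, which is exactly the three-step chain you spell out. One inconsequential quibble: $\hat H$ is actually $O(n^k)$-sparse rather than $\Theta(n^{k+1})$-sparse, since the tensor structure $H_R\otimes\mathbb{I}+H_I\otimes(-{\rm i}Y)$ at most doubles the number of non-zero entries per row; either way the sparsity is polynomial and your conclusion stands.
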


\subsection{Extension of MA Containment}\label{sec:extension-MA-containment}
We extend the \cl{MA} containment proof to succinct states with amplitudes in $\mathbb{C}_{p(n)}\dbbrckt{\sqrt{\cdot}}$, which we show is the natural encoding for normalised history states of Feynman-Kitaev Hamiltonians used in the \clw{MA}{hardness} construction.
The original proof~\cite{jiang2025local} assumed history-state amplitudes lie in $\mathbb{Q}_{p(n)}$, but proper normalisation introduces square root terms: normalisation factors of the form $1/\sqrt{K+1}$ rather than $1/(K+1)$.
Since normalised succinct states are central to the problem definition, the encoding must accommodate these terms.

Fortunately, amplitudes in $\mathbb{C}_{p(n)}\dbbrckt{\sqrt{\cdot}}$ remain classically tractable.
The encoding elements can be extracted and manipulated by efficient classical algorithms using layered binary representations: rationals as numerator-denominator pairs, signed rationals via sign bits, complex numbers as pairs of rationals, and algebraic numbers through their minimal polynomial representations.
The query model thus becomes slightly more expressive than one outputting rational approximations, but without compromising computational efficiency.
This extension is essential for Solution~\ref{sol:2}, where we establish \clw{MA}{hardness} for $3$-local Hamiltonians with properly normalised history states.

\acorol

It is clear that the proof of containment in \cl{MA} outlined in \refcite{jiang2025local} can be adapted to this more general case as we have proven the modification to the succinct encodings retain polynomial-time computability.
A natural question that follows this is when the classical algorithm can only output approximations to the amplitudes.
The \cl{MA} protocol outlined in \refcite{jiang2025local} is sensitive to the precision of the amplitudes.
However, intuitively speaking, for a large enough polynomial $q(n)$ the precision should be sufficient to surpass induced protocol errors and subsequently the inverse-polynomial gap in \cref{def:LHP-succinct-state}.

\begin{restatable}[]{conjecture}{conjectureMAContainment}
    The \cl{MA} protocol outlined in \textnormal{Ref.~\cite{jiang2025local}} is robust again the inclusion of succinct states expressing values to a precision $2^{-q(n)}$ for some sufficiently large polynomial $q(n)$.
\end{restatable}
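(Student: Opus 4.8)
The plan is to prove the conjecture by a propagation-of-error analysis through every stage of Arthur's algorithm from \refcite{Jiang25}: one shows that replacing the exact amplitude queries $\mathcal{Q}_\xi$ by an approximate query algorithm with additive error $2^{-q(n)}$ perturbs the continuous-time Markov chain that Arthur simulates by an operator-norm amount that, for a sufficiently large fixed polynomial $q(n)$, is smaller than any inverse polynomial, and hence shifts Arthur's acceptance probability by less than a constant fraction of the promise gap of \cref{def:LHP-succinct-state}. Since \cl{MA} admits amplification, preserving a constant completeness/soundness separation suffices.

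First I would fix the pipeline: the instance $H$ with succinct ground state $\ket{\xi}$ is mapped to the real Hamiltonian $\hat{H}$ (via \cref{thrm:real-complex-ham-reduction}), then to the fixed-node stoquastic Hamiltonian $F = F(\hat{\xi},\hat{H})$ (\cref{def:FNH}), then to the generator $G$ of a continuous-time Markov chain with stationary law $\pi(x) = \abs{\braket{x}{\xi}}^2$. The queried amplitudes $\alpha(x) = \braket{x}{\xi}$ enter at exactly three places: the off-diagonal and diagonal entries of $F$, through the ratios $\alpha(y)/\alpha(x)$ and the \emph{sign} of the triple products $\alpha(x)\mel{x}{\hat{H}}{y}\alpha(y)$ defining $\mathcal{P},\mathcal{N}$; the generator entries $\bra{y}G\ket{x} = \lambda_0(F)\delta_{x,y} - (\alpha(y)/\alpha(x))\bra{y}F\ket{x}$; and the distribution $\pi$. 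Using the lower bound $2^{-p(n)} \le \abs{\alpha(x)}$ on nonzero amplitudes --- inherited in the exact $\mathbb{C}_{p(n)}$-succinct regime, or adjoined as a mild promise in the approximate regime, since otherwise the ratios are not even well-defined --- an additive error $\varepsilon = 2^{-q(n)}$ in each amplitude yields a relative error $O(2^{p(n)}\varepsilon)$ in each ratio. Combined with $\norm{H} = \poly{n}$ and the $\Theta(n^k)$-sparsity of $k$-local Hamiltonians (so that each row of $F$ is a sum of $\poly{n}$ such terms), this gives a perturbed generator $\tilde{G}$ with $\norm{\tilde{G}-G} = \poly{n}\cdot 2^{p(n)-q(n)}$.

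Next I would invoke the standard Duhamel/Gr\"onwall bound for continuous-time Markov semigroups: over the simulation horizon $t = O(\poly{n})$ used by Gillespie's algorithm, the total-variation distance between the trajectory laws generated by $G$ and by $\tilde{G}$ is at most $t\,\norm{\tilde{G}-G} = \poly{n}\cdot 2^{p(n)-q(n)}$. Since the protocol distinguishes \sc{yes} from \sc{no} by the probability that the walk hits the ``illegal generator'' sector within time $t$, that event's probability moves by at most the same amount, so taking $q(n)$ to be a large enough fixed polynomial (e.g.\ $q(n) = p(n)^2 + n$) makes the perturbation superpolynomially small. One also robustifies the legality check itself by admitting a tolerance of, say, $2^{-q(n)/2}$ on transition rates and on the stationarity condition, so that a genuine generator whose entries round slightly negative is still accepted; and one verifies that Merlin's message (the energy estimate $\lambda^\star$ together with the now-approximate query algorithm) can be checked to this precision, all intermediate quantities being $\poly{n}$-bit rationals after rounding.

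The main obstacle is that the fixed-node construction of \cref{def:FNH} --- and the legality check --- are discontinuous at the boundary $\alpha(x)\mel{x}{\hat{H}}{y}\alpha(y) = 0$: a sign flip moves a pair $(x,y)$ between $\mathcal{P}$ and $\mathcal{N}$, changing which off-diagonal terms are deleted and which diagonal compensations are added, so the approximate algorithm may build a \emph{different}, though still stoquastic, Hamiltonian $\tilde{F}$ whose ground energy need not equal $\lambda_0(\hat{H})$. To close the argument one would either (i) adjoin the mild promise that $\abs{\alpha(x)\mel{x}{\hat{H}}{y}\alpha(y)} \ge 1/\poly{n}$ whenever nonzero --- automatic in the exact $\mathbb{C}_{p(n)}$-succinct setting once amplitudes and $\norm{H}$ are $\poly{n}$-bounded --- so the approximation never crosses the boundary, or (ii) bound the damage variationally: $\lambda_0(\tilde{F}) \le \mel{\hat{\xi}}{\tilde{F}}{\hat{\xi}} = \mel{\hat{\xi}}{\hat{H}}{\hat{\xi}} + O(\poly{n}\cdot 2^{-q(n)})$ on one side, with a matching lower bound from the stoquasticity of $\tilde{F}$ (Perron--Frobenius) on the other, so that the \sc{yes} energy bound $a(n)$ is violated by at most a negligible amount. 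Carrying out this boundary analysis \emph{simultaneously} with the Markov-chain perturbation bound --- rather than stage by stage, since the two interact precisely where amplitudes are near-degenerate --- is the delicate point, and is why the statement is left as a conjecture.
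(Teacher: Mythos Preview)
The paper does not prove this statement: it is stated as a conjecture and left open. The only justification the paper offers is the one-sentence heuristic immediately preceding the conjecture --- ``intuitively speaking, for a large enough polynomial $q(n)$ the precision should be sufficient to surpass induced protocol errors and subsequently the inverse-polynomial gap in \cref{def:LHP-succinct-state}'' --- and it is listed among the open problems in \cref{sec:conclusion}. There is therefore no proof in the paper to compare your proposal against.

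That said, your proposal is a reasonable outline of how one might try to establish the conjecture, and you have correctly located the genuine obstruction: the fixed-node construction of \cref{def:FNH} is discontinuous in the amplitudes, because membership in $\mathcal{P}$ versus $\mathcal{N}$ depends on the \emph{sign} of $\alpha(x)\mel{x}{\hat{H}}{y}\alpha(y)$, so an arbitrarily small amplitude perturbation can flip which off-diagonal entries are zeroed and which diagonal compensations are added. Your two suggested fixes --- adjoining a margin promise on the triple products, or bounding the variational damage directly --- are the natural candidates, but neither is carried out, and you yourself flag that the joint boundary/perturbation analysis is the delicate point. One further issue you gloss over: in the approximate regime the lower bound $\abs{\alpha(x)} \ge 2^{-p(n)}$ on nonzero amplitudes is no longer automatic (it came from the exact algebraic encoding), so your relative-error bound on ratios already presupposes an extra promise. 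In short, your write-up is a credible research plan rather than a proof, which is consistent with the paper's own assessment that the statement remains conjectural.
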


The consequence of this conjecture (if proven true) is that states with a succinct representation such that the amplitudes are output to a precision of $2^{-q(n)}$ can be considered in the \cl{MA} protocol.
Whether or not these states are interesting or realistic is beyond our current scope.

\subsection{Class Hardness}\label{sec:hardness}
The \clw{MA}{hardness} reduction uses the Feynman-Kitaev clock construction, following the approach of Bravyi \emph{et al}.~\cite{bravyi2006complexity}, who originally used this technique for the \sc{Local Stoquastic Hamiltonian} problem.
A key additional requirement is proving that the ground state of the Feynman-Kitaev clock Hamiltonian is a succinct state.

Clock Hamiltonian constructions $H+V$ typically decompose as: 
\begin{equation*}
    H = H_{\text{in}} + H_{\text{prop}} + H_{\text{clock}}, \quad V = H_{\text{out}},
\end{equation*}
where $H_{\text{in}}$ and $H_{\text{out}}$ penalise incorrect input and output states, $H_{\text{prop}}$ penalises incorrect state propagation, and $H_{\text{clock}}$ penalises violations of the clock register.

\begin{remark}\label{rmk:history-state-ground}
    The \emph{history state} (see \cref{eq:general-history-state}) of the Feynman-Kitaev clock Hamiltonian is a frustration-free ground state of $H$, but is a ground state of $H+V$ only when the underlying circuit accepts with probability $1$, i.e., perfect completeness holds.
    Since \cl{MA} verification circuits can be constructed with perfect completeness, the history state is indeed the ground state of $H+V$ in our reduction.
\end{remark}

The general form of the history state follows as
\begin{equation}\label{eq:general-history-state}
    \ket{\eta(x,\chi)} \coloneqq \frac{1}{\sqrt{K+1}} \sum_{t=0}^{K} \ket{\varphi_t}\ket{t},
\end{equation}
where $\ket{t}$ is the unary encoding of the time step $t$ and $\ket{\varphi_t} = U_t \ket{\varphi_{t-1}}$, with $\ket{\varphi_0} = \ket{x,\chi,0^m,+^p}$ being the initial state of the computation.
Notice that the amplitudes are uniform and normalised.
Furthermore, by choosing a unary encoding for the clock register, $\ket{t}$, the amplitudes of the history state are defined as 
\begin{equation*}
    \frac{1}{\sqrt{K+1}} \frac{1}{\sqrt{2^p}},
\end{equation*}
since \cl{MA} circuits have gates over the field $\mathbb{F}_2$ and $p$ many $\ket{+}$ ancillae are used.

\aprop

In this proposition we have made the modification from some previous results concerning the succinct-ness of subset states, cf. \cref{rmk:subset-more-general}.
Specifically, we have explicitly included the tracking of the $\sqrt{\cdot}$ quantity.
Recall that we previously assumed that subset states were $\mathbb{N}_1$-succinct states for simplicity, cf. \cref{lma:classical-gates-subset state}.
In principle, we can propagate the idea forward that subset states are $\mathbb{N}_{1}$-succinct states to the history state.
However, this is undesirable and does not reflect the arguments and analysis of \refcite{jiang2025local}. 

\subparagraph{Revisiting the History State Normalisation.}\label{sec:discussion-proof-error}
Re-examining the proof of \clw{MA}{hardness}~\cite{jiang2025local} the history state is defined with the prefactor $1/(K+1)$, rather than $1/\sqrt{K+1}$. 
In the former case, the value is clearly a rational number in $\mathbb{Q}^+_{2 \log_2(K+1)}$ ($1/(2^{p/2} (K+1))$ is a quotient of two integers when $p$ is even).
However, in the latter case the value is not necessarily rational. 
The original argument states that the unscaled amplitudes $\braket{x}{\eta}$ satisfy the succinct property\footnote{The argument implies that there exists a classical query algorithm that can output the amplitudes of the history state when the global constant $c=1$, and therefore express $\braket{x}{\eta}$ exactly.} and can be computed by a polynomial-sized classical circuit.

Considering an unnormalised history state presents several difficulties.
First, the proof would need to establish that the (normalised) history state is inherently succinct.
Second, the original proof consistently normalises the tensor product of $\ket{+}$ states.\footnote{The number of $\ket{+}$ ancillae is made even to ensure the normalised value is rational, i.e., $p$ is even as we readily assumed.}
More fundamentally, the definition of a succinct state requires normalisation for the classical circuit to compute $\alpha(j)$ up to a common factor; thus, our candidate history state $\ket{\eta}$ must be normalised, meaning its components $\braket{x}{\eta}$ are not necessarily rational.
These requirements for a normalised history state complicate stating the precise outcome of the reduction if an unnormalised version were used.

We present four potential solutions to this problem.
\begin{enumerate}[label=(\Roman*), ref=(\Roman*)]
    \item \label{sol:1}Circuits with $(K+1) \in \{x : \exists\,n \in \mathbb{N}(x = n^2)\}$.
    \item \label{sol:2}Permit algebraic encoding of the form $\mathbb{A}^{(\sqrt{\cdot})}_1 \times \mathbb{Q}_{2\log_2(K+1)}^+$.
    \item \label{sol:3}Consider the pre-idled quantum verifier.
    \item \label{sol:4}Adjust the proof for scaled amplitudes.
\end{enumerate}

Solution \ref{sol:1} allows the normalisation factor in the history state to be a rational number.
This is not a general solution and is unsatisfactory.
What this solution does tell us is that the \clw{MA}{hardness} proof is valid for all circuits with $K+1$ a square number.\footnote{Or an integer power of $2$ is appropriate provided it does not result in an exponential number of gates.}
Solution \ref{sol:2} is more general and thus captures a wider class of circuits.\footnote{All circuits for that matter!}
An implication of this solution is that the scaling value $c$ may not be necessary, and the amplitudes can be expressed exactly~\cite[Appendix 5.2]{jiang2025local}.
The consequence of considering this particular solution is that we require a ``more powerful'' classical algorithm to determine the amplitudes.
What we mean to say is that while it is not unreasonable to consider that amplitudes \emph{can} be expressed in this form, the classical algorithm that can compute such amplitudes is not necessarily as straightforward as the ``original'' succinct states considered.
Solution \ref{sol:3} is related to the original error in the proof and thus also Solution \ref{sol:1}.
By pre-idling the circuit with a polynomial number of identity gates, we can force $N+K+1$ to be a square number.\footnote{Here $K$ is the total number of original gates and $N$ is the number of pre-idling Identity gates.}
A consequence of this solution is a change in the spectral gap of the Hamiltonian.
We discuss this further in Appendix~\ref{app:pre-idled-quantum-verifier}.
Solution \ref{sol:4} works for the original setting of the proof where the prefactor of the query algorithm can be the value $\sqrt{K+1}$, e.g., the query algorithm encode $\sqrt{K+1}\braket{x}{\eta}$.
This solution however, restricts and presents a situation where we are no longer interested in the minimal structure requirements of the amplitude encodings, as outlined in the previous sections and still may present exact encoding issues.

In an effort to retain generality and the theme of this work, we will consider Solution \ref{sol:2} and provide a proof of the \clw{MA}{hardness} under this assumption, i.e., \cref{prop:history-state-solution-assumption} holds.
One reason for taking this approach is that we be slightly more relaxed with the specific encoding present.
This then allows us to consider a wider range of possible states.

\section{Locality Reduction}\label{sec:locality-reduction}
The reducing of locality from six to four is a straightforward application of the method described in \refcite{kempe2003local}.
Specifically, we take the standard Feynman-Kitaev clock construction and make two small modifications.
The first is to only couple the unitaries $U_t$ with a single clock qubit and the second is to very heavily penalise incorrect clock propagations (a.k.a illegal clock states).
The latter is achieved by scaling the $H_{\text{clock}}$ term by a factor of $K^{12}$ where $K$ denotes the number of gates in the circuit. 

\athrm

We can study the effect on the locality of the Hamiltonian by considering the decomposition of the Toffoli gates.
Given any \clsb{MA}{q} circuit, we will assume that all gates are Toffoli gates, since they are universal for classical computation.
We now explore a class of quantum circuits that utilise the full gate set $\{\Gate{Cnot},\Gate{Had},T\}$ alongside $\ket{+}$-ancillae, but are heavily constrained by a strict structural rule.
Specifically, we impose that the sequence of operations within the circuit must mimic the exact behaviour of Toffoli gates, with each $15$-gate block corresponding precisely to the action of a single Toffoli gate.
This constraint ensures that the circuits cannot perform operations beyond those achievable by Toffoli gates alone, despite the more powerful quantum gates available in the set.
For instance, operations that might introduce additional quantum phenomena, such as inserting Hadamard gates between Toffoli gates, are explicitly prohibited.
As a result, the computational power of these restricted circuits is exactly equivalent to that of circuits composed solely of Toffoli gates.
The heavy structural constraint effectively nullifies any potential quantum advantage from using the $\{\Gate{Cnot},\Gate{Had},T\}$ gates, reducing the circuit to one that is fundamentally classical in nature, albeit with quantum ancillae, and thereby maintaining the equivalence to classical \cl{MA} circuits.\footnote{We note that it could be argued the $\ket{+}$-ancillae can be efficiently prepared under this new circuit family.}
We refer to this class of circuits as \emph{structured Toffoli-equivalent circuits} (STEC).
Then the new class of promise problems utilising such circuits is formally defined using a preliminary definition.

\begin{definition}[Structured Toffoli-Equivalent Verification Circuit (STEVC)]\label{def:STEVC}
    A structured Toffoli-equivalent verification circuit is a tuple $J_n = (n,w,m,p,U)$ where $n$ is the number of input qubits, $w$ is the number of proof qubits, $m$ is the number of ancillae initialised in the $\ket{0}$ state and $p$ is the number of ancillae initialised in the $\ket{+}$ state.
    The circuit $U$ is a quantum circuit, specifically a structured Toffoli-equivalent circuit, on $M\coloneqq n + w + m + p$ qubits, comprised of $K = O(\poly{n})$ gates.
    The acceptance probability of a structured Toffoli-equivalent verification circuit $J_n$, given some input string $x\in \varSigma^n$ and a proof state $\ket{\chi} \in \mathbb{K}^{\otimes w}$ is defined as:
    \begin{equation*}
        \Pr\left[J_n(x,\ket{\chi})\right]= \bra{\phi}U^\dagger \Pi_{\text{out}} U \ket{\phi},
    \end{equation*}
    where $\ket{\phi} = \ket{x,\chi,0^{m},+^{p}}$ and $\Pi_{\text{out}} = \ketbra{1}_1$ is a projector onto the output qubit.
\end{definition}

\begin{definition}[\clsb{StMA}{q}]
    \label{def:StMAq}
    A promise problem $L = (L_{\textsc{yes}}, L_{\textsc{no}})$ belongs to the class \clsb{StMA}{q} if there exists a polynomial-time generated stoquastic circuit family $\mathcal{J} = \{J_n : n \in \mathbb{N}\}$, where each STEVC $J_n$ acts on $n + w+m + p$ input qubits and produces one output qubit, such that:
    \begin{itemize}
        \item[] \textbf{Completeness}: For all $x\in L_{\textsc{yes}}$, $\exists \ket{\chi}\in \mathbb{K}^{\otimes w}$, such that, $ \Pr\left[J_{|x|}(x,\ket{\chi})=\mathtt{1} \right] \geq 2/3$
        \item[] \textbf{Soundness}: For all $x\in L_{\textsc{no}}$, $\forall\ket{\chi}\in \mathbb{K}^{\otimes w}$, then, $ \Pr\left[J_{|x|}(x,\ket{\chi})=\mathtt{1} \right] \leq 1/3$
    \end{itemize}
\end{definition}

By the heavy structural constraint imposed on the circuits, the following lemma is immediate:

\flma

\subsection{Reduction to 3-Local Hamiltonians}\label{sec:3-local-reduction}
Using the same clock construction as in the proof \cref{thrm:4-local-stoquastic-hamiltonian-with-Qp-succinct-ground-states} and leveraging the solution of \cref{lma:history-state-ex-2}, we have the following result:

\bthrm

Due to the decomposition of the Toffoli gates (see Appendix~\ref{app:toffoli-gate-decomposition} for detailed discussions on this decomposition), the Hamiltonian is now $3$-local but as a consequence it is no longer stoquastic or real.
Furthermore, the gate set $\mathcal{G} = \{\Gate{Cnot},\Gate{Had},T\}$ generates unitaries with elements in the field $\mathbb{Q}({\rm i},\sqrt{2})$ with amplitudes being of the form $a + {\rm i} b + \sqrt{2}(c + {\rm i} d)$ where $a,b,c,d$ are rational numbers.
This implies that we cannot explicit say anything about the complexity of the problem when we restrict to either stoquastic or real Hamiltonians.
One other consequence of the decomposition is the form of the succinct states.
Since the Hadamard and $T$ gates introduce irrational components to the amplitudes at given intervals in the history state, we cannot say anything about the complexity of the problem when the ground states are $\mathbb{C}_{p(n)}$-succinct.
It is likely these problems are still \clw{MA}{hard}.

An application of \cref{thrm:main-result-3l} to the easy witness case of Ref.~\cite{liu2021stoqma} and independently extensions to spatially sparse circuits are presented in Appendix~\ref{app:stoquastic-hamiltonians-easy-witness} and
Appendix~\ref{app:local-hamiltonians-spatially-sparse-graphs}, respectively.

\subsection{Reduction to 2-Local Hamiltonians}\label{sec:2-local-reduction}
We now describe how to obtain the class of circuits we call \emph{regular-interval structured Toffoli-equivalent circuits} (RI-STECs).
These circuits are equivalent to STECs but are built from the gate set $\mathcal{R} = \{\Gate{C}Z, \Gate{Had}, T, Z\}$, and each $\Gate{C}Z$ gate is surrounded by a fixed number of single-qubit gates on both sides.
The motivation for this particular structure comes from the clock Hamiltonian construction of Kempe, Kitaev, and Regev~\cite{kempe2006complexity}.
To encode the propagation of a $\Gate{C}Z$ gate using only $2$-local terms, the construction requires each $\Gate{C}Z$ gate to be padded on both sides by a fixed number of single-qubit gates; this padding ensures that when projecting specific penalty terms, in the Hamiltonian construction, to appropriate nullspaces, the resulting effective Hamiltonian recovers the standard $\Gate{C}Z$ propagation Hamiltonian as an emergent lower bound.

The conversion from STECs to RI-STECs proceeds in two steps.
First, we replace each \Gate{Cnot} gate in the STEC with a $\Gate{C}Z$ gate and two \Gate{Had} gates, passing to a circuit over $\mathcal{R}$.
Second, we enforce the regular-interval structure by inserting $Z$ gates adjacent to each $\Gate{C}Z$ gate, exploiting two properties: $Z^2 = I$, so pairs of $Z$ gates cancel and do not change the computed unitary; and $Z$ commutes with $\Gate{C}Z$, so $Z$ gates can be inserted on either side of any $\Gate{C}Z$ gate without affecting the computation.
Together these allow each $\Gate{C}Z$ gate to be padded as illustrated in \cref{fig:CZ-padding}, producing the required regular-interval structure.

\begin{figure}[!ht]
    \centering
    \begin{tikzpicture}
        \draw[thick] (5.5,0) -- (0,0);
        \draw[thick] (5.5,-1) -- (0,-1);
        \draw[thick, fill=white] (0.5,-1.25) rectangle ++(0.5,0.5) node[pos=0.5] {$Z$};
        \draw[thick, fill=white] (1.5,-0.25) rectangle ++(0.5,0.5) node[pos=0.5] {$Z$};
        \draw[thick, fill=black] (2.75,0) circle (0.06); \draw[thick, fill=black] (2.75,-1) circle (0.06); \draw[thick] (2.75,0) -- (2.75,-1);
        \draw[thick, fill=white] (3.5,-0.25) rectangle ++(0.5,0.5) node[pos=0.5] {$Z$};
        \draw[thick, fill=white] (4.5,-1.25) rectangle ++(0.5,0.5) node[pos=0.5] {$Z$};
        \draw[thick, dashed, myOrange] (0.25,0.5) -- (0.25,-1.5) node[below, rotate=20] {\scriptsize $t-3$};
        \draw[thick, dashed, myOrange] (1.25,0.5) -- (1.25,-1.5) node[below, rotate=20] {\scriptsize $t-2$};
        \draw[thick, dashed, myOrange] (2.25,0.5) -- (2.25,-1.5) node[below, rotate=20] {\scriptsize $t-1$};
        \draw[thick, dashed, myOrange] (3.25,0.5) -- (3.25,-1.5) node[below, rotate=20] {\scriptsize $t$};
        \draw[thick, dashed, myOrange] (4.25,0.5) -- (4.25,-1.5) node[below, rotate=20] {\scriptsize $t+1$};
        \draw[thick, dashed, myOrange] (5.25,0.5) -- (5.25,-1.5) node[below, rotate=20] {\scriptsize $t+2$};
    \end{tikzpicture}
    \caption{
        Padding a $\Gate{C}Z$ gate at time step $t$ with two $Z$ gates on each support qubit, occupying time steps $t-3$ and $t-2$ (before the $\Gate{C}Z$) and $t+1$ and $t+2$ (after the $\Gate{C}Z$).
        The $Z$ gates do not alter the computed unitary: they cancel in pairs via $Z^2 = I$ and commute past the $\Gate{C}Z$ gate.
        This padding ensures that the $\Gate{C}Z$ gate is embedded in a locally uniform single-qubit context, as required by the $2$-local Hamiltonian construction.
        Reproduced from \cite[Fig.~2]{kempe2006complexity}.
    }
    \label{fig:CZ-padding}
\end{figure}

Recall that STECs are obtained from an \cl{MA}\textsubscript{q} verification circuit of $K$ \Gate{Toffoli} gates by decomposing each \Gate{Toffoli} gate into a sequence of $15$ gates from $\mathcal{G} = \{\Gate{Cnot}, \Gate{Had}, T\}$, yielding a circuit of $15K$ gates in total.
We now convert to a circuit over $\mathcal{R}$ in two stages.

\begin{restatable}[Gate Set Conversion]{proposition}{propositionGateSetConversion}\label{prop:gate_set_conversion}
    For any STEC $J$ over $\mathcal{G}$, there exists an efficiently computable circuit $C'(J)$ over $\mathcal{R} = \{\Gate{C}Z, \Gate{Had}, T, Z\}$ acting on the same number of qubits and computing the same unitary as $J$.
\end{restatable}

A proof is given in Appendix~\ref{app:2local}.
The key step is the identity $\Gate{Cnot} = (I \otimes \Gate{Had})\, \Gate{C}Z\, (I \otimes \Gate{Had})$, which replaces each \Gate{Cnot} gate with a $\Gate{C}Z$ gate conjugated by \Gate{Had} gates on the target qubit, at the cost of a constant-factor increase in circuit length.

We then enforce the regular-interval condition.

% \begin{restatable}[Regular-Interval Structure]{proposition}{propositionRegularIntervalStructure}\label{prop:regular_interval_structure}
%     For any circuit $C'(J)$ over $\mathcal{R}$ associated with a STEC of $K$ \Gate{Toffoli} gates, and hence containing $6K$ \Gate{CZ} gates, there exists an efficiently computable RI-STEC $C$ over $\mathcal{R}$ computing the same unitary as $C'(J)$, in which every \Gate{CZ} gate is padded by exactly two $Z$ gates on each side and consecutive \Gate{CZ} gates are separated by exactly $\ell = 9$ single-qubit gates.
%     The circuit $C$ requires one additional \emph{buffer} qubit, initialised to $\ket{0}$, whose sole role is to absorb excess padding gates when the gap between consecutive \Gate{CZ} gates is otherwise insufficient to accommodate the required number of single-qubit timesteps without altering qubits involved in the computation.
% \end{restatable}

\begin{restatable}[Regular-Interval Structure]{proposition}{propositionRegularIntervalStructure}
\label{prop:regular_interval_structure}
    For any circuit $C'(J)$ over $\mathcal{R}$ associated with a STEC of $K$ \Gate{Toffoli} gates, there exists an efficiently computable RI-STEC $C$ over $\mathcal{R}$, acting on one additional
    \emph{buffer} qubit initialised to $\ket{0}$, 
    such that the unitary implemented by $C$ satisfies $U_C = U_{C'(J)} \otimes I_{\emph{buffer}}$, and in which every $\Gate{C}Z$ gate is preceded and followed by exactly two $Z$ gates on each of its support qubits, with consecutive $\Gate{C}Z$ gates separated by exactly $\ell = 9$ single-qubit gates.
\end{restatable}

The buffer qubit plays a purely structural role; it absorbs excess padding gates in regions where the gap between consecutive $\Gate{C}Z$ gates would otherwise be insufficient to accommodate the required $\ell = 9$ single-qubit timesteps without acting on qubits involved in the computation.
Since $Z\ket{0} = \ket{0}$, the buffer qubit contributes no non-trivial amplitude to the history state and does not affect the succinct state class.
The parameter $\ell = 9$ is determined by the construction in Ref.~\cite{kempe2006complexity}, although there may exist other choices of $\ell$ that also work.

The corresponding promise class \clsb{RIStMA}{q} is defined by straightforward modifications to \cref{def:STEVC} and \cref{def:StMAq}, replacing the gate set $\mathcal{G}$ with $\mathcal{R}$ and requiring the regular-interval condition.
The following corollary is then immediate from \cref{prop:gate_set_conversion}, \cref{prop:regular_interval_structure}, and the equivalence $\clsb{StMA}{q} = \clsb{MA}{q}$.

\ricor

Let $C_{|x|}$ denote the verification circuit of an instance of a promise problem in \clsb{RIStMA}{q}.
We apply the construction of Ref.~\cite{kempe2006complexity} to map $C_{|x|}$ to a $2$-local Hamiltonian $H$ whose ground state is the history state of $C_{|x|}$ at zero energy.
At a high level, the Hamiltonian structured as
\begin{equation*}
    H = H_{\text{in}} + H_{\text{out}} + H_{\text{clock}} + H_{\text{prop}}^{(1)} + H_{\text{prop}}^{(2)},
\end{equation*}
where $H_{\text{in}}$ penalises states in which the input, ancilla, and coin qubits are not correctly initialised at time $t = 0$; 
$H_{\text{out}}$ penalises states in which the output qubit is not in the accepting state $\ket{1}$ at the final timestep $t = K$; 
$H_{\text{clock}}$ enforces unary clock structure by penalising illegal clock configurations; 
$H_{\text{prop}}^{(1)}$ encodes the propagation of single-qubit gates via the standard circuit-to-Hamiltonian construction, with each term being $2$-local (one computational qubit and one clock qubit); and $H_{\text{prop}}^{(2)}$ encodes the propagation of $\Gate{C}Z$ gates.

The term $H_{\text{prop}}^{(2)}$ deserves special attention.
It contains no explicit $\Gate{C}Z$ operator, rather it is a sum of $2$-local terms comprising clock-only interactions $H_{\text{time}}^{(2)}(t)$ and operators $H_{\text{qb}}^{(2)}(t)$, each coupling a single computational qubit to the clock at regular intervals.
The action of the $\Gate{C}Z$ gate is not encoded directly but instead emerges from the structure of $H_{\text{prop}}^{(2)}$ upon restriction to the nullspace of $H_{\text{prop}}^{(1)}$.
Specifically, within this nullspace, it is shown that the projected Hamiltonian term dominates a self-adjoint operator $\tilde{H}$ that takes precisely the form of the standard propagation Hamiltonian for the $\Gate{C}Z$ gates:
\begin{equation*}
    \tilde{H} = \ketbra{t} + \ketbra{t-1} - \Gate{C}Z\ketbra{t}{t-1} - \Gate{C}Z\ketbra{t-1}{t}.
\end{equation*}
It follows that $\tilde{H} \preceq H_{\text{prop}}^{(2)}\bigr|_{S}$ (see \cref{prop:tilde_H}), where $S$ is the nullspace of $H_{\text{prop}}^{(1)}$, and thus $H_{\text{prop}}^{(2)}$ \emph{implicitly} implements the $\Gate{C}Z$ propagation.

The spectral gap of $H$ is lower bounded via repeated applications of the \emph{nullspace projection lemma} (see \cref{lem:nullspace_projection} and \cref{cor:nullspace_projection}), which gives a lower bound on the smallest non-zero eigenvalue of a sum of positive semidefinite operators in terms of the structure of the individual terms.
It remains to verify that the ground state of $H$ is succinct.
The history state of $C_{|x|}$ is a superposition over computational basis states of the circuit at each time step; its amplitudes lie in $\mathbb{Q}({\rm i}, \sqrt{2})$ by the same argument as for STECs, since $\mathcal{R}$ generates the same field as $\mathcal{G}$.
Succinctness then follows from the equivalences $\Gate{C}Z \leftrightarrow \{\Gate{Cnot}, \Gate{Had}\}$ and $Z \leftrightarrow T$-type gates, together with straightforward adaptations of \cref{cor:T-T-dagger-gate-seq-subset state} and \cref{cor:CRG-T-T-dagger-Hadamard-gate-seq-subset state}, which show that the history state of $C_{|x|}$ is a $\mathbb{C}_{p(n)}\dbbrckt{\sqrt{\cdot}}$-succinct state.
Note that the buffer qubit, initialised to $\ket{0}$ and acted upon only by $Z$ gates, does not affect the succinct state class since $Z\ket{0} = \ket{0}$ up to a global phase that is absorbed into the amplitude representation.

\cthrm

Together with \cref{thrm:main-result-3l}, \cref{thrm:main-result-2l} establishes that the \clw{MA}{completeness} of the \sc{Succinct State Local Hamiltonian} problem is robust to locality: the problem remains \clw{MA}{complete} for any fixed $k \geq 2$.

\section{Conclusion}\label{sec:conclusion}
In this work we study a variant of the \sc{Local Hamiltonian} problem where there is additional promise on the form of the ground state.
Specifically, the \sc{Succinct State Local Hamiltonian} problem introduces the notion of succinct ground states, which can be efficiently described using a classical query algorithm.
The amplitudes of the ground state are expressed in an exact rational form, with real and imaginary parts $a + {\rm i}b$ where $a,b \in \mathbb{Q}$; both components can be represented in a polynomial number of bits.
This definition of succinct state naturally gives rise to multiple classes of such states.
In contrast to the standard problem, which is \clw{QMA}{complete}, it has been shown that the \sc{Succinct State Local Hamiltonian} problem is (promise) \clw{MA}{complete}~\cite{liu2021stoqma,jiang2025local}.
Our results have shown that this complexity classification remains, even for $2$-local Hamiltonians with succinct ground states.

\begin{result*}[(Informal) \cref{thrm:main-result-2l}]
    The \sc{Succinct State $2$-Local Hamiltonian} problem is \clw{MA}{complete}.
\end{result*}

To achieve this result, we explored simple examples of succinct states and the resulting effects of combining these states in different ways.
For instance, given a succinct state, is the state still succinct when acted on by a unitary operator?
By defining four natural classes of succinct states, each admitting exact binary representations, we were able to characterise a wider range of states.
This was possible via the use of algebraic encodings of rational values, a common idea in classical computing.
Using these ideas, we constructed arguments demonstrating that the history state, resulting from the Feynman-Kitaev clock construction of \cl{MA} circuits, was a succinct state.
Combining this result with previous work~\cite{jiang2025local} was sufficient to prove our main result.
We have been able to fully resolve the question of whether the complexity of the problem depends on the locality of the Hamiltonian.

The \sc{Succinct State Local Hamiltonian} problem represents an interesting modification of the standard \sc{Local Hamiltonian} problem.
Few results have studied the complexity of determining the energy of states of a given type \cite{kallaugher2024complexity,bremner2025parameterized}, and even fewer have restricted the form of the ground state.
It is clear that upon doing so, the class of local Hamiltonians for which the problem is defined on is a lot smaller than the general case.
However, this particular line of thinking is useful in the context of other Hamiltonian complexity problems.
For example, when local Hamiltonians are guided by a state that has promised overlap with the true ground state, the problem of determining the ground-state energy to inverse-polynomial precision is \clw{BQP}{complete} \cite{richter2007two}.
In fact, if the guiding state is given via an efficient quantum circuit and has a promised overlap of at least inverse-polynomial in the size of the system, it is well-understood that repeated applications of the Quantum Phase Estimation algorithm can be used to estimate the energy of the Hamiltonian to high precision.
The problem studied in this work may narrow the gap of applicability of such ideas since known classical heuristics often approximate ground states as having succinct descriptions \cite{tsuneda2014density,white1992density,arute2020hartree}.

\cref{fig:flow-diagram} provides a (pictorial) structured overview of the complexity landscape we explored.
The combination of ideas is difficult to summarise in a linear narrative due to the different types of succinct states considered and the different way complexity results were obtained.

\subparagraph{Discussion and Future Work.} 
We demonstrated there are various types of succinct states, and that characterising their effects on the \sc{Local Hamiltonian} problem can be non-trivial.
The definitions and notation presented here are intended to offer a clearer framework for understanding these complexities.
We have shown that being specific about the type of succinct state is crucial.
For instance, the history state resulting from the Feynman-Kitaev clock construction typically does not have uniform rational amplitudes.
This highlights the importance of precisely defining the class of succinct states used in the constructions.
Our results on the study of succinct states and the action of Hamiltonian mappings may have implications for problems such as state tomography and verification.
A verification algorithm working for stoquastic Hamiltonians can likely be extended to more general Hamiltonians, provided access to the state can be interpreted similar to the above analysis.
Specifically, we have shown informed queries to the Hamiltonian and the state are sufficient to determine resultant properties under certain mappings.

Due to the equivalency between the history states and subset states, it might be argued that the query algorithm for history states could simply verify membership of a computational basis state.
However, upon deeper analysis of the original proof~\cite{jiang2025local}, it becomes apparent that this is insufficient.
Specifically, the amplitudes $\braket{x}{\eta}$ are claimed to be exactly representable as rational values, but this is generally not the case.
Thus, the query algorithm must output the normalised amplitudes with exact precision.

Our results do not follow a completely linear narrative; this is in part due to the different types of succinct states we considered.
To offer a more structured overview of the complexity landscape, see \cref{fig:flow-diagram}.
Important open questions that remain, include:
\begin{enumerate}
    \item \textbf{Conjecture 1.}~\emph{The \cl{MA} protocol outlined in \textnormal{Ref.~\cite{jiang2025local}} is robust again the inclusion of succinct states expressing values to a precision $2^{-q(n)}$ for some sufficiently large polynomial $q(n)$.}
    \item \textbf{Conjecture 2.}~\emph{The \sc{Succinct State $2$-Local Stoquastic Hamiltonian} problem is \clw{MA}{complete}, even on $2$D lattices.}
    \item Investigating the consequences of restricting elements of the Hamiltonian to being exactly representable in a fixed number of bits.
    \item Developing a perturbative gadget framework that preserves the succinctness of the ground state.
    \item Determining the complexity of $2$-local Hamiltonians with $\mathbb{C}_{p(n)}$-succinct ground states.
    \item Investigating the complexity of \sc{Succinct State Frustration-Free Local Hamiltonian} problem.
    \item Effects of Hamiltonian element precision.
\end{enumerate}

The third point is an interesting question that could lead to a better understanding of how we construct verification circuits for Hamiltonian terms.
Demanding the elements of the Hamiltonian terms are exactly expressible in a fixed number of bits can have impact on the depth of the verification circuit.
For example a constant number of bits for specification can impact the accuracy of the verification circuit thus requiring a large depth which may be unwanted.
This is of course important to consider when dealing with practical Hamiltonians, in Quantum Chemistry for example.
In partial response to the last point, we explored the \clw{MA}{hardness} of the problem when the Hamiltonian is defined on a spatially sparse graph (see Appendix~\ref{app:local-hamiltonians-spatially-sparse-graphs}).
The other points represent promising directions for future research and could lead to a deeper understanding of the complexity of the \sc{Succinct State Local Hamiltonian} problem.

It follows from standard isomorphism arguments that the complex $T$ gate can be replaced with a real gate of the form $I \oplus \tfrac{1}{\sqrt{2}}\big(\begin{smallmatrix}1 & -1\\1 & 1\end{smallmatrix}\big)$, therefore, is it possible to extend the technique direct Hamiltonian reduction technique of Ref.~\cite{kempe2006complexity} in combination with a real version of RI-STECs to show the \cl{MA}-completeness of the \sc{Succinct State $2$-Local Real Hamiltonian} problem?

As a final remark, we comment on the~\cite[Conjecture 3]{jiang2025local}, specifically on the underlying state type --- \emph{strong guided states}~\cite[Definition 2]{jiang2025local}.
The idea of strong guided states is motivated by the original work of Bravyi~\cite{bravyi2015monte} and is defined as follows:

\begin{definition*}[Strong Guided States~\cite{jiang2025local}]
    Let $\ket{\psi}$ be an $n$-qubit normalised state.
    We say that $\ket{\psi}$ admits a strong guiding state if there exists an $n$-qubit normalised state $\ket{\eta}$, such that $\ket{\eta}$ is a succinct state and satisfies:
    \begin{equation*}
        \braket{\eta}{x} \braket{x}{\psi} \geq \frac{\abs{\braket{x}{\psi}}^2}{\poly{n}},
    \end{equation*}
    for all $x \in \{0,1\}^n$.
\end{definition*}

This describes an entry-wise correlation between the states and is an extremely strong condition.
Since it is not possible to define a total ordering over the complex numbers, this statement implies that there can be no relative phases between the states.
We instead propose an alternate definition that circumvents non-relative phase requirements.

\begin{definition}
    \label{def:relaxed-guiding-state}
    Let $\ket{\psi}$ be an $n$-qubit normalised state.
    We say that $\ket{\psi}$ admits an $\varepsilon$-relaxed (generalised) entry-wise guiding state if there exists an $n$-qubit normalised state $\ket{\vartheta}$, such that $\ket{\vartheta}$ is a succinct state and satisfies:
    \begin{enumerate}
        \item $\big| \abs{\vartheta_x} - \abs{\psi_x}\big| \leq \varepsilon \abs{\psi_x}, ~ \forall x \in \{0,1\}^n,$
        \item $\abs{{\rm arg}(\vartheta_x) - {\rm arg}(\psi_x)} \leq \varepsilon, ~ \forall x \in \{0,1\}^n.$
    \end{enumerate}
    Where $\vartheta_x \coloneqq \braket{x}{\vartheta}$ and $\psi_x \coloneqq \braket{x}{\psi}$.
\end{definition}

A relative error between the magnitude and an additive error between the argument of complex numbers is a more natural way to compare two values that are expected to be close.
We propose an alternative conjecture using the relaxed guiding state definition.

\begin{conjecture}\label{conj:relaxed-guiding-state}
    The \sc{Local Hamiltonian} problem with $\varepsilon$-relaxed entry-wise guiding states is \clw{MA}{complete}.
\end{conjecture}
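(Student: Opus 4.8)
The plan is to treat \clw{MA}{hardness} and containment in \cl{MA} separately, reading \cref{def:relaxed-guiding-state} with $\varepsilon$ a sufficiently small inverse polynomial in $n$. \emph{Hardness} is inherited essentially for free: every \textsc{yes} instance of the problem of \cref{thrm:main-result} is, by \cref{prop:history-state-solution-assumption} and \cref{lma:history-state-ex-2}, a Hamiltonian whose ground state $\ket{\psi}$ is itself a succinct state with energy at most $a$, so taking $\ket{\vartheta} = \ket{\psi}$ --- which satisfies conditions (1)--(2) of \cref{def:relaxed-guiding-state} with $\varepsilon = 0$ --- turns it into a \textsc{yes} instance of the relaxed-guided problem, while \textsc{no} instances map unchanged to \textsc{no} instances; the identity reduction therefore transports \clw{MA}{hardness}. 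The content of the proof is thus the containment.

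\emph{Containment.} First I would record that the entry-wise promise of \cref{def:relaxed-guiding-state} upgrades to an $\ell^2$ bound: writing $\vartheta_x,\psi_x$ in polar form, conditions (1)--(2) give $\abs{\vartheta_x - \psi_x} = O(\varepsilon)\abs{\psi_x}$ pointwise (supports match since condition (1) forces $\vartheta_x=0$ whenever $\psi_x=0$), hence $\norm{\ket{\vartheta} - \ket{\psi}} = O(\varepsilon)$, and since $\norm{H} = \poly{n}$ the variational energy satisfies $\mel{\vartheta}{H}{\vartheta} \le \lambda_0(H) + O(\varepsilon\poly{n})$. Next I would push the trial state through the pipeline already built for containment: apply the complex-to-real reduction of \cref{thrm:real-complex-ham-reduction}, lifting $\ket{\vartheta}$ to a real state $\ket{\hat{\vartheta}}$ exactly as in \cref{lma:complex-succinct-state-to-real-succinct-state} --- here a short lemma is needed to verify that magnitude-plus-phase closeness propagates to closeness of the real and imaginary components, so that $\ket{\hat{\vartheta}}$ is an $O(\varepsilon)$-relaxed entry-wise guiding state for the ground state of $\hat{H}$. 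Then form the fixed-node Hamiltonian $F = F(\hat{\vartheta},\hat{H})$ of \cref{def:FNH}, now built from the \emph{trial} state rather than the exact ground state; the standard fixed-node variational sandwich $\lambda_0(\hat{H}) \le \lambda_0(F) \le \mel{\hat{\vartheta}}{\hat{H}}{\hat{\vartheta}}$ \cite{HBLSC95,BCGL22} gives $\lambda_0(F) \le a + O(\varepsilon\poly{n})$ in the \textsc{yes} case and $\lambda_0(F) \ge \lambda_0(\hat{H}) = \lambda_0(H) \ge b$ in the \textsc{no} case, so for $\varepsilon$ small enough a $1/\poly{n}$ gap survives. As in \cref{lma:real-stoq-ham-reduction} and \cref{prop:real-complex-hamiltonian-query}, $F$ is $\poly{n}$-sparse, stoquastic in the gauge $\ket{x} \mapsto {\rm sgn}(\hat{\vartheta}_x)\ket{x}$, and queryable given $\mathcal{Q}_\vartheta$ together with query access to $H$; Arthur then simulates the associated continuous-time Markov chain with Gillespie's algorithm as in \refcite{Jiang25}, accepting iff the walk does not enter an illegal sector within $\poly{n}$ time.

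\emph{Main obstacle.} The genuine difficulty is that the identity $F\ket{\xi} = H\ket{\xi}$ underpinning the Markov-chain correspondence of \refcite{Jiang25} holds only when the fixed state is the \emph{true} ground state; with a trial state, the ground state $\ket{\chi}$ of $F$ is a different vector that need not be succinct, so Arthur cannot be handed $\ket{\chi}$ and the stationary distribution is no longer simply $\abs{\braket{x}{\xi}}^2$. One must therefore re-derive the correspondence using the importance-sampling generator $\mel{y}{G}{x} = \lambda^\star\delta_{x,y} - \frac{\hat{\vartheta}_y}{\hat{\vartheta}_x}\mel{y}{F}{x}$ driven by the trial function, and re-prove both completeness (when $\lambda^\star$ is honest, the local-energy fluctuations $\frac{1}{\hat{\vartheta}_x}\mel{x}{F}{\hat{\vartheta}} - \lambda_0(F)$ are $O(\varepsilon\poly{n})$, so $G$ is a legal generator and the chain mixes in $\poly{n}$ time) and soundness (no succinct $\ket{\vartheta}$ paired with any claimed $\lambda^\star \le b$ produces a generator whose illegality escapes a $\poly{n}$-step simulation). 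Making the soundness estimate uniform over \emph{all} admissible trial functions --- so that a dishonest Merlin cannot hide a high-energy Hamiltonian inside the $\varepsilon$-slack of the guiding-state promise --- is where the bulk of the work, and the precise bound on how small $\varepsilon$ must be, will have to be done.
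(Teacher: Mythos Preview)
This statement is a \emph{conjecture} that the paper explicitly leaves open; there is no proof in the paper to compare against. Immediately after stating it, the authors discuss why it resists proof, and your ``Main obstacle'' paragraph independently lands on essentially the same difficulty they highlight: once the fixed-node Hamiltonian $F(\hat{\vartheta},\hat{H})$ is built from a trial state rather than the exact ground state, the identity $F\ket{\xi}=H\ket{\xi}$ underpinning the protocol of \refcite{Jiang25} breaks and the Markov-chain analysis has to be redone from scratch. The paper adds that the fixed-node energy itself may drift too far from $\lambda_0(H)$ when the guiding state is not phase-aligned with the ground state, and warns that tightening the promise on $\ket{\vartheta}$ enough to control this would likely render the problem artificial.

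Two places where your sketch is more optimistic than the paper. First, you call the propagation of entry-wise closeness through the complex-to-real lift ``a short lemma,'' but the paper singles out precisely this step (``the relationship between the guiding state and the ground states of the real Hamiltonian \ldots\ is difficult to establish'') as problematic: the bound $\abs{\vartheta_x-\psi_x}=O(\varepsilon)\abs{\psi_x}$ controls $\re(\vartheta_x)-\re(\psi_x)$ only additively in $\abs{\psi_x}$, so when $\abs{\re(\psi_x)}\ll\abs{\psi_x}$ the \emph{relative} error on the real part of the lifted state can blow up, and condition~(1) of \cref{def:relaxed-guiding-state} need not hold for $\hat{H}$. Second, the soundness direction you outline --- that no dishonest succinct $\ket{\vartheta}$ paired with a false $\lambda^\star$ can evade detection by a $\poly{n}$-step walk --- is exactly the uncontrolled step, and neither you nor the paper offers a mechanism for it. Your hardness argument is fine and your containment plan is aligned with the route the authors themselves sketch, but the proposal does not close the conjecture, and the paper does not claim to either.
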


A special case of this conjecture is true for stoquastic Hamiltonians~\cite{bravyi2015monte}.
Difficulties arise in proving this conjecture.
For example, the relationship between the guiding state and the ground states of the real Hamiltonian that is constructed from the initial Hamiltonian is difficult to establish.
Additionally, the use of the fixed-node quantum Monte Carlo method using the guiding state, rather than the ground state causes analytical problems; if the guiding state is not phase-aligned with the ground state, the fixed-node Hamiltonian constructed from the guiding state may have a ground-state energy too far from the true ground-state energy.
A route to resolving this conjecture or one similar is to determine appropriate conditions on the guiding state that render the fixed-node Hamiltonian a sufficiently-good approximation in the sense of the continuous-time Monte Carlo method (or any alternative procedure).
However, this route likely produces guiding states with extremely strong promises rendering the problem uninteresting and perhaps artificial.

\section*{Declarations}\label{sec:declarations}
\textbf{Acknowledgments.}
GW thanks Karl Lin for helpful discussions and comments on this work.
GW would also like to thank Sam Elman for feedback and a preliminary review, and Karl Rombauts and Ryan Mann for discussions on representing algebraic numbers in binary.
GW was supported by a scholarship from the Sydney Quantum Academy and also supported by the ARC Centre of Excellence for Quantum Computation and Communication Technology (CQC2T), project number CE170100012.

\textbf{Author Contributions.}
GW and KL conceived the project and developed the main ideas.
GW performed the analysis and wrote the manuscript.

\textbf{Competing Interest.}
The authors declare no competing interests.

\textbf{Data Availability.}
There are no data associated with this work.

\bibliographystyle{apsrev4-2}
\bibliography{ref}\label{sec:refs}

\onecolumngrid
\appendix
% \newpage
\begin{center}
\textbf{Complexity Landscape}
\begin{figure}[!ht]
    \centering
    \pgfdeclarelayer{background layer}
    \pgfdeclarelayer{foreground layer}
    \pgfsetlayers{background layer,main,foreground layer}
    \begin{tikzpicture}
        \pic[scale=0.9]{flow};
    \end{tikzpicture}
    \caption{A flow diagram of the complexity of the \sc{Succinct State Local Hamiltonian} problem. Arrows (loosely) represent modifications/reductions. We note that bold (solid) arrows indicate the flow of ideas akin to a reduction. The dashed arrows represent the combination of results needed to establish new complexity classifications. Smaller boxes with a grey background represent results from prior work, while orange boxes denote results from this work. The larger three boxes represent groupings of specific complexity results; namely \clw{MA}{hardness} and \cl{MA} containment. Also note that some arrows have been omitted to improve readability.}
    \label{fig:flow-diagram}
\end{figure}
\end{center}
\newpage
\addtocontents{toc}{\protect\setcounter{tocdepth}{0}}
\section*{Table of Contents}\label{app:toc}

\begin{center}
\begin{minipage}{0.9\textwidth}
\manualtocentryapp{app:number-form}{Binary Number Class Structure}
\manualtocentryapp{app:proofs}{Proof of Main Text Results}
\manualtocentryapp{app:stoquastic-hamiltonians-easy-witness}{Local Stoquastic Hamiltonians with Easy Witness Ground States}
\manualtocentryapp{app:toffoli-gate-decomposition}{Toffoli Gate Decomposition}
\manualtocentryapp{app:pre-idled-quantum-verifier}{Pre-idled Quantum Verifier Scenario}
\manualtocentryapp{app:local-hamiltonians-spatially-sparse-graphs}{Local Hamiltonians on Spatially Sparse Graphs}
\manualtocentryapp{app:2local}{Proof of 2-local Hamiltonian Construction}
    \manualtocsubentryapp{app:regular-interval-structured-circuits}{Regular Interval Structured Toffoli-Equivalent Circuits}
    \manualtocsubentryapp{app:2-local-hamiltonian-construction}{2-Local Hamiltonian Construction}
    \manualtocsubentryapp{app:tilde_H}{Proof of Proposition~\ref{prop:tilde_H}}
\end{minipage}
\end{center}

\section{Binary Number Class Structure}\label{app:number-form}
Recall that 
\begin{align*}
    \mathbb{A}^{(\#)}_{p_{\#}} &\coloneqq \{\alpha \in \B^{p_\#}\},\\
    \mathbb{N}_p &\coloneqq \{{\rm bin}(n) : n\in\mathbb{N}, ~ n \leq 2^p\},\\
    \mathbb{Q}^+_p &\coloneqq \{{\rm bin}(q): q\in\mathbb{Q}^+, ~ q=\frac{n}{m}, ~ n,m \in \mathbb{N}_p, ~ m\neq 0\} \\
    \mathbb{Q}_p &\coloneqq \mathbb{A}_{1}^{({\rm sgn})} \times\mathbb{A}_{1}^{({\rm sgn})} \times \mathbb{Q}_p^+,\\
    \mathbb{C}_p &\coloneqq \mathbb{Q}_p \times \mathbb{Q}_p,
\end{align*}
where 
\begin{align*}
    \forall\, n \in \mathbb{N}_p,&~0 \leq n\leq 2^p, \\
    \forall\, q \in \mathbb{Q}_p^+,&~2^{-p} \leq q \leq 2^p, \\
    \forall\, q \in \mathbb{Q}_p,&~2^{-p} \leq \abs{q} \leq 2^p, \\
    \forall\, z \in \mathbb{C}_p,&~2^{-p} \leq \abs{\re(z)},\abs{\im(z)} \leq 2^p,\\
    &\implies~2^{-p} \leq \abs{z} \leq 2^{p+\frac{1}{2}}.
\end{align*}

We provide examples of the binary encoding for these classes of numbers below.

\begin{align}
    \forall\, n \in \mathbb{N}_p, ~{\rm bin}(n)&=(\longleftarrow {\rm bin}(n) \longrightarrow), \notag\\[0.25cm]
    \forall\, \sfrac{n}{m} \in \mathbb{Q}_p^+, ~{\rm bin}(\sfrac{n}{m})&=\conc{(\longleftarrow {\rm bin}(n) \longrightarrow)}{(\longleftarrow {\rm bin}(m) \longrightarrow)}, \notag\\[0.25cm]
    \forall\, \sfrac{n}{m} \in \mathbb{Q}_p, ~{\rm bin}(\sfrac{n}{m})&=\conc{\conc{{\rm bin}({\rm sgn}_n)}{{\rm bin}({\rm sgn}_m)}}{\conc{(\longleftarrow {\rm bin}(n) \longrightarrow)}{(\longleftarrow {\rm bin}(m) \longrightarrow)}},\notag\\[0.25cm]
    \forall\, z \in \mathbb{C}_p, ~{\rm bin}(z)&=\conc{\conc{{\rm bin}({\rm sgn}_{n_{a}})}{{\rm bin}({\rm sgn}_{m_{a}})}}{\conc{(\longleftarrow {\rm bin}({n_{a}}) \longrightarrow)}{(\longleftarrow {\rm bin}({m_{a}}) \longrightarrow)}}\notag\\
    &\qquad\mathbin\Vert\conc{\conc{{\rm bin}({\rm sgn}_{n_{b}})}{{\rm bin}({\rm sgn}_{m_{b}})}}{\conc{(\longleftarrow {\rm bin}({n_{b}}) \longrightarrow)}{(\longleftarrow {\rm bin}({m_{b}}) \longrightarrow)}},\label{eq:z-bin-breakdown}
\end{align}
where $z = a + {\rm i} b$ such that $a = \sfrac{n_a}{m_a}$ and $b = \sfrac{n_b}{m_b}$.

A visual breakdown of a $\mathbb{C}_3$ number encoding is given by:
\begin{equation*}
    \frac{-6}{3} + {\rm i} \frac{2}{-7} \mapsto \rcb{myOrange!75}{1}{{\rm sgn}_{n_a}}\, \rcb{myOrange!95!black}{0}{{\rm sgn}_{m_a}}\, \rcb{myOrange!75}{110}{{n_{a}}}\, \rcb{myOrange!95!black}{011}{{m_{a}}}\, \rcb{myOrange!75}{0}{{\rm sgn}_{n_b}}\, \rcb{myOrange!95!black}{1}{{\rm sgn}_{m_b}}\, \rcb{myOrange!75}{010}{{n_{b}}}\, \rcb{myOrange!95!black}{111}{{m_{b}}}
\end{equation*}
\section{Proof of Main Text Results}\label{app:proofs}

\cprop*
\begin{proof}
    We break the proof into cases.
    For brevity, we drop the variable $n$ from the polynomial $p(n)$.
    We also assume the denominator is non-zero in each case.

    \paragraph{$\mathbb{N}_p$-succinct states:} The ratio of two numbers $n,m \in \mathbb{N}_p$ can be exactly represented in $2p$ bits since the ratio is an element of $\mathbb{Q}_p^+$.
    This requires one call of $\mathcal{Q}_\psi(x) = n$ and one call of $\mathcal{Q}_\psi(y) = m$; the output is 
    \begin{equation*}
    \mathcal{Q}'_\psi(x,y) = \conc{{\rm bin}(n)}{{\rm bin}(m)}.
    \end{equation*}

    \paragraph{$\mathbb{Q}_p^+$-succinct states:} The ratio of two numbers $q,r \in \mathbb{Q}_p^+$ can be calculated after two multiplications are performed.
    Let $q = n_q / m_q$ and $r = n_r / m_r$, then $q/r = (n_q \cdot m_r) / (m_q \cdot n_r)$.
    Note that $n \cdot m \in \mathbb{N}_{2p}$ $\forall n,m \in \mathbb{N}_p$.
    Thus, the output ratio can be exactly expressed in $4p$ bits and is an element of $\mathbb{Q}_{2p}^+$.
    This requires one call of $\mathcal{Q}_\psi(x) = q$ and one call of $\mathcal{Q}_\psi(y) = r$ followed by the appropriate multiplication; the output is 
    \begin{equation*}
    \mathcal{Q}'_\psi(x,y) = \conc{{{\rm bin}(n_q)}\cdot{{\rm bin}(m_r)}}{{{\rm bin}(m_q)}\cdot{{\rm bin}(n_r)}}.
    \end{equation*}

    \paragraph{$\mathbb{Q}_p$-succinct states:} The ratio of two numbers $q,r \in \mathbb{Q}_p$ can be calculated after two multiplications are performed.
    Let $q = n_q / m_q$ and $r = n_r / m_r$, then $q/r = (n_q \cdot m_r) / (m_q \cdot n_r)$.
    Note that $n \cdot m \in \mathbb{N}_{2p}$ $\forall n,m \in \mathbb{N}_p$.
    Thus, the output ratio can be exactly expressed in $4p+2$ bits and is an element of $\mathbb{Q}_{2p}$.
    This requires one call of $\mathcal{Q}_\psi(x) = q$ and one call of $\mathcal{Q}_\psi(y) = r$ followed by the appropriate multiplication and logic on the sign bits; the output is 
    \begin{equation*}
    \mathcal{Q}'_\psi(x,y) = \conc{{{\rm bin}({\rm sgn}_{n_q})}\oplus{{\rm bin}({\rm sgn}_{m_r})}}{\conc{{{\rm bin}({\rm sgn}_{m_q})}\oplus{{\rm bin}({\rm sgn}_{n_r})}}{\conc{{{\rm bin}(n_q)}\cdot{{\rm bin}(m_r)}}{{{\rm bin}(m_q)}\cdot{{\rm bin}(n_r)}}}}.
    \end{equation*}

    \paragraph{$\mathbb{C}_p$-succinct states:} The ratio of two numbers $z,w \in \mathbb{C}_p$ can be calculated after four multiplications are performed.
    Let $z = a + {\rm i} b$ and $w = c + {\rm i} d$, then $z/w = (a \cdot c + b \cdot d) / (c^2 + d^2) + {\rm i} (b \cdot c - a \cdot d) / (c^2 + d^2)$.
    Note that $a\cdot b \in \mathbb{Q}_{2p}$ and $a \pm b \in \mathbb{Q}_{2p+1}$ $\forall a,b \in \mathbb{Q}_p$.
    Thus, the output ratio can be exactly expressed in $32p+12$ bits and is an element of $\mathbb{C}_{8p+2}$.
    This requires two calls of $\mathcal{Q}_\psi(x) = z$ and two calls of $\mathcal{Q}_\psi(y) = w$ followed by the appropriate multiplications, divisions, and logic on the sign bits; the output is 
    \begin{align*}
        \mathcal{Q}'_\psi(x,y) &=\conc{ {\rm bin}({\rm sgn}_{n_a})\oplus{{\rm bin}({\rm sgn}_{m_c})} }{ {\rm bin}({\rm sgn}_{m_a})}\oplus{{\rm bin}({\rm sgn}_{n_c}) } \uu \conc{ {{\rm bin}(n_a)}\cdot{{\rm bin}(m_c)} }{ {{\rm bin}(m_a)}\cdot{{\rm bin}(n_c)} }\\
        &\qquad\uu {\conc{{{\rm bin}({\rm sgn}_{n_b})}\oplus{{\rm bin}({\rm sgn}_{m_d})}}{\conc{{{\rm bin}({\rm sgn}_{m_b})}\oplus{{\rm bin}({\rm sgn}_{n_d})}}{\conc{{{\rm bin}(n_b)}\cdot{{\rm bin}(m_d)}}{{{\rm bin}(m_b)}\cdot{{\rm bin}(n_d)}}}}}.
    \end{align*}

    \paragraph{$\mathbb{C}_p^{(\omega)}$-succinct states:} The ratio of two numbers $z,w \in \mathbb{C}_p^{(\omega)}$ can be calculated after four multiplications are performed.
    Let $z = \omega^{s_z}(a + {\rm i} b)$ and $w = \omega^{s_w}(c + {\rm i} d)$, then $z/w = \omega^{s_z - s_w}((a \cdot c + b \cdot d) / (c^2 + d^2) + {\rm i} (b \cdot c - a \cdot d) / (c^2 + d^2))$.
    Note that $s_z - s_w \equiv (s_z - s_w) \mod 8$.
    Thus, using the logic above, the output ratio can be exactly expressed in $32p+15$ bits and is an element of $\mathbb{C}_{8p+2}^{(\omega)}$.
    This requires two calls of $\mathcal{Q}_\psi(x) = z$ and two calls of $\mathcal{Q}_\psi(y) = w$ followed by the appropriate multiplications, divisions, logic on the sign bits and logic on the algebraic encoding of the powers of $\omega$; the output is 
    \begin{equation*}
        \mathcal{Q}'_\psi(x,y) = {\rm bin}((s_z - s_w)\mod 8) \mathbin\Vert \cdots .
    \end{equation*}
\end{proof}

\lmasubsetsuccinct*

\begin{proof}
    The proof is trivial since the amplitude of each computational basis state, in the support, is $1/\sqrt{|S|}$.
    Let $c_S = \sqrt{|S|} \leq 2^{n/2}$, then there exists an efficient classical algorithm $\mathcal{Q}_S$ that, given an $n$-bit string $x$, outputs the exact binary representation of
    \begin{equation*}
        \mathcal{Q}_S(x) = c_S \cdot \ind_{S}(x) = 1.
    \end{equation*}
    Since we only need to output a single bit, and hence is in $\mathbb{N}_1$.
    The classical algorithm can call from the uniform distribution of the support set $S$.
\end{proof}

\tensorproductsubsetstates*

\begin{proof}
    Defining the two subset states, we have 
    \begin{align*}
        \ket{S} &= \frac{1}{\sqrt{|S|}}\sum_{s\in S} \ket{s}, \\
        \ket{T} &= \frac{1}{\sqrt{|T|}}\sum_{t\in T} \ket{t}.
    \end{align*}
    The tensor product of these states is
    \begin{equation*}
        \ket{S}\ket{T} = \frac{1}{\sqrt{|S||T|}}\sum_{s \in S, t \in T} \ket{s}\ket{t} = \frac{1}{\sqrt{|S||T|}}\sum_{r \in S \times T} \ket{r}.
    \end{equation*}
    The amplitudes of the resulting state are of the form $\gamma(r) = \gamma(\conc{s}{t}) = \alpha(s)\beta(t)$, where $\alpha(s) = 1/\sqrt{|S|}$ and $\beta(t) = 1/\sqrt{|T|}$.
    Let $c_{ST} = c_S\cdot c_T = \sqrt{|S||T|} \leq 2^{(n+m)/2}$, then there exists an efficient classical algorithm $\mathcal{Q}_{ST}$ that, given an $(n+m)$-bit string $x$, outputs the exact binary representation of
    \begin{equation*}
        \mathcal{Q}_{ST}(x) = c_{ST} \cdot \ind_{S\times T}(x) = 1.
    \end{equation*}
    Moreover, $\mathcal{Q}_{ST}(x = \conc{y}{z})$ requires one call to $\mathcal{Q}_S(y)$ and one call to $\mathcal{Q}_T(z)$ followed by the appropriate multiplication.
    The output is $1$ if and only if $y \in S$ and $z \in T$.
    Note that, given two $(n+m)$-bit strings $x$ and $x'$, the amplitude ratio $\gamma(x)/\gamma(x')$ can be efficiently calculated since 
    \begin{equation*}
        \mathcal{Q}'_{ST}(x,x') = \frac{\gamma(x)}{\gamma(x')} = \frac{\alpha(y)}{\beta(z)}\frac{\alpha(y')}{\beta(z')} = \mathcal{Q}'_S(y,y')\cdot\mathcal{Q}'_T(z,z'),
    \end{equation*}
    where $x = \conc{y}{z}$ and $x' = \conc{y'}{z'}$.
\end{proof}

\clma*

\begin{proof}
    For a superposition state in the computational basis, classically reversible gates will map computational basis states to computational basis states.
    Furthermore, since they are unitary, the normalisation of the state is preserved.
    For some $j \in [K]$, let $x$ represent an $n$-bit string, then
    \begin{align*}
        \braket{x}{A_k} &= \bra{x}R_k\ket{S} = \braket{x'}{S},\\
        \braket{x}{B_k} &= \bra{x}R_k \cdots R_1\ket{S} = \braket{x''}{S}
    \end{align*}
    where $x'$ and $x''$ are the images of $x$ under the respective gate actions.
    Since each $R_j$ is classical, both $x'$ and $x''$ can be efficiently calculated classically using $\bsr$ and $x$.
    The query algorithm for the states $\ket{A_k}$ and $\ket{B_k}$ is then 
    \begin{align*}
        \mathcal{Q}_{A_k}(x) = \mathcal{Q}_S(x'), \\
        \mathcal{Q}_{B_k}(x) = \mathcal{Q}_S(x'').
    \end{align*}
    Since $\ket{S}$ is an $\mathbb{N}_1$-succinct state, both of $\ket{A_k}$ and $\ket{B_k}$ are $\mathbb{N}_1$-succinct states.
    Note that, given two $n$-bit strings $x$ and $y$, the amplitude ratios $\braket{x}{A_k}/\braket{y}{A_k}$ and $\braket{x}{B_k}/\braket{y}{B_k}$ can be efficiently calculated since
    \begin{align*}
        \mathcal{Q}'_{A_k}(x,y) &= \frac{\braket{x}{A_k}}{\braket{y}{A_k}} = \frac{\braket{x'}{S}}{\braket{y'}{S}} = \frac{\mathcal{Q}_S(x')}{\mathcal{Q}_S(y')}, \\
        \mathcal{Q}'_{B_k}(x,y) &= \frac{\braket{x}{B_k}}{\braket{y}{B_k}} = \frac{\braket{x''}{S}}{\braket{y''}{S}} = \frac{\mathcal{Q}_S(x'')}{\mathcal{Q}_S(y'')},
    \end{align*}
    where $x',y'$ and $x'',y''$ are the images of $x$ and $y$ under the respective gate actions.
\end{proof}

\hadamardgatesubsetstate*

\begin{proof}
    The action of the Hadamard gate on a computational basis state is 
    \begin{equation*}
        \Gate{Had}_q \ket{x} = \frac{1}{\sqrt{2}}\big(\ket{y} + (-1)^{x[q]}\ket{\bar{y}} \big),
    \end{equation*}
    and note that the Hadamard gate is unitary.
    Given some $q \in [n]$, let $x$ represent an $n$-bit string, then
    \begin{equation*}
        \braket{x}{C_q} = \bra{x}\Gate{Had}_q\ket{S} = \frac{1}{\sqrt{2}}\big(\braket{y}{S} + (-1)^{x[q]}\braket{\bar{y}}{S} \big),
    \end{equation*}
    where $y[j] = \bar{y}[j] = x[j]$ for any $j \neq q$ and then, $y[q] = 0$, $\bar{y}[q] = 1$.
    Note that $y$ and $\bar{y}$ can be easily computed given $q$ and $x$.
    The query algorithm for the state $\ket{C_q}$ must then output 
    \begin{equation*}
        \mathcal{Q}_{C_q}(x) = c_{C_q}\cdot \braket{x}{C_q} = c_{C_q}\cdot \frac{1}{\sqrt{2}}\big(\braket{y}{S} + (-1)^{x[q]}\braket{\bar{y}}{S} \big).
    \end{equation*}
    This can be achieved by using two calls to the query algorithm $\mathcal{Q}_S$ with the appropriate multiplications and additions.
    Specifically, 
    \begin{equation*}
        \mathcal{Q}_{S}(y) + (-1)^{x[q]}\mathcal{Q}_S(\bar{y}) = c_{S}\cdot \braket{y}{S} + (-1)^{x[q]}c_{S}\cdot \braket{\bar{y}}{S} = c_{C_q}\cdot\big(\braket{y}{S} + (-1)^{x[q]}\braket{\bar{y}}{S} \big).
    \end{equation*}
    Notice that we do not require an algebraic encoding of the sign for the value $(-1)^{x[q]}$.
    The resulting output is in $\mathbb{A}^{(1/\sqrt{2})}_1 \times \mathbb{N}_2$.
\end{proof}

\Tgatesubsetstate*

\begin{proof}
    The action of the $T$ gate on a computational basis state is 
    \begin{equation*}
        T_q \ket{x} = \omega^{x[q]}\ket{x},
    \end{equation*}
    for some $q \in [n]$.
    Given some $q \in [n]$, let $x$ represent an $n$-bit string, then
    \begin{equation*}
        \braket{x}{E_q} = \bra{x}T_q\ket{S} = \omega^{-x[q]}\braket{x}{S}.
    \end{equation*}
    The query algorithm for the state $\ket{E_q}$ must then output 
    \begin{equation*}
        \mathcal{Q}_{E_q}(x) = c_{E_q}\cdot \braket{x}{E_q} = c_{E_q}\cdot \omega^{-x[q]}\braket{x}{S}.
    \end{equation*}
    This can be achieved by using a single call to the query algorithm $\mathcal{Q}_S$ and the appropriate multiplication.
    Specifically, 
    \begin{equation*}
        \omega^{-x[q]}\mathcal{Q}_S(x) = c_{S}\cdot \omega^{-x[q]}\braket{x}{S} = c_{E_q}\cdot \omega^{-x[q]}\braket{x}{S}.
    \end{equation*}
    Notice that we now require an algebraic encoding of the power of $\omega$.
    The resulting output is in $\mathbb{N}_{p}\dbbrckt{{\omega}_3}$.
\end{proof}

\Tgateseqsubsetstate*

\begin{proof}
    Let $x$ represent an $n$-bit string, then
    \begin{equation*}
        \braket{x}{F_{\bs{q}}} = \bra{x}\prod_{q \in \bs{q}}T_q\ket{S} = \omega^{-\sum_{q \in \bs{q}} x[q]}\braket{x}{S}.
    \end{equation*}
    The exponent $\sum_{q \in \bs{q}} x[q]$ is at most a summation over $n$ elements, i.e., the Hamming weight of $x$, hence can be computed efficiently.
    Note that the specific calculation is $h \coloneqq -\sum_{q \in \bs{q}} x[q] \mod 8$.
    The query algorithm for the state $\ket{F_{\bs{q}}}$ must then output
    \begin{equation*}
        \mathcal{Q}_{F_{\bs{q}}}(x) = c_{F_{\bs{q}}}\cdot \braket{x}{F_{\bs{q}}} = c_{F_{\bs{q}}}\cdot \omega^{h}\braket{x}{S}.
    \end{equation*}
    This can be achieved by using a single call to the query algorithm $\mathcal{Q}_S$ and the appropriate multiplication.
    Specifically,
    \begin{equation*}
        \omega^{h}\mathcal{Q}_S(x) = c_{S}\cdot \omega^{h}\braket{x}{S} = c_{F_{\bs{q}}}\cdot \omega^{h}\braket{x}{S}.
    \end{equation*}
    Hence, the resulting output is in $\mathbb{A}^{(\omega)}_{3} \times \mathbb{N}_1 \xmapsto{{\rm\cref{rmk:T-gate-succinct-equivalency}}} \mathbb{C}_{1}\dbbrckt{\frac{1}{\sqrt{2}}_1}$.
\end{proof}

\dlma*

\begin{proof}
    Using \cref{cor:hadamard-seq-subset state,cor:CRG-hadamard-subset state,cor:T-T-dagger-gate-seq-subset state} and their results we make the following employing \cref{rmk:T-gate-succinct-equivalency}.
    Therefore, we can say 
    \begin{align*}
        \textnormal{\cref{cor:hadamard-seq-subset state}} ~\implies~ &\mathbb{N}_p \dbbrckt{\frac{1}{\sqrt{2}}_p}, ~~\text{where}~~ p = O(1), \\
        \textnormal{\cref{cor:CRG-hadamard-subset state}} ~\implies~ &\mathbb{N}_p \dbbrckt{\frac{1}{\sqrt{2}}_p}, ~~\text{where}~~ p = O(1), \\
        \textnormal{\cref{cor:T-T-dagger-gate-seq-subset state}} ~\implies~ &\mathbb{C}_{1}\dbbrckt{\frac{1}{\sqrt{2}}_1}.
    \end{align*}
    Hence, the result is an $\mathbb{A}^{(1/\sqrt{2})}_p \times \mathbb{C}_p$-succinct state, where $p = O(1)$.
    Note that the constant $p$ is determined by the number of Hadamard gates in the circuit.
\end{proof}

\historystateexone*

\begin{proof}
    Recall that $\ket{B_k} = R_k \cdots R_1\ket{S}$ for some $k \in [K]$, where $R_k$ is a classically reversible gate and $\ket{S}$ is a subset state on $S \subseteq \B^n$.
    Note that $\ket{k}$ is a computational basis state where $k$ can be expressed as the binary encoding of the decimal $k$, or even as the unary encoding.
    Given an $(n + \abs{{\rm bin}(K)})$-bit string $x = \conc{y}{z}$ where $y \in \B^n$ and $z \in \B^{\abs{{\rm bin}(K)}}$, then 
    \begin{equation*}
        \braket{x}{\eta} = \braket{y}{B_y} \cdot \frac{1}{\sqrt{\abs{K}}}\ind_{I}(z),
    \end{equation*}
    where $I=[{\rm bin}(K)]$ and $\ind_{I}(z)$ is $1$ if and only if $z$ lies in the set $I$.
    We can interpret the $\ket{k}$ component as a subset state such that each term contributes an (equal) amplitude of $1/\sqrt{K}$.
    Using the form of $\ket{B_k}$ we also have that 
    \begin{equation*}
        \braket{x}{\eta} = \frac{1}{\sqrt{\abs{S}}}\ind_{S}(y') \cdot \frac{1}{\sqrt{\abs{K}}}\ind_{I}(z),
    \end{equation*}
    where $y'$ is the image of $y$ under the action of the reversible gates $R_k\cdots R_1$.
    The query algorithm for the state $\ket{\eta}$ must then output
    \begin{equation*}
        \mathcal{Q}_{\eta}(x) = c_{\eta}\cdot \braket{x}{\eta} = c_{\eta}\cdot \frac{1}{\sqrt{\abs{S}}}\ind_{S}(y') \cdot \frac{1}{\sqrt{\abs{K}}}\ind_{I}(z) = \mathcal{Q}_S(y')\cdot\mathcal{Q}_I(z).
    \end{equation*}
    Using \cref{lma:tensor-product-subset state} we therefore conclude that the output is $1$ is and only if $y' \in S$ and $z \in I$.

    Note that, given two $(n+|I|)$-bit strings $x$ and $w$, the amplitude ratio $\braket{x}{\eta}/\braket{w}{\eta}$ can be efficiently calculated since 
    \begin{equation*}
        \mathcal{Q}'_{\eta}(x,w) = \frac{\braket{x}{\eta}}{\braket{w}{\eta}} = \frac{\ind_{S}(y')\ind_{I}(z)}{\ind_{S}(u')\ind_{I}(z)} = \mathcal{Q}_\eta(y')\cdot\mathcal{Q}_\eta(z).
    \end{equation*}
    Recall that the second bit string must not result in a zero amplitude hence will always be $1$.
\end{proof}

\elma*

\begin{proof}
    Recall that $\ket{H_k} = U_k \cdots U_1\ket{S}$ for some $k \in [K]$, where $U_k$ is gate from the set $\{U_k\}_{k\in [K]}$ formed by $O(\poly{n})$ classically reversible gates, $O(n)$ $T$ gates, $O(n)$ $T^\dagger$ gates and $O(1)$ Hadamard gates.
    There is an $O(\poly{n})$ size bit string $\bsr$ that represents the information of the gates.
    Given an $(n + \abs{{\rm bin}(K)})$-bit string $x = \conc{y}{z}$ where $y \in \B^n$ and $z \in \B^{\abs{{\rm bin}(K)}}$, then
    \begin{equation*}
        \braket{x}{\eta} = \braket{y}{H_k} \cdot \frac{1}{\sqrt{\abs{K}}}\ind_{I}(z).
    \end{equation*}
    The first term then follows as 
    \begin{equation*}
        \braket{y}{H_k} = \bra{y}{U_k \cdots U_1\ket{S}} = \braket{\psi_y}{S}.
    \end{equation*}
    The superposition state $\ket{\psi_y}$ is formed via the action of the unitary gates on the computational basis state $\ket{y}$.
    The case where for a given $l$ the sequence of unitaries $U_k \cdots U_1$ is entirely classical, we resort to the result of \cref{lma:classical-gates-subset state}.
    From \cref{rmk:subset-more-general} we see that this gives a $\mathbb{Q}^+_{p(n)}\dbbrckt{\sqrt{\cdot}_1}$-succinct state.
    In the more general scenario where in which $\ket{\psi_y}$ is truly a superposition state, we must employ the information stored in $\bsr$ to track the action of the gates.
    Moreover, the output of $U_1\cdots U_k\ket{y}$ can be efficiently calculated.
    Let $a,b,c$ represent the total number of $T$, $T^\dagger$ and Hadamard gates respectively.
    Since $c = O(1)$, the largest number of amplitudes needed to be combined is $2^c = O(1)$.
    The effect of each $T$ and $T^\dagger$ gate is tracked by $\bsr$.
    For some $l$ we obtain the general form,
    \begin{equation*}
        \bra{\psi_y} = \frac{1}{\sqrt{2^{c'}}} \sum_{i \in [2^{c'}]} \omega^{\textsl{g}(y_i)}(-1)^{\textsl{f}(y_i)} \bra{y_i}.
    \end{equation*}
    We have denoted $y_i$ as the image of some bit string (related to $y$) under the action of a sequence of gates.
    The functions $\textsl{g}(y_i)$ and $\textsl{f}(y_i)$ represent the phase power and sign power, respectively, for the image bit string $y_i$.
    Note also that $c' \leq c$.
    Then, 
    \begin{equation*}
        \braket{x}{\eta} = \left(\sum_{i} m(y_i)\ind_{S}(y_i)\right) \cdot \frac{1}{\sqrt{\abs{K}}}\ind_{I}(z).
    \end{equation*}
    It is clear that each component of the superposition state $\ket{\psi_y}$ is a succinct-state.
    Furthermore, by appropriate multiplications and additions of components, the output of $\braket{y}{H_k}$ can be efficiently calculated.
    The query algorithm for the state $\ket{\eta}$ must then output
    \begin{equation*}
        \mathcal{Q}_{\eta}(x) = c_\eta \cdot \braket{x}{\eta}.
    \end{equation*}
    This is achieved by appropriate standard arithmetic operations just outlined.
    An exact representation of the amplitude thus lies in 
    \begin{equation*}
        \big(\mathbb{A}_1^{(\sqrt{\cdot})} \times \mathbb{Q}^+_{c' + 3} \big)\times \big(\mathbb{A}_1^{({\rm sgn})} \times \mathbb{C}_2\big) \times \big(\mathbb{A}_1^{(\sqrt{\cdot})} \times \mathbb{Q}^+_{\log_2\abs{S}}\big) \times \big(\mathbb{A}_1^{(\sqrt{\cdot})} \times \mathbb{Q}^+_{\log_2\abs{K}}\big),
    \end{equation*}
    where the bracketed terms are one quantity in isolation.
    This format is a little messy; with some rearrangement and classical computation, we can massage the set to 
    \begin{equation*}
        \mathbb{A}_1^{(\sqrt{\cdot})} \times \mathbb{C}_r.
    \end{equation*}
    We interpret the action of the square root on the individual integers making up the complex number, i.e., $\sqrt{a} + {\rm i}\sqrt{b}$ and not $\sqrt{a + {\rm i} b}$.
    Notice that simple arithmetic operations can square the values in $\mathbb{C}_2 \mapsto \mathbb{C}_3$.
    Then $\mathbb{Q}^+_{c' + 3} \times \mathbb{C}_3 \times \mathbb{Q}^+_{\log_2(\abs{S})} \times \mathbb{Q}^+_{\log_2(\abs{K})}\mapsto \mathbb{C}_r$, where $r = \poly{n}$.
\end{proof}

\alma*

\begin{proof}
    Defining the states in the computational basis, we have
    \begin{align*}
        \ket{\psi} &= \sum_{i \in \B^n} \alpha(i)\ket{i}, \\
        \ket{\phi} &= \sum_{j \in \B^m} \beta(j)\ket{j}.
    \end{align*}
    The tensor product of these states is
    \begin{equation*}
        \ket{\psi}\ket{\phi} = \sum_{i \in \B^n, j \in \B^m} \alpha(i)\beta(j)\ket{i}\ket{j} = \sum_{k \in \B^{n+m}} \gamma(k)\ket{k},
    \end{equation*}
    where $\gamma(k) = \gamma(\conc{i}{j}) = \alpha(i)\beta(j)$.
    Note that the order of the tensor product is important since for the $(n+m)$-bit strings $k$, the first $n$ bits correspond to the first state and the last $m$ bits correspond to the second state.
    The query algorithm for the state $\ket{\psi}\ket{\phi}$ must then output
    \begin{equation*}
        \mathcal{Q}_{\psi\phi}(k) = c_{\psi\phi}\cdot \gamma(k) = c_{\psi\phi}\cdot \alpha(i)\beta(j) = \big(c_{\psi}\cdot \alpha(i) \big) \big(c_{\phi}\cdot \beta(j) \big) = \mathcal{Q}_{\psi}(i)\mathcal{Q}_{\phi}(j),
    \end{equation*}
    i.e., the query algorithm for the tensor product state is the multiplication of the query algorithms for the individual states respective of the input bit strings.
    Since both $\mathcal{Q}_\psi$ and $\mathcal{Q}_\phi$ are classically efficient algorithms, then the query algorithm for the tensor product state is also classically efficient.
    Furthermore, in the cases where $\mathcal{Q}_\psi$ and $\mathcal{Q}_\phi$ should return zero will be reciprocated in $\mathcal{Q}_{\psi\phi}$.

    Recall that $0 < c_\psi \leq 2^{p(n)}$ and $0 < c_\phi \leq 2^{q(m)}$, then $0 < c_{\psi\phi} = c_\psi c_\phi \leq 2^{p(n)+q(m)}$; this satisfies the definition of a $\mathbb{C}_{r(n,m)}$-succinct state for $\ket{\psi}\ket{\phi}$ since $p(n)+q(m) \leq r(n,m)$ for some polynomial $r$.
    Moreover, let $s = \max\{n,m\}$ then $r(s) = \max\{p(s),q(s)\}$, then the tensor product state is a $\mathbb{C}_{2r(s) + 1}$-succinct state.

    Note also that 
    \begin{equation*}
        \mathcal{Q}'_{\psi\phi}(k,l) = \frac{\gamma(k)}{\gamma(l)} = \frac{\gamma(i_k || j_k)}{\gamma(i_k || j_k)} = \frac{\alpha(i_k)\beta(j_k)}{\alpha(i_k)\beta(j_k)} = \mathcal{Q}'_\psi(i_k,i_k)\cdot\mathcal{Q}'_\phi(j_k,j_k),
    \end{equation*}
    where $i_k,i_k \in \B^n$ and $j_k,j_k \in \B^m$.
    This is the same as the product of the query algorithms for the individual states and hence requires $32(2r + 1) + 12$ bits to represent exactly.
\end{proof}

\blma*

\begin{proof}
    Isolating to the state $\ket{\varphi_1}$ we see that this is a superposition of two real-valued states, each in a tensor product with a subset state.
    If each of $\ket{\phi_R}$ and $\ket{\phi_I}$ are $\mathbb{Q}_{p(n)}$-succinct states then $\ket{\phi_R}\ket{0}$ and $\ket{\phi_I}\ket{1}$ are also $\mathbb{Q}_{p(n)}$-succinct states (cf. \cref{lma:tensor-product-C-succinct}).

    We first check the normalisation of $\ket{\varphi_1}$:
    \begin{align*}
        \braket{\varphi_1} &= \bigg(\sum_{j \in \B^n} R(j) \bra{\conc{j}{0}} + I(j) \bra{\conc{j}{1}} \bigg) \bigg(\sum_{k \in \B^n} R(k) \ket{\conc{k}{0}} + I(k) \ket{\conc{k}{1}}\bigg)\\
            &= \bigg(\sum_{j \in \B^n} R(j)^2 + I(j)^2 \bigg) \\
            &= \bigg(\sum_{j \in \B^n} |R(j) + iI(j)|^2 \bigg) \\
            &= 1.
    \end{align*}
    It therefore suffices to show that the query algorithm for $\ket{\varphi_1}$ is classically efficient.
    To this end we introduce notation --- let the output string of a classical query algorithm $\mathcal{Q}_A(i)$ be some bit string $\bin{q}_{A,i}$.
    Note that an input bit string to $\mathcal{Q}_{\varphi_1}$ is of the form $l = \conc{j}{b}$ where $j \in \B^n$ and $b \in \B$.
    The query algorithm for $\ket{\varphi_1}$ is then
    \begin{equation*}
        \mathcal{Q}_{\varphi_1}(l = \conc{j}{b}) = \begin{cases}
            \bin{q}_{\phi,j}[1:p(n)] \eqqcolon \mathcal{Q}_{\phi}(j)\Bigr|_{b=0}, & \text{if}~b = 0, \\
            \bin{q}_{\phi,j}[p(n)+1 : 2p(n)] \eqqcolon \mathcal{Q}_{\phi}(j)\Bigr|_{b=1}, & \text{if}~b = 1.
        \end{cases}
    \end{equation*}
    Specifically, for any input $l$ we query $\mathcal{Q}_{\phi}$ using the first $n$ bits of $l$ and then output one of the halves of the query output, conditioned on the last bit of $l$.
    Recall that $\mathcal{Q}_{\phi}$ outputs a bit string of length $2p(n)$ since amplitude in $\ket{\phi}$ are complex values.
    The latter half of these bit strings represent the imaginary part of the amplitudes.
    This is clearly a classically efficient algorithm.
    Furthermore, the constant $c_{\varphi_1} = c_{\phi}$.
    A similar argument can be made for $\ket{\varphi_2}$ making note that when $b=1$ the output should carry a minus sign.
    Note the amplitude ratio $\mathcal{Q}'_{\varphi_1}(k,l)$ can be efficiently calculated using the conditional output of $\mathcal{Q}_{\phi}$ and appropriate arithmetic operations.
\end{proof}

\athrm*

\begin{proof}
    Let $F_{|x|}$ be Arthur's \clsb{MA}{q} verification circuit equipped with an $O(\poly{n})$-bit string $\bsr$ representing the information of the gate sequence.
    Let the input to the circuit be an $N = n + w + m + p$ qubit register comprised of four parts: the input state $\ket{x}$ of $n$ qubits, the proof state $\ket{\chi}$ of $w$ qubits, the \emph{ancilla} register of $m$ qubits initialised to $\ket{0}$ and the \emph{coin} register of $p$ qubits initialised to $\ket{+}$.
    Let $F_{|x|}$ comprise a sequence of $K$ Toffoli gates denoted as $R_K, \dots, R_1$.

    Define a Hamiltonian $H = H_{\text{in}} + H_{{\rm out}} + H_{\text{prop}} + H_{\text{clock}}$ acting on a register of $K$ \emph{clock} qubits labelled as $c_1, \dots, c_K$ and the $N$ qubit input register.
    Let the output measured qubit be denoted $q$; for this instance, Arthur can measure using only the $Z$-basis.
    Each Hamiltonian term is defined to be a penalising Hamiltonian and must be stoquastic.
    
    \begin{align*}
        H_{\text{in}} &= \bigg(\sum_{j=1}^n I - \ketbra{x_j} + \sum_{j=1}^{m} \ketbra{1}_{\emph{anc},j} + \sum_{i=1}^{p} \ketbra{-}_{\emph{coin},i}\bigg)\otimes\ketbra{0}_{c_1}, \\
        H_{{\rm out}} &= \ketbra{0}_q \otimes \ketbra{1}_{c_T}, \\
        H_{\text{clock}} &= K^{12}\sum_{1\leq i < j \leq K} \ketbra{01}_{c_i,c_j},\\
        H_{\text{prop}} &= \frac{1}{2}\sum_{t=1}^{K} H_{\text{prop}}(t).
    \end{align*}

    We define the propagation Hamiltonian terms the following way:
    \begin{align*}
        H_{\text{prop}}(1) &= \ketbra{10}_{c_1,c_2} + \ketbra{0}_{c_1} - R_1\otimes(\ketbra{1}{0}_{c_1} + \ketbra{0}{1}_{c_1}), \\
        H_{\text{prop}}(t) &= \ketbra{10}_{c_t,c_{t+1}} + \ketbra{10}_{c_{t-1},c_t} - R_t\otimes(\ketbra{1}{0}_{c_t} + \ketbra{0}{1}_{c_t}), \quad 1<t<K \\
        H_{\text{prop}}(K) &= \ketbra{1}_{c_K} + \ketbra{10}_{c_{K-1},c_K} - R_K\otimes(\ketbra{1}{0}_{c_K} + \ketbra{0}{1}_{c_K}).
    \end{align*}

    Note that $H_{\text{in}}$, $H_{{\rm out}}$ and $H_{\text{clock}}$ are all $2$-local Hamiltonians.
    The terms $H_{\text{prop}}(t)$ are $4$-local $\forall\, t\in[K]$.
    It is trivial to show each Hamiltonian term is stoquastic.
    Notice that $\ketbra{-} = \frac{1}{2}(I - X)$, $\ketbra{1} = \frac{1}{2}(I + Z)$ and $H_{{\rm out}}$, $H_{\text{clock}}$ are diagonal; hence $H_{\text{in}}$, $H_{{\rm out}}$ and $H_{\text{clock}}$ are all $2$-local \emph{stoquastic} Hamiltonians.
    The terms $R_t\otimes(\dots)$ in $H_{\text{prop}}(t)$ will have off-diagonal elements that are strictly positive.
    Therefore, each $H_{\text{prop}}(t)$ term is stoquastic. 
    Moreover, each element of the Hamiltonian is efficiently computable since the gates $R_t$ are classically reversible and the information of the gate sequence is stored in $\bsr$.
    That is, the elements lie in the set $\mathbb{Q}^+_{p(n)}$ for some polynomial $p$.

    The history state $\ket{\eta}$ is then defined as
    \begin{equation*}
        \ket{\eta(x,\chi)} = \frac{1}{\sqrt{K+1}}\sum_{t=0}^{K} \ket{\varphi_t}\ket{t}, \quad \ket{\varphi_t} = R_t \cdots R_1\ket{x,\chi,0^m,+^p}.
    \end{equation*}
    Recall from \cref{rmk:history-state-ground} that the history state $\ket{\eta(x,\chi)}$ is the unique ground state of $H$ with eigenvalue zero if and only if the proof state $\ket{\chi}$ is accepted with probability $1$ by the verification circuit $F_{|x|}$.
    Therefore by \cref{lma:history-state-ex-1}, \cref{rmk:subset-more-general} and \cref{lma:CEHSS-equivalence} we have that $\ket{\eta(x,\chi)}$ is a $\mathbb{Q}^+_{p(n)}\dbbrckt{\sqrt{\cdot}_1}$-succinct state.
    
    To conclude, we leverage the original arguments from \refcite{kempe2003local} to show that in the \textsc{yes} case, there exists a proof state such that the Hamiltonian $H$ has ground state energy $0$ (since the verification circuit accepts with probability $1$).
    In the \textsc{no} case, all eigenvalues are at least $c/K^3$ for some constant $c$.
\end{proof}
\section{Local Stoquastic Hamiltonians with Easy Witness Ground States}\label{app:stoquastic-hamiltonians-easy-witness}

\begin{definition}[Stoquastic Verification Circuit]
    A stoquastic verification circuit is a tuple $S_n = (n,w,m,p,U)$ where $n$ is the number of input qubits, $w$ is the number of proof qubits, $m$ is the number of ancillae initialised in the $\ket{0}$ state and $p$ is the number of ancillae initialised in the $\ket{+}$ state.
    The circuit $U$ is a quantum circuit on $M\coloneqq n + w + m + p$ qubits, comprised of $K = O(\poly{n})$ gates from the set $\{X, \Gate{Cnot}, \Gate{Toffoli}\}$.
    The acceptance probability of a stoquastic verification circuit $S_n$, given some input string $x\in \varSigma^n$ and a proof state $\ket{\xi} \in \mathbb{C}^{2^w}$ is defined as:
    \begin{equation*}
        \Pr\left[S_n(x,\ket{\xi})\right]= \bra{\phi}U^\dagger \Pi_{\text{out}} U \ket{\phi},
    \end{equation*}
    where $\ket{\phi} = \ket{x,\xi,0^{m},+^{p}}$ and $\Pi_{\text{out}} = \ketbra{+}_1$ is a projector onto the output qubit.
\end{definition}

Note that $w,m,p = O(\poly{n})$.

\begin{definition}[\cl{StoqMA}($\alpha$,$\beta$)]\label{def:StoqMA}
    A promise problem $L = (L_{\textsc{yes}}, L_{\textsc{no}})$ belongs to the class \cl{StoqMA}($\alpha$,$\beta$) if there exists a polynomial-time generated stoquastic circuit family $\mathcal{S} = \{S_n : n \in \mathbb{N}\}$, where each stoquastic circuit $S_n$ acts on $n + w+m + p$ input qubits and produces one output qubit, such that:
    \begin{itemize}
        \item[] \textbf{Completeness}: For all $x\in L_{\textsc{yes}}$, $\exists \ket{\xi}\in(\mathbb{C}^2)^{\otimes w}$, such that, $ \Pr\left[S_{|x|}(x,\ket{\xi})=\mathtt{1} \right] \geq \alpha(|x|)$
        \item[] \textbf{Soundness}: For all $x\in L_{\textsc{no}}$, $\forall\ket{\xi}\in(\mathbb{C}^2)^{\otimes w}$, then, $ \Pr\left[S_{|x|}(x,\ket{\xi})=\mathtt{1} \right] \leq \beta(|x|)$
    \end{itemize}
    The term $\alpha$ refers to the completeness parameter and $\beta$ the soundness parameter, where $1/2 \leq \beta(|x|) < \alpha(|x|) \leq 1$ and satisfying $\alpha-\beta\geq\frac{1}{\poly{|x|}}$.
\end{definition}

Note that \cl{StoqMA} does not admit amplification (to the best of our knowledge.\footnote{There is one exception to this rule requiring a polynomial number of copies of the proof state for soundness amplification~\cite{liu2021stoqma}.})
A work from Liu~\cite{liu2021stoqma} discusses a modification of \cl{StoqMA} called \cl{eStoqMA}; the ``\cl{e}'' represents the addition of an ``easy witness''.
The motivation for considering an easy witness stems from the associated lemma in \refcite{impagliazzo2002search}.
We only provide a brief overview of the class \cl{eStoqMA} here (cf.~\cite[Definition 3.1]{liu2021stoqma}).
Essentially, the class \cl{eStoqMA} is the same as \cl{StoqMA} but with the addition that in the \textsc{yes} case there exists an $n$-qubit non-negative witness state
\begin{equation*}
    \ket{\xi} \coloneqq \sum_{j \in \B^n} \sqrt{D_{\xi}(j)}\ket{j},
\end{equation*}
where there is an efficient classical algorithm $\mathcal{Q}_{\xi}$ that outputs the ratio $D_{\xi}(\conc{0}{k})/D_{\xi}(\conc{1}{k})$ for a $(n-1)$-bit string $k$.

The dual-access model described in \refcite{liu2021stoqma} is adapted from \refcite{canonne2014testing}.
We briefly describe the model here.
Let $D$ represent a fixed distribution over $\{0,\dots,2^n -1\}$.
Having \emph{sample access} to $D$ implies there exists a query algorithm (oracle) $\mathcal{S}_D$ that returns an element $j \in \B^n$ with probability $D(j)$, independent of prior calls.
\emph{Query access} to $D$ implies the existence of a query algorithm (oracle) $\mathcal{Q}_D$ that, given an input $j \in \B^{n-1}$, returns the quotient $D(\conc{0}{j}) / D(\conc{1}{j})$.

It follows that a subset state is a natural easy witness.
If we assume that $\ket{\xi}$ is normalised then the value of $D_{\xi}(j) = 1/\abs{{\rm supp}(\ket{\xi})}$ for all $j \in {\rm supp}(\ket{\xi})$, i.e., $\ket{\xi}$ is a subset state.
Clearly, the query algorithm will only output $1$ if for a given $k\in\B^{n-1}$, both $\conc{0}{k}$ and $\conc{1}{k}$ are in the support of $\ket{\xi}$.

As it was already shown that \cl{eStoqMA} is equivalent to \cl{MA}, it suffices to conclude that the \clw{MA}{hardness} proof of \refcite{bravyi2006complexity} holds in this setting.
Specifically, we must ensure the history state is an easy witness.
Naturally this follows from \cref{lma:history-state-ex-1} and \cref{lma:CEHSS-equivalence}. 

\begin{theorem}[\cite{liu2021stoqma}]
    The \sc{$6$-Local Stoquastic Hamiltonian with an Easy Witness Ground State} problem is \clw{MA}{complete}.
\end{theorem}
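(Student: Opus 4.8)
The plan is to establish the two inclusions separately, leaning on the equivalence $\cl{eStoqMA}=\cl{MA}$ of~\refcite{Liu21} and the Feynman--Kitaev construction of~\refcite{BDOT06}. For containment, I would show the problem sits inside $\cl{eStoqMA}$: given an instance $(H,a,b)$ with $H=\sum_i H_i$ a $6$-local stoquastic Hamiltonian normalised so that $0\leq H_i\leq 1$, have Merlin send the ground state $\ket{\xi}$, which in the \sc{yes} case is promised to be an easy witness --- a non-negative state equipped with a classical query algorithm $\mathcal{Q}_\xi$ for the amplitude ratios $D_\xi(\conc{0}{k})/D_\xi(\conc{1}{k})$. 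Arthur then runs the standard term-by-term stoquastic verification for the local Hamiltonian energy (pick a random $H_i$, apply the constant-size stoquastic measurement circuit for $H_i$ in the $\{\ket{+},\ket{-}\}$ output convention, accept with the calibrated probability); this is a polynomial-time generated stoquastic circuit family, and the gap $b-a=1/\poly{n}$ transfers, after the usual rescaling, to an $\cl{StoqMA}$-type gap. The additional promise that $\ket{\xi}$ is easy places the protocol in $\cl{eStoqMA}$, and invoking $\cl{eStoqMA}=\cl{MA}$~\cite{Liu21} gives membership in $\cl{MA}$.

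For hardness, I would start from an arbitrary $L\in\cl{MA}$ and, via~\refcite{BDOT06}, fix an $\clsb{MA}{q}$ verification circuit: a semi-classical circuit $U=R_K\cdots R_1$ built from $\{X,\Gate{Cnot},\Gate{Toffoli}\}$ acting on $\ket{x,\xi,0^m,+^p}$ with a classical witness $\xi\in\B^w$. Applying the Feynman--Kitaev clock construction exactly as in~\refcite{BDOT06} yields a $6$-local stoquastic Hamiltonian $H=H_{\rm in}+H_{\rm out}+H_{\rm prop}+H_{\rm clock}$ whose low-energy state is the history state
\begin{equation*}
  \ket{\eta}=\frac{1}{\sqrt{K+1}}\sum_{t=0}^{K}\ket{\varphi_t}\ket{t},\qquad \ket{\varphi_t}=R_t\cdots R_1\ket{x,\xi,0^m,+^p},
\end{equation*}
with $\bra{\eta}H\ket{\eta}\leq\epsilon/(K+1)$ in the \sc{yes} case and every state of energy $\geq c/K^3$ in the \sc{no} case, giving an inverse-polynomial gap. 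The essential new point is that $\ket{\eta}$ is an easy witness: $\ket{x,\xi,0^m,+^p}$ is a subset state (uniform superposition over $\{x\}\times\{\xi\}\times\{0^m\}\times\B^p$), each $R_t$ is classically reversible, so by \cref{lma:classical-gates-subset state}, \cref{lma:history-state-ex-1} and \cref{lma:CEHSS-equivalence} the history state is itself a subset state on some $\mathcal{S}\subseteq\B^{M+K}$. Hence $D_\eta$ is uniform on $\mathcal{S}$, and the ratio $D_\eta(\conc{0}{k})/D_\eta(\conc{1}{k})$ is computed by a polynomial-time algorithm that tests membership of $\conc{0}{k}$ and $\conc{1}{k}$ in $\mathcal{S}$, which requires only inverting the reversible gate sequence (using the circuit descriptor $\bsr$) and checking against the support of the initial state. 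Thus $\ket{\eta}$ satisfies the easy-witness definition and the reduction is complete.

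I expect the main obstacle to be the easy-witness certification in the hardness direction: showing cleanly that the support-membership ratio for $\ket{\eta}$ is efficiently computable \emph{with the correct normalisation}. The easy-witness definition demands the amplitudes be \emph{exactly} uniform on the support so that $D_\eta$ is a genuine distribution with computable ratios, which is precisely why the $\ket{+}^p$ ancillae (rather than arbitrary single-qubit states) and the purely classically reversible gate set are load-bearing --- this is what \cref{lma:history-state-ex-1} and \cref{lma:CEHSS-equivalence} supply. A secondary technical point is verifying that the promise gap survives the rescaling $H_i\mapsto H_i/\Vert H\Vert$ needed to match the $[0,1]$ normalisation of the $\cl{StoqMA}/\cl{eStoqMA}$ verifier, but this is routine; stoquasticity, $6$-locality of the clock Hamiltonian, and the spectral-gap analysis are inherited verbatim from~\refcite{BDOT06}.
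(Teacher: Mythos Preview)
Your proposal is correct and matches the paper's approach: containment via the equivalence $\cl{eStoqMA}=\cl{MA}$ from~\refcite{Liu21}, and hardness via the $\clsb{MA}{q}$ Feynman--Kitaev construction of~\refcite{BDOT06}, with the new ingredient being that the history state is an easy witness, established through \cref{lma:history-state-ex-1} and \cref{lma:CEHSS-equivalence}. The paper is even terser than you are on the containment side (it simply defers to~\refcite{Liu21}), and for hardness it invokes exactly the two lemmas you identify; your additional remarks on the membership-test implementation of the ratio query and on the load-bearing role of the $\ket{+}^p$ ancillae and the classically reversible gate set are accurate elaborations of the same argument.
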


\begin{corollary}
    The \sc{$4$-Local Stoquastic Hamiltonian with an Easy Witness Ground State} problem is \clw{MA}{complete}.
\end{corollary}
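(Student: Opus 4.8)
The containment in \cl{MA} requires no new argument: a $4$-local stoquastic Hamiltonian is a special instance of a $k$-local stoquastic Hamiltonian, and the promise that the ground state is an \emph{easy witness} is precisely the hypothesis under which the equivalence $\cl{eStoqMA} = \cl{MA}$ of \refcite{Liu21} places the general problem in \cl{MA}. Hence only the \clw{MA}{hardness} direction at locality four needs proof, and the single non-trivial point is to certify that the witness ground state produced by the reduction is an easy witness.

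For hardness I would re-run the $6\to 4$ locality reduction underlying \cref{thrm:4-local-stoquastic-hamiltonian-with-Qp-succinct-ground-states}, i.e.\ apply the Kempe--Regev clock construction~\cite{KR03} to a \clsb{MA}{q} verifier: couple each Toffoli gate $R_t$ to a single clock qubit, and penalise illegal clock configurations by the term $K^{12}H_{\rm clock}$. Because \clsb{MA}{q} circuits are built entirely from Toffoli gates, the term-by-term stoquasticity check from the proof of \cref{thrm:4-local-stoquastic-hamiltonian-with-Qp-succinct-ground-states} applies verbatim, every Hamiltonian term is at most $4$-local, and the \textsc{yes}/\textsc{no} spectral-gap analysis is exactly that of \cite{KR03} (eigenvalue $\le \varepsilon/(K+1)$ in the \textsc{yes} case, $\ge c/K^3$ in the \textsc{no} case). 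Invoking \clsb{MA}{q}$\,=\,$\cl{MA} together with amplification, we may take the witness to be a classical string $\xi \in \B^w$, so that $\ket{x,\xi,0^m,+^p}$ is a subset state.

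It remains to verify that the history state
\[
\ket{\eta} = \frac{1}{\sqrt{K+1}}\sum_{t=0}^{K}\ket{\varphi_t}\ket{t}, \qquad \ket{\varphi_t} = R_t\cdots R_1\ket{x,\xi,0^m,+^p},
\]
is an easy witness. By \cref{lma:history-state-ex-1} together with \cref{lma:CEHSS-equivalence}, $\ket{\eta}$ is, up to the overall constant, a subset state on an efficiently decidable set $\mathcal{S}$: the $R_t$ are classically reversible, so by \cref{rmk:classically-reversible-gates-size-invariance} they preserve the size of the support and the computability of membership from the circuit descriptor $\bsr$. As observed just before \refcite{Liu21}'s theorem, a subset state is the canonical easy witness --- on input $k\in\B^{n-1}$ the query algorithm returns $D_\eta(\conc{0}{k})/D_\eta(\conc{1}{k})$, which equals $1$ when both extensions lie in $\mathcal{S}$ and is otherwise trivially determined, all in polynomial time. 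Combining the $4$-local stoquastic construction with this observation yields the corollary.

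The main obstacle I anticipate is purely one of bookkeeping: confirming that the two modifications used to drop locality from six to four --- decoupling two clock qubits from the propagation terms and rescaling $H_{\rm clock}$ by $K^{12}$ --- leave the support of the history state equal to the CEHSS set $\mathcal{S}$ under the modified (Kempe--Regev) clock encoding, and do not disturb the spectral separation when the verifier is a \clsb{MA}{q} circuit. Since neither modification alters the support or the stoquasticity of any term, this should reduce to routine checking against the arguments of \cite{KR03} and the proof of \cref{thrm:4-local-stoquastic-hamiltonian-with-Qp-succinct-ground-states}.
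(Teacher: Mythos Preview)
Your proposal is correct and mirrors the paper's own (implicit) argument: containment via $\cl{eStoqMA}=\cl{MA}$ from \cite{Liu21}, and \clw{MA}{hardness} by re-using the Kempe--Regev $4$-local construction of \cref{thrm:4-local-stoquastic-hamiltonian-with-Qp-succinct-ground-states}, with the history state certified as an easy witness via \cref{lma:history-state-ex-1} and \cref{lma:CEHSS-equivalence}. The paper states this as an immediate corollary without a standalone proof, so your write-up is in fact more explicit than the source.
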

\section{Toffoli Gate Decomposition}\label{app:toffoli-gate-decomposition}
In this appendix we discuss the exploitation of the structure of STEC (\cl{StMA}\textsubscript{q} circuits.\footnote{``\textbf{St}ructured \textbf{{MA}\textsubscript{q}}''})
Consider a classically reversible circuit comprised of $K = O(\poly{n})$ \Gate{Toffoli} gates.
The exact decomposition of the \Gate{Toffoli} gate is well-known and results in a sequence of \Gate{Cnot}, \Gate{Had}, and $T$ gates.
Since $\bsr$ encodes the information of the circuit, we can generate a new bit string $\bsr'$ that encodes the decomposition of the \Gate{Toffoli} gate.
This new bit string will be of size $O(\poly{n})$ and will be used to track the action of the $T$ gates and Hadamard gates.
Moreover, let $\bsr'$ follow the decomposition exactly in the sense that the gates $1$ to $15$ correspond to the first \Gate{Toffoli} gate, $16$ to $30$ correspond to the second \Gate{Toffoli} gate, and so on.
The decomposition of the \Gate{Toffoli} gate is as follows:
\begin{align*}
    \Gate{Toffoli}[a,b;c] &= T[a] \, \Gate{Cnot}[a;b] \, T^\dagger[b] \, \Gate{Cnot}[a;b] \, T[b] \notag \\
    &\qquad H[c] \, \Gate{Cnot}[b;c] \, T^\dagger[c] \, \Gate{Cnot}[a;c] \, T[c] \\ 
    &\qquad \Gate{Cnot}[b;c] \, T^\dagger[c] \, \Gate{Cnot}[a;c] \, T[c] \, H[c].\notag
\end{align*}
To study the effect of this decomposition on the amplitudes of a given state we consider each gate in turn.
For ease of analysis we let $a=1$, $b=2$ and $c=3$.
For a given input $n$-bit string $x$, the action on some specific state $\ket{\varphi}$ follows as:
\begin{align*}
    \bra{x}T[1]\ket{\varphi} &= \omega^{x[1]}\braket{x}{\varphi}, \\[0.3cm]
    \bra{x}T[1]\Gate{Cnot}[1;2]\ket{\varphi} &= \omega^{x[1]}\braket{x'}{\varphi}, \\[0.25cm]
        &\hspace{-6.5cm}\text{where}~x' = \Gate{Cnot}[1;2]x\\[0.3cm]
    \bra{x}T[1]\Gate{Cnot}[1;2]T^\dagger[2]\ket{\varphi} &= \omega^{x[1]-x'[2]}\braket{x'}{\varphi}, \\[0.3cm]
    \bra{x}T[1]\Gate{Cnot}[1;2]T^\dagger[2]\Gate{Cnot}[1;2]\ket{\varphi} &= \omega^{x[1]-x'[2]}\braket{x''}{\varphi}, \\[0.25cm]
        &\hspace{-6.5cm}\text{where}~x'' = \Gate{Cnot}[1;2]x'\\[0.3cm]
    \bra{x}T[1]\Gate{Cnot}[1;2]T^\dagger[2]\Gate{Cnot}[1;2]T[2]\ket{\varphi} &= \omega^{x[1]-x'[2]+x''[2]}\braket{x''}{\varphi}, \\[0.3cm]
    \bra{x}T[1]\cdots H[3]\ket{\varphi} &= \frac{1}{\sqrt{2}}\omega^{x[1]-x'[2]+x''[2]}(\braket{y}{\varphi} + (-1)^{x''[3]}\braket{\bar{y}}{\varphi}), \\[0.25cm]
        &\hspace{-6.5cm}\text{where}~\bar{y} = \dots || x''[3] = 1 || \dots \\[0.3cm]
    \bra{x}T[1]\cdots \Gate{Cnot}[2;3]\ket{\varphi} &= \frac{1}{\sqrt{2}}\omega^{x[1]-x'[2]+x''[2]}(\braket{y'}{\varphi} + (-1)^{x''[3]}\braket{\bar{y}'}{\varphi}), \\[0.25cm]
        &\hspace{-6.5cm}\text{where}~\bar{y}' = \Gate{Cnot}[2;3] \bar{y} \\[0.3cm]
    \bra{x}T[1]\cdots T^\dagger[3]\ket{\varphi} &= \frac{1}{\sqrt{2}}\omega^{x[1]-x'[2]+x''[2]}(\omega^{-y'_3}\braket{y'}{\varphi} + (-1)^{x''[3]}\omega^{-\bar{y}'_3}\braket{\bar{y}'}{\varphi}), \\[0.3cm]
    \bra{x}T[1]\cdots \Gate{Cnot}[1;3]\ket{\varphi} &= \frac{1}{\sqrt{2}}\omega^{x[1]-x'[2]+x''[2]}(\omega^{-y'_3}\braket{y''}{\varphi} + (-1)^{x''[3]}\omega^{-\bar{y}'_3}\braket{\bar{y}''}{\varphi}), \\[0.3cm]
\end{align*}
Continued\dots
\newpage
\begin{align*}
    \bra{x}T[1]\cdots T[3]\ket{\varphi} &= \frac{1}{\sqrt{2}}\omega^{x[1]-x'[2]+x''[2]}\big(\omega^{-y'[3]+y''[3]}\braket{y''}{\varphi}
        + (-1)^{x''[3]}\omega^{-\bar{y}'[3] + \bar{y}''[3]}\braket{\bar{y}''}{\varphi}\big), \\[0.3cm]
    \bra{x}T[1]\dots \Gate{Cnot}[2;3]\ket{\varphi} &= \frac{1}{\sqrt{2}}\omega^{x[1]-x'[2]+x''[2]}\big(\omega^{-y'[3]+y''[3]}\braket{y'''}{\varphi} 
        +(-1)^{x''[3]}\omega^{-\bar{y}'[3] + \bar{y}''[3]}\braket{\bar{y}'''}{\varphi}\big), \\[0.3cm]
    \bra{x}T[1]\cdots T^\dagger[3]\ket{\varphi} &= \frac{1}{\sqrt{2}}\omega^{x[1]-x'[2]+x''[2]}\big(\omega^{-y'[3]+y''[3] - y'''[3]}\braket{y'''}{\varphi}
        + (-1)^{x''[3]}\omega^{-\bar{y}'[3] + \bar{y}''[3] - \bar{y}'''[3]}\braket{\bar{y}'''}{\varphi}\big), \\[0.3cm]
    \bra{x}T[1]\cdots \Gate{Cnot}[1;3]\ket{\varphi} &= \frac{1}{\sqrt{2}}\omega^{x[1]-x'[2]+x''[2]}\big(\omega^{-y'[3]+y''[3] - y'''[3]}\braket{y''''}{\varphi} \\[0.25cm]
        &\quad\qquad+ (-1)^{x''[3]}\omega^{-\bar{y}'[3] + \bar{y}''[3] - \bar{y}'''[3]}\braket{\bar{y}''''}{\varphi}\big), \\[0.3cm]
    \bra{x}T[1]\cdots T[3]\ket{\varphi} &= \frac{1}{\sqrt{2}}\omega^{x[1]-x'[2]+x''[2]}\big(\omega^{-y'[3]+y''[3] - y'''[3] + y''''[3]}\braket{y''''}{\varphi} \\[0.25cm]
        &\quad\qquad+ (-1)^{x''[3]}\omega^{-\bar{y}'[3] + \bar{y}''[3] - \bar{y}'''[3] + \bar{y}''''[3]}\braket{\bar{y}''''}{\varphi}\big), \\[0.3cm]
    \bra{x}T[1]\cdots H[3]\ket{\varphi} &= \frac{1}{2}\omega^{x[1]-x'[2]+x''[2]}\big(\omega^{-y'[3]+y''[3] - y'''[3] + y''''[3]}(\braket{z}{\varphi} + (-1)^{y''''[3]}\braket{\bar{z}}{\varphi}) \\[0.25cm]
        &\quad\qquad+ (-1)^{x''[3]}\omega^{-\bar{y}'[3] + \bar{y}''[3] - \bar{y}'''[3] + \bar{y}''''[3]}(\braket{w}{\varphi} + (-1)^{\bar{y}''''[3]}\braket{\bar{w}}{\varphi})\big),    
\end{align*}
Note that:
\begin{align*}
    x' &= \Gate{Cnot}[1;2]x, & y &= \dots \mathbin\Vert x''_3 = 0 \mathbin\Vert \dots, & y' &= \Gate{Cnot}[2;3]y,  & y'' &= \Gate{Cnot}[1;3]y',\\
    x'' &= \Gate{Cnot}[1;2]x',& \bar{y} &= \dots \mathbin\Vert x''_3 = 1 \mathbin\Vert \dots, & \bar{y}' &= \Gate{Cnot}[2;3]\bar{y}, & \bar{y}''& = \Gate{Cnot}[1;3]\bar{y}',\\[0.5cm]
    y''' &= \Gate{Cnot}[2;3]y'', & y'''' &= \Gate{Cnot}[1;3]y''', & z &= \dots \mathbin\Vert y''''_3 = 0 \mathbin\Vert \dots, & w &= \dots \mathbin\Vert \bar{y}''''_3 = 0 \mathbin\Vert \dots, \\
    \bar{y}''' &= \Gate{Cnot}[2;3]\bar{y}'', & \bar{y}'''' &= \Gate{Cnot}[1;3]\bar{y}''', & \bar{z} &= \dots \mathbin\Vert y''''_3 = 1 \mathbin\Vert \dots, & \bar{w} &= \dots \mathbin\Vert \bar{y}''''_3 = 1 \mathbin\Vert \dots.  
\end{align*}
Some binary values here have simpler representations, for example $x'' \equiv x$ and $y'''' \equiv y$, but this does not help reduce to bulk to the final state much.
The final state should reduce to only considering $\bra{x}\Gate{Toffoli}[1,2;3]\ket{\varphi}$.
Indeed, it can be verified that this holds.
For clarity, the final state is given by:
\begin{align*}
    \bra{x}T[1]\cdots H[3]\ket{\varphi} &= \frac{1}{2}\omega^{x[1]-x'[2]+x''[2]-y'[3]+y''[3] - y'''[3] + y''''[3]}\braket{z}{\varphi} \\
    &\qquad+\frac{1}{2}\omega^{x[1]-x'[2]+x''[2]-y'[3]+y''[3] - y'''[3] + y''''[3]}(-1)^{y''''[3]}\braket{\bar{z}}{\varphi} \\
    &\qquad+\frac{1}{2}\omega^{x[1]-x'[2]+x''[2] -\bar{y}'[3] + \bar{y}''[3] - \bar{y}'''[3] + \bar{y}''''[3]}(-1)^{x''[3]}\braket{w}{\varphi} \\ 
    &\qquad+ \frac{1}{2}\omega^{x[1]-x'[2]+x''[2] -\bar{y}'[3] + \bar{y}''[3] - \bar{y}'''[3] + \bar{y}''''[3]}(-1)^{x''[3] + \bar{y}''''[3]}\braket{\bar{w}}{\varphi} \\
    &= \bra{x}\Gate{Toffoli}[1,2;3]\ket{\varphi}.
\end{align*}

Note that only two combinations of $x[1]$, $x[2]$, and $x[3]$ are affected by the action of the \Gate{Toffoli} gate.
The purpose of this analysis is to show that when working with Structured Toffoli-Equivalent Circuits (STEC), we can use the string $\bsr'$ to efficiently track the cumulative action of each gate in the circuit.
This is crucial because, although there may be a polynomial number of Hadamard gates, the tracking allows us to avoid calculating an exponential number of amplitudes.
The key reason for this efficiency lies in the ability to organize the computation by ``blocks'', as described above.
By identifying the relevant block for any given step (time index), the corresponding amplitudes can be computed in polynomial time.

To illustrate this more concretely, consider calculating the amplitude $\bra{x}U_j \cdots U_1 \ket{\varphi}$, where each $U_j$ is one of the gates discussed earlier.
The first step is to compute $j \mod 15$, which helps partition the sequence of gates into manageable blocks of size $15$.
Suppose $j = 15k + l$, where $0 < l < 15$.
In this case, $\bsr'$ can be used to trace the action of the gate sequence $U_j \cdots U_1 = U_k U_{k-1} \cdots \Gate{Toffoli}_k \Gate{Toffoli}_{k-1} \cdots \Gate{Toffoli}_1$.
The reason this modular reduction is important is that it helps us separate the action of the first $l$ gates (which are part of the Toffoli decomposition) from the subsequent $k$ Toffoli gates.
This separation allows us to compute the effect of the entire gate sequence on the bit string $x$ efficiently by first applying the $l$ gates, followed by the $k$ Toffoli gates.
Since each of these steps can be performed in polynomial time, this approach avoids exponential complexity while still tracking the complete action of the circuit.

As a final comment on this decomposition and STEC circuits, we note that the \clw{MA}{complete}ness proof resulting in a complex Hamiltonian does not impede on any prior analysis.
This is because in the context of the standard \sc{Local Hamiltonian} problem we would not be able to show containment of such Hamiltonians in \cl{MA} since without the assumption that the ground state is succinct, we have no known protocol.
This result is similar to basic arguments showing that the \sc{Local Hamiltonian} problem is at least \cl{NP}-hard because classical Hamiltonians are a subset of quantum Hamiltonians.
\section{Pre-idled Quantum Verifier Scenario}\label{app:pre-idled-quantum-verifier}
Pre-idling the verification circuit consists of padding the start of the original gate sequence with a series of identity gates.
Assuming that we only add a polynomial number of such, denoted $N$, the total gate count becomes $N+K$ (where $K$ is the number of gates in the original circuit).
The purpose of this padding is to address Solution \ref{sol:3} discussed in \cref{sec:discussion-proof-error}.
The consequence of this padding is a change in the spectral gap of the resulting Hamiltonian.
It follows that the spectral gap is bounded as $\Omega(1/(\# {\rm gates})^3)$.
Since it is always so that $K = O(\poly{n})$, the spectral gap is still inverse-polynomial even with the pre-idling.
Our requirement for Solution \ref{sol:3} (and also Solution \ref{sol:1}) is that $N+K+1$ is a square number.
Let $x_i$ and $x_{i+1}$ be two consecutive square numbers such that $x_i < K+1 < x_{i+1}$.
Then trivially $N = x_{i+1} - K - 1$ (or $N = x_{i+1+c} - K - 1$ for some $c \in \mathbb{N}$) is a valid choice.
Clearly $N = O(\poly{n})$ and hence we confirm the spectral gap is still inverse-polynomial.

A polynomial increase to the number of clock qubits is then required in the subsequent Feynman-Kitaev circuit-to-Hamiltonian construction.
Yet, on the positive side, the uniform amplitude of the history state $1/\sqrt{N+K+1}$ is now a rational number.
Furthermore, due to \cref{lma:CEHSS-equivalence}, the amplitude of the history state expressed as a subset state is also rational since the number of $+$-ancillae qubits can be assumed to be even.
The consequence of this is that the original proof arguments of \refcite{jiang2025local} regarding the \clw{MA}{hardness} now hold.

This same trick of pre-idling or allowing for an even number of gates does not translate well to the STEC.
The reason for this is that the resulting history state is not a subset state.
We also note the difference between the present problem and the \sc{Guided Local Hamiltonian} problem considered in \refcite{cade2023improved}.
In the latter, the verifier has an additional input in the form of a guiding state that has promised overlap with the true ground state of the Hamiltonian.
The specific form of the guiding state is a `semi-classical encoded state'~\cite[Definition 3]{cade2023improved}; these are similar to ideas presented here but have a condition that the subset must have polynomial size.
Moreover, the verifier has the power to sample from the guiding state efficiently.\footnote{Given a description of said state.}
For the present problem, the verifier is given the classical circuit that has query access to the ground state's amplitude information.
This is in clear contrast as one problem uses \emph{guiding state} information and the other uses \emph{ground state} information.
\section{Local Hamiltonians on Spatially Sparse Graphs}\label{app:local-hamiltonians-spatially-sparse-graphs}
Proving the \clw{MA}{complete}ness of the \sc{Succinct State Local Hamiltonian} problem on spatially sparse graphs could potentially pave the way for future work concerning geometrically restricted Hamiltonians.
While the current framework of perturbation gadgets are unable to preserve the succinct-ness of the ground state, the spatially sparse construction of \refcite{oliveira2008complexity} is a good starting point.
The modification to the standard Feynman-Kitaev construction is to map general circuits to ones where each qubit only interacts with a constant number of gates.
This is achieved by introducing a series of ancillae qubits and using a sequence of \Gate{Swap} gates intertwined between each gate of the original circuit.
The result is a circuit that is spatially sparse.
For \cl{MA} (or \clsb{MA}{q}) circuits we have a restricted gate set.
Notice that even with this gate set, \Gate{Swap} gates are constructable using three \Gate{Cnot} gates.
This only causes a constant increase to the number of gates and thus clock qubits needed in the construction.
The basic idea is the following circuit mapping:
\begin{equation}\label{eq:spatially_sparse_gate_sequence}
    R_1R_2\dots R_K ~~\mapsto~~ R_1\;\big(\prod_{j=2}^K \big(\prod_{q=M}^1 \Gate{Swap}_{{j-1}_q, j_q} \big)\;R_j\big).
\end{equation}
If $N$ denotes the number of original qubits then $M = KN$ is the number of qubits in the modified circuit.
The new circuit still only requires one copy of the proof state (which is of size $n$).
Since \cl{MA} permits perfect completeness and soundness, this can be reflected in the mapped circuit.
The `time flow' of the circuit follows a snake-like pattern from left to right.
Essentially, on the first row of $N$ qubits we execute $R_1$, then we perform a series of \Gate{Swap} gates between qubits in row-$1$ and row-$2$.
Then we execute $R_2$ on row-$2$ and so on.
The Feynman-Kitaev clock construction can then be applied in the same manner.
Roughly speaking the only Hamiltonian terms that change are the $H_{\text{in}}$ terms.
The following proof actually follows from \refcite{waite2025complexity} with the addition of \cl{MA} containment from \refcite{jiang2025local}.

\begin{theorem}
    The \sc{Real Succinct State $6$-Local Stoquastic Hamiltonian} problem on spatially sparse graphs is \clw{MA}{complete}.
\end{theorem}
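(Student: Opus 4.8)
The plan is to prove this as a two-part (promise) \textbf{MA}-completeness result, establishing \textbf{MA}-hardness on spatially sparse graphs and containment in \textbf{MA} separately, following the structure indicated in the excerpt: the hardness direction is essentially the spatially sparse construction of~\refcite{OT08} adapted to \clsb{MA}{q}/stoquastic circuits (as in~\refcite{WB24}), while containment is inherited from \cref{thrm:Jiang-kLRH-Qp-MA} via~\refcite{Jiang25}. Since the problem restricts to a subclass of all $6$-local real stoquastic Hamiltonians, containment follows immediately: the spatially sparse instances are a special case of the instances already shown to lie in \textbf{MA}, so no new containment argument is needed.

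For the hardness direction, the plan is to start from an arbitrary \clsb{MA}{q} verification circuit comprised of $\{X,\Gate{Cnot},\Gate{Toffoli}\}$ gates and apply the circuit mapping in \cref{eq:spatially_sparse_gate_sequence}, interleaving layers of \Gate{Swap} gates between successive original gates so that each physical qubit interacts with only a constant number of gates. First I would note that each \Gate{Swap} decomposes into three \Gate{Cnot} gates, so the gate set remains classically reversible and the resulting history state retains its subset-state structure; the number of qubits grows from $N$ to $M = KN$, and the gate count increases only by a constant factor, preserving $K = O(\poly{n})$. The Feynman--Kitaev clock construction then applies verbatim to produce $H = H_{\rm in} + H_{\rm out} + H_{\rm prop} + H_{\rm clock}$, with only the $H_{\rm in}$ terms requiring modification to account for the expanded input register (the additional ancillae rows initialised in $\ket{0}$). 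Crucially, because every gate in the mapped circuit is classically reversible, \cref{lma:history-state-ex-1} together with \cref{rmk:subset-more-general} and \cref{lma:CEHSS-equivalence} guarantees that the history state remains a subset state and hence a \clw{real}{succinct} (specifically $\mathbb{Q}^+\dbbrckt{\sqrt{\cdot}_1}$-succinct) state; this is precisely the property that was preserved in \cref{prop:history-state-solution-assumption} and is what distinguishes the succinct setting from the standard one.

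The main steps, in order, are: (i) verify that the \Gate{Swap}-interleaving produces a spatially sparse interaction graph in which each qubit participates in $O(1)$ Hamiltonian terms, using the snake-like time-flow described after \cref{eq:spatially_sparse_gate_sequence}; (ii) confirm that the clock-to-Hamiltonian construction yields a $6$-local \emph{stoquastic} Hamiltonian (the locality and stoquasticity arguments carry over unchanged from the \clsb{MA}{q} setting of~\refcite{BDOT06}, since \Gate{Swap}, \Gate{Cnot}, \Gate{Toffoli}, and $X$ all have nonnegative off-diagonal propagation terms after the standard sign gauge); (iii) invoke the completeness/soundness analysis to obtain the required inverse-polynomial promise gap, noting that \textbf{MA} admits perfect completeness so this transfers to the mapped circuit; and (iv) verify that the history state is a valid succinct witness satisfying the \textsc{yes}-case energy bound. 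I would address the normalisation of the history state using Solution~\ref{sol:2} as in the main text, permitting the $\mathbb{A}^{(\sqrt{\cdot})}_1 \times \mathbb{Q}^+_{2\log_2(K+1)}$ encoding so that the succinctness holds for all circuit sizes without restricting $K+1$ to be a perfect square.

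The main obstacle I anticipate is step (ii): ensuring that introducing the \Gate{Swap} layers does not break stoquasticity or inflate the locality beyond six. The \Gate{Swap} gates act on pairs of qubits across adjacent rows, and their appearance inside the propagation terms $H_{\rm prop}(t)$ must be checked to remain $6$-local (two data qubits plus the relevant clock qubits) and to contribute only nonpositive off-diagonal entries after gauge-fixing. Because \Gate{Swap} is a real permutation matrix, the terms $\Gate{Swap}\otimes(\ketbra{1}{0}_{c_t} + \ketbra{0}{1}_{c_t})$ have the same stoquastic form as the original $R_t\otimes(\dots)$ terms, so I expect this to go through, but it requires careful bookkeeping of the clock-qubit couplings and of which physical qubits each \Gate{Swap} block touches to certify the constant-degree spatial sparsity. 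As the excerpt notes, the result then follows from~\refcite{WB24} combined with the \textbf{MA} containment of~\refcite{Jiang25}, so the contribution is primarily in verifying that the succinct-state property of the history state survives the spatially sparse transformation.
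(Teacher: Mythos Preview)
Your proposal is correct and follows essentially the same approach as the paper's proof: both split into containment (inherited from \cref{thrm:Jiang-kLRH-Qp-MA}) and hardness via the spatially sparse construction of~\refcite{OT08} adapted to \clsb{MA}{q} circuits, both rely on the \Gate{Swap}-interleaving of \cref{eq:spatially_sparse_gate_sequence}, and both certify succinctness of the history state through \cref{lma:history-state-ex-1}, \cref{rmk:subset-more-general}, and \cref{lma:CEHSS-equivalence}. One small refinement: the modification to $H_{\rm in}$ is slightly more involved than just enlarging the input register---in the paper's construction each input/ancilla qubit carries a three-clock-qubit projector $\ketbra{100}_{c_{t_j-1},c_{t_j},c_{t_j+1}}$ checking its initial state at the time step just before it is first acted on (since different rows become active at different clock times), and the propagation terms likewise use the standard three-clock-qubit form, which is what brings the locality to six rather than the four-local variant you wrote in your obstacle discussion.
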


\begin{proof}
    Let $F_{|x|}$ be Arthur's \clsb{MA}{q} verification circuit equipped with a $O(\poly{n})$-bit string $\bsr$ representing the information of the gate sequence.
    Let the input to the circuit be an $N = n + w + m + p$ qubit register comprised of four parts: the input state $\ket{x}$ of $n$ qubits, the proof state $\ket{\chi}$ of $w$ qubits, the \emph{ancilla} register of $m$ qubits initialised to $\ket{0}$ and the \emph{coin} register of $p$ qubits initialised to $\ket{+}$.
    Let $F_{|x|}$ comprise a sequence of $K$ Toffoli gates denoted as $R_K, \dots, R_1$.

    Define a Hamiltonian $H = H_{\text{in}} + H_{\text{out}} + H_{\text{prop}} + H_{\text{clock}}$ acting on a register comprised of $K$ rows of $N$ qubits and $S=(2K-1)N$ clock qubits labelled $c_1, \dots, c_S$.
    There is one clock qubit for each operation in the gate sequence.
    Let $F'_{|x|}$ represent a modified version of $F_{|x|}$ according to \cref{eq:spatially_sparse_gate_sequence}.
    The sequence of gates in $F'_{|x|}$ is denoted as $R'_S,\dots, R'_1$.
    Let the output measured qubit be denoted $q$ where $q=KN$, i.e., the rightmost qubit on the final row.
    Arthur can only measure in the $Z$-basis.
    A given qubit, $l$, is acted on by circuit gates in two intervals: 
    \begin{inparaenum}[(i)] 
        \item by $R_j$ or the Identity gate, and
        \item by the \Gate{Swap}~gate.
    \end{inparaenum}Let $Q_x$ be the set of qubits that contain $\ket{x}$.
    Separate the first row of qubits into three columns respective of the input to the circuit.
    Let the column where the $+$-\emph{ancilla} lie all be initialised to $\ket{+}$, denote this set of $Kp$ qubits as $Q_+$.
    Let the column where the $0$-\emph{ancilla} lie all be initialised to $\ket{0}$ and all other qubits in rows $>1$ for the \emph{proof} and input column be also initialised to $\ket{0}$; this is a set of $Km + (K-1)(n+w)$ qubits denoted as $Q_0$.
    Note that $\abs{Q_x \cup Q_+ \cup Q_0} = n + Kp + Km + (K-1)(n+w) = KN - w$.
    
    Each Hamiltonian term is defined to be a penalising Hamiltonian:
    \begin{align*}
        H_{\text{in}} &= \bigg(\sum_{j=1}^n (I - \ketbra{x_j}) + \sum_{j \in Q_0} \ketbra{1}_j + \sum_{j \in Q_+} \ketbra{-}_j\bigg)\otimes \ketbra{100}_{c_{t_{j}-1},c_{t_j},c_{t_{j}+1}} \\
        H_{\text{out}} &= \ketbra{0}_q \otimes \ketbra{1}_{c_S},\quad H_{\text{clock}} = \sum_{t=1}^{S-1} \ketbra{01}_{c_{t},c_{t+1}},\quad H_{\text{prop}} = \sum_{t=1}^{S} H_{\text{prop}}(t).
    \end{align*}

    The Hamiltonian terms $H_{\text{out}}$ and $H_{\text{clock}}$ are left unchanged from \cref{thrm:4-local-stoquastic-hamiltonian-with-Qp-succinct-ground-states}.
    The term $H_{\text{in}}$ now involves extra clock qubit checks.
    Following the arguments of \refcite{oliveira2008complexity}, the role of $H_{\text{in}}$ is to make sure that the state of the input qubits are appropriately set before the gates act on the qubits.
    The form of the propagation Hamiltonian terms are also unchanged; hence
    \begin{align*}
        H_{\text{prop}}(1) &= \ketbra{00}_{c_1,c_2} + \ketbra{10}_{c_1,c_2} - R'_1\otimes(\ketbra{10}{00}_{c_1,c_2} + \ketbra{00}{10}_{c_1,c_2}), \\
        H_{\text{prop}}(t) &= \ketbra{100}_{c_{t-1},c_t,c_{t+1}} + \ketbra{110}_{c_{t-1},c_t,c_{t+1}} \notag\\
                            &\qquad - R'_t\otimes(\ketbra{110}{100}_{c_{t-1},c_t,c_{t+1}} + \ketbra{100}{110}_{c_{t-1},c_t,c_{t+1}}), \quad 1<t<S\\
        H_{\text{prop}}(S) &= \ketbra{10}_{c_{S-1},c_S} + \ketbra{11}_{c_{S-1},c_S} - R'_S\otimes(\ketbra{11}{10}_{c_{S-1},c_S} + \ketbra{10}{11}_{c_{S-1},c_S}).
    \end{align*}

    Finally, the spatially sparse interaction graph occurs from the snake-like swap construction discussed above.
    The snake-like time arrow over the qubits in the rows represents a string of clock qubits following the gate sequence seen in \cref{eq:spatially_sparse_gate_sequence}.
    Each Hamiltonian term above only acts in a local neighbourhood about each qubit.
    Moreover, each qubit only interacts with a set of qubits in its neighbourhood.
    Therefore, the interaction graph is spatially sparse.
    We know each Hamiltonian term is stoquastic.
    The terms $R'_t\otimes(\dots)$ in $H_{\text{prop}}(t)$ will have off-diagonal elements that are strictly positive.
    Therefore, each $H_{\text{prop}}(t)$ term is stoquastic even if $R'_t = \Gate{Swap}$.

    The history state for this construction is given by
    \begin{equation*}
        \ket{\psi(x,\chi)} = \frac{1}{\sqrt{S+1}}\sum_{t=0}^{S} R'_t\dots R'_0 \ket{x,\chi,0^m,+^p}\ket{1^{t}0^{S-t}}.
    \end{equation*}
    Recall from \cref{rmk:history-state-ground} that the history state is a ground state of $H$ with eigenvalue $0$ if and only if the proof state $\ket{\chi}$ is a valid witness for the \textsc{yes} case.
    Therefore, by \cref{lma:history-state-ex-1}, \cref{rmk:subset-more-general} and \cref{lma:CEHSS-equivalence}, we have that $\ket{\psi(x,\chi)}$ is a $\mathbb{Q}^+_{p(n)}\dbbrckt{\sqrt{\cdot}_1}$-succinct state.

    To conclude: in the \textsc{yes} case, since Arthur's circuit accepts with probability $1$, there exists a proof state such that the Hamiltonian $H$ has ground state energy $0$. 
    In the \textsc{no} case, with Arthur accepting with probability at most $\epsilon$, all eigenvalues are at least $c(1-\epsilon -\sqrt{\epsilon})/S^3$ for some constant $c$ \cite[Lemma 1]{oliveira2008complexity}.
\end{proof}

\begin{corollary}
    The \sc{Succinct State $6$-Local Hamiltonian} problem on spatially sparse graphs is \clw{MA}{complete}.
\end{corollary}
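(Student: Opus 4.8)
The plan is to derive the corollary from the immediately preceding theorem together with the containment machinery already assembled in \cref{sec:LHP-succinct-ground-states}, observing that neither direction requires any new spatially-sparse construction. The two directions decouple cleanly: \clw{MA}{hardness} transfers \emph{downward} from the more constrained (real, stoquastic) problem, while containment transfers from the \emph{unrestricted} complex problem by simply restricting the admissible inputs. The genuine content is therefore bookkeeping — matching the succinct family used on the hardness side with the one handled by the containment protocol — rather than any fresh construction.

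For \clw{MA}{hardness}, I would argue that every hard instance produced by the preceding theorem — a real, stoquastic, $6$-local Hamiltonian on a spatially sparse interaction graph whose ground state is $\mathbb{Q}^+_{p(n)}\dbbrckt{\sqrt{\cdot}_1}$-succinct — is, verbatim, a legal instance of the \sc{Succinct State $6$-Local Hamiltonian} problem on spatially sparse graphs. Indeed, a real stoquastic Hamiltonian is a special case of a general (complex) Hamiltonian, the locality and the interaction graph are unchanged, and by the set inclusion $\mathbb{Q}^+_{p(n)}\dbbrckt{\sqrt{\cdot}_1} \subseteq \mathbb{C}_{p(n)}\dbbrckt{\sqrt{\cdot}}$ (the chain $\mathbb{N}_p \subset \mathbb{Q}^+_p \subset \mathbb{C}_p$ with the $\sqrt{\cdot}$-flag adjoined on both sides) the promised ground state remains an admissible succinct witness. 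Since the preceding theorem already exhibits a Karp reduction from an arbitrary \cl{MA} language into this restricted class, that same reduction witnesses \clw{MA}{hardness} of the general problem on spatially sparse graphs.

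For containment in \cl{MA}, I would invoke the chain established for the non-sparse setting: the complex-to-real reduction of \cref{thrm:real-complex-ham-reduction}, the real-to-stoquastic fixed-node map of \cref{lma:real-stoq-ham-reduction}, and the resulting protocol of \refcite{Jiang25} placing the $\mathbb{C}_{p(n)}$-Succinct State $k$-Local Hamiltonian problem in \cl{MA} for all $k\geq 2$, robustly extended to the $\sqrt{\cdot}$-flagged family by the claim that this protocol tolerates $\mathbb{C}_{p(n)}\dbbrckt{\sqrt{\cdot}}$-succinct states. Because the spatially-sparse version is obtained from the general complex problem by \emph{restricting} the set of admissible instances, membership in \cl{MA} is inherited immediately: Arthur runs the unmodified verifier, which already decides all complex $6$-local instances correctly, sparse or not, with query access to the Hamiltonian retained by \cref{prop:real-complex-hamiltonian-query} and its fixed-node analogue.

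The one point that demands care — and the one I would flag explicitly as the apparent obstacle — is that the reduction of \cref{thrm:real-complex-ham-reduction} adjoins a single ancillary qubit coupled through $-\mathrm{i}Y$ to every site carrying an imaginary Hamiltonian term, so the real Hamiltonian $\hat{H}$ is in general \emph{not} spatially sparse even when $H$ is. This is harmless here: spatial sparsity is a promise on the \emph{input} instance, not on the intermediate objects Arthur constructs internally, so the verifier is free to work with a non-sparse $\hat{H}$ and fixed-node $F$ while the stated hardness instances remain genuinely spatially sparse. Hence the obstacle is only apparent, and once the succinct-family inclusion on the hardness side is confirmed to lie inside the family handled by the containment protocol, \clw{MA}{completeness} follows.
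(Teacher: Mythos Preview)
Your proposal is correct and matches the paper's intended argument: the corollary is stated without proof precisely because hardness is inherited verbatim from the preceding spatially-sparse stoquastic theorem (a more constrained class of instances), while containment follows from the general \cl{MA} protocol of \cref{sec:LHP-succinct-ground-states}, which is indifferent to any spatial-sparsity promise on the input. Your remark that the complex-to-real and fixed-node maps need not preserve sparsity --- and that this is irrelevant because sparsity is an input promise, not a constraint on Arthur's internal computation --- is a point the paper leaves implicit but which you have correctly identified and resolved.
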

\section{Proof of 2-local Hamiltonian Construction}\label{app:2local}
In this appendix we give the details of the proof of \cref{thrm:main-result-2l}.
The first part of the proof is to show that any STEC can be converted into a regular interval structured Toffoli-equivalent circuit, which is a circuit over the gate set $\mathcal{R} = \{\Gate{C}Z, \Gate{Had}, T, Z\}$ in which every $\Gate{C}Z$ gate is padded by two $Z$ gates on each side and consecutive $\Gate{C}Z$ gates are separated by exactly $\ell$ single-qubit gates for some fixed $\ell$.
We obtain \cref{cor:RIStMA} as a consequence of this conversion; notably, since the proof in \clsb{MA}{q} is a computational basis state, the proof state of the resulting \clsb{RIStMA}{q} protocol is also a computational basis state, and since the unitaries are identical, the completeness and soundness parameters are preserved.
The second part of the proof is to construct a $2$-local Hamiltonian from a regular interval structured Toffoli-equivalent circuit and show that its ground state is the history state of the circuit, which is a succinct state with amplitudes over $\mathbb{C}_{p(n)}\dbbrckt{\sqrt{\cdot}}$.
This uses repeated applications of the nullspace projection lemma to find the form of the low energy subspace of the Hamiltonian and verify the completeness and soundness of the reduction.
Our approach follows the same structure as that of Ref.~\cite{kempe2006complexity}, but adapted for the present context and with some simplifications.

\subsection{Regular Interval Structured Toffoli-Equivalent Circuits}\label{app:regular-interval-structured-circuits}
We now prove \cref{prop:gate_set_conversion} and \cref{prop:regular_interval_structure}

\propositionGateSetConversion*

\begin{proof}
    It suffices to simulate each gate in $\mathcal{G}$ using gates from $\mathcal{R}$.
    The $T$ gate is already in $\mathcal{R}$.
    A \Gate{Cnot} gate can be implemented using a $\Gate{C}Z$ gate and two \Gate{Had} gates:
    \begin{equation*}
        \Gate{Cnot}[a; b] = \Gate{Had}[b]\, \Gate{C}Z[a, b]\, \Gate{Had}[b].
    \end{equation*}
    Substituting this decomposition into the standard 15-gate Toffoli decomposition over $\mathcal{G}$ gives a decomposition of each \Gate{Toffoli} into $27$ gates from $\mathcal{R}$, so $C'$ has at most $27K$ gates.
\end{proof}

\propositionRegularIntervalStructure*

\begin{proof}
We proceed in two steps: first enforcing regular intervals between $\Gate{C}Z$ gates, then padding each $\Gate{C}Z$ gate with $Z$ gates.

\medskip
\noindent\textbf{Step 1: Enforcing regular intervals.}
We analyse the gate counts between consecutive $\Gate{C}Z$ gates in $C'$.
Within the decomposition of a single \Gate{Toffoli} gate into $27$ gates over $\mathcal{R}$, the $6$ $\Gate{C}Z$ gates are separated as follows: $2$ gates before the first $\Gate{C}Z$, then $3$, $3$, $3$, $5$, $4$ gates between successive $\Gate{C}Z$ gates, and $1$ gate after the last $\Gate{C}Z$.
At the boundary between consecutive \Gate{Toffoli} gates, there are $3$ gates between the last $\Gate{C}Z$ of gate $i$ and the first $\Gate{C}Z$ of gate $i+1$.
The first \Gate{Toffoli} gate is preceded by $2$ gates and the last by $1$ gate.

We wish to pad each inter-$\Gate{C}Z$ gap to a uniform length of $5$ gates, which we achieve by inserting $Z$ gates on a fresh \emph{buffer} qubit initialised to $\ket{0}$.
Since the buffer qubit is not entangled with the active qubits of the circuit, these insertions do not affect the output.
The total number of buffer-$Z$ gates required is $9K + 5$, which is even when $K$ is odd.
Without loss of generality we may assume $K$ is odd: if not, append a single trivially-acting \Gate{Toffoli} gate on a fresh ancilla register, which does not change the output of the circuit.
Thus the number of buffer-$Z$ gates is even, which will be needed to preserve the regular structure after Step 2.

\medskip
\noindent\textbf{Step 2: Padding $\Gate{C}Z$ gates with $Z$ gates.}
We now insert two pairs of $Z$ gates immediately to the right of each $\Gate{C}Z$ gate, acting on its two active qubits.
Since $Z$ commutes with $\Gate{C}Z$, we may slide one $Z$ from each pair to the left of the $\Gate{C}Z$ gate (cf.\ \cref{fig:CZ-padding}), so that each $\Gate{C}Z$ gate ends up flanked by two $Z$ gates on each side.
This padding consumes $4$ of the $5$ gap slots on each side of a $\Gate{C}Z$ gate, extending the inter-$\Gate{C}Z$ gap from $5$ to $9$ gates.
The two ends of the circuit are handled by inserting $2$ additional buffer-$Z$ gates before the first $\Gate{C}Z$ gate and $2$ after the last, ensuring the boundary intervals are also $9$ gates.
These $4$ additional buffer-$Z$ gates are even in number, so the parity constraint from Step 1 is maintained.

The resulting circuit $C''$ has every $\Gate{C}Z$ gate occurring at positions $\{t, t+9, t+18, \ldots\}$ for some starting offset $t$, with each flanked by exactly two $Z$ gates on each side, giving $\ell = 9$.
\end{proof}

\subsection{2-Local Hamiltonian Construction}\label{app:2-local-hamiltonian-construction}

We structure the proof of \cref{thrm:main-result-2l} slightly differently from the more self-contained proof of the $3$-local construction.
The reason is to define the Hamiltonian we construct before analysing its low energy subspace thus allowing the focus on the proof to be on the latter part.
The proof and setup of the main theorem follows the same structure as that of Ref.~\cite{kempe2006complexity}, but adapted for the present context and with some simplifications.

Let $C_{|x|}$ be Arthur's \emph{regular interval structured Toffoli-equivalent} verification circuit for the input $x$.
Take the input to the circuit to be an $M = n + w + m + p$ qubit register comprised of four components: the input state $\ket{x}$ of $n$ qubits, the proof state $\ket{\chi}$ of $w$ qubits, the ancilla state $\ket{0^m}$ of $m$ qubits and the coin state $\ket{+^p}$ of $p$ qubits.
Suppose that the circuit $C_{|x|}$ has $K$ gates, where $K = \poly{n}$ (and is odd for convenience) and $w,m,p = O(\poly{n})$.
Assume that each $\Gate{C}Z$ gate occurs at the time intervals $t \in \{\ell, 2\ell, \ldots, P\ell\}$, where $\ell$ is the size of the single qubit gate interval and $P$ is the number of $\Gate{C}Z$ gates.
Let the interval $T_1 = \{1, 2, \ldots, K\} \setminus \{\ell, 2\ell, \ldots, P\ell\}$ be the time intervals for which we have single qubit gates.
For the $k$-th $\Gate{C}Z$ gate, let the control and target qubits be $a_k$ and $b_k$ respectively.

We construct a $2$-local Hamiltonian $H$ on $M + K$ qubits, where $M = n + w + m + p$ is the number of data qubits in the original circuit, and $K$ is the number of gates in the original circuit.
The first $M$ qubits are the \emph{workspace} qubits, and the last $K$ qubits are the \emph{clock} qubits.
We encode the clock register in unary, and thus the legal clock states are of the form $\ket{1^t 0^{K-t}} \eqqcolon \ket{\bsu{t}}$ for $t \in \{0, 1, \ldots, K\}$.

Consider the Hamiltonian $H = J_{\text{in}} H_{\text{in}} + J_{\text{out}} H_{\text{out}} + J_{\text{clock}} H_{\text{clock}} + J_{\text{prop}}^{(1)} H_{\text{prop}}^{(1)} + J_{\text{prop}}^{(2)} H_{\text{prop}}^{(2)}$ defined as follows:
\begin{align}
    H_{\text{in}} &= \bigg(\sum_{j=1}^n \ketbra{\bar{x}_j} + \sum_{j=1}^{m} \ketbra{1}_{\emph{anc},j} + \sum_{i=1}^{p} \ketbra{-}_{\emph{coin},i}\bigg)\ketbra{0}_{c_1}, \notag\\
    H_{\text{out}} &= \ketbra{0}_q  \ketbra{1}_{c_T},  \notag\\
    H_{\text{clock}} &= \sum_{1\leq i < j \leq K} \ketbra{01}_{c_i,c_j}, \notag\\
    H_{\text{prop}}^{(1)} &= \frac{1}{2}\sum_{t \in T_1} H_{\text{prop}}^{(1)}(t),  \label{eq:H_prop_1}\\
    H_{\text{prop}}^{(2)} &= \sum_{t \in [P]} \frac{1}{2}H_{\text{qb}}^{(2)}(t\ell) + \frac{1}{8}H_{\text{time}}^{(2)}(t\ell); \notag
\end{align}
the interaction strength $J_{\text{out}} = K+1$, and the remaining interaction strengths are chosen later.
For brevity we denote the workspace component of $H_{\text{in}}$ as $O_{\text{in}}$.

The single-qubit propagation terms are defined as:
\begin{align*}
    H_{\text{prop}}^{(1)}(1) &= \ketbra{10}_{c_1,c_2} + \ketbra{0}_{c_1} - U_1 \ketbra{1}{0}_{c_1} - U_1^\dagger \ketbra{0}{1}_{c_1}, \\
    H_{\text{prop}}^{(1)}(t) &= \ketbra{10}_{c_{t},c_{t+1}} + \ketbra{10}_{c_{t-1},c_{t}} - U_t \ketbra{1}{0}_{c_t} - U_t^\dagger \ketbra{0}{1}_{c_t} \text{ for } t \in T_1 \cap \{2, \ldots, K-1\}, \\
    H_{\text{prop}}^{(1)}(K) &= \ketbra{1}_{c_{K}} + \ketbra{10}_{c_{K-1},c_{K}} - U_K \ketbra{1}{0}_{c_K} - U_K^\dagger \ketbra{0}{1}_{c_K},
\end{align*}
and the two-qubit propagation terms are defined as:
\begin{align*}
    H_{\text{qb}}^{(2)}(t) &= (-2\ketbra{0}_{a_k} - 2\ketbra{0}_{b_k} + \ketbra{1}_{a_k} + \ketbra{1}_{b_k} )(\ketbra{1}{0}_t + \ketbra{0}{1}_t), \\
    H_{\text{time}}^{(2)}(t) &= \ketbra{10}_{c_t,c_{t+1}} + 6 \ketbra{10}_{c_{t+1},c_{t+2}} + \ketbra{10}_{c_{t+2},c_{t+3}} 
        + 2 \ketbra{11}{00}_{c_{t+1},c_{t+2}} + 2 \ketbra{00}{11}_{c_{t+1},c_{t+2}} \\
        &\quad+ \ketbra{1}{0}_{c_{t+1}} + \ketbra{0}{1}_{c_{t+1}} + \ketbra{1}{0}_{c_{t+2}} + \ketbra{0}{1}_{c_{t+2}} \\
        &\quad+ \ketbra{10}_{c_{t-3},c_{t-2}} + 6 \ketbra{10}_{c_{t-2},c_{t-1}} + \ketbra{10}_{c_{t-1},c_{t}} 
        + 2 \ketbra{11}{00}_{c_{t-2},c_{t-1}} + 2 \ketbra{00}{11}_{c_{t-2},c_{t-1}} \\
        &\quad+ \ketbra{1}{0}_{c_{t-2}} + \ketbra{0}{1}_{c_{t-2}} + \ketbra{1}{0}_{c_{t-1}} + \ketbra{0}{1}_{c_{t-1}}.
\end{align*}
It is clear that $H$ is a $2$-local Hamiltonian.

We now proceed to repeatedly apply the nullspace projection lemma to analyse the low energy subspace of $H$ and show that it is spanned by the history state 
\begin{equation*}
    \ket{\eta(x,\chi)} \coloneqq \frac{1}{\sqrt{K+1}} \sum_{t=0}^K \ket{\varphi_t}\ket{\bsu{t}},
\end{equation*}
where $\ket{\varphi_t} = U_t U_{t-1} \cdots U_1 \ket{\varphi_0}$ for $t \in \{0, 1, \ldots, K\}$ and $\ket{\varphi_0} = \ket{x, \chi, 0^m,+^p}$ is the input state of the original circuit.
Recall that the proof state $\ket{\chi}$ is a computational basis state.

\begin{lemma}[Nullspace Projection Lemma~\cite{alicki2009thermalization}]\label{lem:nullspace_projection}
    Let $A$ and $B$ be positive semidefinite operators such that $A$ has a non-empty nullspace $S$ and 
    $\lambda(B \bigr|_S) \geq \alpha > 0$, and $\lambda(A) \geq \beta > 0$,
    where $\lambda(\cdot)$ denotes the smallest non-zero eigenvalue of the given operator.
    Then,
    \begin{equation*}
        \lambda(A + B) \geq \frac{\alpha\beta}{\beta + \norm{B}}.
    \end{equation*}
\end{lemma}

\begin{corollary}\label{cor:nullspace_projection}
    Let $A$ and $B$ be positive semidefinite operators such that $A$ has a non-empty nullspace $S$ and 
    $\lambda(B \bigr|_S) \geq \alpha > 0$, and $\lambda(A) \geq v_0 \norm{B}^2$ for some constant $v_0 > 0$,
    where $\lambda(\cdot)$ denotes the smallest non-zero eigenvalue of the given operator.
    Then,
    \begin{equation*}
        \lambda(B \bigr|_S) - \frac{1}{v_0} \leq \lambda(A + B) \leq \lambda(B \bigr|_S).
    \end{equation*}
\end{corollary}

Intuitively, the nullspace projection lemma states that if $A$ has a non-empty nullspace and $B$ is sufficiently large on the nullspace of $A$, then the low energy subspace of $A+B$ is close to the nullspace of $A$ and the eigenvalues of $A+B$ are close to the eigenvalues of $B$ restricted to the nullspace of $A$.
In the present context, the repeated use allows us to find the form of the ground space of $H$ by analysing the nullspace of collections of terms in $H$.
Recall that we have equivalency between the original circuit and the regular interval structured Toffoli-equivalent verification circuit, which allows us to obtain perfect completeness.

\cthrm*

\begin{proof}
    We prove the result by first applying \cref{cor:nullspace_projection} four times.
    At each stage we define the operators $A$ and $B$ and the nullspace $S$ of $A$ and verify the conditions of the lemma.
    These checks allow us to define each of the interaction strengths $J_{\text{in}}$, $J_{\text{clock}}$, $J_{\text{prop}}^{(1)}$ and $J_{\text{prop}}^{(2)}$ in terms of $n$; we do not find explicit values for these interaction strengths and instead use asymptotic bounds.
    After the four applications of the lemma, we conclude by remarking that the ground state of $H$ is the history state $\ket{\eta}$ which is a succinct state with amplitudes over $\mathbb{C}_{p(n)}\dbbrckt{\sqrt{\cdot}}$.
    Note that the subspace computations are not part of the reduction and are only used to show that the ground state of $H$ is the history state $\ket{\eta}$ and verify the completeness and soundness of the reduction. % - is this true?
    The remaining components of the proof are polynomial-time computable and thus the reduction is efficient.

    Recall that the verification circuit $C_{|x|}$ is a regular interval structured Toffoli-equivalent circuit with $K$ total gates, where the $k$-th $\Gate{C}Z$ gate occurs at time interval $t = k\ell$ for $k \in \{1, 2, \ldots, P\}$ and the single qubit gates occur at time intervals in $T_1 = \{1, 2, \ldots, K\} \setminus \{\ell, 2\ell, \ldots, P\ell\}$.
    We denoted the total number of qubits in the original circuit as $M = n + w + m + p$.

    \medskip
    \noindent\textbf{Step 1: Restriction to the legal clock subspace.}
    Define the following operators:
    \begin{align*}
        A_1 &\coloneqq J_{\text{clock}} H_{\text{clock}}, &
        B_1 &\coloneqq J_{\text{in}} H_{\text{in}} + J_{\text{out}} H_{\text{out}} + J_{\text{prop}}^{(1)} H_{\text{prop}}^{(1)} + J_{\text{prop}}^{(2)} H_{\text{prop}}^{(2)}.
    \end{align*}
    The nullspace of $A_1$ is the legal clock subspace $S_{1} = \text{span}\{\ket{\psi}\ket{\bsu{t}} : \ket{\psi} \in (\mathbb{C}^2)^{\otimes M}, t \in \{0, 1, \ldots, K\}\}$.
    \cref{cor:nullspace_projection} applies if $J_{\text{clock}} = O(\poly{n})$.
    We turn our attention to the form of the operator $B_1$ restricted to the legal clock subspace $S_{1}$.
    The projection we apply is $\Pi_{1} = \sum_{t=0}^K \ketbra{\bsu{t}}$, and we have
    \begin{align*}
        H_{\text{in}} \bigr|_{S_{1}} &= O_{\text{in}}\ketbra{\bsu{0}}, \\
        H_{\text{out}} \bigr|_{S_{1}} &= \ketbra{0}_1\ketbra{\bsu{K}}, \\
        H_{\text{prop}}^{(1)} (t) \bigr|_{S_{1}} &= \frac{1}{2}\bigg(\ketbra{\bsu{t}} + \ketbra{\bsu{t-1}} - U_t \ketbra{\bsu{t}}{\bsu{t-1}} - U_t^\dagger \ketbra{\bsu{t-1}}{\bsu{t}}\bigg) \\
        H_{\text{qb}}^{(2)}(t) \bigr|_{S_{1}} &= (-2\ketbra{0}_{a_k} - 2\ketbra{0}_{b_k} + \ketbra{1}_{a_k} + \ketbra{1}_{b_k} )(\ketbra{\bsu{t}}{\bsu{t-1}} + \ketbra{\bsu{t-1}}{\bsu{t}}), \\
        H_{\text{time}}^{(2)}(t) \bigr|_{S_{1}} &= 2 E_+(\bsu{t},\bsu{t+1}) + 2 E_+(\bsu{t+1},\bsu{t+2}) + 2 E_+(\bsu{t-3},\bsu{t-2}) + 2 E_+(\bsu{t-2},\bsu{t-1}) \\
        &\quad+ E_-(\bsu{t},\bsu{t+1}) + E_-(\bsu{t+1},\bsu{t+2}) + E_-(\bsu{t-3},\bsu{t-2}) + E_-(\bsu{t-2},\bsu{t-1}) \\
        &\quad- 2E_-(\bsu{t},\bsu{t+2}) - 2E_-(\bsu{t-3},\bsu{t-1}),
    \end{align*}
    where for states $\ket{\psi}$ and $\ket{\phi}$ we defined: $E_{\pm}(\psi, \phi) = \left( \ket{\psi} \pm \ket{\phi} \right)\left( \bra{\psi} \pm \bra{\phi} \right)$.

    \medskip
    \noindent\textbf{Step 2: Restriction to the subspace of valid single-qubit gate propagation.}
    Define the following operators:
    \begin{align*}
        A_2 &\coloneqq J_{\text{prop}}^{(1)} H_{\text{prop}}^{(1)}\bigr|_{S_{1}}, & 
        B_2 &\coloneqq \left( J_{\text{in}} H_{\text{in}} + J_{\text{out}} H_{\text{out}} + J_{\text{prop}}^{(2)} H_{\text{prop}}^{(2)}\right)\bigr|_{S_{1}}.
    \end{align*}
    The nullspace of $A_2$ is the subspace $S_{2} = \text{span}\{\ket{\psi_k(z)} = \frac{1}{\sqrt{\ell}} \sum_{t = k\ell}^{(k+1)\ell - 1} \ket{\theta_t(z)}\ket{\bsu{t}} : k \in \{0, 1, \ldots, P\}, z \in \B^M\}$.
    Here $\ket{\theta_0(z)} = \ket{z}$ and $\ket{\theta_t(z)} = U_t \ket{\theta_{t-1}(z)}$ for $t \in \{1, 2, \ldots, K\}$.
    It can be shown that $A_2$ decomposes into a direct sum of operators $A_{2,k}$ for $k \in \{0, 1, \ldots, P\}$, where each $A_{2,k}$ acts on the subspace $\text{span}\{\ket{\psi_k(z)} : z \in \B^M\}$ and can be expressed in the form:
    \begin{align*}
        A_{2,k} &= \frac{1}{2} \sum_{t} H_{\text{prop}}(t), & H_{\text{prop}}(t) &= \ketbra{\bsu{t}} + \ketbra{\bsu{t-1}} - U_t \ketbra{\bsu{t}}{\bsu{t-1}} - U_t^\dagger \ketbra{\bsu{t-1}}{\bsu{t}}.
    \end{align*}

    It is well-known that the spectral gap of $A_{2,k}$ is $\Omega(K^{-2})$ and thus $\lambda(A_2 \bigr|_{S_{2}}) = \Omega(J_{\text{prop}}^{(1)}K^{-2})$.
    Using \cref{cor:nullspace_projection} requires $J_{\text{prop}}^{(1)} = \Omega(\poly{n})$.
    We now turn our attention to the form of the operator $B_2$ restricted to the nullspace of $A_2$.
    The projection we apply is $\Pi_{2} = \sum_{k=0}^P \sum_{z \in \B^M} \ketbra{\psi_k(z)}$.

    When inside the subspace $S_{2}$, the propagation terms $H_{\text{prop}}^{(2)}$ dominate a Laplacian-like operator that enforces the correct propagation of the $\Gate{C}Z$ gates.

    \begin{restatable}{proposition}{tildeH}\label{prop:tilde_H}
        The self-adjoint operator $\tilde{H}$ defined as $\tilde{H} = \frac{1}{2\ell} \sum_{z \in \B^M} \sum_{k=1}^P E_-(\psi_{k-1}(z), \psi_{k}(z))$ satisfies $\tilde{H} \preceq H_{\textnormal{prop}}^{(2)}\bigr|_{S_{2}}$.
    \end{restatable}

    The proof of the proposition is provided in Appendix~\ref{app:tilde_H}.
    Since $\tilde{H} \preceq H_{\text{prop}}^{(2)}\bigr|_{S_{2}}$, we can proceed by analysing the operator $\tilde{H}$ instead of $H_{\text{prop}}^{(2)}\bigr|_{S_{2}}$ to lower bound the smallest non-zero eigenvalues.
    That is, a lower bound on the smallest non-zero eigenvalue using $\tilde{H}$ also applies to $H_{\text{prop}}^{(2)}\bigr|_{S_{2}}$.

    \medskip
    \noindent\textbf{Step 3: Restriction to the subspace of valid gate propagation.}
    Define the following operators:
    \begin{align*}
        A_3 &\coloneqq J_{\text{prop}}^{(2)} \tilde{H}, &
        B_3 &\coloneqq \left( J_{\text{in}} H_{\text{in}} + J_{\text{out}} H_{\text{out}} \right)\bigr|_{S_{2}}.
    \end{align*}
    The nullspace of $A_3$ is the subspace $S_{3} = \text{span}\{\ket{\eta(z)} = \frac{1}{\sqrt{K+1}} \sum_{t=0}^K \ket{\theta_t(z)}\ket{\bsu{t}} : z \in \B^M\}$.
    It can be shown that $A_3$ decomposes into a direct sum of operators $A_{3,z}$ for $z \in \B^M$, where each $A_{3,z}$ acts on the subspace $\text{span}\{\ket{\eta(z)}\}$ and is given by $\frac{1}{2\ell} \sum_{k=1}^P E_-(\eta_{k-1}(z), \eta_{k}(z))$.
    It is well-known that the spectral gap of $A_{3,z}$ is $\Omega(K^{-2})$ and thus $\lambda(A_3 \bigr|_{S_{3}}) = \Omega(J_{\text{prop}}^{(2)}K^{-2})$.
    Using \cref{cor:nullspace_projection} requires $J_{\text{prop}}^{(2)} = \Omega(\poly{n})$.
    We now turn our attention to the form of the operator $B_3$ restricted to the nullspace of $A_3$.
    The projection we apply is $\Pi_{3} = \sum_{z \in \B^M} \ketbra{\eta(z)}$.
    
    We find that
    \begin{equation*}
        B_3 \bigr|_{S_{3}} = \frac{1}{K+1} \sum_{z,z' \in \B^M} M_{z,z'} \ketbra{\eta(z')}{\eta(z)},
    \end{equation*}
    where $M_{z,z'} = J_{\text{in}} \mel{z}{O_{\text{in}}}{z'} + J_{\text{out}} \mel{z}{C_{|x|}^\dagger \ketbra{0}_1 C_{|x|}}{z'}$.

    \medskip
    \noindent\textbf{Step 4: Restriction to the subspace of valid inputs.}

    Define the following operators:
    \begin{align*}
        A_4 &\coloneqq J_{\text{in}} H_{\text{in}}\bigr|_{S_{3}}, \\
        B_4 &\coloneqq J_{\text{out}} H_{\text{out}}\bigr|_{S_{3}}.
    \end{align*}
    The nullspace of $A_4$ is the subspace $S_{4} = \text{span}\{\ket{\eta(x,\chi)} : x \in \B^n, \chi \in \B^w\}$, where 
    \begin{equation*}
        \ket{\eta(x,\chi)} = \frac{1}{\sqrt{K+1}} \sum_{t=0}^K \ket{\varphi_t(x,\chi)}\ket{\bsu{t}},
    \end{equation*}
    and $\ket{\varphi_t(x,\chi)} = U_t U_{t-1} \cdots U_1 \ket{x}\ket{\chi}\ket{0^m}\ket{+^p}$ for $t \in \{0, 1, \ldots, K\}$.
    It is clear that $\lambda(A_4 \bigr|_{S_{4}}) = J_{\text{in}}/(K+1)$.
    Using \cref{cor:nullspace_projection} requires $J_{\text{in}} = \Omega(\poly{n})$.
    We now turn our attention to the form of the operator $B_4$ restricted to the nullspace of $A_4$.
    The projection we apply is $\Pi_{4} = \sum_{z \in \B^M : O_{\text{in}}\ket{z} = 0} \ketbra{\eta(z)}$.
    Since the instance $x$ is fixed, we find that
    \begin{equation*}
        B_4 \bigr|_{S_{4}} = \sum_{x \in \B^n} \sum_{\chi, \chi' \in \B^w} \mel{x,\chi,0^m,+^p}{C_{|x|}^\dagger \ketbra{0}_1 C_{|x|}}{x,\chi',0^m,+^p} \ketbra{\eta(x,\chi')}{\eta(x,\chi)}.
    \end{equation*}

    \medskip
    \noindent\textbf{Step 5: Completeness and soundness.}

    If the circuit $C_{|x|}$ accepts the input $\ket{x,\chi,0^m,+^p}$ with probability $1$, then 
    \begin{equation*}
        \mel{x,\chi,0^m,+^p}{C_{|x|}^\dagger \ketbra{0}_1 C_{|x|}}{x,\chi,0^m,+^p} = 0,
    \end{equation*}
    and therefore the ground state of the Hamiltonian $H$ is the history state $\ket{\eta(x,\chi)}$ with energy $0$.
    Note that this follows from the fact that all Hamiltonian terms are positive semidefinite and the history state $\ket{\eta(x,\chi)}$ is in the nullspace of all terms.

    If the circuit $C_{|x|}$ accepts the input $\ket{x,\chi,0^m,+^p}$ with probability at most $\epsilon$ on all proofs $\ket{\chi}$, then for every normalised state $\ket{\psi} = \sum_{\chi \in \B^w} \alpha_\chi \ket{\eta(x,\chi)}$ in the nullspace of $A_4$ we have
    \begin{equation*}
        \mel{\psi}{B_4}{\psi} \geq  (1-\epsilon).
    \end{equation*}
    Accumulating additive errors from the four applications of \cref{cor:nullspace_projection} gives us that the ground state energy of $H$ is at least $(1 - \epsilon) - \frac{4}{v_0} = \frac{1}{2} - \epsilon$ when $v_0 = 8$.

    \medskip
    \noindent\textbf{Step 6: Ground state is a succinct state.}

    The ground state of $H$ is the history state $\ket{\eta(x,\chi)}$ which is a succinct state with amplitudes over $\mathbb{C}_{p(n)}\dbbrckt{\sqrt{\cdot}}$.
    Adaptions to \cref{cor:T-T-dagger-gate-seq-subset state}, via the equivalency $Z = T^4$, and \cref{cor:CRG-T-T-dagger-Hadamard-gate-seq-subset state}, via the gate set conversion of \cref{prop:gate_set_conversion}, are sufficient to prove that there exists efficient classical algorithms to compute the amplitudes of the history state at any time step $t$.
\end{proof}

\subsection{Proof of Proposition~\ref{prop:tilde_H}}\label{app:tilde_H}

\tildeH*

We propose that $H_{\text{prop}}^{(2)}\bigr|_{S_{2}} = \tilde{H} + H'$, where $H'$ is a positive semidefinite operator.
Note that $H_{\text{prop}}^{(2)}$ and the components $H_{\text{qb}}^{(2)}$ and $H_{\text{time}}^{(2)}$ are all positive semidefinite operators; projecting onto the subspace $S_{2}$ preserves this property and thus $H_{\text{prop}}^{(2)}\bigr|_{S_{2}}$ is positive semidefinite.

\subsubsection{Subspace Reduction for Propagation Terms}

We begin with a single lemma that captures the common structure of all operator equivalences 
on $S_2$ in this section. Its proof is a direct calculation from the definition of $\ket{\psi_k(z)}$.

\begin{lemma}\label{lem:clock-sandwich}
    For any operator $O$ on the qubit register and clock states $\bsu{s}, \bsu{t}$ 
    within the $k$-th gate block,
    \begin{equation*}
        \bra{\psi_q(z')} O \left(\ketbra{\bsu{s}}{\bsu{t}} \pm \ketbra{\bsu{t}}{\bsu{s}}\right) \ket{\psi_k(z)}
        = \frac{1}{\ell}\,\delta_{q,k} \left(
            \mel{\theta_{s}(z')}{O}{\theta_{t}(z)} \pm \mel{\theta_{t}(z')}{O}{\theta_{s}(z)}
        \right).
    \end{equation*}
    In particular, setting $t = s$ gives $\mel{\psi_q(z')}{O\,\ketbra{\bsu{t}}}{\psi_k(z)} = \frac{1}{\ell}\,\delta_{q,k}\mel{\theta_t(z')}{O}{\theta_t(z)}$.
\end{lemma}

\begin{proof}
    Since $\ket{\psi_k(z)} = \frac{1}{\sqrt{\ell}}\sum_r \ket{\theta_r(z)}\ket{\bsu{r}}$, 
    acting with $\ketbra{\bsu{t}}$ selects the $r = t$ term. The result follows immediately 
    by expanding the bra and ket.
\end{proof}

Using \cref{lem:clock-sandwich}, all the $E_\pm$ equivalences on $S_2$ reduce to computing
simple expressions of the form $O\ket{\theta_t(z)} = O \cdot U_{t}\cdots U_1\ket{z}$, 
where $U_t = \Gate{C}Z_{a_k b_k}$ at each gate step. The key identities needed are:
\begin{align}
    \ket{\theta_t(z)} + \ket{\theta_{t+1}(z)} 
        &= 2\ketbra{0}_{a_k}\ket{\theta_t(z)}, \label{eq:id1}\\
    \ket{\theta_t(z)} - \ket{\theta_{t+1}(z)} 
        &= 2\ketbra{1}_{a_k}\ket{\theta_t(z)}, \label{eq:id2}\\
    \ket{\theta_{t+1}(z)} + \ket{\theta_{t+2}(z)} 
        &= 2\ketbra{0}_{b_k}\ket{\theta_{t+1}(z)}, \label{eq:id3}\\
    \ket{\theta_{t+1}(z)} - \ket{\theta_{t+2}(z)} 
        &= 2\ketbra{1}_{b_k}\ket{\theta_{t+1}(z)}, \label{eq:id4}\\
    \ket{\theta_t(z)} - \ket{\theta_{t+2}(z)} 
        &= 2\left(\ketbra{01}_{a_kb_k}+\ketbra{10}_{a_kb_k}\right)\ket{\theta_t(z)}, \label{eq:id5}
\end{align}
which all follow from operations such as $I + Z = 2\ketbra{0}$ and $I - Z = 2\ketbra{1}$.

\begin{corollary}[Operator equivalences on $S_2$]\label{cor:equiv-table}
    The following equivalences hold on $S_2$ (i.e., as matrix elements between states 
    $\ket{\psi_q(z')}$ and $\ket{\psi_k(z)}$):
    \begin{align}
        E_{+}(\bsu{t},\bsu{t+1}) &\sim 4\ketbra{0}_{a_k}\ketbra{\bsu{t}}, 
            \label{eq:e1}\\
        E_{-}(\bsu{t},\bsu{t+1}) &\sim 4\ketbra{1}_{a_k}\ketbra{\bsu{t}},
            \label{eq:e2}\\
        E_{+}(\bsu{t+1},\bsu{t+2}) &\sim 4\ketbra{0}_{b_k}\ketbra{\bsu{t}}, 
            \label{eq:e3}\\
        E_{-}(\bsu{t+1},\bsu{t+2}) &\sim 4\ketbra{1}_{b_k}\ketbra{\bsu{t}},
            \label{eq:e4}\\
        E_{-}(\bsu{t},\bsu{t+2}) &\sim 4\left(\ketbra{01}_{a_kb_k}+\ketbra{10}_{a_kb_k}\right)\ketbra{\bsu{t}},
            \label{eq:e5}
    \end{align}
    where $\sim$ denotes equality of matrix elements on $S_2$, and $t = k\ell$ throughout. 
    The five analogous identities with the reference point shifted to $t-1$ hold by the same argument.
\end{corollary}

\begin{proof}
    Each identity follows from \cref{lem:clock-sandwich} and the corresponding identity 
    \cref{eq:id1}--\cref{eq:id5}.
    We show \cref{eq:e1,eq:e5} explicitly; \cref{eq:e2,eq:e3,eq:e4} are identical in structure.
    
    \textbf{\cref{eq:e1}:} We have $E_+(\bsu{t},\bsu{t+1}) = \frac{1}{4}(\ket{\bsu{t}}+\ket{\bsu{t+1}})(\bra{\bsu{t}}+\bra{\bsu{t+1}})$, so by \cref{lem:clock-sandwich},
    \begin{equation*}
        \mel{\psi_q(z')}{E_+(\bsu{t},\bsu{t+1})}{\psi_k(z)}
        = \frac{1}{4\ell}\,\delta_{q,k}\left(\bra{\theta_t(z')}+\bra{\theta_{t+1}(z')}\right)\left(\ket{\theta_t(z)}+\ket{\theta_{t+1}(z)}\right).
    \end{equation*}
    Applying \cref{eq:id1} to both bra and ket yields $4\mel{\theta_t(z')}{\ketbra{0}_{a_k}}{\theta_t(z)}$,
    which by \cref{lem:clock-sandwich} (second part) equals $4\mel{\psi_q(z')}{\ketbra{0}_{a_k}\ketbra{\bsu{t}}}{\psi_k(z)}$.
    
    \textbf{\cref{eq:e5}:} Here $E_-(\bsu{t},\bsu{t+2}) = \frac{1}{4}(\ket{\bsu{t}}-\ket{\bsu{t+2}})(\bra{\bsu{t}}-\bra{\bsu{t+2}})$.
    By \cref{lem:clock-sandwich}, this gives $\frac{1}{4\ell}\,\delta_{q,k}$ times the inner product of 
    $\bra{\theta_t(z')}-\bra{\theta_{t+2}(z')}$ with $\ket{\theta_t(z)}-\ket{\theta_{t+2}(z)}$.
    Applying \cref{eq:id5} to both sides yields $4\mel{\theta_t(z')}{(\ketbra{01}+\ketbra{10})_{a_kb_k}}{\theta_t(z)}$,
    which equals the right-hand side of \cref{eq:e5}.
\end{proof}

\subsubsection{Combining Time and Qubit Propagation Terms}

We now apply \cref{cor:equiv-table} to reduce $H_{\mathrm{time}}^{(2)}(t)\bigr|_{S_1}$ and 
$H_{\mathrm{qb}}^{(2)}(t)\bigr|_{S_1}$ simultaneously. Using \cref{eq:e1}--\cref{eq:e5} 
(and their $t-1$ shifted versions) to substitute into the explicit form of 
$H_{\mathrm{time}}^{(2)}(t)\bigr|_{S_1}$, the restriction to $S_2$ is equivalent to the 
operator
\begin{equation}\label{eq:time-reduced}
    \left(2\ketbra{0}_{a_k} + 2\ketbra{0}_{b_k} + \ketbra{1}_{a_k} + \ketbra{1}_{b_k} 
    - 2\ketbra{01}_{a_kb_k} - 2\ketbra{10}_{a_kb_k}\right)
    \left(\ketbra{\bsu{t-1}} + \ketbra{\bsu{t}}\right).
\end{equation}

For the qubit propagation terms, we additionally need to simplify matrix elements of the form 
$\mel{\psi_q(z')}{O(\ketbra{\bsu{t}}{\bsu{t-1}} + \ketbra{\bsu{t-1}}{\bsu{t}})}{\psi_k(z)}$.
By \cref{lem:clock-sandwich}, these equal $\frac{1}{\ell}\delta_{q,k}\left(\mel{\theta_t(z')}{O}{\theta_{t-1}(z)} + \mel{\theta_{t-1}(z')}{O}{\theta_t(z)}\right)$.
For $O \in \{\ketbra{0}_{a_k}, \ketbra{0}_{b_k}\}$, the identities $\ketbra{0}_a\Gate{C}Z_{ab} = \ketbra{0}_a$ and $\Gate{C}Z_{ab}\ketbra{0}_a = \ketbra{0}_a$ give
\begin{equation*}
    \mel{\theta_t(z')}{O}{\theta_{t-1}(z)} = \mel{\theta_t(z')}{O}{\theta_t(z)},
    \qquad
    \mel{\theta_{t-1}(z')}{O}{\theta_t(z)} = \mel{\theta_{t-1}(z')}{O}{\theta_{t-1}(z)},
\end{equation*}
so the off-diagonal clock terms collapse to diagonal ones:
\begin{equation}\label{eq:offdiag}
    O\left(\ketbra{\bsu{t}}{\bsu{t-1}} + \ketbra{\bsu{t-1}}{\bsu{t}}\right) \sim 
    O\left(\ketbra{\bsu{t}} + \ketbra{\bsu{t-1}}\right) \quad \text{on } S_2,
    \quad O \in \{\ketbra{0}_{a_k}, \ketbra{0}_{b_k}\}.
\end{equation}
For $O = \ketbra{1}_{a_k}$, the relation $\Gate{C}Z_{ab}\ketbra{1}_a = (\ketbra{10}-\ketbra{11})$ gives instead
\begin{equation*}
    \ketbra{1}_{a_k}\left(\ketbra{\bsu{t}}{\bsu{t-1}} + \ketbra{\bsu{t-1}}{\bsu{t}}\right) 
    \sim \left(\ketbra{10}_{a_kb_k} - \ketbra{11}_{a_kb_k}\right)\left(\ketbra{\bsu{t}}+\ketbra{\bsu{t-1}}\right),
\end{equation*}
and similarly for $\ketbra{1}_{b_k}$ on $S_2$.
Substituting into $H_{\mathrm{qb}}^{(2)}(t)\bigr|_{S_1}$, its restriction to $S_2$ is equivalent to
\begin{equation}\label{eq:qb-reduced}
    \left(-2\ketbra{0}_{a_k} - 2\ketbra{0}_{b_k} + \ketbra{1}_{a_k} + \ketbra{1}_{b_k}\right)
    \left(\ketbra{\bsu{t}} + \ketbra{\bsu{t-1}}\right).
\end{equation}

\subsubsection{Construction of $\tilde{H}$ and Conclusion}

Adding \cref{eq:time-reduced} and \cref{eq:qb-reduced}, we find that $H_{\mathrm{prop}}^{(2)}\bigr|_{S_2}$ 
is equivalent on $S_2$ to
\begin{equation}\label{eq:combined}
    \frac{1}{2}\left(4\ketbra{00}_{a_kb_k} + \ketbra{01}_{a_kb_k} + \ketbra{10}_{a_kb_k}\right) E_-(\bsu{t-1},\bsu{t})
    + \frac{1}{2}\cdot 2\ketbra{11}_{a_kb_k}\, E_+(\bsu{t-1},\bsu{t}).
\end{equation}

We now show this bounds $\tilde{H}$ from above. Recalling the expectation value identities
\begin{equation*}
    \mel{\psi_q(z')}{\Gate{C}Z\ketbra{\bsu{t}}{\bsu{t-1}}}{\psi_k(z)} 
    = \frac{1}{\ell}\delta_{z,z'}\delta_{q,k-1}, 
    \qquad
    \mel{\psi_q(z')}{\ketbra{\bsu{t}}}{\psi_k(z)} 
    = \frac{1}{\ell}\delta_{z,z'}\delta_{q,k},
\end{equation*}
we identify
\begin{equation}
    \tilde{H} = \frac{1}{2\ell}\sum_{z\in\B^M}\sum_{k=1}^P E_-(\psi_{k-1}(z),\psi_k(z)) 
    \;\equiv\; \frac{1}{2}\left(\ketbra{\bsu{t}}+\ketbra{\bsu{t-1}} - \Gate{C}Z\ketbra{\bsu{t}}{\bsu{t-1}} - \Gate{C}Z\ketbra{\bsu{t-1}}{\bsu{t}}\right)\Bigr|_{S_2},
\end{equation}
which on $S_2$ equals the restriction of
\begin{equation*}
    \frac{1}{2}\left(\ketbra{00}_{a_kb_k}+\ketbra{01}_{a_kb_k}+\ketbra{10}_{a_kb_k}\right) E_-(\bsu{t-1},\bsu{t})
    + \frac{1}{2}\ketbra{11}_{a_kb_k}\, E_+(\bsu{t-1},\bsu{t}).
\end{equation*}
Comparing with \cref{eq:combined}, we have $H_{\mathrm{prop}}^{(2)}\bigr|_{S_2} = \tilde{H} + H'$ where
\begin{equation*}
    H' = \frac{3}{2}\ketbra{00}_{a_kb_k}\,E_-(\bsu{t-1},\bsu{t}) 
    + \frac{1}{2}\ketbra{11}_{a_kb_k}\,E_+(\bsu{t-1},\bsu{t}),
\end{equation*}
which is manifestly positive semidefinite (as a sum of products of projectors with $E_\pm$, 
which are themselves positive semidefinite). Therefore $\tilde{H} \preceq H_{\mathrm{prop}}^{(2)}\bigr|_{S_2}$,
completing the proof of the proposition. \qed

\end{document}